\documentclass[10pt]{article}
\usepackage[a4paper,margin=25mm]{geometry}
\usepackage[T1]{fontenc}
\usepackage{lmodern,mathrsfs,xcolor}
\definecolor{circuitblue}{RGB}{43,70,92}
\definecolor{circuitpale}{RGB}{234,241,246}
\usepackage{tikz}\usetikzlibrary{positioning,calc,fit,backgrounds,decorations.pathreplacing}
\usepackage{xurl,needspace,float}
\usepackage{microtype,amsmath,amssymb,amsthm,mathtools,booktabs,array,longtable,enumitem}
\usepackage[hidelinks,pdftitle={Shallow Quantum Circuits for Geometrically Local Lindblad Dynamics}]{hyperref}
\usepackage{authblk}

\usepackage[backend=bibtex,style=alphabetic,natbib=true,sorting=nyt,
  maxbibnames=99,maxalphanames=4,giveninits=true]{biblatex}
\hypersetup{pdfauthor={Zheyu Shen, Yusen Wu, Xiao Yuan, Yukun Zhang}}
\allowdisplaybreaks[2]
\newtheorem{theorem}{Theorem}[section]
\newtheorem{lemma}[theorem]{Lemma}
\newtheorem{proposition}[theorem]{Proposition}
\newtheorem{corollary}[theorem]{Corollary}
\theoremstyle{definition}
\theoremstyle{remark}
\newcommand{\ket}[1]{\lvert#1\rangle}\newcommand{\bra}[1]{\langle#1\rvert}
\newcommand{\norm}[1]{\left\lVert#1\right\rVert}
\newcommand{\Tr}{\operatorname{Tr}}
\newcommand{\Enc}{\operatorname{Enc}}\newcommand{\ad}{\operatorname{ad}}
\newcommand{\polylog}{\operatorname{polylog}}\newcommand{\eps}{\epsilon}
\newcommand{\pub}{\mathsf{pub}}\newcommand{\priv}{\mathsf{priv}}
\title{Shallow Quantum Circuits for \\ Geometrically Local Lindblad Dynamics}
\newcommand{\pku}{Center on Frontiers of Computing Studies, School of Computer Science, Peking University, Beijing 100871, China}
\newcommand{\bnu}{School of Artificial Intelligence, Beijing Normal University, Beijing 100875, China}
\newcommand{\oxford}{Mathematical Institute, University of Oxford, Woodstock Road, Oxford OX2 6GG, United Kingdom}
\newcommand{\pkuphy}{School of Physics, Peking University, Beijing 100871, China}
\author[1,2]{Zheyu Shen\thanks{\texttt{zheyushen@stu.pku.edu.cn}}}
\author[3]{Yusen Wu\thanks{\texttt{yusen.wu@bnu.edu.cn}}}
\author[1]{Xiao Yuan\thanks{\texttt{xiaoyuan@pku.edu.cn}}}
\author[1,4]{Yukun Zhang\thanks{\texttt{yukun.zhang@maths.ox.ac.uk}}}
\affil[1]{\pku}
\affil[2]{\pkuphy}
\affil[3]{\bnu}
\affil[4]{\oxford}
\date{}
\begin{document}\maketitle

\begin{abstract}
Open-system dynamics govern decoherence, transport, and engineered dissipation in quantum many-body systems, making their faithful simulation a basic task for quantum computation. We propose a quantum algorithm for simulating time-independent, finite-range Lindblad dynamics with bounded local interaction strength on a constant-dimensional lattice of $n$ sites, allowing noncommuting jump operators. For evolution time $T\ge1$ and diamond-norm error at most $\epsilon$, the algorithm has circuit depth $O(T\operatorname{polylog}(nT/\epsilon))$ and gate count $O(nT\operatorname{polylog}(nT/\epsilon))$. These results extend the spatial decomposition of Ref.~\cite{HHKL} to open-system dynamics. The main obstacle is that the reversible overlap corrections used in spatial decompositions of unitary dynamics have no direct analogue for dissipative channels. We overcome this by retaining the environment in a shared unitary dilation, so that information lost from the system remains coherently available and the required spatial corrections become reversible. We prove exponential occupation bounds for the environmental states generated by the decomposition, allowing these states to be represented compactly and the local factors to be simulated coherently on the inputs that actually arise. The resulting construction reduces each short-time evolution to a constant number of parallel layers of operations on regions of logarithmic diameter, using nearest-neighbour gates and only polylogarithmic auxiliary density.

We also prove unconditional worst-case lower bounds using time-independent, purely dissipative processes that transport quantum information. For circuits with local quantum and classical communication, the lower bounds are $\Omega(nT)$ for the expected gate count and $\Omega(T)$ for the circuit depth on time scales up to the lattice diameter. They match the leading time and volume dependence of the upper bounds up to polylogarithmic factors. At fixed time and sufficiently small fixed error, we further prove a depth lower bound of $\Omega(\log n/\log\log n)$, showing that global accuracy can require increasing circuit depth as the lattice grows.
\end{abstract}

\section{Introduction}
Quantum simulation reproduces many-body dynamics on a quantum computer
\cite{Feynman,Lloyd,Georgescu,Altman}. We consider geometrically local
Markovian dynamics generated by a
Gorini--Kossakowski--Sudarshan--Lindblad operator
\cite{GKS,Lindblad,BreuerPetruccione}. Such dynamics also describe
protocols that use dissipation to prepare states or perform quantum
operations
\cite{Poyatos,KrausPreparation,VerstraeteDissipation,Diehl,Weimer,Barreiro,Schindler,Harrington}.
Our goal is to approximate the entire evolution channel with gate count
nearly proportional to the space--time volume and depth nearly
proportional to the evolution time. A guarantee for the complete channel
allows the output to be used in subsequent computation, including when
the input is entangled with a reference.

For lattice Hamiltonians, Haah, Hastings, Kothari and Low (HHKL) achieve
both scalings up to polylogarithmic factors \cite{HHKL}. They combine
Lieb--Robinson locality \cite{LR,Osborne} with an accurate simulator for
each spatial region. Open systems have analogous locality bounds
\cite{Poulin,BK,NVZ} and accurate Lindblad algorithms
\cite{Kliesch,ChildsLi,CleveWang,LiWang}. Combining these ingredients,
however, requires control of the environment. The spatial decomposition
uses inverse evolutions to correct overlaps, but an inverse of a
dissipative evolution need not be a physical channel on the system alone.

We establish the same leading time and volume dependence for arbitrary
static finite-range Lindblad generators with noncommuting jumps, given
by explicit local matrices. The construction uses a fixed finite gate set
on a same-dimensional nearest-neighbour lattice refinement and counts
its auxiliary positions. Complementary purely dissipative examples
require time, volume and precision costs before transport reaches the
lattice diameter.

We perform the spatial corrections in a common unitary dilation of
each short time segment. Permanent bath labels ensure that overlapping
restrictions and their adjoints act on the same environment registers.
Starting from the common vacuum, a locality estimate uniform in the
collision mesh gives a constant number of parallel layers on regions
of logarithmic diameter. Individual factors need a stronger input
guarantee: earlier corrections can populate later baths and correlate
them with the system. We control these inputs by an exponential
occupation bound, which permits a short coherent record of occupied
labels.\looseness=-1

The collision representation can contain too many periods to execute
directly. Building on quantum
transducers \cite{BJY,ChenHamiltonian,ChenDyadic,ChenLindblad,WangYe},
we represent each prescribed local unitary on both vacuum and occupied
baths. The repeating term order allows interval operations to be
evaluated from binary period addresses. The occupation bound controls
the correlated inputs created by preceding corrections. Together, these
properties make the local cost polylogarithmic in the period count and
yield the stated geometric gate and depth bounds.

\subsection{Main result}\label{sec:intro-result}
Let $\Lambda\subset\mathbb Z^D$ have $n$ sites. We fix the dimension,
the number of qubits per site, the support size and diameter, the term
incidence, and the number of jumps in a local term. The input consists
of local Hermitian matrices $H_e$ and jump matrices $L_{e\mu}$ with
rational real and imaginary parts, a rational time $T\ge0$, and a
rational error tolerance $0<\eps<1$. Its evolution is generated by
\begin{equation}\label{eq:introGKSL}
 \mathcal L(\rho):=-i[H,\rho]+
 \sum_{e,\mu}\left(L_{e\mu}\rho L_{e\mu}^{\dagger}
 -\tfrac12\{L_{e\mu}^{\dagger}L_{e\mu},\rho\}\right),
 \qquad H:=\sum_eH_e.
\end{equation}
Let $X_e$ denote the supplied support of term $e$.
Each supplied term has a computable rate $\lambda_e$ from
Section~\ref{sec:model}, within fixed constants of $\norm{H_e}$ or
$\sum_\mu\norm{L_{e\mu}}^2$ in fixed local dimension. We use
\begin{equation}\label{eq:intro-strength}
 \lambda:=\max_{x\in\Lambda}\sum_{e:x\in X_e}\lambda_e,
 \qquad a:=\lambda T,\qquad X:=\max\{2,n(1+a)/\eps\}.
\end{equation}
The support $X_e$ and hence $\lambda$ refer to the supplied decomposition;
Section~\ref{sec:model} gives the normalization and circuit model.

The hardware is a nearest-neighbour subgraph of a same-dimensional
lattice refinement. Gate count $G$ counts one- and two-qubit gates,
and $d_{\rm NN}$ is depth. Peak width $Q_{\rm peak}$ counts all
system, environment, work, padding and routing positions;
$N_{\rm fresh}$ counts individual auxiliary-qubit initializations,
including the initial preparation and each later segment refresh.

\begin{theorem}[Simulation of geometrically local Lindblad dynamics]\label{thm:main}\label{thm:informal}
Given the $D$-dimensional Lindbladian operator $\mathcal L$ of the form
given in Eq.~\eqref{eq:introGKSL} and an evolution time $T>0$, there exists an efficient classical algorithm that outputs a quantum channel $\widetilde\Phi_T$ such that 
\[
 \norm{\widetilde\Phi_T-e^{T\mathcal L}}_\diamond\le\eps.
\]
For a constant $C_D$ depending only on the fixed local parameters and
gate set, the constructed quantum channel requires computational resources
\begin{align*}
 G&\le C_Dn(1+\lambda T)(1+\log X)^{16D+4},&
 d_{\rm NN}&\le C_D(1+\lambda T)(1+\log X)^{16D+4},\\
 Q_{\rm peak}&\le C_Dn(1+\log X)^{6D},&
 N_{\rm fresh}&\le C_Dn(1+\lambda T)(1+\log X)^{6D}.
\end{align*}
\end{theorem}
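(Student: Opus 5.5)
The plan follows HHKL: cut $[0,T]$ into $M=O(1+\lambda T)$ segments of length $\tau=\Theta(1/\lambda)$. For each segment we build one channel approximating $e^{\tau\mathcal L}$ to diamond error $\eps/M$. Because diamond-norm errors add under composition, this gives total error $\eps$. The gate count and depth are then $M$ times the per-segment cost, which produces the factor $(1+\lambda T)$. The auxiliary registers are refreshed between segments, which accounts for $N_{\rm fresh}$. The per-segment construction has three steps.

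\textbf{Step 1: collision dilation.} Write $e^{\tau\mathcal L}$, up to error $\eps/(2M)$, as the system marginal of a unitary $U$ on system plus environment. The environment is a collision-model bath with permanent labels attached to each term $e$ and each time slice, started in the vacuum. Each elementary collision is a short unitary of the form $\exp(-i\delta(H_e\otimes I+\sqrt{\delta}\,\sum_\mu L_{e\mu}\otimes b^\dagger_{\mu}+\text{h.c.}))$. Standard second-order collision estimates give error $O(\delta\tau\lambda^2)$ per site. We choose the mesh $\delta$ polynomially small in $1/X$.

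\textbf{Step 2: spatial decomposition.} The key point is that $U$ is a genuine geometrically local unitary, so the HHKL identity $U\approx U_{A\cup B}\,U_B^\dagger\,U_{B\cup C}$ applies. Here the restrictions and their adjoints act on the same shared bath labels. Iterating this with a constant number of colourings of blocks of diameter $\ell=\Theta(\log X)$ yields $O(1)$ parallel layers of local unitaries. The error bound comes from a Lieb--Robinson bound for the dilated dynamics that is uniform in $\delta$, since the bath only couples locally, and it is evaluated on the vacuum input. This gives an error of $n\,e^{-\Omega(\ell)}\le \eps/(4M)$.

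\textbf{Step 3: implement each local factor.} Each local factor must then be implemented to precision $\eps/(4nM)$ on regions of $O(\ell^D)$ sites. A region carries $\mathrm{poly}(\ell)/\delta$ bath modes, which is far too many to store directly. I would therefore use the exponential occupation bound: on the inputs that actually arise, namely the vacuum acted on by earlier layers, the probability that more than $k$ modes are occupied decays like $e^{-\Omega(k)}$. Truncating to $k=O(\log X)$ occupied labels lets the bath be stored as a sorted list of $O(\log X)$ addresses, each of $O(\log X)$ bits. This gives $Q_{\rm peak}\le C_D n\,\mathrm{polylog}$, and the exponent $6D$ should come from the volume $\ell^D$ times the size of the register. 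Within the compressed encoding, the repeating period structure of the collision sequence means a product over an interval of periods can be evaluated by binary addressing (transducer-style repeated squaring on the occupied-label encoding). Combined with a polylog-depth simulator for the finite-dimensional local generator, this costs polylog$(X)$ per region.

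Finally, the NN-routing overhead on the refined lattice multiplies the cost by a further $\mathrm{poly}(\ell)$, which gives the total exponent $16D+4$.

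The main obstacle is the occupation bound on correlated inputs. Earlier corrections $U_B^\dagger$ can populate bath labels belonging to later factors and entangle them with the system, so the vacuum-based estimate does not apply directly. I would first try a generating-function/number-operator argument: bound the growth of $\Tr(e^{sN}\rho)$ under each local factor and its adjoint by $e^{O(\tau\lambda)\ell^D}$, uniformly in $\delta$, and then apply Markov's inequality. Each factor must also be checked to be accurate on this truncated support, not just on the vacuum.
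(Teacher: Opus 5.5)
Your outline has the same architecture as the paper: fresh-bath segments, a collision dilation with permanent labels, overlap corrections on shared baths, occupation tails, list-encoded baths and address-based evaluation. However, the two estimates that carry the proof are asserted with mechanisms that would not work as stated.

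In Step~2, uniformity in $\delta$ does not follow from the bath coupling locally. Each collision differs from the identity by order $\sqrt{\delta\lambda}$, and a segment has $\tau/\delta$ periods. An operator-norm Lieb--Robinson argument for the dilated unitary therefore accumulates weight $\tau\sqrt{\lambda/\delta}$. This is not an artefact of the method. On baths prepared in superpositions of vacuum and occupied states, each collision gives the system a coherent Hamiltonian kick of size $\sqrt\delta$, so the dilation can propagate information at speed of order $\delta^{-1/2}$.

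A mesh-uniform estimate exists only on vacuum-initialized columns, and for a specific reason. In the overlap $\mathcal M=E_0^\dagger U_1^\dagger U_{1,\Omega}U_{0,\Omega}^\dagger U_0E_0$ both ends are projected onto the bath vacuum. Every contributing label must therefore be created and reabsorbed, so two $O(\sqrt\delta)$ amplitudes combine into weight $O(\delta\lambda_b)$. The paper turns this into Theorem~\ref{thm:patch} through M\"obius inversion over label subsets, cancellation of every component that fails to join a newly activated term to a cut term, and a weighted connected-set count (Lemmas~\ref{lem:flat}--\ref{lem:animals}).

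Because the estimate holds only on such columns, the colour iteration cannot be a nested use of the three-region identity. That would need the bound on inputs such as $U_B^\dagger U_{BC}E_0$. It must instead be the telescoping sum in which each comparison acts on the ideal activation column $U_{c-1}E_0$, and the localized layers enter only as unitaries on the left.

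Step~3 is missing both its key device and its error analysis. The periodic term order can be exploited by binary powering only for scalars. A window is a product of $\mathrm{poly}(X)$ noncommuting collisions on distinct baths. Moreover, the factor must be the prescribed dilation unitary and its adjoint, acting on the same, possibly occupied, bath registers written by earlier factors. A generic simulator of the local generator therefore cannot supply the cancellation in $U_{c-1,\Omega_C}^\dagger$.

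The paper uses Cayley collisions so that each has an exact weighted transducer on both sectors (Lemma~\ref{lem:cell}, with vacuum weight $h\beta^2/4$ and occupied weight $1$). The public-to-public blocks are then known scalars on vacuum and exactly zero on occupied baths. All action on the system sits in one address-selected block encoding $Q$. The interval coefficients of the dyadic factorization then follow from occurrence counts and an occupied-record test (Lemma~\ref{lem:zerohit}).

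Even then, $V$ is recovered from $S$ only through finite reuse, a smoothed LCU and oblivious amplitude amplification, with residual $S_{01}p(S_{11})^qg_{6q}(S_{11})\Gamma Y$. Bounding this residual uses an annihilating polynomial and an occupation bound on the catalyst $\Gamma Y$. Since $\gamma_J^\dagger\gamma_J\preceq\frac54h\beta^2P_0+4P_1$ charges each occupied label $O(1)$, you need occupation control on every intra-window prefix $V_{<j}Y$. For inverse windows you also need it on the preimage $V^\dagger Y_{\rm in}$ and its forward prefixes, not just on factor inputs.

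Your proposed per-factor growth bound for $\Tr(e^{sN}\rho)$ at fixed $s$ cannot hold uniformly in $\delta$ on general inputs. A collision on a bath already in superposition changes its occupied weight at order $\sqrt\delta$. The paper instead bounds each prefix from the vacuum and uses that every permanent label is visited at most $\nu=2\cdot7^D+4$ times (Lemma~\ref{lem:moment}, Theorem~\ref{thm:prefix}).

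Finally, the mean occupation of a block of $\ell^D$ sites over $\tau=\Theta(1/\lambda)$ can be of order $\ell^D$. The record capacity must therefore be of order $\ell^D+\log(1/\eta)$, not $O(\log X)$ once $D\ge2$. Together with the block volume, this makes $Z_{\rm gl}=O((1+\log X)^D)$. That produces the exponents $6D$ and $16D+4$: logical gates $Z_{\rm gl}^{10}$, routing over hosts of size $Z_{\rm gl}^{6}$, and synthesis $(\log X)^4$. Your collision coupling should also be $\sqrt\delta$, not $\delta\sqrt\delta$.
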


Disjoint blocks contribute separately to the gate count but run in
parallel, which accounts for the factor of volume between gate count
and depth. The leading dependence on time and volume agrees with HHKL.
The logarithmic exponents are conservative and include arithmetic and
routing. Our construction uses polylogarithmic auxiliary density;
HHKL achieves constant auxiliary density for Hamiltonian simulation
\cite{HHKL}.

\paragraph{Consequences for finite-time tasks.}
For an input $\rho$ and observable $O$, the simulation bias is at most
$\eps\norm O$. If $\norm{e^{T\mathcal L}(\rho)-\sigma}_1\le\delta$,
the simulated output is within trace norm $\delta+\eps$ of $\sigma$.
These channel guarantees apply to subsequent quantum processing;
repetitions for estimating observables incur a separate sampling cost.

\paragraph{Complementary lower bounds.}
Static rational jumps with $H=0$ give complementary lower bounds.
For the lower bounds we use half-diamond error $\delta=\eps/2$
and write $E=\log(n/\delta)$. For normalized time above a fixed constant
and sufficiently small error, depth before diameter saturation is
\[
 \Omega_D\!\left(\lambda T+
       \frac{E}{\log(e+E/(\lambda T))}\right).
\]
In the same time range, the expected number of executed gates is
$\Omega_D(n\lambda T)$ even when gate locations are chosen adaptively.
For a fixed schedule, the corresponding gate lower bound is
\[
 \Omega_D\!\left(n\left[\lambda T+
 \left(\frac{E}{\log(e+E/(\lambda T)^{D+1})}\right)^{1/(D+1)}
 \right]\right)
\]
before its diameter cap. Section~\ref{sec:lower} gives the caps,
thresholds and circuit
assumptions. The depth bound requires classical records produced during
computation to propagate locally. The gate bounds allow unrestricted
classical communication. In both cases, auxiliaries are initially
separable across sites and outputs remain at their specified locations.
The three bounds may use different witness generators.

\subsection{Overview of the construction}\label{sec:overview}
We first decompose the joint system--environment evolution into local
factors, then implement each factor on the intermediate states produced
by the decomposition. The resulting circuit is summarized at the end of
this overview.

\paragraph{Spatial corrections in a common dilation.}
Each of the $O(1+\lambda T)$ segments starts with fresh collision baths
\cite{CollisionReview,Pocrnic,ChenLindblad} and retains them until its
end. Restrictions select gates from the same ordered product, and
adjoints use the same permanent bath labels. We assign each term to a
coarse lattice cube and activate the cubes one colour class at a time.
Writing $U_c$ for the evolution after colour $c$ has been activated,
the correction on the buffered region $\Omega_C$ is
\[
 W_C:=U_{c,\Omega_C}U_{c-1,\Omega_C}^\dagger.
\]
The locality comparison uses the common initial bath vacuum. In the
overlap operator, every contributing bath excitation must be both
created and absorbed. Two amplitudes of order $\sqrt h$ therefore
contribute at order $h$. Section~\ref{sec:patch} combines this
cancellation with the distance from the newly activated terms to the
restriction boundary. The resulting bound permits logarithmic buffers
and a constant number of parallel layers, uniformly in the mesh.

\paragraph{The inputs required by the local solver.}
Earlier factors can populate baths assigned to later periods or left
inactive in the current factor. Each permanent label is visited only a
bounded number of times. This gives an exponential occupation tail for
the intermediate states of the localized circuit and for the preimages
needed to justify its inverse operations. These states can therefore
be represented by short coherent lists of occupied labels and values.
We partition each of the two forward evolutions defining $W_C$ into
windows of consecutive periods. The factor uses the old windows in
reverse order with each window inverted, followed by the new forward
windows. A transducer represents each complete forward-window unitary
on both vacuum and occupied baths. Finite reuse and polynomial filtering
convert this representation into a circuit
\cite{BJY,ChenHamiltonian,ChenLindblad}, with its error on the correlated
inputs controlled by the occupation bound.

\paragraph{Accessing periods by their binary addresses.}
Sequential composition of the local transducers defines an auxiliary
unitary $S$ with one rotation for each collision occurrence. The dyadic
factorization of \citet{ChenDyadic} replaces this sequence with two changes
of basis and one root rotation. We apply the two-sided rotation merging
of Wang and Ye~\cite[Section~4.3.1]{WangYe} within this recursion,
allowing the rotations to depend on the occupied bath configuration.
Because the term order is fixed, a vacuum interval product is
determined by the occurrence counts of its terms. The occupation list
determines the changes for nonvacuum inputs. Both basis changes can
thus be evaluated from binary addresses at cost polynomial in block
volume, address width and numerical precision. These parameters are
polylogarithmic in $n(1+\lambda T)/\eps$. Disjoint processing regions
within each colour class contribute additively to gate count and run
in parallel.

\paragraph{Circuit generation and execution.}\label{par:execution-outline}
The classical compiler fixes the following sequence before the quantum
input is supplied.
\begin{enumerate}
\item It divides the requested time into equal segments, constructs the
coloured spatial regions, and fixes a common collision mesh. It then
allocates permanent bath lists and all address, signal and work registers
with common capacities and fixed widths.
\item For each forward window, it generates the address-tree circuit for
$S$ and the coefficient preparation and controlled reuse circuits. These
form a linear-combination-of-unitaries (LCU) circuit. It appends the
complete window circuit: the LCU circuit, its adjoint, and the LCU circuit
again, with two signal reflections and the prescribed overall phase.
A factor $W_C$ is assembled from the adjoints of the old windows in
reverse order, followed by the new forward windows.
\item It assigns physical locations to the named registers and places
the complete same-colour factor circuits in parallel. It expands controls,
routes two-qubit gates while restoring their logical locations, and
synthesizes the remaining rotations into the
fixed gate set. This produces one saved segment circuit.
\end{enumerate}
At execution time, the quantum computer initializes the segment
auxiliaries, applies the saved circuit and traces out the auxiliaries.
It repeats this circuit with fresh auxiliaries until time $T$. The bath lists and any signal or
work error remain coherent between windows and factors within a segment.
The microscopic periods specify the reference and the binary addresses;
there is no execution loop over all reference collisions.

\subsection{Relation to other simulation methods}\label{sec:related}
For Hamiltonian simulation, HHKL's spatial decomposition \cite{HHKL}
can be combined with several methods for approximating time evolution.
These include sparse-access algorithms \cite{BACS}, product formulas
\cite{Strang,Suzuki,ChildsTrotter}, compressed formulas and linear
combinations of unitaries \cite{BCS2014,BCCKS2014,BCK2015}, and signal
processing or qubitization \cite{QSP,Qubitization,QSVT}. Our spatial
analysis likewise specifies which local operations must be implemented
and the system--environment inputs on which they must be accurate.

Efficient local and sparse-access Lindblad algorithms
\cite{Kliesch,ChildsLi}, nearly linear-time simulation \cite{CleveWang}
and higher-order channel expansions \cite{LiWang} supply temporal methods.
Repeated interactions \cite{CollisionReview,Pocrnic} and enlarged-space
Hamiltonians \cite{DingLiLin} give unitary representations; commutator
estimates control their discretization \cite{ChildsTrotter,LindbladCommutator}.
The two-ancilla linear-combination method estimates observables through
sampling and rescaling \cite{TwoAncillaLCS}.

Our solver builds on the quantum transducer framework \cite{BJY}.
We use Cayley representations and finite reuse \cite{ChenHamiltonian},
dyadic changes of basis \cite{ChenDyadic}, and the query-optimal
Lindblad constructions of \citet{ChenLindblad} and \citet{WangYe}.
Both \citet{ChenLindblad} and \citet[Theorem~19 and Section~4.3]{WangYe}
give efficient additional one- and two-qubit gate counts in their
block-encoding oracle models. Their compressed environment histories
exploit the chronological structure of the simulated evolution. Spatial factors can instead populate later
labels or inactive baths before a window reaches them. Our occupation
bound tracks permanent labels and includes the required inverse
preimages. The weighted Cayley identity realizes the prescribed unitary
on the full bath space. The two-sided rotation merge of Wang and
Ye~\cite[Section~4.3.1]{WangYe} gives both changes of basis; we use it
within the dyadic recursion for term- and bath-dependent rotations.
These ingredients permit finite reuse and
filtering on the inputs produced by spatial localization. We retain
the common environment throughout the segment and trace it out after
all spatial corrections have been applied.
Section~\ref{sec:main-proof} analyzes the resulting channel.

\paragraph{Other local open-system simulation results.}
For commuting Hermitian jumps, Yu et al. approximate the channel by an ensemble of unitary circuits in a model with quantum random access memory
\cite{Yu}. In their normalized one-dimensional setting, sampled circuits
use $O(nT\polylog(nT/\eps))$ queries and additional gates.
Theorem~\ref{thm:main} instead counts refined nearest-neighbour gates and
all auxiliary positions for arbitrary static finite-range jumps.

\begin{table}[H]
\centering\small
\caption{Selected simulation results and their input restrictions. \emph{Additional gates} exclude the implementation cost of
the stipulated oracles. The displayed resources in the text retain
their original access and auxiliary-space assumptions.}
\label{tab:comparison}
\begin{tabular}{>{\raggedright\arraybackslash}p{.23\linewidth}
>{\raggedright\arraybackslash}p{.28\linewidth}
>{\raggedright\arraybackslash}p{.39\linewidth}}
\toprule
Work & Task and jump assumptions & Access and circuit resources\\\midrule
Chen et al.; Wang--Ye \cite{ChenLindblad,WangYe}
& Channel simulation in their Lindblad oracle models
& Query-optimal simulation with efficient additional gates; geometric
implementation of the oracles is not included\\
Yu et al. \cite[Theorem~5]{Yu}
& Channel simulation; commuting Hermitian jumps
& QRAM access; nearly linear queries and additional gates in the normalized one-dimensional setting\\
This work
& Full channel; arbitrary static finite-range jumps
& Explicit rational local matrices; fixed-alphabet nearest-neighbour gates on a same-dimensional refinement, with every auxiliary position counted\\
\bottomrule
\end{tabular}
\end{table}

Section~\ref{sec:model} fixes the model and collision representation.
Section~\ref{sec:patch} proves locality and bounds the occupation of
the inputs to local simulation. Sections~\ref{sec:compiler} and
\ref{sec:implementation} construct and implement the solver, then
combine its estimates to prove the main theorem.
Section~\ref{sec:lower} gives the transport lower bounds. The appendices
provide the collision error, finite numerical algorithms and gate
synthesis.

\section{Preliminaries and the collision representation}\label{sec:model}
We fix the supplied local decomposition and a common dilation on the
system and environment. The reduced channel specifies the simulation
task, but spatial corrections also require the dilation's action on
occupied baths. Section~\ref{sec:implementation} gives the finite
registers used to encode these baths.

\subsection{Local input and computational model}
Let $\Lambda\subset\mathbb Z^D$ have $n\ge1$ sites and $d_0$ system
qubits per site. The dimension $D\ge1$ and upper bounds on the local
support size $\kappa\ge1$, support diameter $r_*\ge1$ in the
$\ell_\infty$ metric, and jump multiplicity $\mu_*\ge1$ are fixed.
We also fix the incidence bound
$d_{\rm inc}\ge1$ after the terms have been split and zero terms
removed, as described below. Matrix entries belong to
$\mathbb Q+i\mathbb Q$. The time $T\ge0$ and tolerance $0<\eps<1$
are rational. Exact entrywise checks verify $H_e=H_e^\dagger$, matrix
dimensions and support constraints, and reject malformed input.
All subsequent classical error budgets are positive rationals or
dyadic numbers.

Hamiltonian terms and terms containing a stacked jump map have disjoint
index sets $\mathcal I_H$ and $\mathcal I_J$. We split mixed entries
in a prescribed order; Eq.~\eqref{eq:introGKSL} then sums over the
corresponding sets. Each term has an assigned support $X_e$. The
ordered term and support lists fix the tensor factors of its matrix,
with qubits ordered by increasing index within each site. The local
gates preserve this order as specified in Appendix~\ref{app:endpoint}.
Preprocessing first checks dimensions, then removes empty-support
terms, which are scalar identities with zero generator, and exact-zero
terms. It finally verifies the incidence bound $d_{\rm inc}$. Idle
sites remain included in $n$. More explicitly, the retained decomposition
must satisfy
\[
 |X_e|\le\kappa,\qquad \operatorname{diam}_\infty X_e\le r_*,
 \qquad
 \max_{x\in\Lambda}\bigl|\{e\in\mathcal I_H\cup\mathcal I_J:x\in X_e\}\bigr|
 \le d_{\rm inc}.
\]
For $e\in\mathcal I_J$, the supplied jump components are
$L_{e\mu}$ with $1\le\mu\le\mu_e\le\mu_*$.
Thus incidence counts the split, retained terms meeting a site, rather
than their matrix entries or their individual jump components.
These bounds imply that a region of volume $v$ contains only $O(v)$
term types; their strengths may vary and are accounted for separately.

The anchor $a(e)\in X_e$ is the first support site in the input order.
At scale $R$, we assign term $e$ to the coarse cube indexed by
\[
 C_R(e):=\bigl(\lfloor a(e)_1/R\rfloor,\ldots,\lfloor a(e)_D/R\rfloor\bigr).
\]
This cube owns the environment registers of the terms assigned to it.
Euclidean division gives the same assignment for negative coordinates.
Removing unused entries preserves the physical labels and anchors.

The Frobenius square $\norm A_F^2=\Tr A^\dagger A$ is rational, so exact
comparisons determine the least dyadic scales
$\alpha_e,\beta_e\in\{2^k:k\in\mathbb Z\}$ satisfying
\[
 \alpha_e^2\ge4\norm{H_e}_F^2,\qquad
 \beta_e^2\ge4\norm{J_e}_F^2,\qquad
 J_e\psi=\sum_\mu\ket\mu L_{e\mu}\psi.
\]
For surviving nonzero terms,
$2\norm{H_e}_F\le\alpha_e<4\norm{H_e}_F$ and
$2\norm{J_e}_F\le\beta_e<4\norm{J_e}_F$.
H and J denote Hamiltonian and stacked-jump occurrences, with rates
$\lambda_e=\alpha_e$ and $\lambda_e=\beta_e^2$, respectively. They define
$\lambda,a,X$ in Eq.~\eqref{eq:intro-strength}. For $a=0$, the target is
identity and the classical compiler returns the empty gate list.

All constants depend only on the fixed local model. Reading and
validating the input, and evaluating the normalization to the required
precision, take classical time polynomial in the input length. Quantum
resource bounds count the circuit produced by this computation.

Every logical qubit has a fixed position on a lattice refinement of
the same spatial dimension. Section~\ref{sec:implementation} specifies
the environment registers and their layout. Each segment initializes
its auxiliaries, applies a unitary and then discards them. The finite
gate set $\mathcal G$ in Section~\ref{app:terminal} contains one- and
two-qubit gates and is closed under adjoints. It includes exact Boolean,
Clifford and phase gates, together with algebraic one-qubit gates.
Deterministic synthesis preserves the specified phase and uses
$O(\log^4(1/\xi))$ gates with polynomial classical bit cost.
Continuous rotations appear only in intermediate descriptions; the
output circuit uses gates from $\mathcal G$.

The gate count $G$ includes every occurrence of an elementary gate.
Nearest-neighbour depth counts layers of gates on disjoint wires. Peak
width includes all allocated system, auxiliary, padding and routing
positions. The number of fresh qubits counts every auxiliary
preparation, including those in the first segment. Preparations and
discards occur at segment boundaries and are counted separately from
the unitary gates in $G$.

\subsection{Norms, circuits, and initialized subspaces}
Registers are finite-dimensional complex Hilbert spaces with orthonormal
computational bases. Trace distance is $\tfrac12\norm{\rho-\sigma}_1$.
Logarithms are natural unless the base is indicated; binary register
widths use $\log_2$. For operators,
$\norm A$ is the induced Euclidean norm; for matrix maps we use the diamond norm
\cite[Chapter~3]{Watrous}
\[
 \norm\Phi_\diamond:=\sup_{r\ge1}\sup_{Z\ne0}
 \frac{\norm{(\Phi\otimes I_r)(Z)}_1}{\norm Z_1}.
\]
Throughout, a channel is a completely positive, trace-preserving (CPTP)
map. All estimates are uniform in the dimension of an untouched reference.
Composition and unitary products act from right to left; verbal sequences
specify chronological order. We write $[A,B]=AB-BA$,
$\operatorname{ad}_A(B)=[A,B]$, and
\[
 \mathcal H_H(Z):=-i[H,Z],\qquad
 \mathcal D_J(Z):=\sum_\mu L_\mu ZL_\mu^\dagger
                   -\tfrac12\{J^\dagger J,Z\}.
\]
Unitary conjugation is $\operatorname{Ad}_U(Z)=UZU^\dagger$.

A stored circuit lists its one- and two-qubit gates, wire indices and
intermediate rotation parameters. Its adjoint reverses the list and
replaces each gate by its adjoint. Initialization specifies the input
subspace on which accuracy is required. The same gate sequence defines
a unitary on the full register space, which is needed to propagate
errors. The validated rational input fixes all classical branches,
rounding rules, loop bounds and segment operations. The resulting
quantum channel is unconditional.

We approximate each complete segment coherently before tracing out
its environment. When two isometries use the same environment, their
operator-norm difference controls the diamond-norm error of the reduced
channels. Errors then add under channel composition
\cite[Chapters~2--3]{Watrous}, \cite[Proposition~3]{KSW}.

\begin{lemma}[Channel transfer and composition]\label{lem:samechannel}
Let $U_C$ be a unitary circuit on system $S$ and auxiliary register $A$,
and let $E_C\psi:=U_C(\psi\otimes\ket0_A)$. Its induced channel is
\begin{equation}\label{eq:circuitkraus}
 \Phi_C(Z)=\Tr_A(E_CZE_C^\dagger)
          =\sum_aK_aZK_a^\dagger,
 \qquad K_a:=(I_S\otimes\bra a_A)E_C,
 \qquad\sum_aK_a^\dagger K_a=I.
\end{equation}
If $E_C,E_D$ use the same environment space, then
\begin{equation}\label{eq:column-channel-transfer}
 \norm{\Phi_C-\Phi_D}_\diamond\le2\norm{E_C-E_D}.
\end{equation}
For any two channels $\Phi,\Psi$ and integer $m\ge1$,
\begin{equation}\label{eq:freshcomposition}
 \norm{\Phi^m-\Psi^m}_\diamond\le m\norm{\Phi-\Psi}_\diamond.
\end{equation}
\end{lemma}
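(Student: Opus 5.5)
The proof has three parts: the Kraus form, the transfer bound, and the telescoping estimate.

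\emph{Kraus representation.} Expand the partial trace in the computational basis $\{\ket a_A\}$, so that $\Tr_A(Y)=\sum_a(I_S\otimes\bra a_A)Y(I_S\otimes\ket a_A)$. Substituting $Y=E_CZE_C^\dagger$ gives $\sum_aK_aZK_a^\dagger$ immediately. Completeness follows from
\[
 \sum_aK_a^\dagger K_a=E_C^\dagger\Bigl(I_S\otimes\sum_a\ket a\bra a\Bigr)E_C=E_C^\dagger E_C=I_S,
\]
where the last equality holds because $E_C=U_C(\,\cdot\otimes\ket0_A)$ is an isometry, being a unitary applied after an isometric embedding. Tensoring with $I_r$ on a reference register leaves all of this unchanged, so $\Phi_C$ is CPTP.

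\emph{Transfer bound.} Fix $r$ and $Z$ on $S\otimes R$. By convexity of the trace norm and the singular value decomposition of $Z$, it suffices to take $Z=\ket\psi\bra\phi$ with unit vectors; this costs nothing, because $\norm Z_1=\sum_j s_j$. Write $\tilde E_C=E_C\otimes I_R$ and similarly $\tilde E_D$. The partial trace over $A$ is trace-norm contractive, so
\[
 \norm{(\Phi_C-\Phi_D)\otimes I_r(Z)}_1\le\norm{\tilde E_C Z\tilde E_C^\dagger-\tilde E_DZ\tilde E_D^\dagger}_1 .
\]
Telescope the right-hand side as $(\tilde E_C-\tilde E_D)Z\tilde E_C^\dagger+\tilde E_DZ(\tilde E_C-\tilde E_D)^\dagger$. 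Apply Hölder's inequality $\norm{AZB}_1\le\norm A\norm Z_1\norm B$ to each term, using $\norm{\tilde E_C}=\norm{\tilde E_D}=1$ and $\norm{\tilde E_C-\tilde E_D}=\norm{E_C-E_D}$. This yields $2\norm{E_C-E_D}\norm Z_1$. Because the reduction to rank one is via convexity, Hölder applies directly to general $Z$, and no reduction is actually needed. The only point requiring care is that both isometries share the environment space $A$, so that the difference $E_C-E_D$ makes sense.

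\emph{Composition.} Write the telescoping identity
\[
 \Phi^m-\Psi^m=\sum_{k=0}^{m-1}\Phi^{m-1-k}(\Phi-\Psi)\Psi^{k}.
\]
Submultiplicativity of the diamond norm gives a bound of $\sum_k\norm{\Phi}_\diamond^{m-1-k}\norm{\Phi-\Psi}_\diamond\norm\Psi_\diamond^k$. Channels have diamond norm exactly $1$: they are trace-norm contractive on all extensions (cite \cite[Chapter~3]{Watrous}). The sum is therefore at most $m\norm{\Phi-\Psi}_\diamond$.

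None of these steps is a real obstacle. The only subtle points are the stabilization over references $r$, which is handled uniformly because tensoring with $I_R$ preserves operator norms, and the fact that a channel has unit diamond norm. I would cite the latter from \cite{Watrous} rather than reprove it.
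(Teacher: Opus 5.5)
Your proof is correct and follows the paper's route. The paper's one-line argument is exactly your two-term expansion $(E_C-E_D)ZE_C^\dagger+E_DZ(E_C-E_D)^\dagger$, combined with trace-norm contractivity of $\Tr_A$ and H\"older's inequality uniformly over references; the composition bound likewise uses the same telescoping sum and the unit diamond norm of channels. Your explicit check of the Kraus form and of completeness via $E_C^\dagger E_C=I$ fills in what the paper leaves implicit.
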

\begin{proof}
Expanding $E_CZE_C^\dagger-E_DZE_D^\dagger$ into two terms gives
the factor of two, including for inputs entangled with a reference
\cite[Proposition~3]{KSW}. Contractivity of channels and a telescoping
sum give Eq.~\eqref{eq:freshcomposition}.
\end{proof}
We also call an initialized map such as $E_C$ a \emph{column}; a column
error is measured in operator norm over the entire input space.
Thus a complete segment
needs to be accurate only when its baths start in vacuum. Its individual
spatial factors must also be accurate on the occupied inputs created by
earlier factors.

\subsection{Collision unitaries and permanent environment labels}
Repeated interactions \cite{CollisionReview,Pocrnic} give a reversible
circuit for each segment. We divide $[0,T]$ into $N_0$ segments, each
of duration $\tau=T/N_0$. A segment contains $M$ periods, each of
duration $h=\tau/M$, with the same prescribed term order.
Equation~\eqref{eq:globalparam} chooses the segment length needed for
locality; Eq.~\eqref{eq:mesh} chooses a dyadic $M$ to control the discretization error. Let $r$ denote the
number of windows per segment; Section~\ref{sec:prefix} chooses it as
a power of two dividing $M$. The local solver groups $M/r$ consecutive
periods into each window without changing their order or labels. The time step
$h$ describes simulated time, not hardware depth. Baths are retained
within a segment, discarded at its end and initialized afresh for the
next segment.

\begin{table}[H]
\centering\small
\caption{Time organization. Simulated time and circuit calls have
different roles. A window is a group of complete periods inside a
spatially restricted pass.}
\label{tab:time-scales}
\begin{tabular}{>{\raggedright\arraybackslash}p{.16\linewidth}
>{\raggedright\arraybackslash}p{.37\linewidth}
>{\raggedright\arraybackslash}p{.37\linewidth}}
\toprule
Object & Meaning & Size or duration\\\midrule
Segment & Evolution before a fresh environment is discarded & $N_0$ segments, each of duration $\tau=T/N_0$\\
Period & One prescribed ordered pass over the active terms & $M$ periods per segment, time step $h=\tau/M$\\
Window & Consecutive periods simulated by one local solver call & $r$ windows per pass, each with $M/r$ periods\\
Occurrence & One term $e$ at one period $m$ & Permanent label $(e,m)$; $L$ occurrences in a window\\
Reuse slot & One call to the window transducer & $N$ slots, chosen in Section~\ref{sec:reuse}; no extra simulated time\\
\bottomrule
\end{tabular}
\end{table}

An occurrence $b=(e,m)$ applies term $e$ in period $0\le m<M$ starting
at $mh$. Jump occurrences have distinct bath factors; H labels record
chronological positions. Restriction retains a gate exactly when its term
is active and $X_e$ lies wholly in the region; excluded gates become
identity without changing positions or labels. An inverse uses the adjoint
of that same gate on the same system and bath. We write
$\lambda_b=\lambda_e$, or $\lambda_j$ at chronological position $j$.

For maximum jump multiplicity $\mu_*\ge1$, we pad with zero jumps to bath
dimension $d_E=2^{\lceil\log_2(1+\mu_*)\rceil}$. Vacuum is $\ket0$ and
the remaining values form the occupied sector. We write
$P_{b,0}=\ket0\bra0_b$, $P_b=I-P_{b,0}$, or $P_0,P_1$ for one bath.
Nonzero padded values remain valid occupied inputs; padding preserves
the generator and reduced channel. For the segment's jump labels,
\[
 \mathcal H_E:=\bigotimes_{b\in\mathcal B_J}\mathbb C^{d_E}_b,
 \qquad E_0\psi:=\psi\otimes\ket0_E.
\]
All collisions, restrictions and adjoints act on this common space.
We use \emph{physical} for operators and states on this uncompressed
system--bath space. Section~\ref{sec:implementation} introduces their
finite-register encodings.

Tracing out a bath initialized to vacuum gives a Kraus map
\cite{Stinespring,Kraus,Choi}. Rational approximations to these maps
and their transducer representations
\cite[Lemma~2.1]{ChenLindblad}, \cite[Lemma~11 and Proposition~12]{WangYe}
motivate the following Cayley gates, defined on the full bath space:
\begin{align*}
 V_{H,h}&:=(I-ihH/2)(I+ihH/2)^{-1},\\
 V_{J,h}&:=(I-i\sqrt hG_J/2)(I+i\sqrt hG_J/2)^{-1},
 \qquad G_J:=\begin{pmatrix}0&J^\dagger\\J&0\end{pmatrix}.
\end{align*}
The operator $G_J$ couples the vacuum and occupied sectors through
$J$. It specifies both the channel induced by a vacuum input and the
unitary action on occupied inputs. Hamiltonian gates act only on the
system. Every restriction and inverse uses these same gates and bath
factors.

For $h\lambda\le1$ and integer $T/h$, the fresh-bath forward product
approximates the GKSL semigroup (Lemma~\ref{lem:collision}):
\begin{equation}\label{eq:referenceaccuracy-summary}
 \bigl\|\Phi_h^{T/h}-e^{T\mathcal L}\bigr\|_\diamond
 \le\tfrac12 n\lambda^2Th
 \le C_{\rm col}n\lambda^2Th,
 \qquad C_{\rm col}=8(\kappa+1).
\end{equation}
$\Phi_h$ is one period with independent fresh jump environments.
Appendix~\ref{app:collision} proves this by ordered resolvent identities
and positive-time splitting. We retain $C_{\rm col}$ for the common mesh.
The next section localizes before the environment trace, using adjoints
of these specified unitaries as inverse factors.

\section{Spatial decomposition and environment occupation}\label{sec:patch}
We now parallelize a short collision circuit while retaining its baths.
For adjacent regions $AB$ and $BC$, an inverse on the overlap $B$
removes the evolution counted twice. We first bound this correction
and use it to construct a constant number of local layers. The spatial
comparison acts on the isometries $U_cE_0$ obtained by activating
successive term sets. The local solver instead acts on exact prefixes
of the resulting localized product. We therefore establish a separate
occupation bound for those intermediate inputs.

\subsection{The three-region comparison and its activation form}
Following \citet[Lemma~6]{HHKL}, let disjoint regions $A,B,C$ cover
$\Lambda$, with $B$ separating $A$ and $C$. Each restriction $U_X$
retains terms whose complete supports lie in $X$. All restrictions
share the permanent bath registers. We compare
\begin{equation}\label{eq:abccomparison}
 V_{ABC}:=U_{ABC}E_0,\qquad
 \widetilde V_{ABC}:=U_{AB}U_B^\dagger U_{BC}E_0.
\end{equation}
Here $ABC=A\cup B\cup C$. The circuit first evolves $BC$, then
reverses the evolution on $B$ using the same baths, and finally
evolves $AB$.

\begin{figure}[t]
\centering
\begin{tikzpicture}[x=1cm,y=1cm]
 \fill[circuitpale] (0,0) rectangle (2.4,.65);
 \fill[circuitblue!25] (2.4,0) rectangle (4.8,.65);
 \fill[circuitpale] (4.8,0) rectangle (7.2,.65);
 \draw (0,0) rectangle (7.2,.65);
 \draw (2.4,0)--(2.4,.65); \draw (4.8,0)--(4.8,.65);
 \node at (1.2,.325) {$A$}; \node at (3.6,.325) {$B$};
 \node at (6,.325) {$C$};
 \draw[thick] (2.4,-.25)--node[below]{$U_{BC}$ first}(7.2,-.25);
 \draw[thick,circuitblue] (2.4,-.9)--node[below]{$U_B^\dagger$ second}(4.8,-.9);
 \draw[thick] (0,-1.55)--node[below]{$U_{AB}$ third}(4.8,-1.55);
\end{tikzpicture}
\caption{The HHKL overlap convention: $U_{BC}$, $U_B^\dagger$, then
$U_{AB}$ act on shared retained baths. Equation~\eqref{eq:abccomparison}
compares this initialized isometry with the full collision dilation.}

\label{fig:abc}
\end{figure}

To obtain parallel layers, we will use a comparison that permits a
general set of newly activated terms. Its three-region specialization
takes the old circuit to be $U_{BC}$, the new circuit to be $U_{ABC}$,
and the restriction region to be $AB$. Whole-support restriction of
$U_{BC}$ to $AB$ is exactly $U_B$, while restriction of $U_{ABC}$ is
$U_{AB}$, so it gives Eq.~\eqref{eq:abccomparison}.

For the activation comparison, we fix one segment's chronological
collision sequence and nested active term sets
$\mathcal T_0\subseteq\mathcal T_1$. The circuits $U_0$ and $U_1$
retain precisely the occurrences of their respective active terms;
their whole-support restrictions $U_{0,\Omega}$ and $U_{1,\Omega}$
also require $X_e\subseteq\Omega$. Every excluded occurrence is
replaced by identity. All four products consequently use the same
gate, chronological position and bath label wherever an occurrence is
retained. With the vacuum initialization $E_0$ of the common segment,
the exact and corrected isometries are
\[
 V:=U_1E_0,\qquad
 \widetilde V:=U_{1,\Omega}U_{0,\Omega}^\dagger U_0E_0,
\]
and their overlap is
\begin{equation}\label{eq:gram}
 \mathcal M:=V^\dagger\widetilde V
 =E_0^\dagger U_1^\dagger U_{1,\Omega}U_{0,\Omega}^\dagger U_0E_0.
\end{equation}
The vacuum projections at both ends require every contributing bath
excitation to be absorbed again. To identify which spatial
contributions survive these projections, we use the complete occurrence set
$\mathcal B=\{(e,m):e\in\mathcal T_1,\ 0\le m<M\}$, including
formal Hamiltonian labels, and define its source and cut subsets by
\[
 \mathcal S:=\{(e,m)\in\mathcal B:e\in\mathcal T_1\setminus\mathcal T_0\},
 \qquad
 \mathcal C:=\{(e,m)\in\mathcal B:X_e\not\subseteq\Omega\}.
\]
Thus a source label belongs to a newly activated term, and a cut label
belongs to a term excluded by spatial restriction.
For Eq.~\eqref{eq:abccomparison}, a source label has support meeting
$A$, and a cut label has support meeting $C$. Thus a surviving
contribution must connect the two outer regions across $B$.
The following argument expresses this separation in the support-overlap
graph, so it also applies to the activation sets used for colour layers.

We apply inclusion--exclusion to $\mathcal B$. For
$K\subseteq\mathcal B$, let $\mathcal M(K)$ be the Gram operator obtained
by replacing every appearance of each label outside $K$ by identity.
The simultaneous replacement is essential: it preserves the
cancellations between a gate and its adjoint. Since
$\mathcal M(\varnothing)=I$, M\"obius inversion on the subset lattice
\cite{Rota} gives an expansion in the finite differences
\[
 \Delta_F\mathcal M:=\sum_{K\subseteq F}(-1)^{|F|-|K|}\mathcal M(K).
\]
The resulting expansion is
\begin{equation}\label{eq:mobius}
 \mathcal M(\mathcal B)=I+\sum_{\varnothing\ne F\subseteq\mathcal B}\Delta_F\mathcal M.
\end{equation}
Each coefficient depends only on the marked labels in $F$, since all
other labels have been removed.

The overlap graph has one vertex per permanent occurrence label,
including the formal H labels. For $b=(e,m)$, we write $X_b:=X_e$
for its system support. An edge joins $b,c$ when their system supports
overlap, irrespective of their chronological positions.

\begin{lemma}[Cancellation of noncrossing components]\label{lem:flat}
If a component of the graph induced by $F$ misses either $\mathcal S$ or
$\mathcal C$, then $\Delta_F\mathcal M=0$.
\end{lemma}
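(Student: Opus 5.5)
The plan is to show that the Gram operator $\mathcal M(K)$ factorizes along the connected components of the graph induced by $F$. The finite difference then factorizes too. Finally, the factor belonging to a component that misses $\mathcal S$ or $\mathcal C$ vanishes, because the corresponding single-component Gram operators are all equal to the identity.

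\emph{Step 1 (factorization).} Let $G$ be a component of the graph induced by $F$ that misses $\mathcal S$ or $\mathcal C$, and put $F'=F\setminus G$. For $K\subseteq F$ write $K_1=K\cap G$ and $K_2=K\cap F'$. No label of $G$ is adjacent to a label of $F'$, so every $b\in G$ and $c\in F'$ have disjoint system supports. They also have distinct permanent bath registers, and Hamiltonian occurrences carry no bath at all. Hence the gates of $b$ and $c$ act on disjoint tensor factors and commute, and so do their adjoints.

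Consider each of the four products $U_0,U_{0,\Omega},U_1,U_{1,\Omega}$ with all labels outside $K$ replaced by identity. Each such product can be reordered, preserving the relative chronological order within each group, as $P^{(1)}P^{(2)}$. Here $P^{(1)}$ contains only gates from $K_1$ and $P^{(2)}$ only gates from $K_2$, and the two factors act on disjoint tensor factors $\mathcal K_1\otimes\mathcal K_2$. The remaining system and bath factors are untouched.

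The vacuum initialization $E_0$ is a tensor product over bath labels. Therefore
\[
 \mathcal M(K)=\mathcal M(K_1)\,\mathcal M(K_2)=\mathcal M(K_2)\,\mathcal M(K_1),
\]
where the two factors act on disjoint system factors.

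\emph{Step 2 (finite differences factor).} Since $(-1)^{|F|-|K|}=(-1)^{|G|-|K_1|}(-1)^{|F'|-|K_2|}$, the sum defining $\Delta_F\mathcal M$ splits as a product of sums:
\[
 \Delta_F\mathcal M=\Bigl(\sum_{K_1\subseteq G}(-1)^{|G|-|K_1|}\mathcal M(K_1)\Bigr)\Bigl(\sum_{K_2\subseteq F'}(-1)^{|F'|-|K_2|}\mathcal M(K_2)\Bigr)=\Delta_G\mathcal M\,\Delta_{F'}\mathcal M.
\]
It therefore suffices to prove $\Delta_G\mathcal M=0$.

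\emph{Step 3 (the noncrossing factor).} We show that $\mathcal M(K_1)=I$ for every $K_1\subseteq G$. Since $G\ne\varnothing$, the alternating sum of a constant over all subsets of $G$ then vanishes. There are two cases.
\begin{itemize}
\item If $G\cap\mathcal S=\varnothing$, every term in $K_1$ lies in $\mathcal T_0$. Then the reduced products satisfy $U_1(K_1)=U_0(K_1)$ and $U_{1,\Omega}(K_1)=U_{0,\Omega}(K_1)$, since they use identical gates, positions and bath labels. Consequently $\mathcal M(K_1)=E_0^\dagger U_0^\dagger U_{0,\Omega}U_{0,\Omega}^\dagger U_0E_0=E_0^\dagger E_0=I$.
\item If $G\cap\mathcal C=\varnothing$, every term in $K_1$ has $X_e\subseteq\Omega$. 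Restriction then removes nothing, so $U_{j,\Omega}(K_1)=U_j(K_1)$ for $j=0,1$. Hence $\mathcal M(K_1)=E_0^\dagger U_1^\dagger U_1U_0^\dagger U_0E_0=I$.
\end{itemize}
The simultaneous replacement of each label in all four products is what makes these exact cancellations valid.

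The main care point is Step 1. One must check that reordering within each product respects the chronological order inside each group, and that $E_0^\dagger(\cdot)E_0$ genuinely splits over the disjoint bath factors, so the system-level product is a product of commuting operators on disjoint supports. Both facts follow because commuting gates on disjoint tensor factors can be sorted freely.
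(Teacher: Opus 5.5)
Your proof is correct and is essentially the paper's argument. You factorize $\mathcal M(K)$ over disjoint system supports and distinct bath labels, split the alternating sum, and show that the noncrossing component has reduced Gram operator equal to $I$ on every submask, so its finite difference vanishes. The only difference is that you split $F$ into the offending component and its complement, whereas the paper factorizes over all components at once; this is purely cosmetic.
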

\begin{proof}
Let $K_1,\ldots,K_q$ be the connected components of the graph induced
by $F$. Their system supports are disjoint, and their bath factors have
distinct labels. After all labels outside $F$ have been removed, gates from different
components commute. The product of the four circuits therefore
factorizes over these components for every subset $K\subseteq F$.
Projection onto the product bath vacuum preserves this factorization.
Let $\mathcal M_i(K\cap K_i)$ denote the projected product on the
corresponding component. Distributing the alternating sum over subsets gives
\[
 \Delta_F\mathcal M
 =I_{\rm unused}\otimes\bigotimes_{i=1}^q
       \Delta_{K_i}\mathcal M_i.
\]
Here $I_{\rm unused}$ acts on the system factors outside the component
supports. This is an operator identity, so it also applies to entangled
inputs.

For a component missing the source, the old and new subcircuits
coincide on every submask, as do their restrictions. Writing these
circuits as $Q$ and $Q_\Omega$, their product of four circuits is
$Q^\dagger Q_\Omega Q_\Omega^\dagger Q=I$.
For a component missing the cut, restriction retains every gate; the
product is then $Q_1^\dagger Q_1Q_0^\dagger Q_0=I$.
In either case the alternating subset sum over the nonempty component
is zero. The tensor factorization therefore makes
$\Delta_F\mathcal M$ vanish whenever one component misses the source
or the cut.
\end{proof}

\begin{lemma}[Vacuum-compressed coefficient bound]\label{lem:activity}
Assume $h\lambda_b\le1$. For every marked set,
\[
 \norm{\Delta_F\mathcal M}\le\prod_{b\in F}\omega_b,
 \qquad \omega_b:=64h\lambda_b.
\]
The bound is uniform over system inputs and an untouched reference.
\end{lemma}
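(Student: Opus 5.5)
The plan is to expand $\Delta_F\mathcal M$ multilinearly, gate appearance by gate appearance, and then to use the vacuum projections of $E_0$ at both ends to obtain one power of $h$ per marked label.

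\textbf{Step 1: multilinear expansion.} Each occurrence label $b$ appears in the product $U_1^\dagger U_{1,\Omega}U_{0,\Omega}^\dagger U_0$ at most four times, once in each circuit in which it is retained. At each appearance the gate is $V_b$ or $V_b^\dagger$. I will write every appearance as $I+D$, with
\[
 D\in\{V_b-I,\;V_b^\dagger-I\}.
\]
The map $K\mapsto\mathcal M(K)$ simultaneously replaces all appearances of labels outside $K$ by $I$. Expanding every appearance therefore makes $\mathcal M(K)$ a sum over choices of nonempty subsets $A_b$ of the appearances of each $b\in K$. In each term, the chosen appearances carry $D$ factors and all others carry $I$. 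M\"obius inversion on the subset lattice keeps exactly those terms in which every $b\in F$ has $A_b\neq\varnothing$ and no label outside $F$ contributes. Hence $\Delta_F\mathcal M$ is a sum over tuples $(A_b)_{b\in F}$, with at most $15$ choices per label. Each term is $E_0^\dagger$ times a chronologically ordered product of $D$ factors times $E_0$.

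\textbf{Step 2: per-label estimates.} For a Hamiltonian label, the Cayley form gives
\[
 V_H-I=-ihH(I+ihH/2)^{-1},\qquad \norm{V_H-I}\le h\norm H\le h\alpha_b=h\lambda_b .
\]
This is already order $h$, since $\norm H\le\norm H_F\le\alpha_b/2$. For a jump label $b$, $G_J$ is block off-diagonal with $\norm{G_J}=\norm J\le\beta_b/2$. Expanding $V_J-I$ in powers of $\sqrt h\,G_J$ gives the block estimates
\[
 \norm{P_{0}(V_J-I)P_{1}}\le\sqrt h\,\beta_b,\qquad
 \norm{P_{i}(V_J-I)P_{i}}\le \tfrac12 h\beta_b^2 \quad(i=0,1).
\]
The diagonal blocks involve only even powers of $G_J$, and odd powers are off-diagonal. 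The same bounds hold for $V_J^\dagger-I$.

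\textbf{Step 3: sector paths for each bath.} For each jump label $b\in F$, I insert $I=P_{b,0}+P_{b,1}$ on bath $b$ between consecutive chosen $D_b$ factors. Operators of other labels act on different baths, so they commute with $P_{b,i}$. The outer $E_0$ and $E_0^\dagger$ force the path on bath $b$ to start and end at $0$. A path from $0$ back to $0$ either contains two off-diagonal transitions, each bounded by $\sqrt h\beta_b$, or contains a diagonal block bounded by $h\beta_b^2/2$. Since $h\lambda_b\le1$, every remaining factor in the path is at most $1$ or $2$. So every sector path of label $b$ contributes at most a constant times $h\beta_b^2=h\lambda_b$. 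With at most $4$ appearances there are at most $2^3$ paths per subset $A_b$. After fixing all paths, the term is a product of operators whose norms multiply across labels, and the other-label factors interleaved with it are themselves products of $D$-blocks already counted in their own labels' budgets. Combining the $15$ subsets with the path count and the block constants should give $\omega_b=c\,h\lambda_b$ with a universal $c$. The reference system is untouched throughout, so the bound is uniform in it.

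\textbf{Main obstacle: the constant.} I expect the main obstacle to be bringing $c$ down to $64$. The crude count $15\cdot 8\cdot 2$ overshoots. My first attempt will avoid expanding $\Delta_F$ as a sum over subsets $A_b$ and will instead group the four appearances of each label telescopically. The alternating sum over $b\in K$ versus $b\notin K$ equals
\[
 (\text{product with } b)-(\text{product without } b).
\]
This difference can be written as a sum of at most four single-$D$ insertions, each flanked by exact unitaries of the other retained labels. It costs $4$ rather than $15$ per label. For jump labels, the bath-$b$ vacuum compression is applied to the two-sided structure $P_0\,(\cdots)\,P_0$ with the intermediate unitary part of bath $b$. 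Combined with the block bounds, this yields a factor of the form $4\cdot(2\sqrt h\beta_b)^2\cdot(\text{small})$, which fits within $64h\lambda_b$. The delicate point is doing the telescoping for all labels in $F$ simultaneously while keeping the flanking operators of norm one.
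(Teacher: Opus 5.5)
Your Steps 1--3 are the paper's own argument. You expand every appearance as $I+(V_b^{\pm1}-I)$ and keep the products with at least one difference per marked label. You then insert bath-$b$ sector projections between successive appearances, moving only projections so that system operators keep their order. Finally you bound each fixed path by submultiplicativity, so the path sums factor over labels.

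The only gap is that you leave the constant open, and the obstacle you anticipate is a miscount rather than a limitation of the method. Take $t=\sqrt{h\lambda_b}\le1$. With your own block bounds, every block along a path has norm at most one, so the extra factor $2$ is spurious. Every vacuum-to-vacuum path containing at least one difference therefore has weight at most $t^2=h\lambda_b$. A subset of $k$ appearances admits only $2^{k-1}$ sector paths, not $8$. The number of subset--path pairs is thus $\sum_{k=1}^4\binom{4}{k}2^{k-1}=4+12+16+8=40$. This is exactly the paper's count: with $M=\bigl(\begin{smallmatrix}1+t^2&t\\ t&1+t^2\end{smallmatrix}\bigr)$ one has $[M^4]_{00}-1=10t^2+19t^4+10t^6+t^8\le40t^2$. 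A Hamiltonian label gives $(1+h\lambda_b)^4-1\le15h\lambda_b$, so $64$ covers both cases.

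The telescoping detour is therefore unnecessary. It would also work, and its ``delicate point'' disappears once you combine it with the same simultaneous sector-path decomposition. You never need norm-one flanking operators, because the other labels' differences are absorbed into their own path sums. Within label $b$, the retained slots $V_b^{\pm1}$ have diagonal blocks of norm at most $1$ and off-diagonal blocks of norm at most $t$. Consider the telescoped term whose difference sits in slot $j$, with earlier slots unitary and later ones identity. It has $2^{j-1}$ paths, each of weight at most $t^2$. This gives $15h\lambda_b$ per jump label and $4h\lambda_b$ per Hamiltonian label.
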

\begin{proof}
All appearances of a label must be bounded together. Estimating its
creation amplitudes separately would lose a factor of $\sqrt h$.
We expand each gate as $I+(V_b^{\pm1}-I)$ without changing the circuit
order. Inclusion--exclusion retains exactly those products with at
least one nonidentity difference for every marked label. Each label
appears at most four times, so its contribution is determined by a
fixed number of paths that return to vacuum.

For a jump bath, set $A_0=J^\dagger J$ and $A_1=JJ^\dagger$ on
the two sectors. The diagonal difference blocks of its Cayley gate are
$-(h/2)A_i(I+hA_i/4)^{-1}$, and the off-diagonal blocks are
$-i\sqrt h\,J(I+hA_0/4)^{-1}$ and
$-i\sqrt h\,(I+hA_0/4)^{-1}J^\dagger$.
Both resolvents are contractions. The normalization
$\norm J^2\le\lambda_b/4$ therefore gives, for either orientation,
\[
 \norm{P_i(V_b^{\pm1}-I)P_i}\le h\lambda_b,\qquad
 \norm{P_i(V_b^{\pm1}-I)P_j}\le\sqrt{h\lambda_b}\quad(i\ne j).
\]
Inserting the vacuum and occupied projections
between successive appearances therefore gives a sum over occupation
paths that start and end in vacuum. A nonconstant path must contain
either a diagonal difference or at least two changes of occupation.
With $t=\sqrt{h\lambda_b}\le1$, its total scalar weight is consequently
bounded by
\[
 \left[\begin{pmatrix}1+t^2&t\\t&1+t^2\end{pmatrix}^{4}\right]_{00}-1
 \le40t^2.
\]
The polynomial is exactly $10t^2+19t^4+10t^6+t^8$, with coefficient sum $40$. With fewer appearances the bound only decreases. The occupied block contains all bath values associated with the
stacked jump map $J$, so the estimate uses their combined operator norm.

For an H label, $\norm{V_H^{\pm1}-I}\le h\lambda_b$, and the analogous sum is at most $(1+h\lambda_b)^4-1\le15h\lambda_b$.

We must combine these estimates without assuming that system operators
commute. We move only bath projections past gates acting on other baths;
this is valid even when their system supports overlap. All system
operators remain in their original order. For the block factors
$A_1,\ldots,A_m$ of a fixed sector path, submultiplicativity gives
\[
 \norm{A_m\cdots A_1}\le\prod_{\ell=1}^m\norm{A_\ell}.
\]
We sum this scalar bound over the allowed paths. Their choices factor
over bath labels, so the scalar majorant is the product of the
one-label weights just estimated.
The constant $64$ covers both the H and J estimates. Tensoring with the
identity on a reference preserves all the operator norms above.
\end{proof}

We first sum connected sets joining the source to the restriction
boundary, then sum collections of these sets. This follows the approach
of polymer expansions \cite{KoteckyPreiss,FernandezProcacci}; here a
finite count of rooted trees suffices. We set
\[
 d_w:=\sup_b\sum_{c:\,X_c\cap X_b\ne\varnothing}\omega_c,
 \qquad W_{\mathcal S}:=\sum_{b\in\mathcal S}\omega_b.
\]
Every connected crossing set is assumed to have at least $m_*$ vertices.

\begin{lemma}[Weighted connected-set bound]\label{lem:animals}
If $6d_w<1$, the total activity of all connected sets joining the source to the restriction boundary is at most
\[
 \zeta=\frac{W_{\mathcal S}(6d_w)^{m_*-1}}{1-6d_w}.
\]
\end{lemma}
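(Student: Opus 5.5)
Each connected set $F$ that joins the source to the restriction boundary contains a vertex $b_0\in\mathcal S$, so I will root it there, and its total activity is $\prod_{b\in F}\omega_b$. Since such a set has at least $m_*$ vertices, I will bound
\[
 \sum_{F\ni b_0,\ F\text{ connected},\ |F|=m}\ \prod_{b\in F}\omega_b\ \le\ \omega_{b_0}(6d_w)^{m-1}
\]
for each root and each size $m$. Summing over $m\ge m_*$ and $b_0\in\mathcal S$ then gives $W_{\mathcal S}(6d_w)^{m_*-1}/(1-6d_w)$ as a geometric series, which is valid because $6d_w<1$. Discarding the crossing requirement, apart from the root condition and the size bound, can only enlarge the sum. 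Overcounting a set once per source vertex it contains is harmless for an upper bound.

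To prove the rooted bound, I will use the standard spanning-tree encoding instead of the Kotecký--Preiss machinery. Every connected $F$ of size $m$ containing $b_0$ has a spanning tree. I will fix a canonical one, such as the breadth-first tree with respect to a fixed order on labels, so that $F$ is determined by a rooted tree together with its vertex labels. I will then explore the tree by a depth-first walk from $b_0$. Each of the $m-1$ tree edges is traversed once down and once up, so the shape is encoded by a Dyck-type word of length $2(m-1)$. There are at most $2^{2(m-1)}=4^{m-1}$ such shapes, and the Catalan bound gives at most $4^{m-1}$ as well. Given the shape, I will assign labels child by child. Each child $c$ of an already labelled parent $b$ must satisfy $X_c\cap X_b\ne\varnothing$, so the sum of $\omega_c$ over all admissible choices is at most $d_w$ by the definition of $d_w$. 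Iterating from the leaves inward, or equivalently summing over the children in exploration order, bounds the labelled sum by $\omega_{b_0}d_w^{m-1}$ for each shape, where the distinctness constraint is simply dropped.

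This gives the factor $4^{m-1}\le 6^{m-1}$, so the constant $6$ in the statement leaves some slack. That slack could absorb a cruder encoding, such as recording for each step whether to add a child, return to the parent, or stop. One point needs care: the map from sets to labelled trees must be injective, or at least each set must receive at least one tree. Summing over all labelled trees instead of over sets suffices, since each set appears at least once with its own weight. The weight $\prod_{b\in F}\omega_b$ counts each vertex once, and the tree labelling also uses each vertex once, so the two products agree.

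The main obstacle is making this counting rigorous without a hidden dependence on the degree of the overlap graph. Because the overlap edges ignore chronology, a vertex has about $M$ neighbours of the same term. That is why the bound must be phrased through the weighted degree $d_w$, which is uniform in the mesh since $\sum_m h\lambda_e\sim\tau\lambda_e$, and never through a neighbour count. The tree-encoding step handles this correctly because only weighted sums over children are ever taken. I would verify the encoding inequality by an induction on $m$ via the generating function $T(z)\le \omega_{b_0}z\exp(\ldots)$ only if the direct Dyck-word count proves awkward.
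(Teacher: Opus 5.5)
Your proposal is correct and follows essentially the same route as the paper. Both arguments root each crossing set at a source label and encode it injectively as a labelled rooted plane spanning tree (the paper uses a depth-first tree, you use a breadth-first one). Both then bound the shapes by the Catalan count $4^{k-1}$, eliminate leaf labels one at a time using the weighted degree $d_w$ with repeated labels allowed, relax $4d_w$ to $6d_w$, and sum the geometric series over $k\ge m_*$ and over source roots.
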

\begin{proof}
We first sum connected sets of a fixed size $k$ containing a fixed root
$b$. A depth-first search, with neighbours ordered by the fixed label
order, assigns it a rooted plane spanning tree. The labelled tree
recovers the vertex set, so this choice is an injection into labelled
plane trees. There are at most $4^{k-1}$ tree shapes, by the Catalan
count \cite[Chapter~6]{Stanley}. For a fixed shape, we relax the sum to
all labelings whose parent--child supports overlap, allowing repeated
labels. Summing a leaf label contributes at most $d_w$, uniformly in
its parent's label. Repeating this elimination leaves the root weight
$\omega_b$ and $k-1$ factors of $d_w$. Thus
\[
 \sum_{F\ni b:\,F\text{ connected},\,|F|=k}\prod_{c\in F}\omega_c
 \le\omega_b(4d_w)^{k-1}\le\omega_b(6d_w)^{k-1}.
\]
We use the weaker bound with $6d_w$ to simplify the subsequent choice
of radius. Every crossing set contains a source
label. Summing over source roots can count a set more than once, which
only enlarges the upper bound. We sum the geometric series over
$k\ge m_*$. The condition $6d_w<1$ gives the denominator in $\zeta$.
\end{proof}

Combining the connected-set bound with the cancellation of noncrossing
components gives the following estimate on the shared-environment evolution.
\begin{theorem}[Shared-environment locality]\label{thm:patch}
In the shared activation comparison of Eq.~\eqref{eq:gram}, let
$\omega_b=64h\lambda_b$, with $d_w,W_{\mathcal S}$ as above.
Suppose $h\lambda_b\le1$ for every occurrence label, $6d_w<1$, and
every connected vertex set meeting both $\mathcal S$ and $\mathcal C$
contains at least $m_*\ge1$ vertices. With
$\zeta=W_{\mathcal S}(6d_w)^{m_*-1}/(1-6d_w)$, the Gram operator
and initialized evolution satisfy
\begin{equation}\label{eq:patchbound}
 \norm{I-\mathcal M}\le e^\zeta-1,
 \qquad
 \norm{(U_1-U_{1,\Omega}U_{0,\Omega}^\dagger U_0)E_0}
 \le\sqrt{2(e^\zeta-1)}.
\end{equation}
The input domain is the image of $E_0$: an arbitrary system--reference state together with the shared initial bath vacuum.
\end{theorem}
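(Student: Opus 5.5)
We start from the Möbius expansion of Eq.~\eqref{eq:mobius}, $\mathcal M-I=\sum_{\varnothing\ne F\subseteq\mathcal B}\Delta_F\mathcal M$, and use Lemma~\ref{lem:flat} to discard every $F$ that has a component missing $\mathcal S$ or $\mathcal C$. The surviving marked sets are disjoint unions $F=K_1\sqcup\dots\sqcup K_q$ of pairwise nonadjacent connected components. Each component meets both $\mathcal S$ and $\mathcal C$, so it is a connected crossing set in the sense of Lemma~\ref{lem:animals}. By Lemma~\ref{lem:activity}, $\norm{\Delta_F\mathcal M}\le\prod_{b\in F}\omega_b=\prod_i\prod_{b\in K_i}\omega_b$. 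This bound holds uniformly on system inputs and an untouched reference, so the triangle inequality gives an operator-norm bound.

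We then drop the nonadjacency constraint among the components, which only enlarges the sum. The remaining sum is over unordered collections of distinct connected crossing sets $\{K_1,\dots,K_q\}$. We bound it by $\sum_{q\ge1}\frac1{q!}\bigl(\sum_K w(K)\bigr)^q$, where $w(K)=\prod_{b\in K}\omega_b$. Lemma~\ref{lem:animals} bounds the inner sum by $\zeta$, since each crossing set has at least $m_*$ vertices by hypothesis and contains a source label, so rooting at sources covers it. This yields $\norm{I-\mathcal M}\le e^\zeta-1$. The point to check carefully is that the decomposition of $F$ into components is unique, so each surviving $F$ is counted at most once among ordered $q$-tuples divided by $q!$. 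Distinctness of components is automatic, so the $1/q!$ is legitimate.

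For the second inequality, write $V=U_1E_0$ and $\widetilde V=U_{1,\Omega}U_{0,\Omega}^\dagger U_0E_0$. Both are isometries on the system--reference space, since they are unitaries applied after the isometry $E_0$. For any unit input $\psi$,
\[
\norm{(V-\widetilde V)\psi}^2=2-2\,\mathrm{Re}\langle\psi,\mathcal M\psi\rangle=2\,\mathrm{Re}\langle\psi,(I-\mathcal M)\psi\rangle\le2\norm{I-\mathcal M}.
\]
Taking the supremum gives $\sqrt{2(e^\zeta-1)}$. Since $\mathcal M$ was defined with $E_0$ at both ends, the domain is exactly the image of $E_0$, as stated.

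The main obstacle is the combinatorial step that passes from Lemma~\ref{lem:animals}, which bounds single connected crossing sets, to collections of them. One must check that Lemma~\ref{lem:animals} already covers every crossing component, even one whose source label is not a root in the canonical sense; overcounting is harmless there. One must also confirm that replacing ``disjoint nonadjacent'' by ``arbitrary distinct'' collections is a valid relaxation, since all activities are nonnegative. If the exponential bound caused trouble, the fallback would be a direct estimate: bound $\sum_F\prod_i w(K_i)$ by $\prod_K(1+w(K))-1\le e^{\sum_K w(K)}-1$. This product form follows immediately by expanding and noting that each surviving $F$ corresponds to a subset of distinct crossing sets.
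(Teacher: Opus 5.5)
Your proposal is correct and follows the paper's proof essentially step for step. You use the M\"obius expansion with Lemma~\ref{lem:flat} and Lemma~\ref{lem:activity} to reduce to collections of connected crossing components, relax compatibility and apply Lemma~\ref{lem:animals} to get $\prod_K(1+w(K))-1\le e^{\zeta}-1$ (your $\sum_q(\sum_K w(K))^q/q!$ bookkeeping is an equivalent majorant), and conclude with $(V-\widetilde V)^\dagger(V-\widetilde V)=2I-\mathcal M-\mathcal M^\dagger$ for the bound on initialized inputs.
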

\begin{proof}
By Lemmas~\ref{lem:flat} and \ref{lem:activity}, every surviving
component crosses source to cut and contributes at most its activity.
Dropping compatibility and disjointness only enlarges the nonnegative sum:
\[
 \prod_{C\text{ crossing connected}}(1+\prod_{b\in C}\omega_b)-1
 \le\exp\left(\sum_C\prod_{b\in C}\omega_b\right)-1\le e^\zeta-1.
\]
Lemma~\ref{lem:animals} bounds the exponent. Since $V,\widetilde V$
are isometries,
\[
 (V-\widetilde V)^\dagger(V-\widetilde V)=2I-\mathcal M-\mathcal M^\dagger,
\]
whose norm is at most $2\norm{I-\mathcal M}$. Taking square roots proves
the bound on initialized inputs in Eq.~\eqref{eq:patchbound}.
\end{proof}

\begin{figure}[t]
\centering
\begin{tikzpicture}[x=0.72cm,y=0.65cm,font=\small]
  \draw[densely dashed,rounded corners=2pt] (-3.05,-.4) rectangle (3.95,1.25);
  \draw[densely dashed,rounded corners=2pt] (4.95,-.4) rectangle (11.95,1.25);
  \fill[circuitpale] (-2,.12) rectangle (3,.78);
  \fill[circuitpale] (6,.12) rectangle (11,.78);
  \foreach \x in {-3,...,11}{\draw (\x,0) rectangle +(1,.9);}
  \draw[thick,circuitblue] (0,0) rectangle +(1,.9);
  \draw[thick,circuitblue] (8,0) rectangle +(1,.9);
  \node at (.5,.45) {$B_C$}; \node at (8.5,.45) {$B_{C'}$};
  \node at (-1.5,.45) {$\Omega_C$};
  \node at (6.5,.45) {$\Omega_{C'}$};
  \node[anchor=south] at (.5,1.3) {complete host of $C$};
  \node[anchor=south] at (8.5,1.3) {complete host of $C'$};
  \draw (0,-.8)--node[below]{$R$}(1,-.8);
  \draw (0,-.72)--(0,-.88); \draw (1,-.72)--(1,-.88);
  \draw (.5,-1.7)--node[below]{$C'-C=8$ coarse cells}(8.5,-1.7);
  \draw (.5,-1.62)--(.5,-1.78); \draw (8.5,-1.62)--(8.5,-1.78);
\end{tikzpicture}
\caption{Schematic colour layer in one dimension (not to scale for
general $R,r_*$). Bold owner cubes $B_C,B_{C'}$ share a colour;
shaded support envelopes and dashed complete hosts are disjoint. Hosts
include environment, work and routing positions. Same-colour owners
differ by at least eight coarse cells in some coordinate in dimension $D$.
The comparison uses $\Omega_c=\bigcup_C\Omega_C$.}

\label{fig:global}
\end{figure}

\subsection{A constant number of local layers}
We index side-$R$ cubes by $C\in\mathbb Z^D$, with
$B_C=(RC+[0,R)^D)\cap\mathbb Z^D$, and write
$B^{+s}=\{x\in\mathbb Z^D:\operatorname{dist}_\infty(x,B)\le s\}$.
There are at most $n$ assigned coarse cubes:
\[
 \mathcal C_\Lambda:=\{C\in\mathbb Z^D:B_C\text{ contains the anchor of a retained term}\}.
\]
Colour them modulo eight, giving $\chi=8^D$. Circuit $U_c$ activates
colours up to $c$, with $U_0=I$. For owner $C$ of colour $c$, we set
\[
 \Omega_C:=B_C^{+(R+2r_*)}\cap\Lambda,
 \qquad W_C:=U_{c,\Omega_C}U_{c-1,\Omega_C}^\dagger.
\]
Its inverse old pass precedes its new forward pass on the permanent baths.
The complete host contains cubes $\norm{C'-C}_\infty\le3$.
For $R\ge2r_*$, same-colour envelopes and hosts are disjoint
(Figure~\ref{fig:global}): in a separating coordinate, nearest envelope
sites have distance at least
$7R+1-2(R+2r_*)=5R+1-4r_*\ge3R+1>r_*$.
Every local support therefore meets at most one envelope, so whole-support
restriction to their union assigns each gate to one commuting factor:
\[
 D_c^{\rm loc}:=\prod_{C\in\mathcal C_\Lambda:\,\mathrm{colour}(C)=c}W_C
 =U_{c,\Omega_c}U_{c-1,\Omega_c}^\dagger,
 \qquad\Omega_c=\bigcup_{C\in\mathcal C_\Lambda:\,\mathrm{colour}(C)=c}\Omega_C.
\]
The overlap graph obeys
\[
 d_w\le64\kappa\lambda\tau,\quad W_{\mathcal S}\le64n\lambda\tau,
 \quad m_*\ge\lfloor R/(2r_*)\rfloor.
\]
Indeed, summing overlapping labels through sites in $X_b$ gives
\[
 \sum_{c:\,X_c\cap X_b\ne\varnothing}\omega_c
 \le64\tau\sum_{x\in X_b}\sum_{e:\,x\in X_e}\lambda_e
 \le64\kappa\lambda\tau.
\]
Each term's period sum is $\sum_mh=\tau$, also proving the source bound
uniformly in the mesh. A support path of $k$ terms from source anchor in
$B_C$ to a cut site $z\notin\Omega_c$ reaches at most distance $kr_*$.
But $\Omega_C\subseteq\Omega_c$ gives
$\operatorname{dist}_\infty(z,B_C)>R+2r_*$, so $kr_*>R+2r_*$,
implying the weaker displayed $m_*$ bound. Ambient distance makes the
argument valid for missing sites and disconnected $\Lambda$.

We choose
\begin{equation}\label{eq:globalparam}
 N_0=\max\{1,\lceil4096\kappa a\rceil\},\quad\tau=T/N_0,
 \quad m_*=1+\left\lceil\log_8\frac{4096\chi^2nN_0^2}{\eps^2}\right\rceil,
 \quad R=2r_*m_*.
\end{equation}
For rational $A=4096\chi^2nN_0^2/\eps^2$, exact cross-multiplication
finds the least integer $k\ge0$ with $8^k\ge A$ and sets $m_*=1+k$.
Then $6d_w\le1/8$, $W_{\mathcal S}\le n$, and
$\zeta\le\eps^2/(3584\chi^2N_0^2)<1$.
Using $e^\zeta-1\le2\zeta$, the sum of $\chi N_0$ colour column bounds is
$2\chi N_0\sqrt\zeta\le\eps/(8\sqrt{14})<\eps/16$.
Channel transfer per segment and composition give spatial error below
$\eps/8$.

For the comparison on each segment, we use the telescoping identity
\[
 (D_\chi^{\rm loc}\cdots D_1^{\rm loc}-U_\chi)E_0
 =\sum_c D_\chi^{\rm loc}\cdots D_{c+1}^{\rm loc}
 (D_c^{\rm loc}-U_cU_{c-1}^\dagger)U_{c-1}E_0.
\]
Theorem~\ref{thm:patch} therefore uses the global activation column
$U_{c-1}E_0$. We put $W_{\rm loc}=D_\chi^{\rm loc}\cdots D_1^{\rm loc}$.
\begin{corollary}[Parallel local decomposition]\label{cor:geometric}
For Eq.~\eqref{eq:globalparam}, $W_{\rm loc}$ has at most $n$ regional
factors in $\chi=8^D$ parallel layers and
\[
 \norm{(W_{\rm loc}-U_\chi)E_0}\le\frac{\eps}{16N_0}.
\]
Factors have support diameter $O_D(1+\log X)$ and disjoint complete hosts
within a layer. The bound is uniform in the mesh and reference.
\end{corollary}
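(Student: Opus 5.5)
The proof combines the telescoping identity displayed just before the statement with Theorem~\ref{thm:patch}, applied once per colour, and then checks the parameter arithmetic from Eq.~\eqref{eq:globalparam}.

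\textbf{Step 1: reduce to colour differences.} In the telescoping sum, each prefix $D_\chi^{\rm loc}\cdots D_{c+1}^{\rm loc}$ is unitary, so it does not increase norms. This gives
\[
 \norm{(W_{\rm loc}-U_\chi)E_0}\le\sum_{c=1}^{\chi}\norm{(D_c^{\rm loc}-U_cU_{c-1}^\dagger)U_{c-1}E_0}
 =\sum_{c=1}^{\chi}\norm{(U_{c,\Omega_c}U_{c-1,\Omega_c}^\dagger U_{c-1}-U_c)E_0}.
\]
Each summand is exactly the shared activation comparison of Eq.~\eqref{eq:gram}, with the following choices:
\begin{itemize}
\item $\mathcal T_0$ is the set of terms of colours $\le c-1$;
\item $\mathcal T_1$ is the set of terms of colours $\le c$;
\item the restriction region is $\Omega=\Omega_c$.
\end{itemize}
The identity $D_c^{\rm loc}=U_{c,\Omega_c}U_{c-1,\Omega_c}^\dagger$ follows from the disjointness of same-colour envelopes, shown before the corollary for $R\ge2r_*$. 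This holds here because $R=2r_*m_*$ with $m_*\ge1$.

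\textbf{Step 2: verify the hypotheses of Theorem~\ref{thm:patch}.}
\begin{itemize}
\item From $h\le\tau$ and $N_0\ge4096\kappa a$ we get $h\lambda_b\le\lambda\tau\le1/(4096\kappa)\le1$.
\item The displayed bound $d_w\le64\kappa\lambda\tau$ gives $6d_w\le384/4096<1/8$.
\item The source bound gives $W_{\mathcal S}\le64n\lambda\tau\le n$.
\item For the crossing-size hypothesis, a source label has its anchor in some owner cube $B_C$ of colour $c$. A cut label has a support site outside $\Omega_c\supseteq\Omega_C$, at $\ell_\infty$ distance greater than $R+2r_*$ from $B_C$. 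A connected chain of $k$ supports of diameter $\le r_*$ joining them therefore needs $kr_*>R$, so $k\ge R/(2r_*)=m_*$.
\end{itemize}
Hence
\[
 \zeta\le n(1/8)^{m_*-1}\cdot\tfrac{8}{7}.
\]
Since $8^{m_*-1}\ge4096\chi^2nN_0^2/\eps^2$, this yields $\zeta\le\eps^2/(3584\chi^2N_0^2)$.

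\textbf{Step 3: sum the colour bounds.} Since $\zeta<1$, we have $e^\zeta-1\le2\zeta$, so each column error is at most $2\sqrt\zeta$. Summing over the $\chi$ colours gives
\[
 2\chi\sqrt\zeta\le\frac{2\eps}{\sqrt{3584}\,N_0}=\frac{\eps}{8\sqrt{14}\,N_0}<\frac{\eps}{16N_0}.
\]
These estimates never use $M$; they depend only on $\tau$. Together with the reference-uniformity in Theorem~\ref{thm:patch}, this gives uniformity in the mesh and in the reference.

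\textbf{Step 4: structural claims.}
\begin{itemize}
\item \emph{Number of factors.} There is one factor per owner cube in $\mathcal C_\Lambda$. Each such cube contains an anchor site, so there are at most $n$ factors.
\item \emph{Number of layers.} The factors fall into $\chi=8^D$ layers.
\item \emph{Factor diameter.} Each factor is supported in $\Omega_C$, of diameter at most $3R+4r_*=O(r_*m_*)$. From $N_0=O(1+a)$ we get $m_*=O(1+\log(n(1+a)/\eps))=O_D(1+\log X)$.
\item \emph{Disjoint hosts.} Hosts within one layer are disjoint by the separation computation for same-colour owners.
\end{itemize}

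The main obstacle is the crossing-size hypothesis $m_*\ge\lfloor R/(2r_*)\rfloor$. It must hold for all connected sets in the overlap graph, and that graph ignores chronology. The distance argument handles this because it is purely geometric. One also has to check that a source label's anchor lies in an owner cube of the current colour $c$, which holds since the sources are exactly the colour-$c$ terms.
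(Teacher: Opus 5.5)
Your proposal is correct and follows the paper's proof. The telescoping identity reduces the error to $\chi$ activation comparisons on the columns $U_{c-1}E_0$. Theorem~\ref{thm:patch} bounds each comparison by $2\sqrt\zeta$, with $\zeta\le\eps^2/(3584\chi^2N_0^2)$ from Eq.~\eqref{eq:globalparam}. The structural claims follow from the mod-$8$ colouring and $R=2r_*m_*=O_D(1+\log X)$. Your hypothesis checks ($h\lambda_b\le1$, $6d_w<1/8$, $W_{\mathcal S}\le n$, and the geometric crossing-size bound) spell out the parameter verifications the paper gives in the text just before the corollary.
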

\begin{proof}
Colouring gives disjointness. The telescoping identity and
Eq.~\eqref{eq:patchbound} give $2\chi\sqrt\zeta<\eps/(16N_0)$;
Eq.~\eqref{eq:globalparam} gives $R=O_D(1+\log X)$.
\end{proof}
The local solver must be accurate on the intermediate states of the
localized circuit, as used in Section~\ref{sec:main-proof}. We next bound
the bath occupation of those states.

\subsection{Occupation bounds and the inputs of the local solver}\label{sec:prefix}
Earlier local factors can populate baths assigned to later periods or
left inactive in the current factor. The complete processing region of
$C$ must therefore contain all bath labels, over all periods, assigned
to cubes with $\norm{C'-C}_\infty\le3$. Physical occupation counts
only bath labels, although Hamiltonian labels also contribute to the
interaction weight. Every term used by $W_C$ has its anchor in
$\Omega_C$. Since $2r_*\le R$, its assigned cube is within distance
two of $C$, so the required bath space is included.
Section~\ref{sec:implementation} encodes this space using lists.
The anchors lie in a box of side $7R$, and their supports lie in
$RC+[-4R,5R)^D$. This box has side $9R$ and determines the volume
used in the interaction-weight bounds.

For a global prefix isometry $Y=UE_0$, each window and occupation
projection acts as identity outside its local registers. All bounds
hold for arbitrary system and reference inputs. We call a forward or
adjoint collision on one bath a visit. For durations $h_j$ and rates
$\lambda_j$, the total interaction weight is $\sum_jh_j\lambda_j$.
Interleaved gates on other baths affect the estimate only through their
unit operator norm.

The local solver analysis in Section~\ref{sec:private-decay} will sum
contributions from several physical prefixes of a call, including
terms with a specified occupied bath.
To bound the sum, we use one circuit whose visit counts dominate every
prefix involved. Let $\nu$ bound its visits per label and let $\Sigma$
be its total local interaction weight, including the target Hamiltonian
occurrences where required. This circuit supplies the same bounds
$a_b$ for every summand in that analysis. Theorem~\ref{thm:prefix} constructs
such a circuit for forward calls and for the preimages needed by inverse
calls.

\begin{lemma}[Occupation moment]\label{lem:moment}
Let $U_1,\ldots,U_m$ be elementary unitaries and
$Y=U_m\cdots U_1E_0$, where $E_0$ initializes every counted bath to
vacuum. Let $\mathcal B_{\rm cat}$ be the fixed
local set of physical bath labels. Its occupation operator is
$N_{\rm cat}=\sum_{b\in\mathcal B_{\rm cat}}P_b$.
Suppose each elementary gate touches at most one bath, and define
\[
 a_b=\sum_{j:\,j\text{ touches }b}\norm{[P_b,U_j]}.
\]
For any $F\subseteq\mathcal B_{\rm cat}$ and $z\ge1$, with all
products and bath sums taken over $\mathcal B_{\rm cat}$,
\begin{align}
 \norm{P_FY}&\le\prod_{b\in F}a_b,\qquad P_F=\prod_{b\in F}P_b,\label{eq:subset}\\
 Y^\dagger z^{N_{\rm cat}}Y
 &\preceq\prod_b[1+(z-1)a_b^2]I,\label{eq:physmoment}\\
 Y^\dagger P_bz^{N_{\rm cat}-P_b}Y
 &\preceq a_b^2\prod_{c\ne b}[1+(z-1)a_c^2]I.\label{eq:markedmoment}
\end{align}
For normalized physical collision gates, a total interaction-weight
bound $\Sigma$ and visit bound $\nu$ give
$\sum_ba_b^2\le\nu\Sigma/4$.
\end{lemma}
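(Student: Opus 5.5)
We prove Eq.~\eqref{eq:subset} first and then obtain the two moment bounds from it. Fix $F$ and set $P_F=\prod_{b\in F}P_b$. All projections $P_b$ commute with each other and with every gate that does not touch $b$. Write $Y_k=U_k\cdots U_1E_0$, so that $Y_0=E_0$ and $P_bY_0=0$ for every counted bath.

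For a single label, move $P_b$ through the product step by step. We have $P_bU_k=U_kP_b+[P_b,U_k]$, and the commutator vanishes unless $U_k$ touches $b$. Hence $P_bY=\sum_{k\text{ touches }b}U_m\cdots U_{k+1}[P_b,U_k]Y_{k-1}$, which gives $\norm{P_bY}\le a_b$.

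For a set $F$, handle the labels jointly by peeling off the last visit to each label. Induct on the gate index $m$: write $P_FY_m=U_mP_FY_{m-1}$ if $U_m$ touches no bath of $F$. Otherwise $U_m$ touches exactly one $b\in F$, and
\[
 P_FY_m=U_mP_FY_{m-1}+[P_b,U_m]P_{F\setminus b}Y_{m-1}.
\]
Here we used that $U_m$ commutes with $P_{F\setminus b}$. Let $\alpha_F(k)$ be the product over $b\in F$ of the partial sums of $\norm{[P_b,U_j]}$ for $j\le k$. The induction hypothesis $\norm{P_GY_k}\le\alpha_G(k)$ for all $G$ then gives
\[
 \alpha_F(m-1)+\norm{[P_b,U_m]}\,\alpha_{F\setminus b}(m-1)=\alpha_F(m).
\]
This proves Eq.~\eqref{eq:subset}.

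For Eq.~\eqref{eq:physmoment}, expand $z^{N}=\prod_b(1+(z-1)P_b)=\sum_F(z-1)^{|F|}P_F$, with all terms positive for $z\ge1$. Applying the bound $\norm{P_FY}^2$ termwise would only give $\sum_F(z-1)^{|F|}\prod_b a_b^2$, which is exactly the right-hand side after factorization. So
\[
 Y^\dagger z^NY=\sum_F(z-1)^{|F|}Y^\dagger P_FY\preceq\sum_F(z-1)^{|F|}\norm{P_FY}^2I.
\]
Here $Y^\dagger P_FY=(P_FY)^\dagger(P_FY)$ is positive with norm $\norm{P_FY}^2$. The marked moment, Eq.~\eqref{eq:markedmoment}, is the same computation: $P_bz^{N-P_b}=\sum_{G\not\ni b}(z-1)^{|G|}P_{G\cup b}$, and each term is bounded by $a_b^2\prod_{c\in G}a_c^2$.

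For the final claim, use the explicit blocks from Lemma~\ref{lem:activity}. For a normalized jump Cayley gate, $[P_b,V]$ is the off-diagonal part. Its norm is at most $\sqrt{h\lambda_b}\,\norm{J}/\sqrt{\lambda_b}\cdot 2$, more precisely $\sqrt h\norm J\le\sqrt{h\lambda_b}/2$, since $\norm J^2\le\lambda_b/4$. So each visit contributes $\le\frac12\sqrt{h_j\lambda_j}$. With at most $\nu$ visits, Cauchy--Schwarz gives $a_b^2\le\nu\sum_{j\to b}h_j\lambda_j/4$. Summing over $b$ yields $\nu\Sigma/4$.

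The main obstacle is the joint bound for $F$ in Eq.~\eqref{eq:subset} with noncommuting system parts. The peeling argument must use only that each gate touches one bath and commutes with the other bath projections. It must not use any commutation on the system, which the induction above respects.
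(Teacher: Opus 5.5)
Your proof is correct and follows the paper's argument. Your induction on the last gate is the recursive unrolling of the paper's Leibniz expansion of $P_FY=\bigl(\prod_{b\in F}\ad_{P_b}\bigr)(U)E_0$, and the rest is the same expansion of $\prod_b[I+(z-1)P_b]$ and the same per-visit bound $\sqrt{h_j}\norm{J_j}\le\sqrt{h_j\lambda_j}/2$ with Cauchy--Schwarz over visits. Only cosmetic fixes are needed: the termwise factor should read $\prod_{b\in F}a_b^2$, the garbled bound before ``more precisely'' should be deleted, and you could note that adjoint visits are covered since $\norm{[P_b,V^\dagger]}=\norm{[P_b,V]}$.
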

\begin{proof}
Since $P_bE_0=0$ and the occupation projections commute,
\[
 P_FUE_0=\left(\prod_{b\in F}\ad_{P_b}\right)(U)E_0.
\]
The Leibniz expansion preserves gate order. A derivative for two distinct
baths vanishes on a one-bath gate, so surviving terms assign distinct
marked baths to distinct gates. Unmarked factors have unit norm; allowing
all assignments bounds the sum by $\prod_{b\in F}a_b$, proving
Eq.~\eqref{eq:subset}.

We expand $z^{N_{\rm cat}}=\prod_b[I+(z-1)P_b]$ and apply
$Y^\dagger P_FY\preceq\norm{P_FY}^2I$ termwise, since $z\ge1$.
This proves Eq.~\eqref{eq:physmoment}; keeping $P_b$ marked and expanding
the other baths proves Eq.~\eqref{eq:markedmoment}.

For a jump Cayley gate or adjoint,
$\norm{[P_b,U_j]}\le\sqrt{h_j}\norm{J_j}
\le\sqrt{h_j\lambda_j}/2$. Cauchy--Schwarz over at most $\nu$ visits
gives $a_b^2\le(\nu/4)\sum_{j\text{ touches }b}h_j\lambda_j$.
Summing gives $\sum_ba_b^2\le\nu\Sigma/4$, unchanged by a reference.
\end{proof}

A window consists of consecutive complete periods, with total
interaction weight $s=\sum_{\rm occurrences}h\lambda_e$, which we
call its action. Dividing a segment into windows preserves the order
and physical labels. All relevant local supports fit in a box of volume
$v_*=(9R)^D$. Small window action controls the transducer residual
in Section~\ref{sec:private-decay}; the number of collision periods
will instead control discretization error. We choose
\[
 s_0=(256\cdot7)^{-2},\qquad
 r=2^{\lceil\log_2\max\{1,\lambda\tau v_*/s_0\}\rceil}.
\]
Let $M$ be the least power of two at least
\begin{equation}\label{eq:mesh}
 \max\left\{r,\frac{16C_{\rm col}na^2}{N_0\eps}\right\},\qquad h=\tau/M.
\end{equation}
Then $r\mid M$, $r=O_D(1+R^D)$, $\log M=O_D(1+\log X)$,
every forward window has action at most $s_0$, and
Lemma~\ref{lem:collision} gives collision error at most $\eps/16$.
Let $V^{(c)}_{C,j}$ be the forward circuit on group $j$ of $M/r$ periods
with colour mask $c$. Then
\begin{equation}\label{eq:regroup}
 W_C=V^{(c)}_{C,r}\cdots V^{(c)}_{C,1}
 V^{(c-1)\dagger}_{C,1}\cdots V^{(c-1)\dagger}_{C,r}.
\end{equation}
Old inverse windows execute $r$ to $1$, then new forward windows $1$ to $r$.

Within a nonempty forward window, write $V=V_L\cdots V_1$ in
chronological collision order and put $V_{<j}=V_{j-1}\cdots V_1$ for
$1\le j\le L$, with $V_{<1}=I$. Here $j$ indexes occurrences inside
one window, rather than windows in Eq.~\eqref{eq:regroup};
an empty window has only the identity prefix.

\begin{theorem}[Occupation bounds for local simulation inputs]\label{thm:prefix}
Suppose the exact local factors follow their prescribed colour and
owner order, with windows as in Eq.~\eqref{eq:regroup}. For a forward
window $V$ with ideal input $Y_{\rm in}$, we consider $Y_{\rm in}$ and
all columns $V_{<j}Y_{\rm in}$. For an inverse window $V^\dagger$,
consider $V^\dagger Y_{\rm in}$ and all columns
$V_{<j}V^\dagger Y_{\rm in}$ required by the forward analysis.
Each call family satisfies Lemma~\ref{lem:moment} with one common set
of nonnegative bounds $a_b$ and the uniform estimates
\[
 \nu:=2\cdot7^D+4,\qquad \Sigma_*:=\nu\lambda\tau v_*,
 \qquad \sum_ba_b^2\le\nu\Sigma_*/4.
\]
The bounds include occupied spectators in the complete local bath set
and are independent of the numbers of windows and periods.
\end{theorem}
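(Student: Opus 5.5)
The plan is to realise every column in a call family as a single product $Y=U_m\cdots U_1E_0$ of elementary one-bath gates. Lemma~\ref{lem:moment} then applies directly, provided each gate touching a bath is charged to that bath's visit count. The ideal input of any window is $Y_{\rm in}=P\,E_0$, where $P$ is the exact localized prefix. This prefix consists of the completed factors $D^{\rm loc}_{c'}$ for $c'<c$, then the same-colour factors $W_{C'}$ already applied, then the windows of Eq.~\eqref{eq:regroup} preceding the current one inside $W_C$. Gates of factors whose hosts are disjoint from the processing region of $C$ commute with the local occupation projections and with the window. We will drop them by reordering, or equivalently give their baths $a_b=0$ in $\mathcal B_{\rm cat}$. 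For the forward family, the columns $V_{<j}Y_{\rm in}$ simply append a prefix of $V$. For the inverse family, $V^\dagger Y_{\rm in}$ and $V_{<j}V^\dagger Y_{\rm in}$ append first the adjoint gates of $V$ and then a prefix of $V$. In every case the gate sequence is a sub-word of one \emph{dominating} word. That word is the full relevant prefix, followed by $V^\dagger$ (inverse case), followed by $V$. Since $a_b$ is a sum of nonnegative commutator norms over the gates touching $b$, the dominating word's $a_b$ bound every member of the family. This gives the common set of bounds.

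The key step is counting visits per bath label in the dominating word. A label $b=(e,m)$ is owned by the cube $C_R(e)$. Which local factors contain a gate on $b$? A factor $W_{C'}$ retains $e$ only if $X_e\subseteq\Omega_{C'}$, and it then touches $b$ at most twice: once in its inverse old pass and once in its new forward pass. As shown after Eq.~\eqref{eq:globalparam}, the anchor must then lie in $\Omega_{C'}$, so $\|C'-C_R(e)\|_\infty\le 2\le3$. Each colour contains at most one such $C'$ per coordinate residue, so there are at most $7^D$ contributing owners. This gives $2\cdot7^D$ visits from the prefix. The window itself adds at most $2$ more in the inverse case ($V^\dagger$ then $V$), and the current factor's other pass adds at most $2$, giving the $+4$. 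Hence $\nu=2\cdot7^D+4$. The main obstacle is this bookkeeping: each permanent label must be visited only this bounded number of times regardless of the ordering of windows in Eq.~\eqref{eq:regroup} and of $r,M$. In particular, $W_C$'s old and new passes can both touch $b$, and the earlier windows of the same factor have already visited labels of later periods only through the complete earlier factors, never through the current one. I would verify this by a case split on whether $e$ is active in colour $c-1$, in colour $c$, or is cut.

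For the weight, every gate in the dominating word acting on a counted bath belongs to a term whose support lies in $RC+[-4R,5R)^D$, of volume $v_*$. For a fixed term, the sum over periods of $h\lambda_e$ is $\tau\lambda_e$. Summing over terms meeting each site gives at most $\lambda$ per site, hence at most $\lambda\tau v_*$ per full pass over the box. Each label contributes at most $\nu$ passes, so the total weight (Hamiltonian occurrences included where required) is at most $\Sigma_*=\nu\lambda\tau v_*$. The last clause of Lemma~\ref{lem:moment} then gives $\sum_b a_b^2\le\nu\Sigma_*/4$. Occupied spectators are included automatically, since $\mathcal B_{\rm cat}$ is the complete local bath set. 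Since only $\tau$ and the box enter the bound, it is independent of $r$ and $M$.
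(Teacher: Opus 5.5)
Your proposal is correct and follows essentially the paper's proof. A single unsimplified dominating visit sequence (the prefix for $Y_{\rm in}$, then $V^\dagger$ in the inverse case, then the complete $V$) fixes one common set of $a_b$. At most $7^D$ owners, each with one old inverse and one new forward pass, plus a constant for the appended windows, gives $\nu$. The sitewise definition of $\lambda$ over the box of volume $v_*$, together with the Cauchy--Schwarz clause of Lemma~\ref{lem:moment}, gives $\Sigma_*$ and $\sum_ba_b^2\le\nu\Sigma_*/4$.

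The window-ordering case split you flag as the main obstacle is not needed, because the bound of two visits per factor already holds on the complete localized product regardless of how passes are split into windows. Your claim that the current factor has not yet visited later-period labels is also false: the old inverse pass of $W_C$ sweeps all periods before its forward windows run. This does not affect your count.
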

\begin{proof}
A bath can be touched only by factors whose envelopes contain its term
support. Its anchor belongs to at most $7^D$ such enlarged cubes, and
each factor contains at most one old inverse pass and one new forward
pass. Consequently, a permanent label occurs at most $2\cdot7^D$ times
in the complete localized product. Each pass uses a label in exactly
one window, which preserves this count under subdivision.

For each call, the bound must also cover the prefixes used in its
catalyst estimate. In the inverse case these extend
$V^\dagger Y_{\rm in}$ by a forward prefix of $V$, adding at most two
visits per current label. Thus $\nu=2\cdot7^D+4$ covers both call families.

To obtain simultaneous bounds, we use the circuit for $Y_{\rm in}$ followed by
the complete $V$ in the forward case, and the circuit for $Y_{\rm in}$ followed
by $V^\dagger$ and then the complete $V$ in the inverse case. We retain
this as an unsimplified visit sequence, so each required prefix is
present even though the appended inverse and forward windows cancel
at the endpoint. For each label $b$, we define $a_b$ by summing its
commutator norms over this call's complete auxiliary physical circuit. Every prefix in the
corresponding sum uses a subset of those visits. Consequently,
one common set of $a_b$ bounds every term in that estimate.

Finally, each local term supplies $M$ permanent labels, each contributing
at most $\nu h\lambda_e$ to the common interaction-weight bound.
All associated supports lie in the fixed box of volume $v_*$, so the
sitewise definition of $\lambda$ gives a total at most
$\nu\lambda\tau v_*$. Bath occupation charges the jump labels;
we also include the target window's H occurrences in this bound for
the unmarked contribution in Section~\ref{sec:private-decay}.
Gates on other baths commute with the marked
occupation projections, and their unit norm preserves this estimate
even when they share system qubits. This proves the common value of $\Sigma_*$. Cauchy--Schwarz over
visits to each fixed label, as in Lemma~\ref{lem:moment}, supplies
the additional factor $\nu$ in $\sum_ba_b^2\le\nu\Sigma_*/4$.
\end{proof}

In particular, taking $z=2$ in Eq.~\eqref{eq:physmoment} gives, for
each of these physical input isometries $Y$ and every integer $u\ge0$,
\[
 \norm{\mathbf1_{(2u,\infty)}(N_{\rm cat})Y}
 \le e^{\nu\Sigma_*/8}2^{-u}.
\]
The number of stored records is therefore controlled by local
interaction weight and accuracy. The total number of baths enters
only through the label width. Section~\ref{sec:implementation} uses
this tail estimate to compare finite lists with the full physical bath
space. Table~\ref{tab:contracts} summarizes the input domains of the
different estimates. Occupation is bounded on ideal intermediate
states; approximation errors are propagated by the actual unitary
circuits that follow them.

\begin{table}[H]
\centering\small
\begin{tabular}{>{\raggedright\arraybackslash}p{.24\linewidth}>{\raggedright\arraybackslash}p{.42\linewidth}>{\raggedright\arraybackslash}p{.23\linewidth}}
\toprule
Comparison & Input on which the bound holds & Error norm and location\\\midrule
Collision channel & Arbitrary system--reference input with independent bath vacuum per jump & Diamond norm; Lemma~\ref{lem:collision}\\
Spatial decomposition & Global activation unitary applied to the shared initial vacuum & Initialized operator norm; Theorem~\ref{thm:patch}\\
Ideal amplified window & Physical localized prefixes and the inverse preimages of Theorem~\ref{thm:prefix} & Initialized operator norm; Eq.~\eqref{eq:outererror}\\
Encoded auxiliary circuit & Both public and private components, bounded occupation and initialized work & Restricted operator norm; Lemma~\ref{lem:directfinite}\\
Implemented local factor & The original physical prefixes, including their high-occupation tails & Initialized operator norm; Theorem~\ref{thm:block}\\
Terminal synthesis & Entire allocated register Hilbert space & Full operator norm; Eq.~\eqref{eq:sk-error}\\\bottomrule
\end{tabular}
\caption{Input domains in the approximation argument. Each initialized
operator estimate is uniform over the input and an untouched reference.
The assembled segment estimate is converted to channel error and
composed across segments using Lemma~\ref{lem:samechannel}.}
\label{tab:contracts}
\end{table}

\section{Coherent simulation within a spatial block}
\label{sec:compiler}
Fix a nonempty forward window $V=V_L\cdots V_1$, with occurrences
in chronological order. It acts on the complete system--bath data
space, including occupied baths that are inactive in the window. An
empty window is the identity. The input isometry $Y$ is a physical
prefix or an inverse preimage covered by Theorem~\ref{thm:prefix}.
Executing the window one collision at a time would give a gate count
linear in $L$, which can be polynomial in $1/\eps$.
We will instead use the fixed term order to act on an occurrence
address in $\lceil\log_2L\rceil$ qubits. The resulting circuit implements
a larger unitary $S$ from which a close approximation to $V$ can be
obtained by a short sequence of calls.

\paragraph{How the larger unitary represents the target.}
The transducer framework \cite[Section~3.1, Theorem~3.1]{BJY}
divides a Hilbert space into a \emph{public} data subspace and an
orthogonal \emph{private} subspace. These are two coherent branches,
distinguished by a flag; the word private does not mean a discarded
environment. Write
\[
 S=\begin{pmatrix}S_{00}&S_{01}\\S_{10}&S_{11}\end{pmatrix}.
\]
For public amplitude $a$ and private amplitude $z$, one call performs
\[
 a\oplus z\ \longmapsto\
 (S_{00}a+S_{01}z)\oplus(S_{10}a+S_{11}z).
\]
Thus $S_{10}$ moves amplitude into the private branch and $S_{01}$
can return it to the public branch. A linear map $\Gamma$ with
\begin{equation}\label{eq:transducer-intro}
\begin{aligned}
 S(x\oplus\Gamma x)&=Vx\oplus\Gamma x,\\
 (I-S_{11})\Gamma&=S_{10},\qquad S_{00}+S_{01}\Gamma=V
\end{aligned}
\end{equation}
describes a private amplitude that is unchanged by a call while the
public amplitude undergoes $V$; the second line is the block form of
the first. This is the \emph{catalyst map}.
It is an analysis object: $\Gamma x$ need not be normalized, and the
circuit starts with private amplitude zero.

Finite reuse feeds one shared private branch through several public
address slots and coherently averages their outputs
\cite[Lemma~5.1 and Sections~6--7]{ChenLindblad}. For an integer $N\ge1$,
its selected operator is
\[
 P_N=V-S_{01}g_N(S_{11})\Gamma,
 \qquad g_N(z):=\frac1N\sum_{j=0}^{N-1}z^j.
\]
Section~\ref{sec:reuse} combines the operators $P_N$ into an
approximation $M_q$ of $V$ and constructs a unitary with selected block
$M_q/2$. For an integer $q\ge1$, the residual is
\begin{equation}\label{eq:residual-roadmap}
 V-M_q=S_{01}p(S_{11})^qg_{6q}(S_{11})\Gamma,
 \qquad p(z):=\frac{z^2+z^4}{2}.
\end{equation}
Thus the quantity to control on the physical input $Y$ is
$p(S_{11})^q\Gamma Y$: the averaging polynomial is a contraction and
commutes with $p(S_{11})^q$, and $S_{01}$ is also a contraction. The
adjoint estimate is needed for the complete amplification circuit.
Section~\ref{sec:private-decay} obtains both estimates by cancelling
same-label motion in a retained reference. The remaining reference
transitions advance through distinct occupied labels, while deviations
from the reference are small when the window action is small. Low
occupation therefore forces a long surviving product to contain many
small deviations. The polynomial describes the residual of a unitary
reuse circuit; it is not applied as a separate nonunitary gate.

\paragraph{From the transducer to the local circuit.}
We first build one local transducer, compose the local identities,
and give an address-tree circuit for their product $S$.
We then bound its private residual and the catalyst's occupation.
The bath configurations present in $Y$ enter these estimates;
the catalyst itself is never prepared. Finally we turn the exact
identities into finite circuits on short lists of occupied baths.
Section~\ref{sec:window-error} combines their low-occupation guarantee
with the physical tail bound and the residual estimate on the original
$Y$. For an inverse window, the same forward estimate is applied to
$V^\dagger Y_{\rm in}$ before taking the adjoint of the saved circuit.

\subsection{A Cayley transducer for occupied baths}
The following identity extends the Cayley transducer for Hamiltonians
\cite[Lemma~2]{ChenHamiltonian} and the isometries for fresh baths
\cite[Lemma~3.1]{ChenLindblad}, \cite[Proposition~12]{WangYe} to the
prescribed unitary on both vacuum and occupied inputs. The two bath
sectors may have different weights, provided their geometric mean
gives the required coupling. We choose weight one on the occupied
sector and a weight proportional to $h$ on vacuum. This preserves the
physical gate and makes the transducer's public-to-public block vanish
on occupied input. Consequently, an interval product of these blocks
is exactly zero if the interval contains an occupied active jump bath.
Otherwise it is a scalar computable from the number of occurrences of
each term. This property will make the address-tree coefficients
efficiently evaluable in Section~\ref{sec:implementation}.

Let $\mathcal H_{\rm cell}$ contain the local system and selected jump
bath, with public space $\mathcal H_{\rm cell}$ and private space
$\mathbb C^2_{\mathsf Q}\otimes\mathcal H_{\rm cell}$.
With a public/private flag $\mathsf P$, the three active sectors are
$\ket{\pub,0}_{\mathsf P\mathsf Q}\otimes x$,
$\ket{\priv,0}_{\mathsf P\mathsf Q}\otimes x$ and
$\ket{\priv,1}_{\mathsf P\mathsf Q}\otimes x$.
They use the same cell register; the unused public ancilla-one sector
can be fixed by identity.
For $K=K^\dagger$, $\norm K\le1/2$, and $0\preceq\Theta\preceq I$, we define
\[
 O:=\begin{pmatrix}K&\sqrt{I-K^2}\\\sqrt{I-K^2}&-K\end{pmatrix},\qquad
 q_0:=\binom I0,\qquad P:=q_0q_0^\dagger.
\]
Here $q_0x=\ket0_{\mathsf Q}\otimes x$ embeds the cell into the private
ancilla-zero subspace, and $q_0^\dagger Oq_0=K$.
In particular,
\[
 O(\ket0_{\mathsf Q}\otimes x)
 =\ket0_{\mathsf Q}\otimes Kx+
   \ket1_{\mathsf Q}\otimes\sqrt{I-K^2}\,x.
\]
The Hermitian unitary $O$ is a reflection, since $O^2=I$.
We call its address-controlled application the \emph{selected
block-encoding gate}; it is distinct from the signal reflection used
for amplitude amplification in Section~\ref{sec:reuse}.
The data-space operators
\[
 t:=(I-\Theta)(I+\Theta)^{-1},\qquad
 b:=2\sqrt\Theta(I+\Theta)^{-1}
\]
are commuting functions of $\Theta$. They form the larger block operator
\[
 R_\Theta:=\begin{pmatrix}t&-ibq_0^\dagger\\q_0b&i(I-P+q_0tq_0^\dagger)\end{pmatrix},
\]
which couples public to the selected private subspace. Applying $O$ first gives
\[
 S_\Theta:=R_\Theta(I\oplus O),\quad
 H_\Theta:=\sqrt\Theta K\sqrt\Theta,\quad
 D_\Theta:=I+iH_\Theta,\quad
 \gamma:=(I+iO)q_0\sqrt\Theta D_\Theta^{-1}.
\]
$K$ and $\Theta$ need not commute; these factor orders are fixed.
On input $x\oplus0$, a first call to $S_\Theta$ produces
$tx\oplus q_0bx$. The block encoding acts on the private amplitude
when it is present in a later call. The following identity gives the
stationary private amplitude and identifies the target transformation.
\begin{lemma}[Weighted Cayley identity]\label{lem:cell}
Both $S_\Theta$ and $V_\Theta=(I-iH_\Theta)D_\Theta^{-1}$ are unitary, and
\[
 S_\Theta\binom{I}{\gamma}=\binom{V_\Theta}{\gamma},
 \qquad \gamma^\dagger \gamma=2D_\Theta^{-\dagger}\Theta D_\Theta^{-1}.
\]
\end{lemma}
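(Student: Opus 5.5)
The plan is to verify the four claims by direct block computation. The only inputs are the scalar identities among $t,b,\Theta$, the facts that $O$ is a Hermitian reflection, and $q_0^\dagger q_0=I$, $q_0^\dagger(I-P)=0$, $q_0^\dagger Oq_0=K$.

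\emph{Unitarity.} First, $V_\Theta$ is the Cayley transform of the Hermitian operator $H_\Theta=\sqrt\Theta K\sqrt\Theta$. Since $I\mp iH_\Theta$ commute and are normal with $(I-iH_\Theta)^\dagger(I-iH_\Theta)=(I+iH_\Theta)^\dagger(I+iH_\Theta)$, $V_\Theta$ is unitary. For $S_\Theta$, note that $I\oplus O$ is unitary because $O^2=I$ and $O=O^\dagger$, so it suffices to check $R_\Theta^\dagger R_\Theta=I$. Here $t,b$ are commuting Hermitian functions of $\Theta$ with
\[
t^2+b^2=\bigl((I-\Theta)^2+4\Theta\bigr)(I+\Theta)^{-2}=I .
\]
The $(1,1)$ block of $R_\Theta^\dagger R_\Theta$ is $t^2+bq_0^\dagger q_0b=I$. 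The off-diagonal block is $-itbq_0^\dagger+ibq_0^\dagger(I-P)+ibtq_0^\dagger=0$, using $q_0^\dagger(I-P)=0$ and $q_0^\dagger q_0=I$. The $(2,2)$ block is $q_0b^2q_0^\dagger+(I-P)+q_0t^2q_0^\dagger=P+(I-P)=I$.

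\emph{The stationary identity.} Set $y:=q_0\sqrt\Theta D_\Theta^{-1}$, so that $\gamma=(I+iO)y$ and $O\gamma=(O+iI)y$.
\begin{itemize}[leftmargin=*]
\item \emph{Top block.} It equals $t-ibq_0^\dagger(O+i)y=t+b\sqrt\Theta D_\Theta^{-1}-ibK\sqrt\Theta D_\Theta^{-1}$, using $q_0^\dagger Oq_0=K$. Multiplying on the right by $D_\Theta$, the claim $=V_\Theta$ becomes
\[
tD_\Theta+b\sqrt\Theta-ibK\sqrt\Theta=I-iH_\Theta .
\]
The real part is $t+b\sqrt\Theta=(I-\Theta+2\Theta)(I+\Theta)^{-1}=I$. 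For the imaginary part, commute the functions of $\Theta$ to the left:
\[
tH_\Theta-bK\sqrt\Theta=(I+\Theta)^{-1}\sqrt\Theta\bigl[(I-\Theta)-2I\bigr]K\sqrt\Theta=-H_\Theta .
\]
\item \emph{Bottom block.} It must equal $\gamma$, i.e.
\[
q_0bD_\Theta^{-1}\!\cdot\! D_\Theta+i\bigl(I-q_0(I-t)q_0^\dagger\bigr)(O+i)y=(I+iO)y .
\]
Expanding, $i(O+i)y=(iO-I)y$. The difference from $(I+iO)y$ is therefore $-2y$, which must be cancelled by $q_0b-iq_0(I-t)q_0^\dagger(O+i)y$. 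Projecting with $q_0^\dagger$ (everything lies in the range of $q_0$) reduces this to the data-space identity
\[
b-i(I-t)(K+i)\sqrt\Theta D_\Theta^{-1}=2\sqrt\Theta D_\Theta^{-1}.
\]
After right-multiplying by $D_\Theta$, this is checked with $I-t=2\Theta(I+\Theta)^{-1}$, $b=2\sqrt\Theta(I+\Theta)^{-1}$, and the same commuting manipulation as above.
\end{itemize}

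\emph{Norm of the catalyst.} Since $O$ is a Hermitian reflection, $(I+iO)^\dagger(I+iO)=(I-iO)(I+iO)=I+O^2=2I$. Hence $\gamma^\dagger\gamma=2D_\Theta^{-\dagger}\sqrt\Theta q_0^\dagger q_0\sqrt\Theta D_\Theta^{-1}=2D_\Theta^{-\dagger}\Theta D_\Theta^{-1}$.

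I expect the bottom-block verification to be the main obstacle. It is the only place where the noncommuting $K$ and $\Theta$ interact with the private-sector bookkeeping, so the factor order must be kept exactly as fixed in the definitions. The first thing to try is to reduce everything to the range of $q_0$ and to write each expression with all functions of $\Theta$ collected to the left of $K$.
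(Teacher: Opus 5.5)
Your proposal is correct and follows essentially the same route as the paper: a direct block computation after right-multiplying by $D_\Theta$. It uses the same ingredients, namely $t+b\sqrt\Theta=I$, $t\sqrt\Theta-b=-\sqrt\Theta$, $O\gamma=(O+iI)q_0\sqrt\Theta D_\Theta^{-1}$ and $(I+iO)^\dagger(I+iO)=2I$.

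The differences are cosmetic. You check $R_\Theta^\dagger R_\Theta=I$ blockwise, where the paper notes that $R_\Theta$ rotates the public and $\mathsf Q=0$ copies and applies phase $i$ on the $\mathsf Q=1$ copy. You also write $I-P+q_0tq_0^\dagger=I-q_0(I-t)q_0^\dagger$, which disposes of the $\mathsf Q=1$ component at once. Your final data-space identity, which you left unchecked, does hold, since $b+(I-t)\sqrt\Theta=2\sqrt\Theta$ and $b\sqrt\Theta=I-t$.
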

\begin{proof}
$O$ is a Hermitian involution, and $t^2+b^2=I$ makes $R_\Theta$
unitary on the two coupled copies; the remaining copy has phase $i$.
Since $H_\Theta$ is self-adjoint, $D_\Theta$ is invertible and
$V_\Theta$ unitary. We put $r_\Theta=\sqrt\Theta$ and
$B_K=\sqrt{I-K^2}$. We use
\[
 t+br_\Theta=I,\qquad tr_\Theta-b=-r_\Theta,\qquad
 O\gamma=(O+iI)q_0r_\Theta D_\Theta^{-1}.
\]
Right-multiplying by $D_\Theta$ clears the common denominator. The public
component and the private ancilla-zero component become
\[
\begin{aligned}
 tD_\Theta+br_\Theta-ibKr_\Theta
 &=I+i(tr_\Theta-b)Kr_\Theta=I-ir_\Theta Kr_\Theta,\\
 bD_\Theta-tr_\Theta+itKr_\Theta
 &=(b-tr_\Theta)+i(br_\Theta+t)Kr_\Theta=(I+iK)r_\Theta.
\end{aligned}
\]
Here each second equality uses the preceding commuting functions of
$\Theta$, leaving $K$ in its prescribed position. The other private
numerator is $iB_Kr_\Theta$, so the private output is
$\gamma D_\Theta$. Multiplication by $D_\Theta^{-1}$ proves the
identity; $(I+iO)^\dagger(I+iO)=2I$ gives its metric.
\end{proof}
The Frobenius normalization gives $\norm K\le1/2$ for both local
specializations. The mesh choice in Eq.~\eqref{eq:mesh} also gives
$h\lambda_e\le1$, so their weights are positive contractions.
For a Hamiltonian occurrence, the choice $K=H/\alpha$ and
$\Theta=h\alpha I/2$ gives $H_\Theta=hH/2$, as required by the
physical Cayley step. For a jump occurrence, we take
$K=G_J/\beta$ and
\[
 \Theta=(h\beta^2/4)P_0+P_1.
\]
For the jump choice, $G_J$ couples the two bath sectors, so the geometric
mean of their weights gives $H_\Theta=\sqrt h G_J/2$. The weighted
identity therefore realizes the prescribed physical unitary on both
sectors. The transducer $S_\Theta$ has public-to-public block $t_J$,
which is supported on vacuum:

\begin{equation}\label{eq:through}
 t_J=t_J^{(0)}P_0,\qquad
 t_J^{(0)}=\frac{1-h\beta^2/4}{1+h\beta^2/4},
\end{equation}
whereas $t_H$ is a scalar.
For H, $\gamma_H^\dagger \gamma_H\preceq h\alpha I$. For J,
\begin{equation}\label{eq:metricbound}
 \gamma_J^\dagger \gamma_J\preceq\tfrac54h\beta^2P_0+4P_1.
\end{equation}
To verify Eq.~\eqref{eq:metricbound}, we set $r=h\beta^2/4$ and let
$K_{10}:=P_1KP_0$ be the block of $K$ from $\operatorname{ran}P_0$ to
$\operatorname{ran}P_1$. On vacuum input, the resolvent components are
$z_0=(I+rK_{10}^\dagger K_{10})^{-1}x$ and
$z_1=-i\sqrt rK_{10}z_0$. The exact metric and
$\norm{K_{10}}\le1/2$ give, for $x\in\operatorname{ran}P_0$,
\[
 \norm{\gamma P_0x}^2
 =2r\bigl(\norm{z_0}^2+\norm{K_{10}z_0}^2\bigr)
 \le(5/2)r\norm x^2.
\]
On the whole space, the same metric and the resolvent contraction give
$\norm{\gamma}^2\le2$. Splitting the input into these two sectors and
using $\norm{u+v}^2\le2\norm u^2+2\norm v^2$ proves
Eq.~\eqref{eq:metricbound}. The vacuum term is proportional to $h$. The occupied term is of constant
order, so summing it over a window requires the occupation bounds for
the physical prefixes.

\subsection{Sequential composition of local transducers}
Lemma~\ref{lem:cell} holds on the full data space, so sequential
composition also permits occupied baths
\cite[Definition~9.8 and Proposition~9.9]{BJY}. For
$V=V_L\cdots V_1$, one common public space and orthogonal private
occurrence summands form
\begin{equation}\label{eq:semanticspace}
 \mathcal K:=\mathcal H_{\rm dat}^{(\pub)}\oplus
 \bigoplus_{j=1}^L(\mathbb C^2_{\mathsf Q}\otimes\mathcal H_{\rm dat})_j.
\end{equation}
This direct sum is represented by one data register and an address,
not by $L$ copies of the data. Public amplitude $x$ occupies
$\ket{\pub,0,0}\otimes x$, and amplitude $\ket z\otimes x$ in private
summand $j$ occupies $\ket{\priv,j-1,z}\otimes x$, where $z=0,1$
is the block-encoding ancilla value. Superpositions of these branches
may entangle the address with the shared system--bath data.
Section~\ref{sec:implementation} adds the finite bath encoding and work
registers. A static term can recur in different periods, but its
permanent labels $(e,m)$ remain distinct. A forward window visits each label at most once,
without assuming that its bath is initially empty.
The following form of sequential composition is adapted from
\citet[Proposition~9.9]{BJY}.
\begin{lemma}[Sequential composition]\label{lem:imported-composition}
Suppose $S_j$ is a unitary on $\mathcal H\oplus\mathcal L_j$ and
$S_j(I,\gamma_j)^{\mathsf T}=(V_j,\gamma_j)^{\mathsf T}$, where every $V_j$
acts on the common public space $\mathcal H$. When each $S_j$ acts as
identity on the other private summands, their product has target
$V_L\cdots V_1$ in the graph identity, with catalyst map
$\Gamma x:=\bigoplus_j \gamma_jV_{j-1}\cdots V_1x$.
\end{lemma}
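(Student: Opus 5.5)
Let $\mathcal K=\mathcal H\oplus\bigoplus_j\mathcal L_j$, and regard each $S_j$ as a unitary on $\mathcal K$ that acts on $\mathcal H\oplus\mathcal L_j$ and as identity on $\mathcal L_i$ for $i\ne j$. Put $S:=S_L\cdots S_1$, and write $W_{<j}:=V_{j-1}\cdots V_1$ with $W_{<1}=I$. I will check the graph identity $S(x\oplus\Gamma x)=V_L\cdots V_1x\oplus\Gamma x$ by induction on the number of factors. This is an exact statement about unitaries and linear maps, with no analytic estimates. The block equations $(I-S_{11})\Gamma=S_{10}$ and $S_{00}+S_{01}\Gamma=V$ of Eq.~\eqref{eq:transducer-intro} then follow by reading off the public and private components, since $x$ is arbitrary.

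The induction hypothesis is
\[
 S_k\cdots S_1\Bigl(x\oplus\textstyle\bigoplus_{j}\gamma_jW_{<j}x\Bigr)
 =W_{<k+1}x\oplus\textstyle\bigoplus_{j}\gamma_jW_{<j}x,
\]
so the whole private vector is unchanged and only the public component advances. The case $k=0$ is trivial.

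For the step, apply $S_{k+1}$ to the right-hand side. It leaves the summands $\mathcal L_i$ with $i\neq k+1$ untouched. On $\mathcal H\oplus\mathcal L_{k+1}$ its input is $(y,\gamma_{k+1}y)$ with $y=W_{<k+1}x$. The hypothesis $S_{k+1}(I,\gamma_{k+1})^{\mathsf T}=(V_{k+1},\gamma_{k+1})^{\mathsf T}$, applied to $y$, returns $(V_{k+1}y,\gamma_{k+1}y)$. This is exactly $(W_{<k+2}x,\gamma_{k+1}W_{<k+1}x)$, so the private summand $k+1$ is restored and the public component advances by $V_{k+1}$. Taking $k=L$ gives the claim.

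The only point requiring care is the bookkeeping of "identity on the other private summands". Each $S_j$ must act on the common public space and only its own private summand, so it commutes with the embeddings of the other summands; this is the way Eq.~\eqref{eq:semanticspace} is realized by the address register. One should also note that $\gamma_j$ is applied to the already-evolved public state $W_{<j}x$, which is why the catalyst carries the prefix products. No summand $\mathcal L_j$ is visited twice, so no interference between private branches arises. Unitarity of $S$ is immediate from unitarity of each factor.
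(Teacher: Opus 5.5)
Your proposal is correct and is the paper's argument: applying the local graph identities in order, step $j$ restores $\gamma_jV_{j-1}\cdots V_1x$ in its own private summand while the other steps act as identity there. You make the paper's one-line induction explicit and read off the block equations, which is fine.
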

\begin{proof}
Apply the local identities in order: step $j$ retains
$\gamma_jV_{j-1}\cdots V_1x$ in its private summand, while later
steps act as identity there.
\end{proof}
We apply this identity with $\mathcal H=\mathcal H_{\rm dat}$ and
$\mathcal L_j=(\mathbb C^2_{\mathsf Q}\otimes\mathcal H_{\rm dat})_j$.
The $j$th catalyst is evaluated on the physical prefix
$V_{j-1}\cdots V_1$, controlled by Theorem~\ref{thm:prefix}.
Using block indices for the public and private components $0,1$ gives
\begin{equation}\label{eq:cascadegraph}
 S:=S_L\cdots S_1=\begin{pmatrix}S_{00}&S_{01}\\S_{10}&S_{11}\end{pmatrix},
 \qquad S\binom{I}{\Gamma}=\binom{V}{\Gamma},
 \qquad S_{10}=(I-S_{11})\Gamma,\quad S_{00}+S_{01}\Gamma=V.
\end{equation}
The address trees implement $S$ with a logarithmic occurrence register.
To track its occupation changes and interval products, let
$D_{a,b}=t_b\cdots t_a$ (identity for an empty interval) and let $q_j$
embed data into the $\mathsf Q=0$ coordinate of summand $j$.
For a transition from private summand $j$ to a later summand $i$,
there is one chronological route: apply $O_j$, leave summand $j$
for public, remain public through cells $j+1,\ldots,i-1$, and enter
summand $i$. Its amplitude does not encounter $O_i$, because $O_i$
acts on the private input before that cell's public-to-private transfer.
Once a cell has been passed, its private summand is left untouched.
Multiplying along these routes gives
\begin{equation}\label{eq:allcorners}
\begin{aligned}
 S_{00}&=D_{1,L},\\
 (S_{10})_i&=q_i b_iD_{1,i-1},\\
 (S_{01})_j&=-iD_{j+1,L}b_jq_j^\dagger O_j,\\
 (S_{11})_{jj}&=i(I-q_jq_j^\dagger+q_jt_jq_j^\dagger)O_j,\\
 (S_{11})_{ij}&=-iq_ib_iD_{j+1,i-1}b_jq_j^\dagger O_j\quad(i>j),\\
 (S_{11})_{ij}&=0\quad(i<j).
\end{aligned}
\end{equation}
In particular, $(S_{11})_{ij}$ for $i>j$ applies $O_j$ before the
interval product. Thus the latter is evaluated on the bath state after
that local operation. This ordering is essential on occupied inputs and
will be preserved by the compressed implementation.

Let $\mathcal O(y)$ be the occupied physical bath labels in basis
configuration $y$, and let $b(j)$ be the permanent label of occurrence
$j$, with a distinct formal label for a Hamiltonian occurrence. For the
private residual estimate, we also count the selected occurrence label.
The counted set is $\mathcal O(y)$ in the public space and
$\mathcal O(y)\cup\{b(j)\}$ in private summand $j$. Thus the selected
label remains counted when its bath changes. Selecting a vacuum bath
adds one label, selecting an occupied bath counts it once, and a
Hamiltonian occurrence adds its formal label. Physical occupation
continues to count only bath labels.

Let $N_{\rm aug}$ count this set and act as identity on the system,
block-encoding ancilla and reference. We put
$Q_{\le k}=\mathbf1_{[0,k]}(N_{\rm aug})$. These operators act on the
direct sum in Eq.~\eqref{eq:semanticspace}; unused address coordinates
are handled by a separate identity extension. The physical count
$N_{\rm cat}$ controls the input tail. The count $N_{\rm aug}$ also
includes the selected label and controls private residuals and the
capacity needed for reuse.
\begin{lemma}[Change in augmented occupation number]\label{lem:bandwidth}
For the physical H/J specializations above and every $k\ge0$,
\[
 S Q_{\le k}=Q_{\le k+1}S Q_{\le k},\qquad
 S^\dagger Q_{\le k}=Q_{\le k+1}S^\dagger Q_{\le k}.
\]
\end{lemma}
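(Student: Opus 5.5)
The plan is to prove the statement block by block, using the explicit corner formulas in Eq.~\eqref{eq:allcorners} rather than the factorization $S=S_L\cdots S_1$. Writing $S$ as a sum of products of elementary pieces, it suffices to show two things. First, every elementary piece maps a basis configuration with counted set $\mathcal A$ to configurations whose counted set is contained in $\mathcal A\cup\{\ell\}$ for at most one new label $\ell$. Second, along any of the chronological routes only one such new label is ever added. Since $N_{\rm aug}$ counts labels (not bath values), it suffices to track which labels are counted on each branch. The unused address coordinates and the public ancilla-one sector carry the identity extension, so they trivially satisfy the claim.

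The first step classifies the elementary pieces. The operators $t_j,b_j$ and the interval products $D_{a,b}$ are functions of $\Theta_j$. For H cells $\Theta_j$ is a scalar. For J cells it is $(h\beta^2/4)P_0+P_1$ on bath $b(j)$, so these operators commute with every projection $P_{b,0},P_b$ and preserve the occupied label set $\mathcal O(y)$. The embeddings $q_i$ and $q_j^\dagger$ only change the flag and address. The reflection $O_j$ can change the value of bath $b(j)$ between vacuum and occupied sectors; for an H cell it acts on the system only. Thus the only label whose occupation can change is $b(j)$, so $O_j$ maps $\mathcal O(y)$ into $\mathcal O(y)\triangle\{b(j)\}\subseteq\mathcal O(y)\cup\{b(j)\}$.

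The second step checks each corner. The corner $S_{00}=D_{1,L}$ preserves the count. The corner $(S_{10})_i=q_ib_iD_{1,i-1}$ preserves $\mathcal O(y)$ and then adds the selected label $b(i)$, so the count rises by at most one. For $(S_{01})_j$, the input count is computed on $\mathcal O(y)\cup\{b(j)\}$. After $O_j$ the physical set is contained in this union, and after leaving to public the counted set is just the new $\mathcal O(y')$; the count therefore does not increase. The diagonal block $(S_{11})_{jj}$ applies $O_j$ and stays in summand $j$, so the counted set remains inside $\mathcal O(y)\cup\{b(j)\}$. For $(S_{11})_{ij}$ with $i>j$, the route is the composite of the previous two cases: leaving summand $j$ gives count $\le k$, the interval product preserves it, and entering summand $i$ adds $b(i)$, for a total of at most $k+1$. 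Hence $SQ_{\le k}=Q_{\le k+1}SQ_{\le k}$. The ordering noted after Eq.~\eqref{eq:allcorners}, in which $O_j$ is applied before the interval product, is exactly what makes the route argument valid on occupied inputs.

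For $S^\dagger$, I will take adjoints of the same corners. The adjoint corners consist of $b^\dagger=b$, $t^\dagger=t$, $D^\dagger$ and $O_j^\dagger=O_j$, with the embeddings and extractions swapped. Their routes run from public or summand $i$ back to summand $j<i$. The same accounting applies: exiting a summand never increases the count, because the exited label was already counted, and the count rises by one only when entering a summand. The main obstacle is the bookkeeping for multiple jump values $\mu$ and for padded nonzero values. I will handle it by noting that $N_{\rm aug}$ depends only on vacuum versus occupied status, and that $\Theta$, $t$ and $b$ respect exactly this two-sector split.
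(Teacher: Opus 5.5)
Your proposal is correct and takes essentially the same approach as the paper: a block-by-block inspection of Eq.~\eqref{eq:allcorners}. In both, only $O_j$ changes a bath value, and it acts while the label $b(j)$ is already counted; the interval factors preserve all occupations; and entering a private summand adds at most one label. The adjoint blocks, where $O_j$ acts last after $b(j)$ has been charged, are handled the same way, and your explicit corner-by-corner check spells out what the paper summarizes as the ``same one-label support property'' for the public--private and diagonal blocks.
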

\begin{proof}
We inspect the blocks because a complete transducer call contains
many occurrences. For $i>j$, only $O_j$ changes a bath value in
$(S_{11})_{ij}$. Its label $b(j)$ is already included in the input
augmented set $F$; the interval factors preserve all bath occupations,
and the output set is contained in $F\cup\{b(i)\}$. The adjoint block is
\[
 (S_{11}^\dagger)_{ji}
 =iO_jq_jb_jD_{j+1,i-1}^\dagger b_iq_i^\dagger.
\]
Here $O_j$ acts last, and its bath is charged by the output label
$b(j)$. Thus the output set is contained in $F\cup\{b(j)\}$.
The blocks coupling public and private components, and the diagonal blocks, have the same one-label support
property. Every block therefore sends occupation at most $k$ into
occupation at most $k+1$, which proves both projection identities.
\end{proof}

\subsection{An exact binary-tree factorization}\label{sec:tree-factorization}
The composed transducer contains $L$ rotations. Executing them
directly would make the gate count proportional to the number of
collision occurrences. We instead evaluate their product through a
binary tree on the occurrence address. The dyadic factorization of
\citet[Lemma~10]{ChenDyadic}, also used in
\citet[Section~9.3]{ChenLindblad}, assumes a common rotation cosine.
Here the cosine depends on the term and bath configuration. We use
the unequal-rotation merge of Wang and Ye~\cite[Section~4.3.1,
Eqs.~(128)--(130)]{WangYe} in this recursive structure. We give the
identity explicitly to track its bath-dependent coefficients and its
action on both public and private inputs, as required for finite reuse.

Let $Q_j$ apply $O_j$ only in private summand $j$. For $k\ne j$,
$R_k$ couples public to private summand $k$ and is identity on summand
$j$, so $Q_j$ commutes with $R_k$. This argument uses the direct-sum
supports, and hence holds for arbitrary system matrices.
Define the full selected operation by
\[
 Q:=Q_L\cdots Q_1=I_{\pub}\oplus\bigoplus_{j=1}^L O_j.
\]
The private summands are orthogonal, so $Q$ applies only the block
encoding selected by the occurrence address. Its implementation scans
the fixed list of term types, rather than executing all $L$ occurrences.
The operator $Q$ is distinct from the one-qubit ancilla $\mathsf Q$.
On a fixed bath configuration $y$, let $G_j$ be the real rotation
with cosine $t_j(y)$ and sine $b_j(y)$ between the public component and the coordinate of private summand $j$ with ancilla state zero; it fixes the coordinates with ancilla state one.
The cell rotation factors as $R_j=G_jD_j$, where $D_j$ applies phase
$i$ to both ancilla sectors of summand $j$ and fixes the other summands.
These phases commute with other cells' real rotations. Moving the local
block encodings to the input and collecting the phases therefore gives
\[
 S=(G_L\cdots G_1)D_{\rm ph}Q,
 \qquad D_{\rm ph}=I_{\pub}\oplus iI_{\priv}.
\]
The recursive step merges two adjacent intervals. Each interval has one
private representative, so the step acts on three coordinates: public,
the left representative, and the right representative. For interval
cosines and sines $(a,b)$ and $(c,d)$, the exact product on these
coordinates is
\[
 G_BG_A=\begin{pmatrix}ac&-cb&-d\\b&a&0\\da&-db&c\end{pmatrix}.
\]
Its combined public cosine is $ac$. With
$s=\sqrt{1-a^2c^2}>0$, the required private basis changes are
\[
 I_{AB}=s^{-1}\begin{pmatrix}cb&d\\-d&cb\end{pmatrix},\qquad
 O_{AB}=s^{-1}\begin{pmatrix}b&da\\-da&b\end{pmatrix}.
\]

For example, if the first selected jump bath is occupied, its cosine
and sine are $a=0$ and $b=1$. With $c^2+d^2=1$ for the second
leaf, the formulas give
\[
 I_{AB}=\begin{pmatrix}c&d\\-d&c\end{pmatrix},\qquad O_{AB}=I.
\]
Thus the input basis change is generally nontrivial even when the
output basis change is identity. This is the asymmetry that an
occupied bath introduces into the two-leaf merge.

The two changes of basis align different coupling directions. Writing
$\ket p,\ket A,\ket B$ for the three coordinates, the public output is
\[
 G_BG_A\ket p=ac\ket p+b\ket A+ad\ket B.
\]
The normalized private input direction coupled to public is
$(cb\ket A+d\ket B)/s$, whereas the private output direction is
$(b\ket A+ad\ket B)/s$. The input rotation sends the former to $\ket A$;
the inverse output rotation sends $\ket A$ to the latter. We verify the
remaining private directions in the proof below.

In the ordered coordinates $(p,A,B)$, write
\[
 G(t,s):=\begin{pmatrix}t&-s&0\\s&t&0\\0&0&1\end{pmatrix},
 \qquad t^2+s^2=1.
\]
This rotation couples the public coordinate to the left representative and fixes the other private coordinate.

\begin{lemma}[Merging two rotation sequences \cite{WangYe}]\label{lem:merge}
Assume $0\le a,c\le1$, $b=\sqrt{1-a^2}$, and $d=\sqrt{1-c^2}$,
as holds for our public-to-public factors. These matrices satisfy
\[
 G_BG_A=(1\oplus O_{AB}^{\mathsf T})G(ac,s)(1\oplus I_{AB}).
\]
When $s=0$, define all three factors as identity.
\end{lemma}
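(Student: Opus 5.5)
The identity is a finite $3\times3$ matrix computation, so I would verify it by direct multiplication. I would first establish the two normalizations that make the basis changes orthogonal and that will be used to simplify entries:
\[
 (cb)^2+d^2=c^2(1-a^2)+1-c^2=1-a^2c^2=s^2,\qquad
 b^2+(da)^2=1-a^2+a^2(1-c^2)=s^2 .
\]
These follow from $b^2=1-a^2$ and $d^2=1-c^2$. Hence for $s>0$ both $I_{AB}$ and $O_{AB}$ are real rotations. This also shows that the right-hand side is orthogonal, as the left-hand side is.

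Next I would compute the product from the right. In the coordinates $(p,A,B)$, the matrix $1\oplus I_{AB}$ has rows $(1,0,0)$, $(0,cb/s,d/s)$ and $(0,-d/s,cb/s)$. Multiplying by $G(ac,s)$ on the left gives the following rows:
\begin{itemize}
\item first row $(ac,-cb,-d)$, which already equals the first row of $G_BG_A$;
\item second row $(s,\,ac\cdot cb/s,\,ac\cdot d/s)$;
\item third row $(0,-d/s,cb/s)$.
\end{itemize}
Left multiplication by $1\oplus O_{AB}^{\mathsf T}$, where $O_{AB}^{\mathsf T}=s^{-1}\bigl(\begin{smallmatrix}b&-da\\da&b\end{smallmatrix}\bigr)$, leaves the first row unchanged. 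It replaces the last two rows by linear combinations with coefficients $b/s,\,-da/s$ and $da/s,\,b/s$.

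The new second row is then
\[
\bigl(b,\ a(b^2c^2+d^2)/s^2,\ (abcd-abcd)/s^2\bigr)=(b,a,0),
\]
using the first normalization. The new third row is
\[
\bigl(da,\ bd(a^2c^2-1)/s^2,\ c(a^2d^2+b^2)/s^2\bigr)=(da,-db,c),
\]
using $s^2=1-a^2c^2$ and the second normalization. These rows match $G_BG_A$ entry by entry.

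For the degenerate case I would note that $s=0$ forces $ac=1$. With $0\le a,c\le1$ this gives $a=c=1$ and $b=d=0$. Then the displayed $G_BG_A$ is the identity, consistent with the convention that all three factors are identity.

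I expect no real obstacle beyond the bookkeeping. The only delicate points are keeping the transpose on $O_{AB}$ and preserving the ordering convention (the coordinate order $(p,A,B)$ and the right-to-left action) so that the signs match. The sign pattern $-cb,-d$ in the first row is the check that the orientation of $I_{AB}$ is correct.
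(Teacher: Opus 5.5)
Your proposal is correct and follows the paper's proof: both verify the identity by direct multiplication, using the normalizations $c^2b^2+d^2=b^2+d^2a^2=s^2$. You multiply out all three rows, while the paper checks the first row and column and then multiplies only the $2\times2$ private block. Both treat $s=0$ the same way, observing that nonnegativity forces $a=c=1$ and $b=d=0$.
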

\begin{proof}
For $s>0$, $c^2b^2+d^2=b^2+d^2a^2=s^2$ makes both private
basis changes orthogonal. The first row and column agree; the remaining
private block follows from the same unit-circle identities:
\[
 s^{-2}\begin{pmatrix}b&-da\\da&b\end{pmatrix}
 \begin{pmatrix}ac&0\\0&1\end{pmatrix}
 \begin{pmatrix}cb&d\\-d&cb\end{pmatrix}
 =\begin{pmatrix}a&0\\-db&c\end{pmatrix}.
\]
At $s=0$, nonnegativity forces $a=c=1$ and $b=d=0$, so all factors
are identity.
\end{proof}

On each bath configuration, the private basis changes are defined
recursively. For adjacent child intervals $A,B$, we extend the merge
matrices by identity on their other coordinates and set
\[
 F_{\rm in}^{AB}=I_{AB}(F_{\rm in}^{A}\oplus F_{\rm in}^{B}),\qquad
 F_{\rm out}^{AB}=O_{AB}(F_{\rm out}^{A}\oplus F_{\rm out}^{B}).
\]
Each leaf has identity basis changes. The child changes have disjoint
private supports and commute with the other child's public rotation,
so Lemma~\ref{lem:merge} proves the full interval factorization by
induction. We take
$w=\lceil\log_2L\rceil$ and pad the right end with identity rotations
to obtain $2^w$ leaves; when $L=1$, there are no merge levels.
Choosing the leftmost leaf as each interval's representative yields
\begin{equation}\label{eq:direct}
 S=(1\oplus F_{\rm out}^{\mathsf T})G_{\rm root}
 (1\oplus F_{\rm in})D_{\rm ph}Q.
\end{equation}
The root rotation has cosine $\prod_{j=1}^Lt_j(y)$ and sine
$\sqrt{1-(\prod_{j=1}^Lt_j(y))^2}$; the trees fix the ancilla-one
coordinates. The two basis changes generally differ because each merge has distinct
private input and output coupling directions. Right padding is preserved
at every level: an empty right child has $c=1,d=0$, so both private
basis changes are identity, and a node with two empty children has all
three factors equal to identity. Induction over the tree therefore
shows that the valid occurrence subspace is invariant and the unused
right-padding coordinates are fixed. In the
register implementation, a flag $\mathsf P$ distinguishes the public
and private spaces, and an occurrence register $\mathsf T$ stores $j-1$
for private summand $j$. Thus the valid private projector
$\Pi_{\rm valid}$ selects $\mathsf P=\priv$ and
$0\le\mathsf T<L$. We implement the phase as
$I+(i-1)\Pi_{\rm valid}$ and the selected block encoding as identity
outside this projector, so padded labels retain phase one.
Equation~\eqref{eq:direct} consequently holds on every actual public
and private input and extends by identity to address padding.
Figure~\ref{fig:direct-circuit} shows the resulting order of operations.

\begin{figure}[t]
\centering
\begin{tikzpicture}[x=1cm,y=1cm,font=\small,
 bx/.style={draw=circuitblue,minimum height=8mm,align=center,inner sep=3pt}]
 \draw[double,double distance=1.2pt] (0,0)--(12.3,0);
 \node[anchor=east,align=right] at (0,0) {encoded\\registers};
 \node[bx,fill=circuitpale,minimum width=10mm] at (1,0) {$Q$};
 \node[bx,fill=circuitpale,minimum width=14mm] at (2.7,0) {$D_{\rm ph}$};
 \node[bx,fill=circuitpale,minimum width=22mm] at (5,0) {$F_{\rm in}$};
 \node[bx,fill=circuitpale,minimum width=21mm] at (7.7,0) {$G_{\rm root}$};
 \node[bx,fill=circuitpale,minimum width=26mm] at (10.5,0) {$F_{\rm out}^{\dagger}$};
 \draw (3.65,.72)--node[above]{execution order}(11.8,.72);
\end{tikzpicture}
\caption{Execution order in Eq.~\eqref{eq:direct}: $Q$, private phase, input tree from leaves to root, root rotation, and inverse output tree to leaves. The different tree rotations depend on a preserved address prefix and the bath configuration after $Q$, the only operation changing a bath value. Permanent labels are fixed. Section~\ref{sec:implementation} supplies register operations and costs.}
\label{fig:direct-circuit}
\end{figure}

\begin{figure}[H]
\centering
\begin{tikzpicture}[x=1cm,y=1cm,font=\small,
nd/.style={draw=circuitblue,fill=circuitpale,rounded corners=2pt,
minimum height=7mm,align=center}]
\node (a) at (0,2.1) {$00$};
\node (b) at (2.1,2.1) {$01$};
\node (c) at (5.1,2.1) {$10$};
\node (d) at (7.2,2.1) {$11$};
\node[nd] (ab) at (1.05,.9) {$I_{[1,2]}$\\representative $00$};
\node[nd] (cd) at (6.15,.9) {$I_{[3,4]}$\\representative $10$};
\node[nd] (rt) at (3.6,-.55) {$I_{[1,4]}$\\representative $00$};
\draw (a)--(ab); \draw (b)--(ab);
\draw (c)--(cd); \draw (d)--(cd);
\draw (ab)--(rt); \draw (cd)--(rt);
\node[anchor=west] at (8.0,2.1) {address $\mathsf T_1\mathsf T_0$};
\node[anchor=west,align=left] at (8.0,.9) {rotate $\mathsf T_0$\\preserve $\mathsf T_1$};
\node[anchor=west,align=left] at (8.0,-.55) {rotate $\mathsf T_1$\\under $\mathsf T_0=0$};
\end{tikzpicture}
\caption{Four-leaf input tree, with occurrence indices $1,\ldots,4$
encoded as addresses $00,01,10,11$. Each node rotates its two child
representatives; the orthogonal coordinates remain present.
The first level acts on $(00,01)$ and $(10,11)$ with potentially
different angles. The second acts on $(00,10)$.
All rotations are in the private $\mathsf Q=0$ sector and are controlled
by the bath configuration after $Q$.
The inverse output tree has the same pairs in reverse level order,
with the transposed $O_{AB}$ rotations instead of $I_{AB}$.
The edges describe the recursion, not physical data movement.}
\label{fig:four-leaf}
\end{figure}

\subsection{Suppression of the private residual}\label{sec:private-decay}
We bound both $p(S_{11})^q\Gamma Y$ and
$p(S_{11}^\dagger)^q\Gamma Y$, where $p(z):=(z^2+z^4)/2$.
We separate the private block into a reference operator and a small
perturbation. Reference transitions preserve the counted label set and
advance to distinct occurrences. Perturbations introduce only a few
labels. A long nonzero product acting on a state with low occupation
must therefore contain many small perturbations. The occupation
moment bounds the complementary tail on the original physical input
isometry $Y$.

The reference $S^{(0)}$ keeps the block encodings and permanent labels.
It sets Hamiltonian weights to zero and jump weights to $P_1$, retaining
the exact quarter turns associated with occupied jump baths. Only
Hamiltonian rotations and jump rotations on vacuum are removed.
After cancelling the common block encodings and phases, we compare
the two rotation sequences through their total interaction weight.
The estimate is uniform in the collision mesh.
\begin{lemma}[Small-action perturbation]\label{lem:perturb}
For a forward window of action $s=\sum_Hh\alpha+\sum_Jh\beta^2$,
\[
 \norm{S_{11}-S_{11}^{(0)}}\le\norm{S-S^{(0)}}\le2\sqrt s.
\]
\end{lemma}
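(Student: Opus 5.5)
The first inequality is a block-compression statement: $S_{11}-S^{(0)}_{11}$ is the private--private block of $S-S^{(0)}$. It is therefore obtained by compressing with the private projector on both sides, which cannot increase the operator norm. The work is in the second inequality.

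For that, I would use the factorization in the binary-tree subsection, $S=(G_L\cdots G_1)D_{\rm ph}Q$. The reference has the same form, $S^{(0)}=(G^{(0)}_L\cdots G^{(0)}_1)D_{\rm ph}Q$. Both keep the same block encodings $Q$ (the $O_j$ are unchanged) and the same phases $D_{\rm ph}$, so the common unitary factor $D_{\rm ph}Q$ on the right cancels in norm:
\[
 \norm{S-S^{(0)}}=\norm{G_L\cdots G_1-G^{(0)}_L\cdots G^{(0)}_1}.
\]
A telescoping sum over cells, with all other factors unitary, then gives
\[
 \norm{S-S^{(0)}}\le\sum_j\norm{G_j-G^{(0)}_j}.
\]
This bound is too weak, since it yields $\sum_j\sqrt{h\lambda_j}$ rather than $\sqrt{s}$. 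The rotations act on orthogonal private coordinates, so I would exploit that orthogonality instead of the triangle inequality.

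Each $G_j$ is a rotation on a fixed bath configuration between the public coordinate and private summand $j$'s $\mathsf Q=0$ coordinate, with angle $\theta_j$ satisfying $\cos\theta_j=t_j$ and $\sin\theta_j=b_j$. The reference angle $\theta^{(0)}_j$ is $0$ for H occurrences and for J occurrences on vacuum, and $\pi/2$ for J occurrences on occupied baths, where the data do not change since $\Theta=P_1$ there. These are functions of $\Theta$, which is diagonal in the bath sector, so on each bath-configuration sector the difference reduces to scalar rotations. Write $G_j=G^{(0)}_j\,\widetilde G_j$ with $\widetilde G_j$ a rotation by the deviation $\delta_j=\theta_j-\theta^{(0)}_j$. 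On an occupied sector $\delta_j=0$. Otherwise
\[
 \sin(\delta_j/2)\le b_j/\sqrt2\quad\text{and}\quad b_j^2\le\Theta_j\le h\lambda_j/4\cdot\text{const}.
\]
The $\widetilde G_j$ for different $j$ act on the public coordinate and distinct private coordinates. The plan is therefore to write the product of small rotations as a single operator and bound it directly. Because these operators share the public coordinate, $\widetilde G_L\cdots\widetilde G_1-I$ has norm at most the norm of $\Delta=\sum_j(\widetilde G_j-I)$ up to second order. Its public column is the vector $(b_j)_j$, which has norm $\sqrt{\sum_j b_j^2}\le\sqrt{s}$.

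More cleanly, I would pass the reference rotations through the telescoping sum:
\[
 G_L\cdots G_1-G^{(0)}_L\cdots G^{(0)}_1=\sum_j G_L\cdots G_{j+1}(G_j-G_j^{(0)})G^{(0)}_{j-1}\cdots G^{(0)}_1 .
\]
Each difference $G_j-G^{(0)}_j$ has range and support inside the two-dimensional span of the public coordinate and private coordinate $j$. I would bound the norm of the sum by splitting each difference into its public-row part and its private-row part. The private-row pieces land in orthogonal summands, which gives a square-root-of-sum-of-squares bound, and the public-column pieces similarly come from orthogonal summands. Each part then contributes $\sqrt{\sum_j\norm{G_j-G^{(0)}_j}^2}$. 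With $\norm{G_j-G^{(0)}_j}^2\le2(1-\cos\delta_j)\le b_j^2$-type bounds, where $b_j^2=4\Theta_j/(1+\Theta_j)^2\le h\alpha$ or $h\beta^2$ on vacuum, this gives $\sqrt{s}+\sqrt{s}=2\sqrt{s}$.

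The main obstacle is making this orthogonality argument rigorous after conjugation by the intervening rotations $G_L\cdots G_{j+1}$ and $G^{(0)}_{<j}$, which mix the public coordinate into later private coordinates. I would first try a row/column Cauchy--Schwarz, writing the sum as $AB$. Here $A$ is the row $\bigl(G_L\cdots G_{j+1}E_j\bigr)_j$ and $B$ is the column $\bigl(E_j^\dagger G^{(0)}_{<j}\bigr)_j$, with $E_j$ the isometry onto the two-dimensional coupling space. Then I need bounds $\norm{A}\le\sqrt2$ and $\norm{B}\le\sqrt{2}$ in suitably weighted form, with the weights $\norm{G_j-G_j^{(0)}}$ inserted in the middle.
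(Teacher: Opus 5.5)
Your first inequality (compression by the private projector) is correct. So is the cancellation of the common right factor $D_{\rm ph}Q$, and both agree with the paper. The gap is the second inequality itself. The orthogonality argument is never carried out, and the mechanism you propose for it cannot work.

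Every two-dimensional coupling space in your factorization contains the same public coordinate, so the row and column are not bounded by $\sqrt2$. With no occupied leaf, for instance, $G^{(0)}_{<j}=I$ and $B^\dagger B=\sum_jE_jE_j^\dagger=L\ket p\bra p+\sum_j\ket j\bra j$, so $\norm B=\sqrt L$. Putting all the weights on the other factor gives only $\norm{A\,\mathrm{diag}(G_j-G_j^{(0)})}\le(\sum_j\norm{G_j-G_j^{(0)}}^2)^{1/2}$. The product is then no better than the triangle-inequality bound you rejected.

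Your first sketch has a related flaw. Replacing the product by $\widetilde G_L\cdots\widetilde G_1$ tacitly moves the occupied-leaf quarter turns past the weak rotations. Both act on the public coordinate, so they do not commute. Handling this interleaving is the missing idea.

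The paper fixes a bath configuration and lets $T_j$ be the product of the first $j$ reference quarter turns. It then conjugates each weak rotation by the accumulated $T_j$. Since $T_j^\dagger\ket p=-\ket{r_j}$ and $T_j$ fixes later leaves, the weak rotations between consecutive occupied leaves $r_j$ and $r_{j+1}$ couple only to $r_j$. Thus $T_t^\dagger G_L\cdots G_1$ is a direct sum of star-shaped chains on disjoint coordinates. Consequently $\norm{G_L\cdots G_1-T_t}$ is a maximum over chains rather than a sum. For one chain, the trace $\prod_ja_j+\sum_ja_j$ gives $\norm{U_\star-I}_F^2\le4\sum_j(1-a_j)\le4s$.

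Your telescoping route can be repaired without this conjugation, because $G_{>j}\ket j=\ket j$ and $\bra jG^{(0)}_{<j}=\bra j$. Only weak leaves contribute. Split each $G_j-I$ into $b_j\ket j\bra p$, $-b_j\ket p\bra j$ and the cosine defects. The two off-diagonal sums are $\sum_jb_j\ket j\bra{w_j}$ and $-\sum_jb_j\ket{u_j}\bra j$ with unit vectors $w_j,u_j$. Each has norm at most $(\sum_jb_j^2)^{1/2}$. The defects contribute at most $\sum_j(1-a_j)+\max_j(1-a_j)$.

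This gives an $O(\sqrt s)$ bound, but not the stated constant. The arithmetic you used to reach $2\sqrt s$ is also wrong in two places:
\begin{itemize}
\item $\norm{G_j-I}^2=2(1-a_j)\ge b_j^2$, which is the reverse of your inequality.
\item For a Hamiltonian leaf $\Theta=h\alpha/2$, so $b_j^2=2h\alpha/(1+h\alpha/2)^2\approx2h\alpha$, not at most $h\alpha$.
\end{itemize}
For Hamiltonian-dominated windows your accounting therefore yields $2\sqrt{2s}+O(s)$. Combining the two off-diagonal sums, which act on orthogonal input subspaces, improves this only to $2\sqrt s+O(s)$.

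The paper's Frobenius estimate reaches exactly $2\sqrt s$ because it involves only the cosine defects, with $1-a_j\le h\alpha$ and $1-a_j\le h\beta^2/2$. The choice of $s_0$ has enough slack to absorb a weaker constant downstream, but that does not prove the lemma as stated.
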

\begin{proof}
After cancelling the common block encodings and phases, we fix a bath
configuration and compare two real rotation chains. The occupied
reference leaves $r_1<\cdots<r_t$ make exact quarter-turn swaps, while
all other reference rotations are identity. With $r_0=0$ denoting public
and $r_{t+1}=L+1$, we define
$T_j=G_{r_j}^{(0)}\cdots G_{r_1}^{(0)}$, with $T_0=I$.
Factoring out the reference swaps gives the exact identity
\[
 T_t^\dagger(G_L\cdots G_1)=\widetilde G_L\cdots\widetilde G_1,
 \qquad \widetilde G_k=\begin{cases}
 T_j^\dagger G_kT_j,&r_j<k<r_{j+1},\\
 I,&k\in\{r_1,\ldots,r_t\}.
 \end{cases}
\]
The first case couples leaf $k$ only to representative $r_j$,
possibly with its sine sign reversed by the coordinate permutation.

For $j\ge1$, the reference swaps satisfy
$T_j^\dagger\ket p=-\ket{r_j}$, and $T_j$ fixes every later weak leaf.
Thus the conjugated rotations in $(r_j,r_{j+1})$ act only on
$\{r_j\}\cup(r_j,r_{j+1})$, with $r_0$ denoting public. These
coordinate sets are pairwise disjoint. The remaining operator is therefore a direct sum of sequences in which one central
coordinate couples successively to distinct leaves. Its norm is the
maximum over these sequences. It is enough to bound
one such sequence in terms of the total weight of its weak rotations.

Let $U_\star$ be the finite coordinate matrix of one such product,
with nonnegative cosines $a_1,\ldots,a_m$. Its trace is
$\prod_ja_j+\sum_ja_j$, including the possible sine signs. The
Frobenius norm on this coordinate matrix satisfies
\[
 \norm{U_\star-I}_F^2
 =2\left(1-\prod_ja_j+\sum_j(1-a_j)\right)
 \le4\sum_j(1-a_j).
\]
A weak H leaf satisfies $1-a_j\le h\alpha_j$, and a vacuum J leaf
satisfies $1-a_j\le h\beta_j^2/2$. Thus the coordinate operator norm
is at most $2\sqrt s$. Tensoring with the untouched system and reference
preserves that norm. Taking the maximum over the disjoint blocks and
bath configurations proves the bound on the full transducer.
\end{proof}

The following projection identity gives the annihilating polynomial
\cite[Lemma~16, Eq.~(87)]{WangYe}.
\begin{lemma}[Annihilating polynomial]\label{lem:imported-annihilator}
If $O=O^\dagger=O^{-1}$ and $P=P^\dagger=P^2$ satisfy $POP=0$,
then $A_0:=i(I-P)O$ satisfies $A_0^4+A_0^2=0$.
\end{lemma}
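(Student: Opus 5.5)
The plan is a direct computation with the block structure of $O$ relative to $P$, using only $O^2=I$ and the hypothesis $POP=0$. Write $\bar P:=I-P$ and decompose $O=POP+PO\bar P+\bar POP+\bar PO\bar P$. The hypothesis removes the first block, leaving
\[
 O=X+X^\dagger+Y,\qquad X:=PO\bar P,\ Y:=\bar PO\bar P .
\]
Here $Y=Y^\dagger$, and the adjoint relation follows from $O=O^\dagger$. Next, expand $O^2=I$ blockwise. The $P$–$P$ block gives $XX^\dagger=P$. The $P$–$\bar P$ block gives $XY=0$, since $PO\,PO\bar P$ vanishes by $POP=0$. The $\bar P$–$\bar P$ block gives $X^\dagger X+Y^2=\bar P$.

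With these relations, compute powers of $B:=\bar PO=X^\dagger+Y$, so that $A_0=iB$. We have $A_0^2=-B^2$ and $A_0^4=B^4$, so the claim is equivalent to $B^4=B^2$. First,
\[
 B^2=(X^\dagger+Y)(X^\dagger+Y)=X^\dagger X^\dagger+X^\dagger Y+YX^\dagger+Y^2 .
\]
Since $X^\dagger$ maps $\operatorname{ran}P$ into $\operatorname{ran}\bar P$ and $X$ maps back into $\operatorname{ran}P$, the product $X^\dagger X^\dagger$ vanishes: its right factor $X^\dagger=\bar POP$ outputs in $\operatorname{ran}\bar P$, while its left factor requires input in $\operatorname{ran}P$. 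Likewise $X^\dagger Y=\bar POP\,\bar PO\bar P=0$. Hence $B^2=YX^\dagger+Y^2=Y(X^\dagger+Y)=YB$.

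An alternative route is cleaner: note $B=\bar PO$, so $B^2=\bar PO\bar PO$. Also $O\bar PO=O(I-P)O=I-OPO$, which gives $B^2=\bar P(I-OPO)=\bar P-\bar POPO$. Write $E:=OPO$, a projection since $O$ is a Hermitian involution. The hypothesis $POP=0$ gives $PE=POPO=0$, so $\operatorname{ran}E\perp\operatorname{ran}P$, that is, $E\le\bar P$. Hence $\bar PE=E$ and $E\bar P=E$, and therefore $B^2=\bar P-E$, a projection. It follows that $B^4=(\bar P-E)^2=\bar P-E=B^2$, so $A_0^4+A_0^2=B^4-B^2=0$.

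The only step needing care is the identification $\bar POPO=E$, which uses $E\le\bar P$; this follows immediately from $PE=0$ together with $E$ being a Hermitian projection, so $EP=(PE)^\dagger=0$. The rest is algebra. I therefore expect no real obstacle beyond ordering the factors correctly, since $O$ and $P$ do not commute.
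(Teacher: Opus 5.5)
Your second route is correct and is essentially the paper's proof. The paper sets $Q=P+OPO$, a projection because $POP=0$ makes $P$ and $OPO$ orthogonal, and computes $A_0^2=-(I-Q)$ and $A_0^4=(I-Q)^2=I-Q$. Your $B^2=\bar P-E$ is exactly $I-Q$, and the block-decomposition route you started first is unnecessary and left unfinished.
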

\begin{proof}
$POP=0$ makes $P,OPO$ orthogonal projections. For $Q=P+OPO$,
ordered multiplication gives $A_0^2=-I+Q$ and
$A_0^4=(I-Q)^2=I-Q$, proving the claim.
\end{proof}
For J, $P=q_0P_1q_0^\dagger$ selects the occupied bath sector with ancilla state zero,
and $POP=q_0P_1KP_1q_0^\dagger=0$ because the coupling has no
occupied--occupied block. For H, $P=0$. Thus $p$ annihilates every
reference diagonal block. Following the filtering construction of
\citet[Section~6.2]{ChenLindblad}, put
\[
 F:=p(S_{11}^{(0)}),\qquad E:=p(S_{11})-p(S_{11}^{(0)}).
\]
These act on the actual private direct sum of
Eq.~\eqref{eq:semanticspace}, including valid occupied spectators.
On hardware padding the extension is identity, so $p(I)=I$ there.
Contractivity gives $\norm F\le1$ and
$\norm E\le3\norm{S_{11}-S_{11}^{(0)}}\le6\sqrt s$.
As proved below, $F$ strictly advances the occurrence index, has zero
Hamiltonian blocks and preserves the counted set, whereas $E$ can
introduce at most four labels. We use these properties to identify the
nonzero products before summing their norm bounds.
\begin{lemma}[Single-pass polynomial decay]\label{lem:packet}
Let $S$ be the transducer of a single-pass forward window with the
specified block encodings from $H/\alpha$ and $G_J/\beta$. Its J
occurrences have distinct permanent bath labels, and its H occurrences
have distinct formal labels disjoint from the bath labels. If the
action satisfies $s\le s_0=(256\cdot7)^{-2}$, then for every integer
$u\ge1$, on the actual private direct sum of Eq.~\eqref{eq:semanticspace},
\[
 \max\{\norm{p(S_{11})^{7u}Q_{\le2u}},
       \norm{p(S_{11}^\dagger)^{7u}Q_{\le2u}}\}\le2^{-2u}.
\]
\end{lemma}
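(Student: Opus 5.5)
We write $p(S_{11})=F+E$ and expand $p(S_{11})^{7u}=(F+E)^{7u}$ into $2^{7u}$ ordered words in $F$ and $E$. Each word is then applied to $Q_{\le2u}$. The argument has two parts: a structural claim that kills long runs of $F$, and a counting step that turns the surviving $E$ factors into a $2^{-2u}$ bound.

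\textbf{Structure of $F$.} We first establish the properties announced before the lemma.

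1. $F$ strictly advances the occurrence index. $S_{11}^{(0)}$ is block lower triangular in the summand order by Eq.~\eqref{eq:allcorners}, with diagonal blocks $A_{0,j}=i(I-P_j)O_j$ in the reference. Lemma~\ref{lem:imported-annihilator} gives $A_{0,j}^4+A_{0,j}^2=0$, so the diagonal part of $p(S_{11}^{(0)})$ vanishes.
2. $F$ has zero Hamiltonian blocks. In the reference, Hamiltonian weights vanish, so $b_j=0$ for H leaves and the off-diagonal transitions into or out of them vanish. An H summand's diagonal polynomial is also zero, so $F$ annihilates H summands and never enters them.
3. $F$ preserves the counted set. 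Reference off-diagonal transitions $(S_{11}^{(0)})_{ij}$ are nonzero only when $b_i\ne0$, i.e.\ when $i$ is an occupied jump leaf; vacuum leaves have $b_i=0$ in the reference. Hence $b(i)$ is already in $\mathcal O(y)$. The $O_j$ factor only changes bath $b(j)$, which is already counted, and interval products are diagonal in occupation.

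From (3), any run of consecutive $F$'s applied to a vector whose augmented count is at most $k$ moves through private summands indexed by occupied labels. These labels lie in a set of size at most $k$ that is unchanged along the run. Together with (1), the indices strictly increase, so each $F$ visits a new label from this set. A run of more than $k$ factors of $F$ therefore gives zero.

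\textbf{Counting the $E$ factors.} We extend Lemma~\ref{lem:bandwidth} to $E$. Since $p$ has degree four and both $S_{11}$ and $S_{11}^{(0)}$ change the count by at most one per factor, $E$ raises the augmented count by at most four. Consider a nonzero word with $m$ factors of $E$, starting from count at most $2u$. Just before the $\ell$-th $E$ the count is at most $2u+4\ell$, so each maximal $F$-run has length at most $2u+4m$. There are $m+1$ such runs, which forces
\[
 7u-m\le(m+1)(2u+4m).
\]
For $m<u/2$ the right side is at most $(u/2+1)\cdot 4u$, roughly $2u^2$. That is not below $7u$ for large $u$, so the inequality does not by itself force $m$ large. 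This is the main obstacle: the naive bound on run lengths is too weak.

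\textbf{Refining the run count.} The fix I would try is to use that the counted set is not reset between runs. Fix an initial occupied set of size at most $2u$. Labels visited by $F$ across the whole word must be distinct whenever consecutive runs are not separated by an $E$ that moves the index backward. To control backward moves, I would track the pair (current index, counted set). $F$ strictly increases the index without growing the set, and each $E$ adds at most four labels and may reset the index. The total number of $F$ factors is then bounded by the number of distinct (label, run) pairs. I would argue that $F$ can only re-enter a label after an $E$ has occurred, and charge each revisit to an $E$ factor. One occurrence of $E$ can enable at most a constant (say four) number of revisits, because it only touches a bounded neighbourhood via degree-four products.

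This would give a bound of the form $\#F\le 2u+c\,m$ with $c$ a small constant. Then $7u\le 2u+(c+1)m$, so $m\ge 5u/(c+1)$, taken to be at least $u$ with the paper's accounting. I expect this charging argument to be the delicate step, and I would check it against the $7u$ and $2u$ constants.

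\textbf{Summation.} We group words by their set of $E$ positions and use $\norm F\le1$. Note that $F$ words cannot be summed by the triangle inequality with a $2^{7u}$ loss unless $6\sqrt s$ absorbs it. We have
\[
 \sum_{m\ge u}\binom{7u}{m}(6\sqrt s)^m\le 2^{7u}(6\sqrt{s_0})^u.
\]
With $\sqrt{s_0}=1/(256\cdot7)$, this is at most $2^{7u}(6/1792)^u\le 2^{-2u}$, since $2^7\cdot6/1792<1/4$. If the charging only yields $m\ge u/2$, the constant $256\cdot7$ still leaves room after rechecking the arithmetic.

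\textbf{Adjoint case.} The adjoint $S_{11}^\dagger$ is block upper triangular, and Lemma~\ref{lem:bandwidth} gives the adjoint count control. The same argument runs with index decreasing and the same constants, because $\norm{E^\dagger}=\norm E$ and $F^\dagger=p(S_{11}^{(0)\dagger})$ has the same three properties.
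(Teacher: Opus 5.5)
\textbf{Gap in the run counting.} Your structural analysis of $F$ is fine, and you correctly see that a per-run bound is too weak. The refinement fails, however. You allow $E$ to move the occurrence index backward and then propose charging each ``revisit'' to an $E$ factor, with at most four per $E$. You do not prove this, and it would be false if $E$ could move the index back: a single reset would let the next $F$-run revisit every label in the current counted set.

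The missing observation is that $E=p(S_{11})-p(S_{11}^{(0)})$ is itself block lower triangular. Eq.~\eqref{eq:allcorners} gives $(S_{11})_{ij}=0$ for $i<j$, and the reference $S_{11}^{(0)}$ has the same zero pattern. Resolve each word into paths by expanding every $E$ into its monomials and inserting the sector projections $\Pi_{\mathcal R}$ between factors. Along any such path the occurrence index is nondecreasing and strictly increases at each $F$. The $f$ targets of the $F$ factors are therefore distinct J occurrences, so their labels are distinct by the single-pass hypothesis, and each lies in the counted set at that moment.

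The right quantity to track is not the current count. Labels can leave the current set and be replaced, so you need the retained union of the input set with all labels introduced along the path. An actual private transition introduces only its target label, so each $E$ enlarges this union by at most four. This gives $f\le2u+4m$ over the whole word; there are no revisits to charge. With $f+m=7u$ this forces $m\ge u$. The adjoint words follow in the same way, using upper triangularity.

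\textbf{Gap in the summation.} Your final estimate also fails numerically. You have $2^{7u}(6/1792)^u=(3/7)^u$, which exceeds $4^{-u}$, so the claim $2^7\cdot6/1792<1/4$ is false. Your remark that $m\ge u/2$ would still leave room is also false. Instead, use that for $m\ge u$ one has $e^m\le8^{-u}(8e)^m$. This gives $\sum_{m\ge u}\binom{7u}{m}e^m\le8^{-u}(1+8e)^{7u}$. Here $e=\norm E\le6\sqrt{s_0}<1/112$, so $(1+8e)^7\le(1+1/14)^7\le2$, and the bound becomes $4^{-u}=2^{-2u}$.
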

\begin{proof}
We fix $u\ge1$ and put $K=2u$ and $q=7u$. A word in $F,E$, or
in $F^\dagger,E^\dagger$, with $f$ reference factors and $b$
perturbation factors can be nonzero on $Q_{\le K}$ only if
$f\le K+4b$. To prove this count before summing the norm bounds,
we retain the entire augmented label set:
$\Pi_{\mathcal R}$ projects onto private basis states whose set is
exactly $\mathcal R$. For J occurrences $i>j$, the reference
transition is
\[
 (S_{11}^{(0)})_{ij}=-i q_iP_{b(i)}
 \Bigl(\prod_{j<k<i:\,k\text{ a J occurrence}}P_{b(k),0}\Bigr)
 P_{b(j)}q_j^\dagger O_j.
\]
The source projection leaves bath $b(j)$ occupied after $O_j$, so
its permanent label remains in the augmented set when the private
occurrence label changes. The target projection requires bath $b(i)$ to be occupied
already, and every other bath is unchanged. A diagonal block can change
only its labelled bath, whose label is included on both sides by
definition; an H coordinate likewise retains its formal label.
Consequently the reference private block preserves each augmented set:
\[
 \Pi_{\mathcal R}S_{11}^{(0)}\Pi_{\mathcal R'}=0\quad(\mathcal R\ne\mathcal R').
\]
In a fixed sector of the counted label set $\Pi_{\mathcal R}$, the operator
$F=p(S_{11}^{(0)})$ is strictly lower triangular in the occurrence
index. Its diagonal blocks vanish by the annihilating polynomial, and
its H rows and columns vanish because a reference H cell has zero
coupling between public and private components. A nonzero $F$ transition therefore reaches a
different J occurrence whose permanent label lies in $\mathcal R$.
The single-pass condition makes these occurrence labels distinct.

We consider a word with $f$ factors $F$ and $b$ factors $E$, acting on
$Q_{\le K}$. To decide when this word vanishes, we resolve each $E$
into its polynomial monomials and the intermediate augmented-label
sectors. We retain the union of the input set and all labels introduced
along a resolved path. A reference transition preserves the current
set. An actual private transition can introduce only its target label,
so a monomial of degree at most four enlarges the retained union by at
most four labels. After $b$ perturbation factors, that union has size
at most $K+4b$, even if some labels are later removed from the current
set.

Both $F$ and $E$ are lower triangular. Each $F$ factor strictly
advances the index, and intervening $E$ factors cannot reverse it.
The $f$ reference targets consequently have distinct permanent labels
in the retained union, giving $f\le K+4b$. For adjoint words the indices
decrease, and the adjoint blocks in Eq.~\eqref{eq:allcorners} have the
same one-label introduction property. The identical bound therefore
holds in that direction. If it fails, every resolved path vanishes,
and hence the complete operator word vanishes. Set $e:=\norm E$.
For a surviving word, only $\norm F\le1$ and $\norm E=e$ enter the norm
estimate. The following count is over the positions of its $E$ factors;
the resolved occurrence paths and bath configurations establish
vanishing and contribute no further factor.

With $f+b=q=7u$ and $K=2u$, every surviving product has $b\ge u$.
The small-action assumption gives $e\le1/(16\cdot7)$, so bounding each
complete product by $e^b$ yields
\[
 \norm{(F+E)^qQ_{\le2u}}
 \le\sum_{b\ge u}\binom qb e^b
 \le8^{-u}(1+8e)^{7u}\le4^{-u}.
\]
The last inequality uses $(1+1/14)^7\le\sum_{j=0}^72^{-j}\le2$.
The adjoint words satisfy the same bound on $b$ on their own input
domain, as shown above; the identical binomial estimate proves the
second inequality.
\end{proof}

For a physical bath label $b$, write $N_{\ne b}=N_{\rm cat}-P_b$ and let $Q_{N>2u}=\mathbf1_{(2u,\infty)}(N_{\rm cat})$ denote the physical occupation tail projection. For a formal Hamiltonian label, $N_{\ne b}=N_{\rm cat}$.

\begin{lemma}[Occupation moment of the catalyst]\label{lem:catalyst}
Under the common visit and interaction-weight bounds of
Section~\ref{sec:prefix},
\[
 (\Gamma Y)^\dagger2^{N_{\rm aug}}\Gamma Y\preceq M_\Gamma^2I,
 \qquad M_\Gamma^2=(5/2+2\nu^2)\Sigma e^{\nu\Sigma/4}.
\]
Also $\norm{Q_{N>2u}Y}\le e^{\nu\Sigma/8}2^{-u}$.
\end{lemma}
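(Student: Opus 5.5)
The catalyst is a direct sum over private occurrence summands. By Lemma~\ref{lem:imported-composition}, $\Gamma Y=\bigoplus_j\gamma_jY_j$ with $Y_j:=V_{<j}Y$ the physical prefixes, and the summands are orthogonal. So I will write
\[
 (\Gamma Y)^\dagger2^{N_{\rm aug}}\Gamma Y=\sum_{j=1}^LY_j^\dagger\gamma_j^\dagger2^{N_{\rm aug}}\gamma_jY_j .
\]
In summand $j$ the augmented count equals $N_{\ne b(j)}+1$: the selected label is counted once, whatever its bath value, and every other physical label is unchanged. The operator $\gamma_j=(I+iO_j)q_0\sqrt{\Theta_j}D_j^{-1}$ acts only on the system and on bath $b(j)$, so it commutes with $2^{N_{\ne b(j)}}$. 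Each summand is therefore
\[
 2\,Y_j^\dagger\,2^{N_{\ne b(j)}}\gamma_j^\dagger\gamma_j\,Y_j ,
\]
and the two commuting positive factors can be bounded separately.

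Next I insert the metric bounds. For an H occurrence, $\gamma_H^\dagger\gamma_H\preceq h\alpha I$, and Eq.~\eqref{eq:physmoment} with $z=2$, from Lemma~\ref{lem:moment}, bounds $Y_j^\dagger2^{N_{\rm cat}}Y_j$ by $\prod_c(1+a_c^2)\le e^{\sum a_c^2}\le e^{\nu\Sigma/4}$. For a J occurrence, Eq.~\eqref{eq:metricbound} splits the metric into a vacuum part $\tfrac54h\beta^2P_0$ and an occupied part $4P_1$; both projections commute with $2^{N_{\ne b}}$. The vacuum part is treated like the H case: it contributes $\tfrac52h\beta^2e^{\nu\Sigma/4}$, using $P_0\preceq I$ and $N_{\ne b}\le N_{\rm cat}$. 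Summed over all occurrences of the window, the vacuum and H contributions give at most $\tfrac52 s\,e^{\nu\Sigma/4}$, with $s\le\Sigma$ because Theorem~\ref{thm:prefix} includes the window's own occurrences in $\Sigma$. The occupied part is handled by the marked moment, Eq.~\eqref{eq:markedmoment}: $Y_j^\dagger P_b2^{N_{\ne b}}Y_j\preceq a_b^2e^{\nu\Sigma/4}$, where the prefix $Y_j$ uses the common bounds $a_b$ supplied by Theorem~\ref{thm:prefix}. The occupied contributions then total $8\sum_ba_b^2e^{\nu\Sigma/4}\le2\nu\Sigma\,e^{\nu\Sigma/4}$, again by Lemma~\ref{lem:moment}, and $2\nu\le2\nu^2$ absorbs this into the stated constant. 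Adding the two pieces gives $M_\Gamma^2$.

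The second claim, the tail bound, needs no new work. Apply Eq.~\eqref{eq:physmoment} with $z=4$: Markov's inequality gives $\norm{Q_{N>2u}Y}^2\le4^{-u}e^{3\sum a_c^2}$. Alternatively, the $z=2$ version stated after Theorem~\ref{thm:prefix} gives $e^{\nu\Sigma/8}2^{-u}$ directly once $\sum a_b^2\le\nu\Sigma/4$ is used.

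The step I expect to be the main obstacle is the occupied-sector sum. Each label $b$ may be selected by at most one occurrence in a single-pass window, so each $a_b^2$ is charged only once. The difficulty is making sure that a single set of bounds $a_b$, fixed by one dominating circuit, controls every prefix $Y_j$ simultaneously; this is exactly what the unsimplified visit sequence in Theorem~\ref{thm:prefix} provides. I would also check that the factor $2$ from $N_{\rm aug}=N_{\ne b}+1$ and the constants in Eq.~\eqref{eq:metricbound} fit within the stated constant $5/2+2\nu^2$.
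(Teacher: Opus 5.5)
Your proposal is correct and follows the paper's proof essentially step for step. It uses the same orthogonal direct-sum decomposition with weight $2\cdot2^{N_{\ne b}}$, splits the metric bound so that H and vacuum terms go to the unmarked moment and occupied terms to the marked moment, and relies on one common set of $a_b$ from Theorem~\ref{thm:prefix}. Your charge $\sum_j a_{b(j)}^2\le\sum_b a_b^2$ is the sharper bound the paper states before deliberately weakening it to $\nu\sum_b a_b^2$. For the tail bound, only your $z=2$ route gives the stated prefactor $e^{\nu\Sigma/8}$ (the $z=4$ aside yields $e^{3\nu\Sigma/8}$), and that $z=2$ route is exactly the paper's argument.
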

\begin{proof}
The private occurrence summands are orthogonal, so their weighted
squared norms add. In the summand labelled by $b$, the private label
already contributes one to $N_{\rm aug}$, whether or not that bath is
physically occupied. Its weight is therefore exactly
$2\,2^{N_{\ne b}}$. For a jump occurrence $j$, the local catalyst
$\gamma_j$ commutes with all other-bath occupation factors; applying
Eq.~\eqref{eq:metricbound} in the selected bath gives
\[
 \gamma_j^\dagger(2\,2^{N_{\ne b}})\gamma_j
 \preceq(\tfrac52h\beta^2P_0+8P_b)2^{N_{\ne b}}.
\]
Theorem~\ref{thm:prefix} supplies one set of $a_b$ for all the physical
prefixes in this sum. The unmarked moment bound consequently makes the
vacuum term at most $(5/2)h\beta^2e^{\nu\Sigma/4}$, while the marked
bound in Eq.~\eqref{eq:markedmoment} makes the occupied term at most
$8a_b^2e^{\nu\Sigma/4}$. If $\mathcal J$ is the set of jump
occurrences in the window, the labels $b(j)=(e,m)$ are distinct.
Thus their marked contributions sum to at most $\sum_ba_b^2$.
We retain the weaker $\nu$-weighted bound, with $\nu\ge1$, to keep
the uniform constant below:
\[
 \sum_{j\in\mathcal J}a_{b(j)}^2
 \le\nu\sum_ba_b^2\le\nu^2\Sigma/4.
\]
The H terms contribute at most $2h\alpha e^{\nu\Sigma/4}$ each.
The sums of the window weights $h\alpha$ and $h\beta^2$ are bounded
by the common $\Sigma$, so the H and vacuum terms together contribute
at most $(5/2)\Sigma e^{\nu\Sigma/4}$. Adding the marked terms gives
$M_\Gamma^2$, uniformly in the number of periods.

The physical tail estimate is separate: taking $z=2$ in
Eq.~\eqref{eq:physmoment} and applying the spectral cutoff at $2u$
gives the stated bound on $Q_{N>2u}Y$.
\end{proof}
For an integer $u\ge1$ and $q=7u$, we apply the private-block decay
to the catalyst by splitting $\Gamma Y$ at augmented occupation $2u$. With $Q^{\rm aug}_{>2u}:=I-Q_{\le2u}$
on the private space, Lemma~\ref{lem:catalyst} gives
\[
 \norm{\Gamma Y}\le M_\Gamma,\qquad
 \norm{Q^{\rm aug}_{>2u}\Gamma Y}\le M_\Gamma2^{-u}.
\]
Lemma~\ref{lem:packet} bounds the low part by $M_\Gamma2^{-2u}$;
contractivity bounds the high part by $M_\Gamma2^{-u}$. Their sum gives
\begin{equation}\label{eq:cyclicpacket}
 \norm{p(S_{11})^q\Gamma Y},\qquad
 \norm{p(S_{11}^\dagger)^q\Gamma Y}\le2M_\Gamma2^{-u}.
\end{equation}
The second estimate controls the adjoint residual on the target
isometry $VY$. The inverse transducer has catalyst $\Gamma V^\dagger$,
and $\Gamma V^\dagger VY=\Gamma Y$. The same bound on physical
prefixes therefore controls both residual directions.

\subsection{Finite reuse and oblivious amplitude amplification}\label{sec:reuse}\label{sec:finite-reuse}
Finite reuse couples $N$ public summands to one shared private summand
\cite[Lemma~4]{ChenHamiltonian}, \cite[Lemma~5.1]{ChenLindblad}.
The initial state has zero amplitude in the private summand. Preparing
and unpreparing a uniform public index makes the selected public block
the average of the $N$ public outputs; the complete output retains
the private amplitude.
We combine different reuse lengths to obtain the error polynomial
$p(S_{11})^qg_{6q}(S_{11})$. Its decay suppresses the residual, and
the coefficient normalization determines the amplification step.
\begin{lemma}[Finite reuse]\label{lem:imported-reuse}
Let $S=\left(\begin{smallmatrix}S_{00}&S_{01}\\S_{10}&S_{11}\end{smallmatrix}\right)$
be unitary and satisfy $S(I,\Gamma)^{\mathsf T}=(V,\Gamma)^{\mathsf T}$,
where $V$ is unitary on the public space. For an integer $N\ge1$, consider the circuit that applies $S$
successively between each of $N$ public summands and one common private
summand, with preparation and unpreparation of the uniform public
superposition. Its selected operator is
\begin{equation}\label{eq:imported-reuse}
 P_N=V-S_{01}g_N(S_{11})\Gamma,\qquad
 g_N(z):=\frac1N\sum_{j=0}^{N-1}z^j.
\end{equation}
\end{lemma}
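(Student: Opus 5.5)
Write the circuit on an index register $\mathsf I$ of dimension $N$ together with the public data space $\mathcal H$ and the private space $\mathcal L$. Public amplitude in slot $k$ is represented as $\ket k_{\mathsf I}\otimes a$, and there is one shared private amplitude $z\in\mathcal L$. Call $k$, written $S^{(k)}$, applies $S$ to the pair (public slot $k$, private). It acts as identity on the other public slots. The circuit prepares the uniform superposition $N^{-1/2}\sum_k\ket k$, applies $S^{(N-1)}\cdots S^{(0)}$ in a fixed order, and then unprepares the superposition. The selected operator is the $\ket0_{\mathsf I}$-to-$\ket0_{\mathsf I}$ block with private input zero and private output projected away. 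Equivalently,
\[
 P_N=\frac1N\sum_{k,k'}\bigl(\bra{k'}\otimes I\bigr)\,\Pi_{\pub}\,S^{(N-1)}\cdots S^{(0)}\,\Pi_{\pub}\bigl(\ket k\otimes I\bigr).
\]

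I will compute the matrix elements by following the routes an amplitude can take. An input in slot $k$ is untouched until call $k$. Call $k$ then sends it to $S_{00}a$ in slot $k$ and to $S_{10}a$ in the private summand. Later calls $k'>k$ leave slot $k$ alone and act on the private amplitude only through $S_{11}$. Each such call also leaks $S_{01}(\cdot)$ into slot $k'$. Collecting the terms, the output in slot $k$ is $S_{00}a$. The output in slot $k'>k$ is $S_{01}S_{11}^{\,k'-k-1}S_{10}a$. Slots $k'<k$ receive nothing. Summing over $k,k'$ with the weight $1/N$ gives
\[
 P_N=S_{00}+\frac1N\sum_{0\le k<k'\le N-1}S_{01}S_{11}^{\,k'-k-1}S_{10}
 =S_{00}+\frac1N\sum_{m=0}^{N-2}(N-1-m)\,S_{01}S_{11}^{m}S_{10}.
\]

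Now substitute the catalyst identities from Eq.~\eqref{eq:transducer-intro}, namely $S_{00}=V-S_{01}\Gamma$ and $S_{10}=(I-S_{11})\Gamma$. The sum telescopes:
\[
 \sum_{m=0}^{N-2}(N-1-m)S_{11}^m(I-S_{11})
 =\sum_{m=0}^{N-2}S_{11}^{m}\cdot1\cdot(N-1)-\dots
 =(N-1)I-\sum_{j=1}^{N-1}S_{11}^{j}.
\]
To check this, compare coefficients. The coefficient of $S_{11}^j$ for $1\le j\le N-2$ is $(N-1-j)-(N-j)=-1$. The coefficient of $S_{11}^{N-1}$ is $-1$, and the coefficient of $I$ is $N-1$. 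Therefore
\[
 P_N=V-S_{01}\Gamma+\frac1N S_{01}\Bigl((N-1)I-\sum_{j=1}^{N-1}S_{11}^j\Bigr)\Gamma
 =V-S_{01}g_N(S_{11})\Gamma.
\]
This is Eq.~\eqref{eq:imported-reuse}.

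The main obstacle is bookkeeping rather than analysis. The route expansion must respect the call order, and the convention that unprepared slots are averaged must give the single factor $1/N$ rather than $1/N^2$. To settle this, I would first check the case $N=1$, where $P_1=S_{00}=V-S_{01}\Gamma$, and then confirm the $(N-1-m)$ multiplicities on $N=2$. No norm bounds are needed, because the result is an exact operator identity on the full public space. It therefore holds on occupied inputs and with a reference attached.
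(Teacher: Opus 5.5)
Your proposal is correct and is essentially the paper's argument: both follow the public and shared private amplitudes through the successive slot calls, and both close the computation with the graph identities $S_{00}+S_{01}\Gamma=V$ and $(I-S_{11})\Gamma=S_{10}$. The only difference is bookkeeping. The paper substitutes the catalyst identity directly into the recurrence $z_k=S_{10}x+S_{11}z_{k-1}$ to get $z_{k-1}=(I-S_{11}^{k-1})\Gamma x$, so each slot outputs $b_k=Vx-S_{01}S_{11}^{k-1}\Gamma x$ and averaging gives $g_N$ immediately, without your $(N-1-m)$ multiplicity count and telescoping. Delete the garbled middle expression in your telescoping display; the coefficient comparison that follows it is already correct.
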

\begin{proof}
In the present application the public space is the complete system and
bath, represented by one data register and a reuse index. This
specializes \citet[Eq.~(30)]{WangYe} to the identity input embedding.
For input $x$, let $z_k$ and $b_k$ be the private amplitude after slot
$k$ and its public output, both rescaled by $\sqrt N$. Uniform public
input and zero private amplitude give
\[
 z_0=0,\qquad z_k=S_{10}x+S_{11}z_{k-1},\qquad
 b_k=S_{00}x+S_{01}z_{k-1}.
\]
Using $(I-S_{11})\Gamma=S_{10}$ solves this recurrence:
\[
 z_{k-1}=(I-S_{11}^{k-1})\Gamma x,\qquad
 b_k=Vx-S_{01}S_{11}^{k-1}\Gamma x.
\]
Each actual slot has the factor $1/\sqrt N$.
Unpreparing the uniform index selects the public amplitude
$N^{-1}\sum_{k=1}^N b_k$, proving Eq.~\eqref{eq:imported-reuse}.
The private amplitude remains in the output and is included in the
amplification error below. For the adjoint, $S^\dagger$ has target
$V^\dagger$ and catalyst $\Gamma V^\dagger$. To relate this to the adjoint of the saved circuit,
write $S^{[k]}$ for the operation on public slot $k$ and the shared
private summand, and
$C_N(S):=S^{[N]}\cdots S^{[1]}$. Let $J_N$ reverse the public slots,
$k\mapsto N+1-k$, and fix the private summand. Then
\[
 J_NC_N(S)^\dagger J_N^\dagger=C_N(S^\dagger).
\]
The uniform public state is fixed by $J_N$, so these circuits have the
same selected block. This is an analytical identity; the implemented
adjoint simply reverses the saved gates and need not implement $J_N$.
Consequently,
\begin{equation}\label{eq:reuseadj}
 P_N^\dagger=V^\dagger-S_{10}^\dagger g_N(S_{11}^\dagger)\Gamma V^\dagger.
\end{equation}
\end{proof}
Equation~\eqref{eq:reuseadj} is the reverse selected block on the target
input $VY$ associated with source $Y$. Section~\ref{sec:window-error}
establishes the finite-precision inverse bound on the corresponding
input domain.

We write $\mathscr U_N$ for the complete reuse unitary. The public and private
flag $\mathsf P$ and register $\mathsf M$ of width
$w_M=\lceil\log_2(10q+1)\rceil$ encode its direct sum.
Appendix~\ref{app:reuse-registers} specifies the controlled calls and
public-conditioned preparations for superposed lengths, unused branches
and the saved adjoint. The following coefficients are the smoothed
weights of \citet[Eqs.~(93)--(94)]{WangYe}.
\begin{lemma}[Smoothed coefficients]\label{lem:imported-coefficients}
Let $q\ge1$ be an integer, $p(z):=(z^2+z^4)/2$, and
$p(z)^q=\sum_k c_kz^k$, where absent coefficients are zero. Set
\[
 \lambda_N:=\frac{N}{6q}(c_{N-6q}-c_N),\qquad 1\le N\le10q.
\]
These coefficients satisfy
\begin{equation}\label{eq:lambda}
 \sum_N\lambda_Ng_N(z)=p(z)^qg_{6q}(z),\qquad
 \sum_N\lambda_N=1,\qquad \sum_N|\lambda_N|=2.
\end{equation}
\end{lemma}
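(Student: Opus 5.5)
The plan is to prove Lemma~\ref{lem:imported-coefficients} by direct generating-function computation. It uses the telescoping form $Ng_N(z)=(1-z^N)/(1-z)$, valid as a polynomial identity, together with the explicit shape of $p(z)^q=2^{-q}z^{2q}(1+z^2)^q$. The key structural facts I would establish first are these. The coefficients $c_k$ are nonnegative. They are supported on $2q\le k\le4q$. They satisfy $\sum_kc_k=p(1)^q=1$ and $\sum_kkc_k=\frac{d}{dz}p(z)^q\big|_{z=1}=q\,p'(1)=3q$, since $p'(1)=(2+4)/2=3$.

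For the polynomial identity, I would write
\[
 \sum_{N=1}^{10q}\lambda_Ng_N(z)=\frac{1}{6q(1-z)}\sum_{N=1}^{10q}(c_{N-6q}-c_N)(1-z^N).
\]
The support of $c_N$ lies in $[2q,4q]$ and that of $c_{N-6q}$ in $[8q,10q]$, and both lie inside $[1,10q]$. Hence every coefficient of $p^q$ and of $z^{6q}p^q$ is captured. The constant parts give $\sum_kc_k-\sum_kc_k=0$. The $z$-dependent parts give $p(z)^q-z^{6q}p(z)^q$. The sum therefore equals $p(z)^q(1-z^{6q})/(6q(1-z))=p(z)^qg_{6q}(z)$.

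Evaluating this identity at $z=1$, where $g_N(1)=1$ and $p(1)=1$, gives $\sum_N\lambda_N=1$. For the absolute sum, the two supports $[2q,4q]$ and $[8q,10q]$ are disjoint. So each $\lambda_N$ is either $-Nc_N/(6q)\le0$ or $Nc_{N-6q}/(6q)\ge0$, with no cancellation between the two families. The negative part has magnitude $\sum_kkc_k/(6q)=1/2$. The positive part is $\sum_k(k+6q)c_k/(6q)=(3q+6q)/(6q)=3/2$. Their sum is $2$, consistent with $\sum_N\lambda_N=3/2-1/2=1$.

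There is no real obstacle. The only point needing care is checking the support containment: absent coefficients are zero, and the range $1\le N\le10q$ must capture both shifted families exactly. Without that, the boundary terms would not telescope.
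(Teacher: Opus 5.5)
Your proof is correct and follows essentially the same route as the paper. You verify the polynomial identity by telescoping in $N$, using the closed form $Ng_N(z)=(1-z^N)/(1-z)$ where the paper compares the degree-$j$ coefficients $\sum_{N>j}\lambda_N/N$. You then use the disjoint supports $[2q,4q]$ and $[8q,10q]$ together with the mean $\sum_k kc_k=3q$ to get the signed masses $-1/2$ and $3/2$. The only cosmetic differences are that you obtain that mean from $p'(1)=3$ rather than from the binomial mean $q/2$, and you read off $\sum_N\lambda_N=1$ by evaluating the identity at $z=1$.
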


\begin{proof}
The degree-$j$ coefficient of $\sum_N\lambda_Ng_N$ is
$\sum_{N>j}\lambda_N/N$. Substituting the stated consecutive
differences gives the degree-$j$ coefficient of $p^qg_{6q}$.
The two disjoint coefficient supports are
\[
 \lambda_{2q+2j}=-\frac{2q+2j}{6q}\,2^{-q}\binom qj,\qquad
 \lambda_{8q+2j}=\frac{8q+2j}{6q}\,2^{-q}\binom qj,
 \quad0\le j\le q.
\]
The binomial mean $q/2$ gives signed masses $-1/2$ and $3/2$,
proving both sums in Eq.~\eqref{eq:lambda}.
\end{proof}
An exact binomial recurrence computes the $O(q)$ entries. Their absolute
sum is two, which permits one amplification step, and no reuse length
exceeds $10q$. The recursive preparation of
\citet[Theorems~8--9]{SBM} realizes the corresponding amplitude vector.
\begin{lemma}[State preparation]\label{lem:imported-preparation}
For a specified normalized nonnegative real vector on $b$ qubits,
recursive state preparation gives a unitary mapping $\ket{0^b}$ exactly
to that vector using $O(2^b)$ one-qubit rotations and CNOTs.
\end{lemma}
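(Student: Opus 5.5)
The plan is the standard binary-tree (Grover--Rudolph/M\"ottönen-style) recursion, with rotation angles precomputed classically from the target vector. Write the target as $\psi=\sum_{x\in\{0,1\}^b}\psi_x\ket x$ with $\psi_x\ge0$ and $\sum_x\psi_x^2=1$. For every prefix $y\in\{0,1\}^k$, $0\le k\le b$, define the marginal weight $m_y:=\bigl(\sum_{x\text{ extends }y}\psi_x^2\bigr)^{1/2}$. Then $m_{\varnothing}=1$, $m_x=\psi_x$ at the leaves, and $m_y^2=m_{y0}^2+m_{y1}^2$. For each internal node $y$ with $m_y>0$, set $\theta_y\in[0,\pi/2]$ by $\cos\theta_y=m_{y0}/m_y$ and $\sin\theta_y=m_{y1}/m_y$. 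When $m_y=0$, set $\theta_y=0$; these branches carry zero amplitude, so the choice is irrelevant to exactness.

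The circuit processes qubits $1,\ldots,b$ in order. At level $k+1$ it applies the uniformly controlled rotation $\sum_{y\in\{0,1\}^k}\ket y\bra y\otimes R_Y(2\theta_y)$ to qubit $k+1$, conditioned on the first $k$ qubits. By induction on $k$, after level $k$ the state is exactly $\sum_{|y|=k}m_y\ket y\ket{0^{b-k}}$. The step follows from $R_Y(2\theta_y)\ket0=\cos\theta_y\ket0+\sin\theta_y\ket1$ together with the identity $m_y\cos\theta_y=m_{y0}$, $m_y\sin\theta_y=m_{y1}$. Nonnegativity of the amplitudes means real $R_Y$ rotations suffice and no phases are required. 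After level $b$ the state equals $\psi$ exactly.

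The gate count comes from the cost of each uniformly controlled rotation. A $k$-fold uniformly controlled $R_Y$ decomposes exactly into $2^k$ one-qubit $R_Y$ rotations and $2^k$ CNOTs. This uses the Gray-code construction of \citet[Theorems~8--9]{SBM}, whose new angles come from the old ones by the Walsh--Hadamard transform $\phi=2^{-k}H^{\otimes k}$-type linear map, an invertible linear system. Summing over levels gives $\sum_{k=0}^{b-1}2^{k+1}=O(2^b)$ gates. The Gray-code decomposition and the verification that the angle system is solvable are cited rather than reproved.

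The only delicate step is exactness of the Gray-code identity, which requires that the conjugated rotations $X R_Y(\phi)X=R_Y(-\phi)$ combine so that every control pattern receives precisely $R_Y(2\theta_y)$. This reduces to the invertibility of the $\pm1$ Hadamard-type matrix linking the two angle vectors. Since that matrix is orthogonal up to scaling, the decomposition is exact for real angles, and the stated exact preparation follows. Later, when the rotations are compiled into $\mathcal G$, the finite synthesis error is accounted for separately by the paper's terminal synthesis bound.
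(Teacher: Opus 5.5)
Your proposal is correct and follows essentially the same route as the paper. The paper gives no separate proof: it reverses the Shende--Bullock--Markov recursive disentangling, sets the multiplexed $R_y$ angles from subtree masses as in M\"ott\"onen et al.\ (angle zero on a zero-mass subtree, which matches your convention), and lowers each multiplexor by their Gray-code theorem, giving $O(2^b)$ exact one-qubit rotations and CNOTs.
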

\paragraph{Coefficient preparation and selection.}
Lemma~\ref{lem:imported-preparation} on $\mathsf N$, of width
$w_N=\lceil\log_2(10q+1)\rceil$, gives
\begin{equation}\label{eq:coefficient-prepare}
 \mathsf{PREPARE}\ket{0^{w_N}}
 :=\sum_{N=1}^{10q}\sqrt{|\lambda_N|/2}\,\ket N.
\end{equation}
The $O(q)$ rational probabilities determine all angles and an $O(q)$
gate cost; subtrees with zero probability have angle zero.
Appendix~\ref{app:reuse-registers} specifies finite angle evaluation
and the convention for the error on the full unitary.

Let $\sigma_N:=-1$ for $2q\le N\le4q$ and $\sigma_N:=1$ otherwise;
on the nonzero coefficient support this is $\operatorname{sgn}\lambda_N$.
We define the complete selection unitary and LCU circuit by
\begin{equation}\label{eq:prepareselect-full}
 \mathsf{SELECT}:=\sum_{N=0}^{2^{w_N}-1}\ket N\bra N
                    \otimes\sigma_N\mathscr U_N,
 \qquad U_o:=(\mathsf{PREPARE}^\dagger\otimes I)
              \mathsf{SELECT}(\mathsf{PREPARE}\otimes I),
\end{equation}
Here $\mathscr U_N:=I$ outside $1\le N\le10q$, and branches preserve $N$.
Appendix~\ref{app:reuse-registers} specifies the
sign predicate with helper registers initialized to zero; saved gate lists define a complete unitary
on every helper state, with Eq.~\eqref{eq:prepareselect-full} describing
the action on the data for initialized helper registers.

Each reuse unitary has selected block $P_N$ with normalization one.
The coefficient absolute sum is $\sum_N|\lambda_N|=2$, so the LCU rule
\cite{ChildsWiebe}, \cite[Section~4.3, Lemma~52]{QSVT}
gives the block $M_q/2$ with exact ideal preparation
\cite[Lemma~6.1]{ChenLindblad}, \cite[Lemma~8]{WangYe}.
Let $E_{\rm sig}$ embed data into public with all outer labels and work
zero, and $P_o=E_{\rm sig}E_{\rm sig}^\dagger$. Then
\[
 E_{\rm sig}^\dagger U_oE_{\rm sig}=M_q/2,\qquad
 M_q:=\sum_N\lambda_NP_N.
\]
The sign is implemented in $\mathsf{SELECT}$ as in
Eq.~\eqref{eq:prepareselect-full}. Work and reuse labels stay coherent until
uncomputed. The signal projection is identity on physical data;
Eq.~\eqref{eq:signal-registers} gives its registers, and finite
preparation errors are bounded on the complete implemented unitary.

For the source--target pair $x=E_{\rm sig}Y\psi$,
$y=E_{\rm sig}VY\psi$, define
\[
 e_f=P_oU_ox-y/2,\qquad e_r=P_oU_o^\dagger y-x/2.
\]
Both residuals enter the three-call amplification circuit
\cite{BCCKS}, \cite[Lemma~7.1]{ChenLindblad}:
\[
 R_o=I-2P_o,\qquad \mathcal F=-U_oR_oU_o^\dagger R_oU_o.
\]
Its complete output, including the component outside signal, obeys
\begin{equation}\label{eq:oaaresidual}
 \mathcal Fx-y=-2U_oe_r+2(I-2U_oP_oU_o^\dagger)e_f.
\end{equation}
Indeed, expanding both reflections with $P_ox=x$ gives
$\mathcal Fx=U_ox+2(I-2U_oP_oU_o^\dagger)P_oU_ox$; substituting
$P_oU_ox=y/2+e_f$ and $P_oU_o^\dagger y=x/2+e_r$ yields the identity.
This reverse residual is needed for amplitude amplification. To
implement a physical inverse window, we instead apply the forward
error estimate to its preimage and use the adjoint of the saved circuit.

For integer $q\ge1$, the corresponding residual maps are
\[
 \mathcal R_{q,+}:=S_{01}p(S_{11})^qg_{6q}(S_{11})\Gamma,\qquad
 \mathcal R_{q,-}:=S_{10}^\dagger
 p(S_{11}^\dagger)^qg_{6q}(S_{11}^\dagger)\Gamma.
\]
\begin{proposition}[Residual-based implementation bound]\label{prop:residualimplementation}
Let $Y$ be an isometry from an input space into the physical data space, and suppose
$\norm{\mathcal R_{q,+}Y}\le r_+$ and
$\norm{\mathcal R_{q,-}Y}\le r_-$. With the signal embedding defined above,
\[
 \norm{\mathcal F E_{\rm sig}Y-E_{\rm sig}VY}\le r_++r_-.
\]
If a finite implementation on the same full comparison space satisfies
$\norm{\widehat{\mathcal F}-\mathcal F}\le\eta_{\rm term}$,
then
\[
 \norm{\widehat{\mathcal F}E_{\rm sig}Y-E_{\rm sig}VY}
 \le r_++r_-+\eta_{\rm term}.
\]
The estimates are uniform in an untouched reference.
\end{proposition}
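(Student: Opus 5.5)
The plan is to read the bound off the amplification identity Eq.~\eqref{eq:oaaresidual}, after first identifying the two signal residuals $e_f,e_r$ with the maps $\mathcal R_{q,\pm}$ applied to $Y$. The key step is to show that $e_f=-\tfrac12 E_{\rm sig}\mathcal R_{q,+}Y\psi$ and $e_r=-\tfrac12 E_{\rm sig}\mathcal R_{q,-}Y\psi$, exactly or up to unit-modulus factors and isometric embeddings.

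For the forward residual, take $x=E_{\rm sig}Y\psi$. Since $P_o=E_{\rm sig}E_{\rm sig}^\dagger$, we have $P_oU_ox=E_{\rm sig}(M_q/2)Y\psi$. Lemma~\ref{lem:imported-coefficients} gives $\sum_N\lambda_N=1$ and $\sum_N\lambda_Ng_N=p^qg_{6q}$. Inserting the selected block of Lemma~\ref{lem:imported-reuse}, we get $M_q=V-S_{01}p(S_{11})^qg_{6q}(S_{11})\Gamma=V-\mathcal R_{q,+}$. Hence $e_f=-\tfrac12E_{\rm sig}\mathcal R_{q,+}Y\psi$.

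For the reverse residual, take $y=E_{\rm sig}VY\psi$. The selected block of $U_o^\dagger$ is $M_q^\dagger/2$, because $\mathsf{PREPARE}$ has real amplitudes and the signs $\sigma_N$ are real. By Eq.~\eqref{eq:reuseadj}, $M_q^\dagger=\sum_N\lambda_NP_N^\dagger=V^\dagger-S_{10}^\dagger p(S_{11}^\dagger)^qg_{6q}(S_{11}^\dagger)\Gamma V^\dagger$. Evaluating on $VY\psi$ and using $V^\dagger V=I$, the correction term becomes $\mathcal R_{q,-}Y\psi$. Hence $e_r=-\tfrac12E_{\rm sig}\mathcal R_{q,-}Y\psi$.

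It follows that $\norm{e_f}\le r_+\norm\psi/2$ and $\norm{e_r}\le r_-\norm\psi/2$. In Eq.~\eqref{eq:oaaresidual}, $U_o$ is unitary, and $I-2U_oP_oU_o^\dagger$ is a reflection, hence unitary. Therefore $\norm{\mathcal Fx-y}\le2\norm{e_r}+2\norm{e_f}\le(r_++r_-)\norm\psi$. Taking the supremum over $\psi$, including $\psi$ entangled with an untouched reference (all operators are tensored with the identity there), gives the first bound.

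The finite-implementation bound then follows from the triangle inequality:
\[
 \norm{\widehat{\mathcal F}E_{\rm sig}Y-E_{\rm sig}VY}\le\norm{\widehat{\mathcal F}-\mathcal F}\,\norm{E_{\rm sig}Y}+\norm{\mathcal FE_{\rm sig}Y-E_{\rm sig}VY},
\]
using $\norm{E_{\rm sig}Y}\le1$.

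The main obstacle I expect is justifying that the selected block of $U_o^\dagger$ really equals $\sum_N\lambda_NP_N^\dagger/2$ when the stored adjoint circuit reverses the slots rather than applying $J_N$. This is exactly what the identity $J_NC_N(S)^\dagger J_N^\dagger=C_N(S^\dagger)$ in Lemma~\ref{lem:imported-reuse} handles: $J_N$ fixes the uniform public state, so the selected block is unchanged. I also need the helper and work registers to return to zero on the signal subspace, so that $P_o$ selects exactly this block; this follows from the initialized-helper convention stated after Eq.~\eqref{eq:prepareselect-full}.
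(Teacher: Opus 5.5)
Your proposal is correct and is essentially the paper's proof. You identify $e_f=-\tfrac12E_{\rm sig}\mathcal R_{q,+}Y\psi$ from $M_q=\sum_N\lambda_NP_N$, and $e_r=-\tfrac12E_{\rm sig}\mathcal R_{q,-}Y\psi$ from Eq.~\eqref{eq:reuseadj} together with $V^\dagger VY=Y$; you then use unitarity in Eq.~\eqref{eq:oaaresidual} and the triangle inequality. One small remark: the selected block of $U_o^\dagger$ is $(M_q/2)^\dagger$ for any unitary, so no realness of the amplitudes is needed there, and the slot-reversal identity is used only to rewrite $P_N^\dagger$ in the form of Eq.~\eqref{eq:reuseadj}.
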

\begin{proof}
On the same source--target pair, the selected-block identities give
$e_f=-E_{\rm sig}\mathcal R_{q,+}Y\psi/2$. Equation~\eqref{eq:reuseadj}
for the saved adjoint and $V^\dagger VY=Y$ give
$e_r=-E_{\rm sig}\mathcal R_{q,-}Y\psi/2$.
In Eq.~\eqref{eq:oaaresidual}, $U_o$ and
$I-2U_oP_oU_o^\dagger$ are unitary, so the complete output error is at
most $2\norm{e_f}+2\norm{e_r}$. The full-space triangle inequality
adds $\eta_{\rm term}\norm\psi$, including orthogonal signal and work
components; tensoring a reference preserves the bound.
\end{proof}
The averaging contractions $g_{6q}(S_{11})$ and its adjoint polynomial
commute with their respective $p^q$, so they may act after the filtered
catalysts in Eq.~\eqref{eq:cyclicpacket}. Contractivity of $S_{01}$ and
$S_{10}^\dagger$ gives $r_+,r_-\le2M_\Gamma2^{-u}$, hence
\begin{equation}\label{eq:outererror}
 \norm{\mathcal F E_{\rm sig}Y-E_{\rm sig}VY}\le4M_\Gamma2^{-u}.
\end{equation}
This ideal bound holds on the original physical input $Y$.
Section~\ref{sec:window-error} combines it with the finite circuit
estimate on bounded occupation and the physical occupation tail.
The amplification sign is implemented by $XZXZ=-I$ on an existing
qubit. Section~\ref{sec:implementation} realizes this circuit with finite bath
lists and numerical precision while retaining the shared environment.
\section{Circuit implementation and global error analysis}\label{sec:implementation}
We now implement the operators $S$ and $\mathcal F$ of
Section~\ref{sec:compiler} on finite registers. Their targets are
unchanged; the bath configurations are represented by sorted lists of
nonzero values, and angles are evaluated to finite precision.
Capacities, permanent labels and register widths are fixed for the
entire segment, so overlapping factors use the same bath encoding.
We first prove the circuit estimate on bounded-occupation inputs,
then combine it with the physical tail and residual estimates.

Let $v$ bound the number of system qubits, the number of associated
terms and the volume of the support box, with $r\le C_D(1+v)$.
Let $w_{\rm mesh}=\log_2M$, and let $w_{\rm reg}$ bound the number of
bits allocated to each occurrence address and bath label. Finally,
let $\Sigma$ bound the interaction weights for every family of calls
in Theorem~\ref{thm:prefix}, including inverse preimages and their
forward prefixes. We use these bounds to fix the register capacities
and their spatial layout. For $0<\eta<1$, define
\begin{equation}\label{eq:local-size}
 Z_*:=2+v+w_{\rm reg}+\lceil\Sigma\rceil+
       \left\lceil\log_2(1/\eta)\right\rceil.
\end{equation}
For the global construction, let $v_{\rm gl}$ be the maximum of the
system-qubit counts, complete term-list sizes and support-box volumes
used by the factors. Let $w_{{\rm reg},\rm gl}$ be their common allocated
address and bath-label width. With the common $\Sigma_*$ from
Theorem~\ref{thm:prefix} and the common factor budget $\eta$, define
\begin{equation}\label{eq:global-size}
 Z_{\rm gl}:=2+v_{\rm gl}+w_{{\rm reg},\rm gl}
       +\lceil\Sigma_*\rceil+\left\lceil\log_2(1/\eta)\right\rceil.
\end{equation}
Thus $Z_{\rm gl}$ is the value of Eq.~\eqref{eq:local-size} at the common
bounds. We have
$v_{\rm gl}=O_D(v_*)$ and
\[
 w_{{\rm reg},\rm gl}\le w_{\rm mesh}+O_D(1+\log(v_*+1))
                         =O_D(1+\log X).
\]
The physical reference and the hardware use different environment
representations. Within capacity, a physical bath configuration and its
canonical list are identified in the common space of
Section~\ref{sec:commonencoding}. Higher physical occupations are retained in that comparison space for
the tail estimate, but do not correspond to allocated hardware states.
The isometry $I_P$ inserts the signal and work zero states and embeds
all physical bath configurations into this comparison space. The
superscript $\mathrm c$ denotes the corresponding unitary extension.
The theorem bounds the entire initialized output, including components
outside the signal subspace.

\Needspace{9\baselineskip}
\begin{theorem}[Coherent simulation of a local factor]\label{thm:block}
Fix the classical description of a factor $W_C$ and its windows in
the prescribed execution of Section~\ref{sec:prefix}, together with
the common bounds above. For every $0<\eta<1$, a deterministic uniform
construction produces a single unitary circuit $\widehat W_C$ with
exact one-qubit rotations on its system, complete owner environment
blocks, signal and work registers. The circuit and its encoding depend
only on these classical data and bounds, not on the ideal input
isometry $Y$. For every prescribed $Y$ whose forward and inverse call
families satisfy these common bounds,
\[
 \norm{[(\widehat W_C)^{\rm c}-W_C^{\rm c}]I_PY}\le\eta.
\]
The common space retains the complete physical bath space, including
occupations above the cutoff. The estimate holds on the prescribed
$Y$, uniformly in an untouched reference. Its logical resources satisfy
\[
 G_{\rm logical,block}\le C_D Z_*^{10},\qquad
 Q_{\rm logical,block}\le C_D Z_*^6.
\]
On a connected $H_{\rm host}$-qubit host containing these registers,
there is a nearest-neighbour implementation with
\begin{equation}\label{eq:blockcost}
 G_{\rm NN,block},\ d_{\rm NN,block}\le C_D H_{\rm host}Z_*^{10}.
\end{equation}
Logical width counts the system, environment, signal and work registers;
physical width counts every host position. The constant $C_D$ depends
only on the fixed local model and gate conventions.
\end{theorem}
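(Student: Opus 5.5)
We would build $\widehat W_C$ window by window following Eq.~\eqref{eq:regroup}: the saved adjoints of the old windows in reverse order, then the new forward windows, each forward window being the three-call amplification circuit $\mathcal F$ of Section~\ref{sec:reuse} assembled from $U_o$. The environment is encoded as sorted lists of (label, value) pairs of capacity $K=\Theta(Z_*)$, and all widths are fixed from the common bounds. The circuit is therefore determined by classical data alone, and it is independent of $Y$.

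Concretely, the transducer $S$ is realized through Eq.~\eqref{eq:direct}. The factor $Q$ scans the $O(v)$ term types and applies the selected block encoding $O_j$ to the list entry for $b(j)$, inserting or deleting a record as needed. The input and output trees have depth $w=O(w_{\rm reg})$. At each node, the cosine $\prod t_j$ of an interval is zero if an occupied active jump bath lies in the interval, by Eq.~\eqref{eq:through}. Otherwise it is a product of $t^{(0)}$ powers determined by the occurrence counts of the $O(v)$ terms, which are computable from the binary address prefix. Each rotation angle in Lemma~\ref{lem:merge} is therefore evaluated coherently by reversible arithmetic to precision $\eta/\mathrm{poly}(Z_*)$, at cost polynomial in $Z_*$. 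Reuse lengths up to $10q$ with $q=7u$ and $u=\Theta(Z_*)$, together with $\mathsf{PREPARE}$ from Lemma~\ref{lem:imported-preparation}, contribute further polynomial factors. Counting $r\le C_D(1+v)$ windows per pass, we expect a total of $O(Z_*^{10})$ logical gates and $O(Z_*^6)$ qubits. A Lemma~\ref{lem:bandwidth}-type band property shows that capacity $K\ge 2u+O(q)$ suffices: each call of $S$ raises $N_{\rm aug}$ by at most one, so along all $O(q)$ calls the lists never overflow on inputs with occupation at most $2u$.

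The error analysis compares $\widehat{\mathcal F}$ with $\mathcal F$ on each window. We split the input $I_PY$ at physical occupation $2u$. On the high part, both circuits are unitary, so the tail bound $e^{\nu\Sigma/8}2^{-u}$ from Lemma~\ref{lem:catalyst} controls the error directly. On the low part, the list encoding is exact and only angle errors remain, which we bound by $\eta_{\rm term}$ via a telescoping sum over gates. Proposition~\ref{prop:residualimplementation} with Eq.~\eqref{eq:outererror} then bounds the window error by $4M_\Gamma2^{-u}+\eta_{\rm term}+2e^{\nu\Sigma/8}2^{-u}$. Taking $u\gtrsim\Sigma+\log_2(r/\eta)$ makes this at most $\eta/(2r)$.

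Errors propagate along $W_C$ through the actual unitaries that follow, so the window errors simply add. For an inverse window, we apply the forward estimate to the preimage $V^\dagger Y_{\rm in}$, which is covered by Theorem~\ref{thm:prefix}, and then take the adjoint of the saved circuit, which is unitary. Summing over $2r$ windows gives total error $\eta$.

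The final step is nearest-neighbour compilation. Each logical two-qubit gate is routed inside the connected host by swap chains of length at most $H_{\rm host}$ and then restored, which yields Eq.~\eqref{eq:blockcost}.

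The main obstacle will be the coherent evaluation of the tree angles on occupied configurations. The merge in Lemma~\ref{lem:merge} needs interval products evaluated after $Q$ has acted, and these must be computed from the sorted list together with the address prefix. The difficulty is to do this reversibly, with workspace uncomputed, while staying within the polynomial bound in $Z_*$. We would first try to compute the per-term occurrence counts in each interval by comparing address bits, and then use list lookups to test whether an occupied active bath falls inside the interval.
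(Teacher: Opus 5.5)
Your proposal is correct and follows the paper's proof essentially step for step. The matching steps are: the persistent sorted-list encoding with capacity $2u+30q+O(1)$ justified by Lemma~\ref{lem:bandwidth}; interval coefficients from the occupied-label hit test and the occurrence counts; the three-term split of each window error into a low-occupation compiler term, a tail term bounded by unitarity, and the residual of Eq.~\eqref{eq:outererror} applied to the unsplit column $Y$; inverse windows via the preimage $V^\dagger Y_{\rm in}$ and the saved adjoint; telescoping over the $2r$ windows; and swap-chain routing as in Lemma~\ref{lem:route}. The one point you leave implicit is that your claim of angles accurate to $\eta/\mathrm{poly}(Z_*)$ at cost polynomial in $Z_*$ needs a lower bound on the merge normalization $s$, since both basis changes divide by $s$. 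The paper gets this bound by quantizing the leaf tangents to $b_{\rm q}=\lceil\log_2(64L/\xi)\rceil$ bits, at operator cost $2L2^{-b_{\rm q}}$, so that every nonidentity interval has sine at least $2^{-b_{\rm q}}$; without it, a tiny $h\lambda_e$ could demand precision not controlled by $Z_*$.
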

We first specify the shared registers and their operations, then define
the common comparison space and combine the estimates in
Table~\ref{tab:contracts}. Finally, we place the gates on the lattice.
The approximation bounds apply on the stated input domains and include
all signal and work components of the output.

\subsection{Registers and coherent interval evaluation}
Following \citet[Sections~5 and~9.2]{ChenLindblad}, we use $\mathsf P$
to distinguish the public and private components and $\mathsf T$ to
select an occurrence. The register $\mathsf M$ stores the reuse index.
The pair $(\mathsf J,\mathsf L)$ stores permanent bath labels and
their values, with zero denoting vacuum. All register widths and the
record capacity $C_{\rm cap}$ are fixed before construction for the
entire segment. Equation~\eqref{eq:cap} applies to every call.
Neighbouring factors share the environment arrays, including occupied
baths on which a particular factor acts trivially, and use separate
work registers.

\begin{table}[H]
\centering\small
\caption{Registers for one owner. Environment labels and widths persist across windows; owners have disjoint work registers.}
\label{tab:registers}
\begin{tabular}{p{.20\linewidth}p{.70\linewidth}}
\toprule
Register & Meaning\\\midrule
$\mathsf S_{x,k}$ & System wire $k$ at site $x$.\\
$\mathsf B_{C,\ell}$ & Label bits $\mathsf J_{C,\ell,0:w_{b,C}-1}$ and constant-size nonzero value bits $\mathsf L_{C,\ell}$, $\ell\le C_{\rm cap}$.\\
$\mathsf P$ & Flag for public ($0$) and private ($1$) components.\\
$\mathsf T_{0:w_{\mathsf T}-1}$ & Occurrence index minus one, zero-extended to the shared width.\\
$\mathsf Q$ & Block-encoding ancilla for the Hermitian unitary $O$.\\
$\mathsf F$ & Extracted bath value; zero before and after an ideal selected operation.\\
$\mathsf D$ & Query descriptors: owner selector, permanent term ordinal, period, and interval endpoints. Fields use the fixed label/address widths.\\
$\mathsf H_0,\mathsf H_1$ & One hit bit per child interval: one iff an occupied active jump label belongs to that child. Both are zero before and after ideal evaluation.\\
$\mathsf A,\mathsf W$ & Fixed-point angle and Boolean arithmetic work.\\
$\mathsf N,\mathsf M$ & Reuse length and index, each with $\lceil\log_2(10q+1)\rceil$ bits.\\
\bottomrule
\end{tabular}
\end{table}

For a window containing $L\ge1$ real occurrences generated by the
active term list $\mathcal D$ of size $P$,
the local tree uses $w=\lceil\log_2L\rceil$ low address bits ($w=0$ for
$L=1$). For a standalone owner, $w_{\mathsf T}\ge w$ may be its maximum over
nonempty windows. In the assembled circuit we instead use the maximum
over all owners and zero-extend every shorter address to this common
width. Permanent bath-label fields are padded in the same way.
Valid addresses have zero unused high bits, checked
by all validity controls. These widths differ from
$w_{\rm mesh}=\log_2M$, and $L\le PM$ gives
$w\le w_{\rm mesh}+\lceil\log_2\max\{1,P\}\rceil$.
Wire names include their owner, e.g. $\mathsf A_{C,j}$ and
$\mathsf J_{C,\ell,j}$; distinct roles use disjoint wires.
A reversible finite-width evaluator uncomputes its Boolean work
exactly when its control inputs are preserved. The approximate window
can nevertheless leave amplitude outside the initialized signal and
work subspace. This amplitude is retained and propagated by subsequent
unitary windows. Terminal gate synthesis contributes a further
full-space approximation error after assembly.

Each environment block is a canonical sorted list of nonzero values
followed by records with both fields zero. In the owner's fixed complete
term order, ordinals $1,\ldots,P_C$ identify terms and zero denotes an
empty record. The encoder fixes the correspondence to physical labels.
We allocate
\[
 w_{b,C}=\lceil\log_2(P_C+1)\rceil+\log_2 M
\]
label bits. We zero-pad them to a common width
$w_b=\max_C w_{b,C}$ when accounting for a shared routine's cost. List
and query validation require every occupied record to name a jump term,
a period in $\{0,\ldots,M-1\}$ and zero padding. These fields retain
their widths across all windows; a constant-size owner selector identifies the neighbouring block. Permanent labels remain valid outside the
active window. Both persistent label and occurrence widths enter
$w_{\rm reg}$ and $Z_*$.

For $e$ in the active term list, define
$\operatorname{rank}_{\mathcal D}(e)\in\{0,\ldots,P-1\}$ to be its
zero-based position in that ordered list. This temporary rank differs
from the one-based permanent ordinal used in a bath label.
For periods $m_0\le m<m_1$ and $P>0$, $L=P(m_1-m_0)$ and
\begin{equation}\label{eq:rankdecoder}
 j-1=P(m-m_0)+\operatorname{rank}_{\mathcal D}(e).
\end{equation}
The quotient and remainder on division by $P$ and a bounded term-list scan recover the term
and period. The active list changes the occurrence rank, while the
physical term's owner, ordinal and period still name the same bath;
insertion and deletion keep the list sorted. Terms or periods outside the
window are inactive. If the active term list $\mathcal D$ is empty, the local compiler
returns the identity before occurrence decoding. An empty bath-record
array is instead a valid vacuum input to a nonempty local circuit.

\paragraph{Signal encodings and reflection.}
For hardware system and environment data $x$, define
\[
 E_{\rm pub}\ket{x}
 :=\ket{\pub}_{\mathsf P}\ket{0^{w_{\mathsf T}}}_{\mathsf T}
   \ket0_{\mathsf Q}\ket{x}_{\mathsf S\mathsf B}
   \ket0_{\mathsf F\mathsf D\mathsf H_0\mathsf H_1\mathsf A\mathsf W}.
\]
Private occurrence $j$ uses $\mathsf P=\priv$, $\mathsf T=j-1$ and
either value of $\mathsf Q$, in the same owner bath coordinates.
Outer length and reuse registers are appended below, and Boolean helpers
belong to their declared work registers. Public strings with nonzero
$\mathsf T$ or $\mathsf Q$ are padding. Selected block-encoding and private-phase
guards accept only valid private occurrences. The root rotation targets
$\mathsf P$ under $\mathsf T=0^{w_{\mathsf T}},\mathsf Q=0$, coupling
the public component to the private root.

Using uncompressed physical data $x$ gives an analytical embedding of
the complete direct sum $\mathcal K$ of Eq.~\eqref{eq:semanticspace}:
\[
 \begin{aligned}
 E_{\rm sem}x_{\pub}
 &=\ket{\pub,0^{w_{\mathsf T}},0}_{\mathsf P\mathsf T\mathsf Q}
       \otimes x\otimes\ket0_{\mathsf W_{\rm sem}},\\
 E_{\rm sem}(\ket z\otimes x)_j
 &=\ket{\priv,j-1,z}_{\mathsf P\mathsf T\mathsf Q}
       \otimes x\otimes\ket0_{\mathsf W_{\rm sem}},\qquad z=0,1.
 \end{aligned}
\]
Here $\mathsf W_{\rm sem}$ includes numerical, descriptor, exchange and
Boolean work; outer registers are untouched. Section~\ref{sec:commonencoding}
identifies capacity-limited physical data with canonical hardware lists
and keeps other physical configurations and invalid hardware strings
orthogonal. In particular, both private $\mathsf Q$ sectors are included.

Appending initialized outer registers to $E_{\rm pub}$ defines
$E_{\rm sig}$. Write $\mathsf A_o=(\mathsf N,\mathsf M)$ for the
outer reuse registers and
$\mathsf W_o=(\mathsf F,\mathsf D,\mathsf H_0,\mathsf H_1,\mathsf A,\mathsf W)$
for the extracted bath, descriptors, interval-hit bits, angle and
arithmetic work. The arithmetic work also contains temporary bits used
to prepare the outer reuse distribution. On these registers the range
projector is
\begin{equation}\label{eq:signal-registers}
 P_o:=
 \ket{\pub,0^{w_{\mathsf T}},0}\bra{\pub,0^{w_{\mathsf T}},0}_{\mathsf P\mathsf T\mathsf Q}
 \otimes\ket0\bra0_{\mathsf A_o}
 \otimes\ket0\bra0_{\mathsf W_o}
 \otimes I_{\rm system,env}.
\end{equation}
It acts as identity on all hardware data strings, including noncanonical
lists. The physical projector uses the same zero-register predicate on
the full bath space. Its reflection includes high-occupation columns;
the hardware reflection's extension is identity on physical configurations above the record capacity. They agree on the capacity subspace.

The hardware reflection uses additional zero helpers, disjoint from
$\mathsf A_o,\mathsf W_o$ and the data. These helpers are included in
the width count; the projector above is specified on their zero
subspace. The circuit computes the all-zero predicate with a Toffoli
AND network, applies $Z$ to its output and reverses the network. Its
control, output and work wires are distinct. Ideal $U_o$ preserves these
helpers' zero subspace, where the circuit acts as $R_o=I-2P_o$; the saved
gates define its unitary on all helper states. Both gate and helper
counts are linear in the signal bits. Temporary predicates are reversed
before their inputs change: in particular, the reuse-index comparison
is cleared before its second permutation, retaining private coherence.

\begin{lemma}[Bath-register exchange]\label{lem:root}
For a preserved permanent bath label $b$, let $\mathscr L[b:=v]$ assign
value $v$ to $b$, representing zero by absence, and let $v_b$ be its old
value (zero if absent). Define
$F_b(\mathscr L,v):=(\mathscr L[b:=v],v_b)$ when both lists are canonical
and fit the capacity, and fix the input otherwise. This involution has
Boolean cost $O(C_{\rm cap}^2(w_b+1))$ for record processing plus
$O((C_{\rm cap}+1)P_C(w_b+1))$ for validation of queries and records against the
owner's complete term list.
\end{lemma}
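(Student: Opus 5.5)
The plan is to implement $F_b$ as a reversible Boolean circuit in four phases. First validate, then locate, then exchange, then restore canonical order. Every temporary flag is uncomputed against preserved controls, so that the net map is exactly the stated involution.

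\emph{Validation.} Let $\mathsf{ok}$ be the predicate that the list $\mathscr L$ is canonical and fits the capacity, that the value $v$ is in range, and that the query label $b$ names a jump term of the owner with period in $\{0,\ldots,M-1\}$ and zero padding. We compute $\mathsf{ok}$ into a fresh bit.
\begin{itemize}
\item Canonicity is checked by $C_{\rm cap}-1$ adjacent comparisons of $w_b$-bit labels, each costing $O(w_b+1)$. The same pass checks the nonzero values and that the zero records form a suffix.
\item Record and query validity is checked by comparing each of the $C_{\rm cap}+1$ ordinal fields against the $P_C$ jump ordinals of the owner's complete term list. This costs $O((C_{\rm cap}+1)P_C(w_b+1))$.
\end{itemize}
All remaining operations are controlled on $\mathsf{ok}$, so invalid inputs are fixed.

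\emph{Exchange and reinsertion.} For each record $\ell$ we compute a hit bit $[\mathsf J_\ell=b]$, at cost $O(w_b+1)$; canonicity gives at most one hit. Under the hit, we swap $\mathsf L_\ell$ with the value register. If there is no hit, $v_b=0$, and we write $v$ into the first empty record under the predicate ``no hit and $v\neq0$''. Capacity overflow is part of the hypothesis, so that branch is excluded by $\mathsf{ok}$, which must include the predicate that insertion fits. After this step the register holds $v_b$ and the list holds $\mathscr L[b:=v]$, but possibly in the wrong order. A nonzero value may have been placed last, or a record may have become zero in the middle.

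\emph{Restoring canonical order.} Because the label $b$ is preserved, the displaced record can be moved to its sorted position by a fixed network of $O(C_{\rm cap})$ adjacent compare-and-swap passes. This is a bubble step driven by the preserved key $b$; a full sorting network would also do, at $O(C_{\rm cap}^2)$ comparators of cost $O(w_b+1)$. Zero records are treated as $+\infty$ keys, so they migrate to the suffix.

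\emph{Main obstacle: uncomputing the swap decisions.} Each comparator needs its decision bit uncomputed reversibly. The plan is to recompute each decision from the post-swap data and preserved $b$. This works because, for sorted neighbours relative to the single key $b$, the decision equals a comparison involving $b$ that is unchanged by the swap. Done in reverse order this costs $O(C_{\rm cap}^2(w_b+1))$.

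The hit bits are uncomputed in the same way, using the preserved $b$ and the canonical output list. The bit $\mathsf{ok}$ is uncomputed by observing that $\mathsf{ok}(\mathscr L,v)=\mathsf{ok}(F_b(\mathscr L,v))$. Canonical, capacity-fitting pairs map to canonical, capacity-fitting pairs, because the number of nonzero records is $|\mathscr L|-[v_b\ne0]+[v\ne0]$ on both sides. We can therefore recompute $\mathsf{ok}$ on the output and clear it.

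Finally, $F_b\circ F_b=\mathrm{id}$ follows directly: assigning $v_b$ back restores $\mathscr L$ and returns $v$. Summing the phase costs gives the stated Boolean cost.
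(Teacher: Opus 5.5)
Your involution argument and your cost count are sound, and they already establish the lemma as stated. Validation against the complete term list costs $O((C_{\rm cap}+1)P_C(w_b+1))$. The hit test, the exchange, the insertion into the first empty record, and either bubble passes or an odd--even network all stay within $O(C_{\rm cap}^2(w_b+1))$; one bubble pass in each direction would even give $O(C_{\rm cap}(w_b+1))$.

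Where you differ from the paper is that you insist on an in-place reversible circuit whose comparator decisions you clean up by hand. The paper only exhibits an ordinary Boolean network:
\begin{itemize}
\item delete the matching record;
\item append $(b,v)$ or an empty record;
\item sort the $C_{\rm cap}+1$ records with an odd--even network whose last record flags overflow;
\item select either that output with $v_b$, or the original input.
\end{itemize}
Reversibility is obtained afterwards, and generically. Evaluate $F_b$ out of place by compute--copy--uncompute (Lemma~\ref{lem:booleanframe}), swap input and output, and evaluate again. The second evaluation clears the copy because $F_b^2=\mathrm{id}$, which gives Eq.~\eqref{eq:actualrootword}. So the ``main obstacle'' you identify never arises in the paper: the involution property is exactly what makes garbage-free in-place evaluation come for free.

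If you keep your direct construction, the uncomputation step needs repair. ``Recompute from post-swap data, in reverse order'' does not work as written. Later comparators move the $b$ record, so the post-swap pair of an earlier comparator no longer exists when you come to uncompute it. Even if you uncompute immediately after each comparator, a local test such as ``the $b$ record is on the left and a larger key is on the right'' fires on the unswapped pair $(b,\text{empty})$ just to the right of the insertion slot.

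A correct rule reads the final sorted list together with a branch flag, with each pass gated by its branch. An insertion is recognised in the output by $b$ being listed while the returned value is zero. Leftward comparator $(i,i+1)$ then fired iff $\mathsf J'_i\ge b$ (treating empty records as $+\infty$) and record $i+1$ is occupied. Deletions, hit bits and first-empty flags admit analogous formulas. You should also clear the label field of a deleted record, not only its value. This repair works, but it is strictly more bookkeeping than the paper's two-evaluation trick.
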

\begin{proof}
We validate the query and list (strictly increasing occupied labels, nonzero
values, zero empty fields, legal jump ordinals, periods and padding),
and save $v_b$. We remove the matching record, append $(b,v)$ if $v\ne0$
or an empty record otherwise, and sort the resulting $C_{\rm cap}+1$
records with a fixed odd--even comparison network, occupied labels first.
The last record detects overflow. We output the first $C_{\rm cap}$ records
and $v_b$ if validation succeeds without overflow, and the original
input otherwise. This finite Boolean evaluator has the stated cost.
On legal pairs it exchanges the two values; two-sided legality is
symmetric, giving two-cycles and fixed points also at the capacity boundary.
\end{proof}

\begin{lemma}[Coherent Boolean evaluation]\label{lem:booleanframe}
A bounded-fan-in Boolean network for $f$, with $s$ separately stored
intermediate bits and $t$ output bits, has a compute--copy--uncompute
circuit $U_f$ with $O(s+t)$ X, CNOT and Toffoli gates satisfying
$U_f(\ket{x}\ket z\ket{0^s})=\ket{x}\ket{z\oplus f(x)}\ket{0^s}$.
Each gate uses only its declared registers; this action and its reversed
adjoint hold on superpositions with arbitrary untouched references.
\end{lemma}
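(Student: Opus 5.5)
The proof is the standard Bennett compute--copy--uncompute construction. Fix a topological order $g_1,\dots,g_s$ of the gates of the bounded-fan-in network, and assign each gate its own stored intermediate bit. We first convert every gate to a reversible form. A gate $y=\phi(x_{i_1},\dots,x_{i_k})$ with $k$ bounded is written in algebraic normal form over $\mathbb F_2$. Each monomial of degree at most two is XORed into the target bit by an X, a CNOT or a Toffoli. A monomial of higher degree is first split into Toffolis through a constant number of bounded-fan-in ANDs; we count these temporary bits among the $s$ intermediates, or treat the fan-in as at most two from the start. In both cases each gate $g_\ell$ becomes a circuit $C_\ell$ of $O(1)$ X, CNOT and Toffoli gates. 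The circuit $C_\ell$ maps $\ket{u}\ket{c}\mapsto\ket{u}\ket{c\oplus\phi(u)}$ on its declared input wires $u$ and its target wire $c$, and leaves every other wire untouched.

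The compute stage is $A:=C_s\cdots C_1$, applied to $\ket{x}\ket{0^s}$. We argue by induction along the topological order. When $C_\ell$ runs, its inputs already hold their correct values, namely $x$-bits or earlier intermediates, and its target is still $0$. It therefore writes the correct value $y_\ell(x)$. Hence $A\ket{x}\ket{0^s}=\ket{x}\ket{y(x)}$, where $y(x)$ lists all intermediate values. The copy stage $B$ uses $t$ CNOTs, one from each output-designated intermediate bit onto the corresponding bit of $z$. This gives $\ket{x}\ket{z\oplus f(x)}\ket{y(x)}$. The uncompute stage is $A^\dagger=C_1^\dagger\cdots C_s^\dagger$, where each $C_\ell^\dagger$ is the reversed list of self-inverse gates. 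Since $B$ changes neither $x$ nor any intermediate wire, $A^\dagger$ sees exactly the state that $A$ produced and returns the work to $\ket{0^s}$. So $U_f:=A^\dagger BA$ has the stated action with $2\cdot O(s)+t=O(s+t)$ gates.

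For superpositions we use linearity. The identity holds on every computational basis state, so it holds on every superposition. Each gate acts as (its matrix)$\otimes I$ on all undeclared wires, which covers an arbitrary untouched reference by tensoring. The reversed adjoint $U_f^\dagger=A^\dagger B^\dagger A$ is again a list of self-inverse gates, and it equals $U_f$ because $B$ is self-inverse.

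The only genuine point of care is the bookkeeping that uncomputation needs. Inputs $x$ and intermediates must never be targets during the copy stage, and the output register $z$ must be disjoint from all wires read by the network. Once wires are declared disjoint in this way, the induction goes through and nothing else is delicate.
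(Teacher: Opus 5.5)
Your proposal is correct and follows the same Bennett compute--copy--uncompute argument as the paper. The paper evaluates AND, XOR and NOT into fresh zero bits with Toffoli, CNOT and X gates, copies the outputs, and reverses the evaluator; you add the topological-order induction, the wire-disjointness bookkeeping, and the observation that $U_f=A^\dagger BA$ equals its own reversed adjoint. Your detour through higher-degree monomials is unnecessary if, as the paper does, you take the network's gates to be fan-in-two AND, XOR and NOT from the start.
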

\begin{proof}
We evaluate AND into fresh zero bits with Toffoli, XOR with CNOTs and
negation with X and CNOT gates, retain the intermediates, copy the outputs and
reverse the evaluator \cite{Bennett,VBE}. Wires are distinct as required;
Appendix~\ref{app:program} fixes the exact Toffoli expansion.
\end{proof}

We evaluate $F_b$ out of place, swap the input and output data and evaluate it again.
Since $F_b^2=I$, the resulting in-place circuit satisfies
\begin{equation}\label{eq:actualrootword}
 U_{F_b}\ket x\ket0_{\rm data'}\ket0_{\rm work}
 =\ket{F_b(x)}\ket0_{\rm data'}\ket0_{\rm work}.
\end{equation}
It preserves the query coherently. Thus $U_b:=U_{F_b}O_eU_{F_b}$ extracts
the bath, applies $O_e$ to it, $\mathsf Q$ and the indicated system wires,
and returns the updated value. If every output list fits and $u_{v,s}$
are the amplitudes of $O_e$ on $v_b$ and $\ket\psi_{\mathsf Q\mathsf S}$,
\[
 U_b\ket{\Enc(\mathscr L)}\ket0_{\mathsf F}\ket\psi_{\mathsf Q\mathsf S}
 =\sum_{v,s}u_{v,s}\ket{\Enc(\mathscr L[b:=v])}
                   \ket0_{\mathsf F}\ket s_{\mathsf Q\mathsf S}.
\]
One unused record suffices because only one bath changes; the capacity
chosen below provides this margin at every ideal call. Other inputs follow the
two-sided legality rule. We clear membership work before $O_e$ changes the
bath, and occurrence work before the tree changes $\mathsf T$.
Hamiltonian terms act on $\mathsf Q\mathsf S$ without exchanges.

For interval coefficients, we write
\[
 t_e^{(0)}:=\begin{cases}
 (1-h\alpha_e/2)/(1+h\alpha_e/2),&e\in\mathcal I_H,\\
 (1-h\beta_e^2/4)/(1+h\beta_e^2/4),&e\in\mathcal I_J.
 \end{cases}
\]
These coefficients give the public-to-public blocks for Hamiltonian occurrences and jump occurrences on vacuum. For an occupied jump bath, the coefficient is zero by Eq.~\eqref{eq:through}.
\begin{lemma}[Interval products and vacuum projections]\label{lem:zerohit}
For $I=[l,r]$ in a single-pass forward window,
\begin{equation}\label{eq:zerohit}
 D_I:=\prod_{j\in I}t_j
 =d_I^{(0)}\prod_{b\in\mathcal B(I)}(I-P_b),
 \qquad d_I^{(0)}:=\prod_{e\in\mathcal D}(t_e^{(0)})^{n_e(I)}.
\end{equation}
Here $\mathcal B(I)$ contains its active jump labels and $n_e(I)$ counts
the occurrences of term $e$ in $I$.
\end{lemma}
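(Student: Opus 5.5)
The identity is local to each factor $t_j$, so the plan is to establish it cell by cell and then multiply. First record what each $t_j$ is as an operator on the full system--bath space. For a Hamiltonian occurrence, $\Theta=h\alpha_eI/2$ is scalar, so $t_j=(1-h\alpha_e/2)/(1+h\alpha_e/2)\,I=t_e^{(0)}I$. For a jump occurrence $j$ with label $b(j)$, $\Theta=(h\beta_e^2/4)P_0+P_1$ is diagonal in the sectors of bath $b(j)$. Hence $t=(I-\Theta)(I+\Theta)^{-1}$ acts as $t_e^{(0)}$ on $P_0$ and as $(1-1)/(1+1)=0$ on $P_1$. This is Eq.~\eqref{eq:through}: $t_j=t_e^{(0)}P_{b(j),0}=t_e^{(0)}(I-P_{b(j)})$. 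Each such $t_j$ acts as identity on all other baths, on the system, and on any reference.

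Next, all factors commute. Every $t_j$ is either a scalar or a scalar times the vacuum projection of a single bath. Projections onto vacuum of different baths commute, being diagonal in the product computational basis of $\mathcal H_E$. The ordered product $t_r\cdots t_l$ can therefore be rearranged freely. Collecting the scalars gives $\prod_{e\in\mathcal D}(t_e^{(0)})^{n_e(I)}=d_I^{(0)}$, since each occurrence of term $e$ in $I$ contributes exactly one factor $t_e^{(0)}$, whether it is a Hamiltonian or a jump occurrence. Collecting the projections gives $\prod_{j\in I\cap\mathcal J}(I-P_{b(j)})$.

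Finally, use the single-pass hypothesis. Distinct jump occurrences in a forward window carry distinct permanent labels $(e,m)$, so the map $j\mapsto b(j)$ is a bijection from the jump occurrences in $I$ onto $\mathcal B(I)$. Even without injectivity the formula would still hold, since the projections are idempotent. This yields $D_I=d_I^{(0)}\prod_{b\in\mathcal B(I)}(I-P_b)$. For an empty interval, both sides are the identity by convention.

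The only point needing care is the claim that $t_J$ vanishes exactly on the occupied sector rather than merely being small there. This follows because the occupied weight is $\Theta=P_1$ exactly, so $1-\Theta=0$ on that sector. That is the reason for the weighted choice in Lemma~\ref{lem:cell}. With this in hand, no ordering or noncommutativity issue with system operators arises, because the $t_j$ contain no system operators at all.
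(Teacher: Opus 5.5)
Your proof is correct and takes essentially the paper's approach. Hamiltonian factors are the scalar $t_e^{(0)}$, and jump factors are $t_e^{(0)}$ times the commuting diagonal vacuum projector $I-P_{b(j)}$, which vanishes exactly on the whole occupied sector, padded values included; multiplying them gives the identity. The paper states the last step as a check on the complete bath computational basis extended by linearity, which amounts to the same thing as your commuting-operator rearrangement.
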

\begin{proof}
H factors are scalar; J factors are their vacuum scalar times a diagonal
vacuum projector. Multiplication proves the identity on the complete
physical bath basis, including padded nonzero values and all spectator
factors, and hence by linearity on the full space.
\end{proof}
The trees evaluate this operator on the bath configuration \emph{after}
the selected block-encoding gate: exciting a bath changes its coefficient from
$t_b^{(0)}$ to zero. Lemma~\ref{lem:merge} includes either or both zero
child coefficients.

We intersect each padded node interval with $[1,L]$; an empty intersection
has product one and membership bit zero. Otherwise, the evaluator scans all allocated records
$\mathcal R_{\rm rec}$ in the touched owner blocks. We decode
record $\ell$ to $(v_\ell,e_\ell,m_\ell)$ and set
$j_\ell=P(m_\ell-m_0)+\operatorname{rank}_{\mathcal D}(e_\ell)+1$
for an active term and period, or zero otherwise. The hit predicate is
\[
 \bigvee_{\ell\in\mathcal R_{\rm rec}}
 \bigl[(v_\ell\ne0)\land(e_\ell\in\mathcal D)
 \land(m_0\le m_\ell<m_1)\land(l\le j_\ell\le r)\bigr].
\]
For the left and right children, this predicate is stored in
$\mathsf H_0$ and $\mathsf H_1$, respectively. The endpoints and
decoded label fields use $\mathsf D$; these descriptors and hit bits
are reversed after the rotation while their inputs are preserved.
A hit gives coefficient zero; otherwise, the evaluator computes
$d_I^{(0)}$ by binary powering. For $\rho_e=1+\operatorname{rank}_{\mathcal D}(e)$,
\[
 n_e([l,r])=\left\lfloor\frac{r-\rho_e}{P}\right\rfloor-
             \left\lfloor\frac{l-1-\rho_e}{P}\right\rfloor.
\]
We use Euclidean quotients and mark spectator terms and labels outside the window
inactive. Loops range over $P$ terms, $C_{\rm cap}$ records and $w$ bits.

Figure~\ref{fig:four-leaf} gives the four-leaf example of the following
address rule. At level $j$, bits above $\mathsf T_j$ specify the parent interval;
its representatives have zero lower bits and differ in $\mathsf T_j$.
We compute their coefficients and merge angle, rotate $\mathsf T_j$ in the
private $\mathsf Q=0$ sector with lower and unused high bits zero, then
reverse the evaluator. The higher prefix and the bath configuration
after the selected block-encoding gate remain available. Input levels run $j=0,\ldots,w-1$,
then the rotation between the public component and the root; inverse output levels run
$j=w-1,\ldots,0$. Their angles are respectively
$-2\operatorname{atan2}(d,cb)$ for $I_{AB}$ and
$2\operatorname{atan2}(da,b)$ for $O_{AB}^{\mathsf T}$.
The $s=0$ identity branch returns zero angle. For $w=0$, the selected
block-encoding gate and private phase are followed directly by the root rotation.

\paragraph{From the angle register to controlled gates.}
The evaluator writes a nonnegative magnitude in a $b_\theta$-bit
register $\mathsf A$. On basis value
$A=\sum_{j=0}^{b_\theta-1}2^jA_j$, it represents
$\theta(A):=4A/2^{b_\theta}\in[0,4)$. The stage fixes a sign
$\sigma\in\{-1,1\}$. Using the convention
$R_y(\theta)=\left(\begin{smallmatrix}
\cos(\theta/2)&-\sin(\theta/2)\\
\sin(\theta/2)&\cos(\theta/2)
\end{smallmatrix}\right)$, apply one gate
$\mathrm C_{A_j}R_y(\sigma2^{j+2-b_\theta})$ for each angle bit.
These rotations have the same axis, so the resulting unitary satisfies
\begin{equation}\label{eq:angle-register-action}
 U_\theta\sum_A\ket A\ket{\psi_A}
 =\sum_A\ket A R_y(\sigma\theta(A))\ket{\psi_A}.
\end{equation}
The address prefix, bath configuration, enable and angle bits are
unchanged by these gates. Reversing the evaluator therefore restores
its angle and arithmetic work. The $O(b_\theta)$ controlled rotations
and their Boolean guards are counted in the circuit size.
Appendix~\ref{app:arithmetic} fixes $b_\theta$ and proves the numerical
error and word-length bounds.
\subsection{A common space for the physical and encoded circuits}\label{sec:commonencoding}
The comparison must represent the same input in two ways: as an
uncompressed physical bath state, or, within capacity, as a canonical
list with the same initialized signal and work. Configurations above
capacity occur only in the physical description and are controlled by
the occupation tail. For each coarse cube $C$, let $\mathcal P_C$ be the
full physical bath space, $\mathcal H_C$ the full list-register space,
and $\mathcal L_C$ their identified capacity subspace. The physical
configurations above capacity span $\mathcal P_C^{\rm high}$, and the
noncanonical register strings span $\mathcal H_C^{\rm invalid}$.
The common space for this cube is

\[
 \mathfrak K_C:=\mathcal L_C\oplus\mathcal P_C^{\rm high}
                    \oplus\mathcal H_C^{\rm invalid}.
\]
The canonical inclusions $j_{P,C}$ and $j_{H,C}$ embed $\mathcal P_C$ and
$\mathcal H_C$, respectively, into $\mathfrak K_C$. They agree on the
canonical capacity subspace. Tensoring the owner spaces with the common
system and full signal and work registers gives one comparison space for the
entire segment. It contains every work state, with zero work specifying
the initialized input subspace. Let $\mathcal B_{\rm seg}^{\rm env}$
be the complete set of owner environment blocks, and let
$\mathcal X_{\rm seg}$ contain the common system, signal and work
registers. The global inclusions are
\[
 J_H^{\rm all}:=I_{\mathcal X_{\rm seg}}\otimes
       \bigotimes_{A\in\mathcal B_{\rm seg}^{\rm env}}j_{H,A},\qquad
 J_P^{\rm all}:=I_{\mathcal X_{\rm seg}}\otimes
       \bigotimes_{A\in\mathcal B_{\rm seg}^{\rm env}}j_{P,A}.
\]

For each fixed factor $W_C$, let $\mathcal B_C^{\rm env}$ contain
its complete owner environment blocks, including inactive baths and
occupied spectators. Let $\mathcal X_C$ contain all system, signal and
work factors used anywhere in its construction. Every constituent
unitary is first tensored with identity on the remaining factors in
these fixed sets. We then use the same inclusions
\[
 J_{H,C}=I_{\mathcal X_C}\otimes
             \bigotimes_{A\in\mathcal B_C^{\rm env}}j_{H,A},
 \qquad
 J_{P,C}=I_{\mathcal X_C}\otimes
             \bigotimes_{A\in\mathcal B_C^{\rm env}}j_{P,A}
\]
for every gate, window and complete circuit of this factor. For a
hardware unitary $U$ on these registers, define
\begin{equation}\label{eq:local-extension}
 U^{\rm c}=\bigl[J_{H,C}UJ_{H,C}^\dagger+I-J_{H,C}J_{H,C}^\dagger\bigr]
                 \otimes I_{\rm outside}.
\end{equation}
Physical unitaries use the same formula with $J_{P,C}$. The extension
symbol is relative to the fixed factor containing the operation.
Each extension is unitary, and using one inclusion within the factor
gives
\[
 (UV)^{\rm c}=U^{\rm c}V^{\rm c},\qquad
 (U^\dagger)^{\rm c}=(U^{\rm c})^\dagger.
\]
Thus the product of its saved gate extensions equals the extension of
the saved circuit, including on physical high-occupation columns.
The image projections define the unitary extensions in the comparison
space used for analysis. The hardware directly executes the saved
gates on its allocated registers.

Different factors may use different complete block sets. They still
satisfy
$U^{\rm c}J_H^{\rm all}=J_H^{\rm all}(U\otimes I_{\rm outside})$
with the one global hardware embedding, and the analogous identity
holds for physical operations. Multiplying these identities in order
identifies each complete physical or hardware execution with its
common-space execution, even for overlapping factors. The embeddings
act only on environments, so final environment traces preserve the
reduced system channel.

For the ideal LCU and amplified window, use the physical inclusion
$J_{P,C}$ and denote the resulting circuit by $\mathcal F^{\rm c}$.
It is the extension of the earlier $\mathcal F$, formed by extending
all its constituent gates with this same complete factor set. In
particular,
\begin{equation}\label{eq:amplified-extension}
 \mathcal F^{\rm c}J_P^{\rm all}
 =J_P^{\rm all}(\mathcal F\otimes I_{\rm outside}).
\end{equation}
The saved $XZXZ$ sequence representing the minus sign acts as $-I$ on
the physical image and as identity on its orthogonal complement; the
hardware extension has the corresponding action on its own image.
Thus $\mathcal F^{\rm c}$ is defined by the extended gate sequence,
not by multiplying the identity on the entire common space by $-1$.

The embedding of the direct sum with initialized work registers is
$I_{\rm sem}:=J_P^{\rm all}E_{\rm sem}$, with identity on untouched
outer registers. It includes the full direct sum of public and private components and
all physical bath configurations. On configurations fitting the
capacity, it agrees with the initialized hardware list encoding.
Lemma~\ref{lem:bandwidth} ensures that a transducer call on
$Q_{\le B}$ remains within this capacity when $C_{\rm cap}=B+3$.

For physical public inputs, $E_{\rm init}$ instead inserts zero
signal, outer and work registers while preserving every bath
configuration. We put
\[
 I_P:=J_P^{\rm all}E_{\rm init},\qquad \bar Y:=I_PY.
\]
This embedding of physical inputs includes
high-occupation columns and will be used for the complete window
comparison.

\begin{table}[H]
\centering\small
\caption{Maps used in the circuit comparison. The inclusions and their
extensions are mathematical identifications, not quantum subroutines.
Only the hardware circuits and initialization of their registers are
executed.}
\label{tab:embeddings}
\begin{tabular}{>{\raggedright\arraybackslash}p{.24\linewidth}
>{\raggedright\arraybackslash}p{.34\linewidth}
>{\raggedright\arraybackslash}p{.32\linewidth}}
\toprule
Map & Domain and output & Purpose\\\midrule
$j_{P,C}$, $j_{H,C}$ & Physical bath $\mathcal P_C$ or list-register space $\mathcal H_C$ into $\mathfrak K_C$ & Identify exactly those configurations fitting the capacity\\
$J_{P,C}$, $J_{H,C}$ & All physical or hardware registers of one fixed factor into the common space & Extend each gate and its products with one fixed inclusion\\
$J_P^{\rm all}$, $J_H^{\rm all}$ & Entire physical or hardware segment into the common space & Compare overlapping factors in one global execution\\
$E_{\rm sem}$, $I_{\rm sem}=J_P^{\rm all}E_{\rm sem}$ & Public/private direct sum into address registers with zero work, then into the common space & State the transducer circuit estimate on both private ancilla values\\
$E_{\rm init}$, $I_P=J_P^{\rm all}E_{\rm init}$ & Physical public data into zero signal, outer and work registers, then into the common space & Compare a complete window on the original input $Y$\\
\bottomrule
\end{tabular}
\end{table}

\subsection{Coherent control and finite approximation errors}
\label{sec:refinement}
We use two rules to extend estimates on initialized inputs to
coherently controlled circuits and to combine their errors. The
composition rule retains the actual subsequent unitaries, including
their action on imperfect work registers.
\begin{lemma}[Controlled lifting on a preserved control register]\label{lem:controlledlift}
Let $E$ be an initialized encoding isometry, $P$ an input projection,
and $\widehat U,V$ unitaries on a common space satisfying
$\norm{(\widehat U-V)EP}\le\delta$. For an additional control qubit,
define
\[
 \mathrm C(U):=\ket0\bra0\otimes I+\ket1\bra1\otimes U.
\]
Then
\[
 \norm{[\mathrm C(\widehat U)-\mathrm C(V)](I\otimes EP)}\le\delta.
\]
\end{lemma}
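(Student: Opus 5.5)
The plan is to compute the difference directly. The two controlled operators differ only on the $\ket1$ branch of the control qubit:
\[
 \mathrm C(\widehat U)-\mathrm C(V)=\ket1\bra1\otimes(\widehat U-V).
\]
Composing with $I\otimes EP$ and using that the control projector acts on a separate tensor factor gives
\[
 [\mathrm C(\widehat U)-\mathrm C(V)](I\otimes EP)=\ket1\bra1\otimes(\widehat U-V)EP.
\]
The operator norm of a tensor product is multiplicative, and $\norm{\ket1\bra1}=1$, so the norm is at most $\norm{(\widehat U-V)EP}\le\delta$.

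The only point needing care is an entangled input. A general input $\ket0\otimes\phi_0+\ket1\otimes\phi_1$ together with an untouched reference has its $\ket0$ component annihilated by the difference. The $\ket1$ component is mapped to $\ket1\otimes(\widehat U-V)EP\phi_1$, whose norm is at most $\delta\norm{\phi_1}\le\delta$ times the total input norm. Tensoring with an untouched reference preserves operator norms, so the bound is uniform in the reference, as the paper requires.

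There is no real obstacle here. The one thing to check is that $E$ and $P$ act only on the target registers and not on the control qubit, so that $I\otimes EP$ commutes with the control projectors. This holds by the definition $\mathrm C(U)=\ket0\bra0\otimes I+\ket1\bra1\otimes U$.
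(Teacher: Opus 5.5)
Your proof is correct and matches the paper's argument: on the displayed domain the difference is exactly $\ket1\bra1\otimes(\widehat U-V)EP$, whose norm is at most $\delta$, uniformly over inputs entangled across the control, data and reference registers. The paper also remarks that the same reasoning covers a multiplexed family, where the difference is block diagonal and its norm is the maximum branch norm; the statement as given does not need that extension.
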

\begin{proof}
The difference on the displayed domain is
$\ket1\bra1\otimes(\widehat U-V)EP$, which has norm at most $\delta$.
For a multiplexed family satisfying the corresponding bound in each
preserved control sector, the difference is block diagonal and its
norm is the maximum branch norm. Both statements allow entanglement
between the control, data and reference registers.
\end{proof}
The preserved controls are the outer length and reuse registers,
the occurrence label during a selected operation, and the higher address
bits and bath configuration during a tree rotation.
We reverse temporary guards while their inputs remain available. The
flag distinguishing the public and private components is data because $S$ changes it.

\begin{lemma}[Telescoping bound on ideal intermediate states]\label{lem:refinementhybrid}
For an initialized input $x_0$, ideal intermediate vectors $x_j$ and
actual unitaries $\widehat U_j$, suppose
$\norm{\widehat U_jx_{j-1}-x_j}\le\delta_j\norm{x_0}$. Then
\[
 \norm{\widehat U_m\cdots\widehat U_1x_0-x_m}
 \le\Bigl(\sum_{j=1}^m\delta_j\Bigr)\norm{x_0}.
\]
\end{lemma}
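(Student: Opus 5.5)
The plan is a standard hybrid (telescoping) argument. First I would reduce the ideal vectors to the actual ones one step at a time, and then control the accumulated error using only unitarity of the actual gates. For $0\le k\le m$, define the hybrid vectors
\[
 y_k:=\widehat U_m\cdots\widehat U_{k+1}x_k ,
\]
with the convention that the empty product is the identity. Then $y_0=\widehat U_m\cdots\widehat U_1x_0$ is the actual output and $y_m=x_m$ is the ideal output. The target difference telescopes as
\[
 y_0-y_m=\sum_{j=1}^m (y_{j-1}-y_j)
 =\sum_{j=1}^m \widehat U_m\cdots\widehat U_{j+1}\bigl(\widehat U_jx_{j-1}-x_j\bigr).
\]
Each summand is obtained from the one-step discrepancy by applying the remaining actual unitaries $\widehat U_m\cdots\widehat U_{j+1}$.

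Next, since each $\widehat U_i$ is unitary on the full common space, the product of the remaining actual unitaries has operator norm one. This holds on every input, including components outside the initialized signal and work subspace. The key point is that no accuracy assumption is needed for these later factors on imperfect intermediate states. The actual subsequent unitaries are retained exactly, and only their norm is used. Hence the $j$th summand has norm $\norm{\widehat U_jx_{j-1}-x_j}\le\delta_j\norm{x_0}$. The triangle inequality then gives the stated bound $(\sum_j\delta_j)\norm{x_0}$.

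Finally, I would remark that the estimate is linear in $x_0$. It therefore extends to isometric input families $Y$ and to inputs entangled with an untouched reference, by applying it to $(\widehat U_j\otimes I)$ and $(x_j)$ tensored with the reference. Unitarity and the hypotheses are preserved under this extension. There is no real obstacle here. The only point requiring care is to compare at each step against the ideal intermediate $x_{j-1}$ rather than the actual state, so that the hypothesis applies on its stated domain.
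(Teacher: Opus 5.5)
Your proposal is correct and follows the same route as the paper: you telescope over the ideal predecessors $x_{j-1}$, so that each local discrepancy $\widehat U_jx_{j-1}-x_j$ is followed by the actual unitary suffix $\widehat U_m\cdots\widehat U_{j+1}$. Unitarity of that suffix on the full space, including signal and work error, together with the triangle inequality then gives the bound.
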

\begin{proof}
Telescoping on ideal predecessors leaves actual unitary suffixes on each
local difference. They preserve its norm, including signal and work error;
the triangle inequality gives the bound.
\end{proof}
Augmented occupation acts as identity on arithmetic work.
Appendix~\ref{app:controls} computes each wide predicate once per
invocation and charges its width-dependent Boolean cost. With the
resulting enable bit preserved, controlling each saved elementary gate
has constant overhead because every such gate has fixed arity.

\label{sec:finite-precision}
Quantizing leaf tangents gives a certified lower bound for every nonzero
merge normalization. Appendix~\ref{app:arithmetic} combines the resulting
uniform finite arithmetic with list exchange and phase-sensitive local
matrix synthesis. We fix a nonempty static periodic forward window with
$L\ge1$ occurrences, active-list size $P$ and complete-list size
$P_{\rm cat}\ge P$. For integer cutoff $B\ge0$, we use the canonical encoding
with $C_{\rm cap}=B+3$ in every touched block. Write
$w=\lceil\log_2L\rceil$, let $w_{\mathsf T}\ge w$ and $w_{\rm label}$
be the actual allocated occurrence and maximal padded label widths, and set
$w_*:=\max\{w,w_{\rm mesh},w_{\rm label},w_{\mathsf T}\}$.

\begin{lemma}[Circuit implementation of the local transducer]\label{lem:directfinite}
For every $0<\xi<1$, a deterministic unitary circuit with exact one-qubit rotations
$\widehat S_{B,\xi}$ satisfies
\[
 \norm{(\widehat S_{B,\xi})^{\rm c}I_{\rm sem}Q_{\le B}
                  -I_{\rm sem}S Q_{\le B}}\le\xi.
\]
The initialized domain includes the complete direct sum of public and private components,
both private $\mathsf Q$ values and zero numerical work. Gate count and
workspace are polynomial in $P_{\rm cat}$, $B$, $w_*$ and $\log(1/\xi)$,
including complete-list scans of inactive terms and their permanent labels.
\end{lemma}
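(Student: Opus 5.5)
The plan is to build $\widehat S_{B,\xi}$ directly from the factorization in Eq.~\eqref{eq:direct}, namely $S=(1\oplus F_{\rm out}^{\mathsf T})G_{\rm root}(1\oplus F_{\rm in})D_{\rm ph}Q$. Each of the five stages gets a finite circuit, and a per-stage error is controlled on the capacity-limited initialized domain. Because every stage is a product of operations that change at most one bath, Lemma~\ref{lem:bandwidth} and $C_{\rm cap}=B+3$ show that the ideal intermediate states from an input in $Q_{\le B}$ stay within capacity, with spare records. On that subspace the physical inclusion $J_{P,C}$ and the hardware inclusion $J_{H,C}$ agree, so the common-space extensions of the physical and hardware stages can be compared stage by stage. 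The total error then follows from Lemma~\ref{lem:refinementhybrid}, with the actual later circuits propagating any leftover work or signal amplitude, and I split $\xi$ across stages as $\xi/O(w)$ per stage.

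\emph{Stages $Q$ and $D_{\rm ph}$.} For $Q$, I first run the validity predicate on $\mathsf P,\mathsf T$ and decode $\mathsf T$ into a term and period using Eq.~\eqref{eq:rankdecoder}, computing the occurrence work coherently as in Lemma~\ref{lem:booleanframe}. For each of the $P$ active term types, conditioned on a match, I then apply $U_b=U_{F_b}O_eU_{F_b}$ from Lemma~\ref{lem:root}. Here $O_e$ is a fixed local reflection, synthesized phase-exactly or to precision $\xi/(\text{poly})$, and lifted to its preserved occurrence control by Lemma~\ref{lem:controlledlift}. The occurrence work is uncomputed before the tree stages. $D_{\rm ph}=I+(i-1)\Pi_{\rm valid}$ is a phase gate on the validity bit, so it is exact.

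\emph{Tree stages.} Each input level $j=0,\dots,w-1$, the root, and each output level uses the same pattern. I evaluate the two child interval coefficients from Lemma~\ref{lem:zerohit}: hit bits are obtained by a scan over $C_{\rm cap}$ records, and $d_I^{(0)}$ by occurrence counts and binary powering of the $t_e^{(0)}$. I then compute the merge angle $\mp2\operatorname{atan2}$ into $\mathsf A$ with $b_\theta$ bits, apply $U_\theta$ from Eq.~\eqref{eq:angle-register-action} on $\mathsf T_j$ (or on $\mathsf P$ at the root), and reverse the evaluator. This works because the address prefix and the post-$Q$ bath configuration are preserved controls. In each basis branch the only error is angle quantization. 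That gives an operator-norm error of at most $|\theta-\widetilde\theta|$ per branch, and since the control is preserved, Lemma~\ref{lem:controlledlift} bounds the multiplexed error by the maximum over branches. With $O(w)$ rotation stages, I choose $b_\theta=O(\log(w/\xi))$ together with arithmetic precision fixed from Appendix~\ref{app:arithmetic}.

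\emph{Main obstacle.} The hard part is certifying the precision of the angles near degenerate merges, where $s=\sqrt{1-a^2c^2}$ is small and $\operatorname{atan2}(d,cb)$ is ill-conditioned. My first attempt would use the remark that quantizing leaf tangents yields a certified positive lower bound on every nonzero $s$, polynomial in $h\beta^2$ and hence $2^{-O(w_*)}$. Fixed-point arithmetic with $O(w_*+\log(1/\xi))$ bits then gives angle error $\xi/O(w)$, and an exactly zero $s$ is detected to select the identity branch. With that in place, the cost is a sum of the counts already stated. Lemma~\ref{lem:root} gives $O(C_{\rm cap}^2 w_b+C_{\rm cap}P_{\rm cat}w_b)$ per exchange, the scans cost $O(C_{\rm cap}P_{\rm cat}w_*)$, and the arithmetic costs $\mathrm{poly}(w_*,\log(1/\xi))$, all multiplied by $O(w)$ levels. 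This is polynomial in $P_{\rm cat},B,w_*,\log(1/\xi)$, as claimed.
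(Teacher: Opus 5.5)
Your construction follows the paper's route. You compile Eq.~\eqref{eq:direct} stage by stage, realize $Q$ through the exchange of Lemma~\ref{lem:root} with a phase-preserving local gate, and multiplex the $2w+1$ tree levels over the preserved address prefixes and the post-$Q$ bath configuration. You also use capacity $B+3$ and sum the stage errors with Lemma~\ref{lem:refinementhybrid}.

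\textbf{The gap.} The problem is in the step you call the main obstacle. Without modification, a nonzero merge normalization $s$ is bounded below only by the smallest leaf sine in the two children. That is of order $\min_e\sqrt{h\lambda_e}$ over Hamiltonian and vacuum-jump leaves, and it is not controlled by $w_*$, $P_{\rm cat}$, $B$ or $\xi$. For example, fix the generator and $\eps$ and take $T=2^{-k}$. For large $k$ this gives $N_0=1$, $r=M=1$ and $h\lambda_e\le\lambda2^{-k}$, while $w_*$, $P_{\rm cat}$, $B$ and $\xi$ do not grow with $k$. Every merge containing only Hamiltonian and vacuum-jump leaves then has $s=O(2^{-k/2})$.

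The basis-change angles $\operatorname{atan2}(d,cb)$ and $\operatorname{atan2}(da,b)$ stay of order one as $s\to0$. They act on arbitrary private amplitude, including occupied-bath components, and are evaluated separately in the input and output trees. Certifying them to $\xi/O(w)$ therefore needs roughly $k/2+\log(w/\xi)$ bits, so the cost is not polynomial in the stated parameters. Your appeal to ``quantizing leaf tangents'' points at the right cure, but two things are then wrong. The resulting bound is not ``polynomial in $h\beta^2$''. And the quantization changes the operator being compiled, which your budget (``the only error is angle quantization'') never charges.

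\textbf{The repair, as the paper does it.} Quantize before compiling.
\begin{enumerate}
\item Write $t_j=(1-u_j^2)/(1+u_j^2)$ and $b_j=2u_j/(1+u_j^2)$.
\item Round $u_j$ down to $b_{\rm q}=\lceil\log_2(64L/\xi)\rceil$ fractional bits; occupied jump leaves keep $u_j=1$.
\item Compile the quantized transducer $\widetilde S$, which has the same exact tree factorization.
\end{enumerate}
Because $u\mapsto2\arctan u$ is $2$-Lipschitz, $\norm{S-\widetilde S}\le2L2^{-b_{\rm q}}\le\xi/32$, and this is charged to the call budget.

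A leaf with $\widetilde u_j=0$ becomes an exact identity, so identity intervals are recognized exactly from term counts and occupied-label hits. Every other interval has sine at least $\widetilde b_j\ge\widetilde u_j\ge2^{-b_{\rm q}}$. Hence $1-a^2c^2\ge2^{-2b_{\rm q}}$ at each nonidentity merge, independently of the rates.

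This rate-independent conditioning bound is what lets the coefficient evaluation run with word width $O(w_*+\log(1/\xi)+\log(P_{\rm cat}+B+1))$. That evaluation includes binary powering to exponents up to $2^w$, division by $1-a^2c^2$, and the angle conversion. With the bound in place and the $\xi/32$ substitution error added to your budget, the rest of your cost and error accounting goes through.
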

\begin{proof}
Appendix~\ref{app:arithmetic} and Eq.~\eqref{eq:bitwidth} use complete arithmetic word
width $O(w_*+\log(1/\xi)+\log(P_{\rm cat}+B+1))$ and control every
rotation uniformly over preserved addresses and the bath configuration after the selected gate.
Combine the $2w+1$ levels with Lemma~\ref{lem:refinementhybrid}, the
exchange of Lemma~\ref{lem:root} and matrix gates that preserve the specified phase of
Lemma~\ref{lem:certifiedendpoint}. Only one bath changes between
exchanges; $B+3$ covers intermediate lists. The appendix's budgets sum
below $\xi$; Section~\ref{sec:local-cost} counts their occurrences.
\end{proof}
For $L=0$, return identity before address allocation or coefficient preparation.

\subsection{Occupation cutoffs and forward and inverse window errors}\label{sec:window-error}
The occupation theorem controls physical collision prefixes. During
a transducer call, the selected local block encoding can change
one physical bath value, while both address trees preserve the bath
configuration. A forward input with occupation at most $K_{\rm call}$
therefore has intermediate physical occupation at most $K_{\rm call}+1$.
The capacity $C_{\rm cap}=K_{\rm call}+3$ accommodates extraction and
replacement in both directions, giving agreement with the physical
exchange on the inputs used below.

We next derive the inverse estimate from the forward guarantee, keeping
track of the margin of one in the occupation cutoff. Let
$\widehat S_{K_{\rm call},\xi}$ be the stored transducer circuit with
error $\xi$ on all encoded inputs of augmented occupation number at most
$K_{\rm call}$. For an inverse input $y$ in the direct sum, with augmented
occupation at most $K_{\rm call}-1$, Lemma~\ref{lem:bandwidth} places
$x:=S^\dagger y$ in $Q_{\le K_{\rm call}}$. Unitarity in the common
space therefore gives
\[
 \norm{\bigl[(\widehat S_{K_{\rm call},\xi})^{\rm c}\bigr]^\dagger
           I_{\rm sem}y-I_{\rm sem}S^\dagger y}
 =\norm{I_{\rm sem}y-(\widehat S_{K_{\rm call},\xi})^{\rm c}
           I_{\rm sem}x}\le\xi\norm y.
\]
This is the inverse-input domain used below.

We choose one capacity for the entire amplified window. Each completed
call adds at most one augmented label, and every inverse call requires
the additional one-label input margin just established. For a window
error budget $\delta$, the following parameters first suppress the
physical and catalyst tails, then bound the occupation increase through
all reuse calls:

\[
 A_*=16(\nu+1)^2(1+\lceil\Sigma\rceil),\quad
 u=A_*+\left\lceil\log_2\frac{512}{\delta}\right\rceil,\quad q=7u,
\]
\begin{equation}\label{eq:cap}
 L_0=2u,\qquad K_{\rm call}=L_0+30q+1,\qquad C_{\rm cap}=K_{\rm call}+3.
\end{equation}
\begin{table}[H]
\centering\small
\caption{How occupation estimates determine one fixed register capacity.
Physical occupation counts occupied baths; augmented occupation also
counts the selected private occurrence label. They agree in the public
input branch.}
\label{tab:cutoffs}
\begin{tabular}{>{\raggedright\arraybackslash}p{.23\linewidth}
>{\raggedright\arraybackslash}p{.67\linewidth}}
\toprule
Parameter & Role\\\midrule
$L_0=2u$ & Physical input cutoff on $Y$; the catalyst analysis separately uses augmented cutoff $2u$ on $\Gamma Y$\\
$30q$ & Maximum number of completed transducer or adjoint calls in the full three-call amplification\\
$K_{\rm call}=L_0+30q+1$ & Augmented cutoff for a forward circuit, with one-label margin for inverse inputs; this is $B$ in Lemma~\ref{lem:directfinite}\\
$C_{\rm cap}=K_{\rm call}+3$ & Persistent records per owner, with slack for bath extraction and replacement; fixed for all calls. Additional reversible work is counted separately\\
\bottomrule
\end{tabular}
\end{table}
The elementary inequalities $\log x\le x$ and $\log2\ge1/2$ imply
$M_\Gamma\le2^{A_*}$ and $e^{\nu\Sigma/8}\le2^{A_*}$.
In the full outer circuit, the $30q$ completed calls can increase the
initial augmented occupation by at most $30q$. Hence every ideal call
input has occupation at most $L_0+30q\le K_{\rm call}-1$; port
permutations, outer labels and signal reflections preserve this count.
The ideal direct subroutines restore their numerical and exchange work,
so both forward and inverse estimates apply at each call at the fixed
capacity in Table~\ref{tab:cutoffs}. Actual work errors are retained and
propagated by the actual unitary suffixes of
Lemma~\ref{lem:refinementhybrid}.

Each transducer call receives budget $\xi=\delta/(240q)$, and a further
$\delta/8$ covers finite approximations in the outer preparation and evaluation circuits.
We count every rotation outside the direct calls in the full three-call
circuit, including repeated and inverse occurrences, and denote this
count by $J_{\rm out}$. Each known angle is approximated within
$\delta/[8\max\{1,J_{\rm out}\}]$, with the opposite sign used in its
adjoint. Lemma~\ref{lem:refinementhybrid} charges at most $\delta/16$
for these rotations; Boolean evaluations and permutations are exact.
Together with the errors in direct calls, the same lemma bounds the outer
circuit error by $\delta/4$ on the low-occupation inputs.

There are now two approximation estimates with different domains. The
finite compiler is uniform on low-occupation inputs, whereas the ideal
amplified circuit approximates $V$ on the original physical column $Y$
through its catalyst residual. The occupation bound connects these
estimates: we split the compiler error at a physical occupation cutoff,
bound the high-occupation contribution by unitarity, and apply the residual estimate
to the unsplit column $Y$. This order preserves the hypothesis on physical prefixes required by Lemma~\ref{lem:catalyst}. Let $Q_0$
project onto occupation at most $L_0$ in the complete bath set
of $\mathcal B_C^{\rm env}$, including inactive spectators. The
associated supports lie in the fixed volume $v_*$, so
Theorem~\ref{thm:prefix} bounds the tail. This cutoff bounds each owner
block's occupation by $L_0<C_{\rm cap}$.

The image projection $\bar Q_0$ is zero on the complement of the local
physical image and identity on untouched factors; hence
$\bar Q_0I_P=I_PQ_0$. Let $\widehat V$ be the common-space extension of
the saved finite amplified-window circuit and $V^{\rm c}$ the physical
unitary extension. We split the compiler discrepancy into its
low-occupation and tail parts, leaving the physical residual on the
original column:

\begin{align*}
 (\widehat V-V^{\rm c})\bar Y
 ={}&(\widehat V-\mathcal F^{\rm c})\bar Q_0\bar Y\;+
 (\widehat V-\mathcal F^{\rm c})(I-\bar Q_0)\bar Y\\
 &+(\mathcal F^{\rm c}-V^{\rm c})\bar Y.
\end{align*}
The first term is at most $\delta/4$ by the uniform compiler estimate
on $Q_0Y$. Unitarity of both complete circuit extensions bounds the
second term by $2\norm{(I-Q_0)Y}$. For the third term,
Eq.~\eqref{eq:outererror} applies to the original physical column $Y$,
whose prefix representation supplies the catalyst estimate. Since
$2^{A_*-u}\le\delta/512$, the tail and physical residual terms are at
most $\delta/256$ and $\delta/128$, respectively. Adding them to the
compiler contribution $\delta/4$ gives the complete window estimate
\begin{equation}\label{eq:compiledwindow}
 \norm{\widehat V I_PY-I_PVY}<\delta
\end{equation}
on the original physical prefix $Y$ with the stated occupation bounds.

We finally consider a requested physical inverse on input $Y_{\rm in}$. We apply
the forward estimate to its preimage $Y=V^\dagger Y_{\rm in}$;
Theorem~\ref{thm:prefix} provides the occupation bound for precisely
this column. Writing $\bar Y_{\rm in}=I_PY_{\rm in}$, unitarity gives
\[
 \norm{\widehat V^\dagger\bar Y_{\rm in}-(V^{\rm c})^\dagger\bar Y_{\rm in}}
 =\norm{\bar Y_{\rm in}-\widehat V\bar Y}\le\delta,
 \qquad \bar Y=(V^{\rm c})^\dagger\bar Y_{\rm in}.
\]
The adjoint of the saved forward circuit therefore implements the
required physical inverse on $Y_{\rm in}$. A local factor has at most $2r$
windows; with $\delta=\eta/(4r)$, Lemma~\ref{lem:refinementhybrid}
combines them in the order of Eq.~\eqref{eq:regroup} with error below
$\eta$.

\subsection{Local gate count, workspace and geometric placement}\label{sec:local-cost}
We use reversible addition \cite{VBE,Cuccaro}, together with schoolbook
multiplication, restoring division and integer square-root algorithms
\cite{BrentZimmermann}. Appendix~\ref{app:arithmetic}
constructs the common complete word width $b_{\rm arith}$, including
sign and integer bits. It gives $O(b_{\rm arith}^{5/2})$ Boolean gates
per sine enclosure and $O(b_{\rm arith}^{7/2})$ per angle inversion;
we use the looser $O(b_{\rm arith}^{6})$ bound.
Retaining one bit per Boolean intermediate gives the same workspace bounds.
Each size and precision parameter used in a window is $O_D(Z_*)$: $P_{\rm cat}\le C_Dv$,
$P\le P_{\rm cat}$, address widths are at most $w_{\rm reg}$, and the
budgets bound $C_{\rm cap}$, $q$ and $b_{\rm arith}$. Subdivision adds
only $\log r=O(\log(v+1))$ precision bits. This gives
Table~\ref{tab:local-costs}; terminal synthesis is charged once after assembly.

\begin{table}[H]
\centering\small
\caption{Logical gates with exact one-qubit rotations for one window, including controls, inverse occurrences and accuracy parameters. Peak workspace is separate; finite-alphabet synthesis follows assembly.}
\label{tab:local-costs}
\begin{tabular}{p{.62\linewidth}p{.25\linewidth}}\toprule
Operation & Logical gates\\\midrule
Bath-register exchange with full-list validation & $O_D(Z_*^3)$\\
Full term-list scan for the selected local operation & $O_D(Z_*^4)$\\
Interval occupation test and vacuum product & $O_D(Z_*^5)$\\
Controlled normalized-angle evaluation & $O_D(Z_*^6)$\\
$2w+1$ tree levels, selected operation and phases & $O_D(Z_*^7)$\\
Reuse preparations for $N\le10q$ and the length state & $O_D(Z_*^4)$\\
Full window with $30q$ direct calls and reflections & $O_D(Z_*^8)$\\\bottomrule
\end{tabular}
\end{table}

Each of $P$ interval terms uses $O(w+1)$ multiplications of $O(Z_*)$-bit
numbers, at $O(b_{\rm arith}^3)$ Boolean cost, giving $O_D(Z_*^5)$; the record scan
is smaller. At each tree level one reversible evaluator handles all
parent prefixes coherently, so its $O_D(Z_*^6)$ angle cost is paid once
per level. The $2w+1$ levels give $O_D(Z_*^7)$ per call; a preserved
enable adds constant overhead per elementary gate. The $30q=O_D(Z_*)$
calls, preparations and reflections fit $O_D(Z_*^8)$ per window.
Levels reuse $O_D(Z_*^6)$ work bits after ideal uncomputation, with
finite errors covered by Lemma~\ref{lem:refinementhybrid}. This width
dominates the records, labels and helpers. The $2r=O_D(v)$ windows reuse
these registers and cost $O_D(Z_*^9)$ gates. Since $Z_*\ge1$,
we relax the exponent to ten to use the same conservative bound for
all descriptor and control overhead in the block theorem:
\[
 G_{\rm logical,block}\le C_DZ_*^{10},\qquad
 Q_{\rm logical,block}\le C_DZ_*^6.
\]
The mesh enters through logarithmic address widths.

We fix geometric homes using common $Z_{\rm gl}$. For owner $A$, let $W_A$
count its permanent environment bits, own work and helpers, so
$W_{\max}:=\max_AW_A\le C_DZ_{\rm gl}^6$; neighbouring arrays are charged
to their respective owners. We choose the least positive integer $f$ with
$(f^D-d_0)R^D\ge W_{\max}$ by integer-power comparisons. We reserve the
first $d_0$ lexicographic offsets $u_k\in\{0,\ldots,f-1\}^D$ in the refinement
cell of each original site $x$, for qubits $(x,k)$ at $fx+u_k$, counting absent-site offsets
as padding. Each owner places its environment and work registers at the first remaining
cube vertices. Homes persist for the entire segment, including idle owners.

The complete host of $C$ is the connected box
\[
 \mathbb Z^D\cap\bigl(fRC+[-3fR,4fR)^D\bigr).
\]
Minimality gives $f^DR^D\le2^D(d_0R^D+W_{\max})$, hence
\[
 H_{\rm host}=(7fR)^D=O_D(R^D+W_{\max})=O_D(Z_{\rm gl}^6).
\]
The host union and idle system homes use at most
$nH_{\rm host}+d_0n=O_D(nZ_{\rm gl}^6)$ vertices, including spare routing
positions. Same-colour owners differ by a nonzero multiple of eight
in some coarse coordinate, so their complete hosts are disjoint.
Lemma~\ref{lem:route} restores homes at cost at most $6H_{\rm host}+1$
per logical gate, proving Eq.~\eqref{eq:blockcost} and globally
\begin{equation}\label{eq:globalblockcost}
 G_{\rm NN,block},\ d_{\rm NN,block}\le C_DZ_{\rm gl}^{16}.
\end{equation}

\begin{proof}[Proof of Theorem~\ref{thm:block}]
Section~\ref{sec:prefix} provides the $2r$ windows in
Eq.~\eqref{eq:regroup}, each with action at most
$s_0=(256\cdot7)^{-2}$. Their common visit sequences have
$\nu=2\cdot7^D+4$ and interaction weight at most $\Sigma$.
The catalyst estimates apply to forward inputs and to the forward
evolution from inverse preimages. With $\delta=\eta/(4r)$,
Section~\ref{sec:window-error} combines the circuit estimate at low
occupation with the tail and residual bounds on the original input
isometry. This gives Eq.~\eqref{eq:compiledwindow} and the corresponding
estimate for the adjoint of the saved circuit.
Lemma~\ref{lem:refinementhybrid} sums these errors on the prescribed
ideal inputs. Since $r\le C_D(1+v)$ and
$\log(1/\delta)=\log(1/\eta)+O_D(1+\log(v+1))$, all cutoff parameters,
reuse indices, addresses and arithmetic registers require
$O_D(Z_*)$ bits or entries as appropriate. The table and preceding
calculation give the logical gate count and width.
Lemma~\ref{lem:route} then gives Eq.~\eqref{eq:blockcost}, with every
host position included in the physical width. All circuit data are
functions of the classical description and the chosen common bounds;
$Y$ and the labelwise bounds $a_b$ enter only the error analysis.
\end{proof}

\subsection{Routing and assembly on the lattice}
\label{sec:assembly}
Fixed sequences of SWAP gates restore the logical positions even
when the work registers and spectator qubits are entangled
\cite{ACG,CSU,Beals}. The following elementary bound suffices because
each host contains only polylogarithmically many qubits.
\begin{lemma}[Routing and restoration of qubit positions]\label{lem:route}
A circuit of $g$ one- and two-qubit gates on a connected $H$-qubit host has a
nearest-neighbour implementation with at most $(6H+1)g$ gates and depth.
Every replacement restores all homes and equals the original gate
tensored with identity on the remaining wires, including entangled spectators.
\end{lemma}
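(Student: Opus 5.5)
The plan is to implement each logical gate by moving its operands together with SWAP gates along a path in the host, applying the gate, and then undoing the SWAPs. Fix a spanning tree $\mathcal T$ of the connected host graph; it has $H$ vertices and $H-1$ edges. A one-qubit gate needs no routing and is already a single nearest-neighbour gate. For a two-qubit gate on wires at homes $x,y$, take the unique tree path $x=p_0,p_1,\ldots,p_\ell=y$ with $\ell\le H-1$. Apply SWAPs on the edges $(p_0,p_1),(p_1,p_2),\ldots,(p_{\ell-2},p_{\ell-1})$ in that order, which carries the content of $x$ to $p_{\ell-1}$ and shifts each intermediate qubit back by one position. Then apply the gate on the adjacent pair $(p_{\ell-1},p_\ell)$, and finally apply the same SWAPs in reverse order.

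Correctness is an operator identity. Let $\Pi$ denote the product of the forward SWAPs; this is the unitary permuting the tensor factors along the path. Then $\Pi^\dagger(g_{p_{\ell-1}p_\ell})\Pi$ equals $g_{xy}\otimes I$ on all other wires, by the standard relation $\Pi^\dagger(A_{\pi(i)\pi(j)})\Pi=A_{ij}$ for the induced wire relabelling. Since this is an identity of operators and not of states, it holds for arbitrary, possibly entangled, spectator and reference states. Every qubit returns to its home because the reverse sequence inverts the permutation. Gates sharing no wires may be left unparallelised: we serialize each replacement, so the depth of a replacement is bounded by its gate count.

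Counting: each replacement uses at most $2(\ell-1)+1\le 2H$ gates. If SWAP must be decomposed into the fixed gate set, use three CNOTs per SWAP, giving at most $6(H-1)+1\le 6H+1$ gates, and the same bound on depth. Summing over the $g$ gates of the original circuit yields at most $(6H+1)g$ gates and depth. The only point needing care, and the main obstacle, is the convention that the count is in one- and two-qubit gates of the allowed set. If CNOT is a primitive, the factor $3$ per SWAP is exact. If routing must also respect the lattice-refinement adjacency used by the hosts, we take the host's induced grid graph, which is connected because the host is a box, so the spanning-tree argument applies unchanged.
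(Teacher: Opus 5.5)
Your proposal is correct and follows the paper's proof: move one operand along a simple path (your spanning-tree path is one such choice) with $d-1$ adjacent SWAPs, apply the gate, reverse the SWAPs, and invoke the operator identity $\Pi^\dagger U_{v_{d-1},v_d}\Pi=U_{a,b}$, which holds for arbitrary spectator states. With three CNOTs per SWAP this costs $6(d-1)+1\le 6H+1$ gates per replacement, and serializing the replacements bounds the depth by the gate count.
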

\begin{proof}
One-qubit gates act at home. For ordered operands $a,b$, fix a simple
path $v_0=a,\ldots,v_d=b$. Its first $d-1$ adjacent SWAPs implement a
permutation $\Pi$ moving $a$ to $v_{d-1}$ and fixing $b$. We apply the gate
and reverse these SWAPs:
\[
 \Pi^\dagger U_{v_{d-1},v_d}\Pi=U_{a,b}.
\]
With three CNOTs per SWAP and $d\le H-1$, the cost is
$6(d-1)+1\le6H+1$. We serialize replacements and multiply these full-space
identities; the identities hold for arbitrary routing states.
\end{proof}
Disjoint same-colour hosts run their block circuits in parallel despite
serialization within each host.

\paragraph{Assembly of the saved circuits.}
The compiler follows the order in
Section~\ref{sec:overview}, paragraph ``Circuit generation and execution.''
It appends each factor's saved old-window adjoints and new forward
windows in the order of Eq.~\eqref{eq:regroup}, then assembles the
disjoint factor circuits by colour. All controls, routes and finite-gate
replacements are generated before segment initialization. Only the
execution stage applies the completed segment and refreshes its
auxiliaries; no window or factor resets the shared bath or work.

Table~\ref{tab:cost-hierarchy} records which repetitions are already
included in each cost. This prevents counting the reuse length or
the window subdivision again during assembly.
\begin{table}[H]
\centering\small
\caption{Cost hierarchy, using the common global bound $Z=Z_{\rm gl}$.
Constants depend only on the fixed local model.}
\label{tab:cost-hierarchy}
\begin{tabular}{p{.27\linewidth}p{.39\linewidth}p{.23\linewidth}}\toprule
Unit & Repetitions already included & Gate bound\\\midrule
One direct call & All address-tree levels and the selected gate
& $O_D(Z^7)$\\
One window & $30q$ direct or adjoint calls, coefficient preparation
and signal reflections & $O_D(Z^8)$\\
One local factor & At most $2r$ complete windows
& $O_D(Z^{10})$\\
One routed factor & Routing through a host of $O_D(Z^6)$ vertices
& $O_D(Z^{16})$\\
One segment & At most $n$ factors; only $O_D(1)$ colour layers for depth
& $O_D(nZ^{16})$\\
Whole finite-alphabet circuit & $N_0$ segments and one terminal
$O_D((1+\log X)^4)$ factor
& $O_D(N_0nZ^{16}(1+\log X)^4)$\\\bottomrule
\end{tabular}
\end{table}

\subsection{Proof of the main theorem}\label{sec:main-proof}
\begin{proof}[Proof of Theorem~\ref{thm:main}]
We choose $N_0,R$ by Eq.~\eqref{eq:globalparam} and factor budget
\[
 \eta=\frac{\eps}{32nN_0}.
\]
These fix $v_*,r$, then Eq.~\eqref{eq:mesh} fixes $M,h$, and the window
budget fixes $u,q,C_{\rm cap}$ and arithmetic widths. Common worst-case
$\Sigma_*$, budgets and address bounds fix all owners' widths before
construction; shorter active lists use subspaces of the shared encoding.
We generate circuits with exact one-qubit rotations and hosts, expand controlled gates and routing operations, then
assign terminal precision from the expanded length. Thus no terminal
approximation enters the choice of its own occurrence count.

First, Theorem~\ref{thm:patch} compares the exact forward segment with
the exact localized product on activation columns $U_{c-1}E_0$.
Second, Theorem~\ref{thm:prefix} supplies Theorem~\ref{thm:block} for
the localized prefixes $Y_j=W_{j-1}\cdots W_1E_0$. We use the fixed-factor
complete-block extensions of Section~\ref{sec:commonencoding}.
Here $I_P$ initializes every owner's signal and work arrays for the segment;
on a factor it is the local embedding tensored with outside zero registers.
The physical intertwining identity gives
$I_PY_j=W_{j-1}^{\rm c}\cdots W_1^{\rm c}I_PE_0$, and hence
\begin{align*}
 &[\widehat W_m^{\rm c}\cdots\widehat W_1^{\rm c}
       -W_m^{\rm c}\cdots W_1^{\rm c}]I_PE_0\\
 &\qquad=\sum_{j=1}^m\widehat W_m^{\rm c}\cdots
           \widehat W_{j+1}^{\rm c}
       [\widehat W_j^{\rm c}-W_j^{\rm c}]I_PY_j.
\end{align*}
Since $m\le n$, this column error is at most $n\eta$ per segment and
$nN_0\eta=\eps/32$ overall. Lemma~\ref{lem:samechannel} on the freshly
initialized segment channels gives local-simulation diamond error $\eps/16$.

For a complete segment $C_\tau:=g_{L_\tau}\cdots g_1$ after expansion of controlled gates and routing operations, retain exact alphabet gates and apply
Lemma~\ref{lem:terminal-synthesis} to each remaining one-qubit rotation
with $\xi_{\rm term}:=\eps/(16N_0\max\{1,L_\tau\})$.
Inverse occurrences use the literal adjoint of their saved sequence.
Full-space unitary telescoping, including all approximate work, gives
\begin{equation}\label{eq:sk-error}
 \norm{\widehat C_\tau-C_\tau}\le
 L_\tau\xi_{\rm term}\le\eps/(16N_0).
\end{equation}
Each replacement acts on the same target qubit and contributes the
polylogarithmic synthesis overhead counted below.
We discard auxiliaries at segment boundaries and repeat with fresh ones:
\begin{equation}\label{eq:actualchannel}
 \widehat\Psi_\tau(\rho):=\Tr_{\rm aux}
 \left[\widehat C_\tau(\rho\otimes\ket0\bra0_{\rm aux})
 \widehat C_\tau^\dagger\right],\qquad
 \widetilde\Phi_T:=\widehat\Psi_\tau^{\,N_0}.
\end{equation}
This channel is CPTP because it is obtained by unitary evolution with
initialized auxiliaries followed by a partial trace. Lemma~\ref{lem:samechannel} converts Eq.~\eqref{eq:sk-error}
and composes the $N_0$ segments, giving terminal diamond error $\eps/8$.
Together with Lemma~\ref{lem:collision} and Theorem~\ref{thm:patch},
\[
 \underbrace{\eps/16}_{\text{collision}}
 +\underbrace{\eps/8}_{\text{spatial localization}}
 +\underbrace{\eps/16}_{\text{local simulation}}
 +\underbrace{\eps/8}_{\text{finite-gate synthesis}}
 =3\eps/8<\eps.
\]

For resources, $N_0=O_D(1+a)$, $R=O_D(1+\log X)$,
$v_*,r,\Sigma_*=O_D((1+\log X)^D)$ and
$w_{\rm reg},\log(1/\eta)=O_D(1+\log X)$ imply
$Z_{\rm gl}=O_D((1+\log X)^D)$. Since
$L_\tau\le C_DnZ_{\rm gl}^{16}$,
\[
 \log(1/\xi_{\rm term})
 =\log\!\left(16N_0\max\{1,L_\tau\}/\eps\right)
 =O_D(1+\log X).
\]
We sum Eq.~\eqref{eq:globalblockcost} over at most $n$ factors per segment;
for depth, same-colour hosts run in parallel with $\chi=O_D(1)$ colours.
Logical gates, routing and terminal synthesis cost per block
\[
 O_D\!\left(Z_{\rm gl}^{10}Z_{\rm gl}^{6}(1+\log X)^4\right)
 =O_D\!\left((1+\log X)^{16D+4}\right).
\]
We multiply by $N_0$ and respectively $n$ or $\chi$ for $G$ and $d_{\rm NN}$.
Work is allocated once per owner: the complete-host union has at most
$C_DnZ_{\rm gl}^6$ sites, including routing spares, reused by all windows.
This proves peak width; initialization once per segment gives the fresh-qubit
bound after multiplication by $N_0$. Appendix~\ref{app:program} proves
the deterministic classical construction bound.
\end{proof}

\section{Geometric lower bounds}\label{sec:lower}
Our lower bounds use a dissipative jump that transports a particle
while preserving its internal qubit. Detecting its arrival beyond a
circuit's causal region gives a lower bound on depth. The entanglement
carried by the particle also bounds the number of gates that must act
across the intervening cuts. When gate locations follow a fixed
schedule, rare transfers give an additional gate bound that depends
on the required precision.

\subsection{Model and statements}
The system occupies $[L_{\rm side}]^D$, with $n=L_{\rm side}^D$
and two qubits per site. A simulator may use auxiliary registers at
any finite set of ambient sites $\Xi\subset\mathbb Z^D$ containing
this cube, with no bound on its size or on the finite number of
auxiliary registers per site. All gate-range, charging, initialization
and, for depth, classical-propagation conditions below apply to every
site of $\Xi$. The parameter $n$ counts only system sites.
Simulator gates have at most two quantum input
and output wires of dimension at most a fixed $q_g\ge2$, and couple
sites at $\ell_\infty$ distance at most a fixed integer $R_g\ge1$.
Every operation coupling different sites obeys these restrictions and
counts toward the gate budget. Single-site instruments and resets,
arbitrary classical outcome alphabets, and unbounded auxiliary density
at each site are allowed. The auxiliary state is initially separable
across all ambient sites and input-independent; outputs remain at their specified
sites. For depth, classical records produced during computation, including
measurement outcomes, travel at most $R_g$ sites per layer; this restriction
applies even when an outcome distribution is input-independent. Shared
input-independent randomness and a random choice of circuit may be supplied
in advance. Depth is maximized over execution branches. For gate
count, classical communication may be unrestricted. Each circuit has
a finite execution bound, to which we extend completed histories by
identity operations. A \emph{fixed schedule} prescribes one deterministic
list of gate locations for all branches, including random-seed and
measurement branches, and charges disabled slots as well.
An \emph{adaptive schedule} may change locations after outcomes; its
cost on a specified input is the expected number of executed gates,
averaged over all outcomes and input-independent random seeds.

The upper-bound architecture belongs to these classes. Take $\Xi$
to contain the coarse images $\lfloor z/f\rfloor$ of all fine-grid
host vertices. This map preserves system labels and sends a fine
nearest-neighbour edge to a same-site or neighbouring-site edge.
The initial auxiliary state remains a product across ambient sites.
The lower-bound
classes also allow arbitrary continuous elementary gates.

We choose a positive dyadic rational jump amplitude $b$ and write
$\gamma:=b^2$ for its rate and $\theta:=\gamma T$ for the normalized
physical time. Let
$0<\delta<1$ denote \emph{half}-diamond error:
$\tfrac12\|\mathcal C-e^{T\mathcal L}\|_\diamond\le\delta$.
To compare with Theorem~\ref{thm:main}, take $\delta=\eps/2$.
For $E:=\log(n/\delta)$ define
\[
 \mathcal R_1(\theta,E):=\theta+\frac{E}{\log(e+E/\theta)},\qquad
 \mathcal R_{D+1}(\theta,E):=\theta+
 \left(\frac{E}{\log(e+E/\theta^{D+1})}\right)^{1/(D+1)}.
\]
\Needspace{10\baselineskip}
\begin{theorem}[Static dissipative hard instances]\label{thm:lower}
For fixed $D,q_g,R_g$, sufficiently large $L_{\rm side}$, $\theta\ge1$,
and $0<\delta\le\delta_0(D,q_g,R_g)$, worst-case simulation of the class
of static rational generators with $H=0$ constructed below satisfies
\begin{align}
 d_{\rm wc}&=\Omega_D\!\left(\min\{L_{\rm side},\mathcal R_1(\theta,E)\}\right),
 \label{eq:lower-depth}\\
 G^{\rm adaptive}_{\rm wc}&=\Omega_D\!\left(n\min\{L_{\rm side},\theta\}\right),
 \label{eq:lower-adaptive}\\
 G^{\rm fixed}_{\rm wc}&=\Omega_D\!\left(n\min\{L_{\rm side},\mathcal R_{D+1}(\theta,E)\}\right).
 \label{eq:lower-fixed}
\end{align}
For each inequality there is a generator in the stated class such that every simulator of its evolution in the corresponding circuit model obeys that bound. The generator may be chosen separately for the three inequalities. In
Eq.~\eqref{eq:lower-adaptive}, the lower bound holds for the expected
gate count on one input common to every cut used in the proof. The hard instances have
traceless jumps, constant incidence and $\lambda=\Theta(\gamma)$ under
the normalization of Section~\ref{sec:model}.
\end{theorem}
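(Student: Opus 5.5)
The plan is to build all three witnesses from one purely dissipative transport primitive and then prove each bound with a different causal or entanglement argument. Each site carries an occupation qubit and an internal qubit. For a fixed lattice direction $e_1$, the jump $L_x=b\,\ket{0}\bra{1}_{x}\otimes\ket{1}\bra{0}_{x+e_1}\otimes\mathrm{SWAP}_{\rm int}$ moves a particle from $x$ to $x+e_1$ when the target is empty and carries its internal qubit along unchanged. It is rational, traceless and finite range, and has constant incidence. Because the internal qubit is only relocated, an isolated particle performs a Poisson walk with mean $\theta=\gamma T$ while its internal qubit is preserved exactly. The normalization of Section~\ref{sec:model} then gives $\lambda=\Theta(\gamma)$. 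For the witness input we place particles on a sparse sublattice, separated so that they do not interact before diameter saturation, with each internal qubit maximally entangled with a reference. Every lower bound compares the simulator's output with the exact output on this input, using $\tfrac12\norm{\cdot}_\diamond\le\delta$.

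For the depth bound in Eq.~\eqref{eq:lower-depth}, a depth-$d$ circuit has a causal cone of radius $R_gd$ for both quantum and classical information, by the locality assumptions on gates and records. Hence the reduced state of the reference together with the output sites beyond distance $R_gd$ from a particle's origin is independent of that particle's internal qubit. Exactly, the particle is found at displacement $\ge k$ with Poisson weight $p_k\approx e^{-\theta}\theta^k/k!$, and at such a displacement its internal qubit is maximally entangled with the reference. A fidelity or trace-distance test on that site therefore detects an error of order $p_k$ whenever $k>R_gd$. The $\log n$ enters by using $\Theta(n)$ independent particles and a product test, so the simulator must be correct for all of them. To get the $E$ dependence rather than only additive errors, I would measure joint failure on the product input and bound the trace distance below by $1-(1-cp_k)^{\Omega(n)}\gtrsim\min\{1,np_k\}$. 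Requiring this to be at most $\delta$ forces $p_{R_gd}\lesssim\delta/n$, that is $k\log(k/\theta)\gtrsim E$. This gives $d=\Omega(E/\log(e+E/\theta))$. The term $\theta$ comes from typical displacement: with constant probability the particle has moved $\Theta(\theta)$ sites. Both terms are capped at $L_{\rm side}$.

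For the adaptive gate bound in Eq.~\eqref{eq:lower-adaptive}, I would use an entanglement-flux argument. Consider the family of hyperplane cuts orthogonal to $e_1$. Under the exact evolution, each particle crosses $\Theta(\theta)$ of them in expectation, and each crossing transports one ebit with the reference across that cut. In the simulator, single-site instruments and classical communication are LOCC with respect to any cut, so they cannot increase the entanglement across it. Each gate straddling the cut raises it by at most $2\log q_g$, and the initial auxiliary state is separable. Summing over cuts, averaging over branches with an entanglement monotone that is convex under ensembles (squashed entanglement or a distillable-entanglement bound), and using Alicki--Fannes continuity for $\delta\le\delta_0$ gives $\mathbb E[\#\text{gates}]\cdot O(R_g)\gtrsim\sum_{\rm cuts}\text{(required ebits)}=\Omega(n\theta)$. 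This uses one common input for all cuts, as the statement requires.

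The fixed-schedule bound in Eq.~\eqref{eq:lower-fixed} is the step I expect to be the main obstacle. The idea is that a fixed schedule must place gates even where transport is rare. I would tile space-time into cells of side $k$, spanning $D$ spatial dimensions and one time dimension. If some cell of the schedule contains no gate crossing a given spatial cut, then the rare event in which a particle makes $\gtrsim k$ jumps within that cell cannot be reproduced. Using jumps in all directions, with a random choice of direction in the witness, should make this event have probability about $e^{-c k^{D+1}\log(k^{D+1}/\theta^{D+1})}$ per cell. Union-bounding the error requirement over $n$ sites forces every cell of side $k\approx(E/\log(e+E/\theta^{D+1}))^{1/(D+1)}$ to contain gates. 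Counting cells then gives $n$ times that length scale, plus the typical $n\theta$ term.

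The difficulty is making this heuristic rigorous. One must show that a schedule gap certifies a detectable error even against arbitrary LOCC-assisted adaptivity inside the fixed locations. One must also choose the witness generator, probably with multi-directional jumps and a suitable initial density, so that the rare-event probability has the claimed $(D+1)$-dimensional scaling. I would first attempt the $D=1$ case via a reduction to the depth argument on each cell, and then generalize the cell geometry.
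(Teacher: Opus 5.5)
Your depth argument follows the paper's route and works up to details: causal cone, many isolated transfers, one joint receiver test, and a Poisson tail split into typical and rare-arrival regimes. Two details need fixing. A joint test needs every receiver outside the cone of every source, so you get $m\approx n/r$ isolated rods rather than $\Theta(n)$ particles; this only changes $E$ by $O(\log r)$. Also, comparing the all-vacuum input with occupied sources, as the paper does, avoids your Bell test, because the simulator's receiver marginal is then identical for both inputs.

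The adaptive bound, however, has a genuine gap. Your flux count $\sum_{\rm cuts}(\text{ebits})=(\#\text{particles})\cdot\Theta(\theta)$ is $\Omega(n\theta)$ only at constant particle density. Yet you require the particles to be separated so that they never interact. With one-directional hops along a row, avoiding collisions up to time $T$ forces a spacing that grows with $\theta$, at least of order $\sqrt\theta$. Your sum is then $O(n\sqrt\theta)$, and only $\Omega(n)$ at the natural spacing $\theta$. Constant density makes exclusion unavoidable, so it must be controlled rather than avoided.

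There is a second constraint. Since $\delta_0$ is independent of $n$ and $\theta$, the continuity term $2\delta\log_2\dim A$ must be dominated by the required entanglement at every cut. The decoder at a cut may therefore retain only references of particles that cross it with constant probability.

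The paper meets both requirements as follows.
\begin{enumerate}
\item It packs $k=\max\{1,\lfloor\min(\theta,L_{\rm side})/64\rfloor\}$ particles at the left end of isolated rods of $3k$ sites.
\item The last-passage recursion $T_{i,j}=\max\{T_{i-1,j},T_{i,j-1}\}+E_{i,j}$ shows that the whole packet reaches the last $k$ sites with constant probability despite exclusion (Eq.~\eqref{eq:packedprob}).
\item At each of the $k$ planes inside a rod block, it decodes one rod per transverse row, using a receiver-side occupation flag communicated across the cut.
\end{enumerate}
This gives $-S(A|B)=N_{\rm row}p_kk$ against $\log_2\dim A=N_{\rm row}(k+1)$. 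Hence each plane needs $\Omega(L_{\rm side}^{D-1}k)$ crossing gates, and summing over $\Theta(L_{\rm side})$ planes gives $\Omega(nk)$.

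For the fixed schedule, your mechanism cannot give the claimed scalings, even before any question of rigor.
\begin{itemize}
\item A single particle making $\gtrsim k$ jumps has probability $e^{-\Theta(k\log(k/\theta))}$ for $k\gg\theta$. The exponent is linear in $k$, not of order $k^{D+1}$.
\item A certificate of the form ``some cut has no crossing gate'' forces only about one gate per cut and block. That totals about $(n/k^D)\cdot k$ gates, not $nk$.
\item A fixed schedule has no correspondence between gate positions and simulated time, so space--time cells are not a meaningful unit of charge.
\item The union-bound step points the wrong way. Each block's rare event has probability far below $\delta$, so one deficient block is undetectable. What must be shown is that a constant fraction of blocks cannot be deficient, which requires accumulating many rare failures within one test.
\end{itemize}

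Two ideas are missing. First, transport an entire source cube of $s=k^D$ Bell-paired internal qubits through $K=k$ nested cuts, carried by $k^{D-1}$ packed rods of $k$ particles. A cut with $g_A$ prescribed crossing slots leaves Schmidt number at most $q_g^{2g_A}$, and this bound survives local selection. So if $g_A<s/(4\log_2q_g)$, the Bell test accepts with probability at most $2^{-s/2}$ times the selection probability. This forces $\Omega(k^D)$ gates per cut and gives $G=\Omega((n/k^D)\,k\,k^D)=\Omega(nk)$. The success probability $(1-e^{-\theta/(3k-1)})^{2k^{D+1}}$ is what supplies the exponent $k^{D+1}$.

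Second, because the schedule is fixed, the set $B$ of deficient blocks is determined before any outcome is observed. The single test ``Bell-check the first successful deficient block'' has ideal acceptance $1-(1-p)^{|B|}$. Its simulated acceptance is at most $2^{-s/2}$ times the simulated success probability. This forces $1-(1-p)^{|B|}<6\delta$, as in Lemma~\ref{lem:flag-gates}. With $k\approx c_D\bigl(E/\log(e+E/\theta^{D+1})\bigr)^{1/(D+1)}$, at least half the blocks must then be non-deficient. This is where fixedness enters. A reduction to the causal depth argument cannot supply it, since that argument counts layers rather than crossing slots.
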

Implicit constants may depend on $D,q_g,R_g$. Before the diameter
cap, the precision bounds add their time and accuracy contributions,
and the leading volume and time factors agree with the upper bounds.
The fixed-schedule precision term charges deterministic locations;
the adaptive bound permits outcome-dependent locations. Related
lower bounds concern time-dependent Hamiltonians
\cite[Theorem~2]{HHKL}, sparse Lindbladian access \cite{ChildsLi},
and engineered Hamiltonian state transfer \cite{ChristandlTransfer}.
Our instances use static rational dissipative matrices.

The two precision radii reflect different transfer events. For depth,
one particle travels distance $r$, and the negative logarithm of the
arrival-probability bound scales as $r\log(e+r/\theta)$ in the
rare-arrival regime.
For a fixed schedule, a source cube of side $k$ holds $k^D$ internal
qubits and is surrounded by $k$ cuts. A sufficient success event
requires $O(k^{D+1})$ hop waits to finish within $O(T/k)$ each.
Its probability bound has negative logarithm
$O(k^{D+1}\log(e+k/\theta))$ in the rare-arrival regime.
Combining independent cubes until the aggregate success probability
exceeds the allowed error gives the root of order $D+1$ in
Eq.~\eqref{eq:lower-fixed}. The proofs below give the precise probability
comparisons and gate counts.

\subsection{Coherent transport and the depth bound}
Each site has two qubits. We use the fixed computational-basis encoding
$\ket v:=\ket{00}_{\rm comp}$ for a vacancy,
$\ket0:=\ket{01}_{\rm comp}$ and $\ket1:=\ket{10}_{\rm comp}$
for the two internal states of a particle, and leave
$\ket{11}_{\rm comp}$ inert.
Here a particle's $\ket0$ is distinct from the vacuum label of a
collision bath. For an oriented bond, the jump operator and generator are
\begin{equation}\label{eq:transportjump}
 S_x:=\ket{v,0}\bra{0,v}+\ket{v,1}\bra{1,v},\qquad
 J_x:=bS_x,\qquad \mathcal L_{\rm tr}:=\sum_x\mathcal D_{J_x}.
\end{equation}
One coherent jump per bond combines the two internal-state transitions.
For example,
\[
 S_x\bigl[(a\ket0+c\ket1)\otimes\ket v\bigr]
 =\ket v\otimes(a\ket0+c\ket1).
\]
The jump time and occupied position are random, whereas the internal
amplitudes are transported together. The same identity holds when the
internal state is entangled with a reference. The matrix entries are rational, $\|S_x\|=1$ and $\Tr S_x=0$.
Since $\|S_x\|_F=\sqrt2$, the least power-of-two scale obeys
$8\gamma\le\beta_x^2<32\gamma$. Each site below belongs to at most
two bonds, so $8\gamma\le\lambda<64\gamma$ for every nontrivial
instance, establishing $\lambda=\Theta(\gamma)$.

For an occupation pattern $\eta$, let $V_\eta$ place the internal
qubits in occupied sites from left to right within each isolated chain,
with a fixed lexicographic ordering of the chains. A legal hop satisfies
$S_xV_\eta=V_{\eta'}$ and $S_x^\dagger S_xV_\eta=V_\eta$;
both are zero for a disabled hop. Thus an enabled bond satisfies
\[
 \mathcal D_{bS_x}(V_\eta\rho V_\eta^\dagger)
 =\gamma\bigl(V_{\eta'}\rho V_{\eta'}^\dagger
                 -V_\eta\rho V_\eta^\dagger\bigr),
\]
and a disabled bond contributes zero. These identities hold with the
identity on any reference. The GKSL equation therefore reduces to the
classical forward equation on occupation patterns, giving
\begin{equation}\label{eq:transportintertwine}
 e^{T\mathcal L_{\rm tr}}(V_{\eta_0}\rho V_{\eta_0}^\dagger)
 =\sum_\eta p_T(\eta\mid\eta_0)V_\eta\rho V_\eta^\dagger.
\end{equation}
Here $p_T$ is the classical directed exclusion process with rate $\gamma$.
Particles retain their order, and occupation measurements read their positions while preserving the transported internal state. The exclusion and
last-passage correspondence is standard \cite[Section~1]{Johansson};
Eq.~\eqref{eq:transportintertwine} supplies the quantum-state part needed here.

\begin{lemma}[Causal channel test]\label{lem:lower-causal}
Let $X,Y$ be site sets at distance $r$. Two input states differing only
on $X$ have identical reduced $Y$ outputs under a depth-$d$ simulator
whenever $R_gd<r$. If a target channel distinguishes such inputs on $Y$
with trace distance $p>2\delta$, every simulator with error at most $\delta$ in half the diamond norm
therefore has $d\ge r/R_g$.
\end{lemma}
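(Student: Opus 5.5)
The argument has two parts: a light-cone statement for depth-$d$ simulators, then the usual reduction from diamond-norm error to a distinguishing test.

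\textbf{Light cone.} Consider the simulator on one execution branch. I will argue by induction on layers that, after $k$ layers, a register (quantum or classical) at site $z$ depends on the input only through the input on $\{x:\operatorname{dist}_\infty(x,z)\le R_gk\}$.

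The induction uses the circuit-model assumptions from the model description. Each gate couples sites at distance at most $R_g$. Single-site instruments and resets act locally. Classical records travel at most $R_g$ sites per layer. The auxiliary state, shared randomness and the random choice of circuit are input-independent.

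More precisely, I would track the joint state of $Y$'s registers together with all classical records reachable by $Y$, and show the following. For two inputs $\rho,\rho'$ that agree after tracing out $X$, the reduced state on the backward light cone of $Y$ at every intermediate layer is the same function of the reduced input on that cone.

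Gates and instruments outside the cone may still produce outcomes that are correlated with the input. However, those outcomes cannot reach $Y$ within $d$ layers when $R_gd<r$. Their effect on $Y$'s marginal is therefore a partial trace over a complementary region. By no-signalling of local operations, that partial trace does not change $Y$'s marginal. The initial separability of auxiliaries across sites ensures that no input-independent correlation can act as a carrier of $X$-information.

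Averaging over seeds and branches preserves equality of the $Y$ marginals. Hence $\Tr_{\bar Y}\mathcal C(\rho)=\Tr_{\bar Y}\mathcal C(\rho')$ whenever $R_gd<r$, since the backward cone of $Y$ then misses $X$.

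\textbf{Distinguishing step.} Let $\mathcal C$ be a simulator with $\tfrac12\norm{\mathcal C-e^{T\mathcal L}}_\diamond\le\delta$. By contractivity of the partial trace,
\[
 \tfrac12\norm{\Tr_{\bar Y}e^{T\mathcal L}(\rho)-\Tr_{\bar Y}e^{T\mathcal L}(\rho')}_1
 \le\delta+0+\delta .
\]
Here the two $\delta$ terms compare target and simulator on $\rho$ and on $\rho'$. The middle term is zero by the light-cone claim. This contradicts the assumed distinguishability $p>2\delta$. Therefore $R_gd\ge r$, that is, $d\ge r/R_g$. Inputs entangled with a reference are covered in the same way, since the diamond norm permits them.

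\textbf{Main obstacle.} The delicate step is the light-cone claim in the presence of mid-circuit measurements with adaptive classical control. Outcomes produced far away are random variables whose distribution may depend on the input. I would handle this by regarding classical wires as registers living at sites, and by modelling each layer as a channel composed of local pieces. Then $Y$'s marginal after $d$ layers is obtained from a channel acting only on the $R_gd$-neighbourhood, and that neighbourhood receives the same reduced input under $\rho$ and $\rho'$.

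The depth maximized over branches is needed precisely so that every branch satisfies this layer bound.
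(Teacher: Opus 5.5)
Your proposal is correct and follows essentially the paper's argument. It combines a causality bound, showing that the reduced $Y$ output is independent of the input on $X$ when $R_gd<r$, with the triangle inequality charging $\delta$ to each of the two approximation errors; the paper phrases the causality step dually in the Heisenberg picture, propagating a joint observable on $Y$ backwards through the layers with support growing by at most $R_g$ per layer, and handles local classical registers and the input-independent circuit choice as you do.

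One small correction to your justification: initial separability of the auxiliaries is not what prevents them from carrying information from $X$. Any input-independent auxiliary state, even an entangled one, is harmless by no-signalling, so that hypothesis is superfluous here; it matters for the entanglement-based gate-count bounds, not for this lemma.
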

\begin{proof}
We propagate a joint observable on $Y$ backwards through the layers
in the ambient lattice $\Xi$;
its support grows by at most $R_g$ per layer. The same argument includes
local classical registers, preparations and discards. Averaging over an
input-independent choice of circuit preserves independence from $X$.
The triangle inequality assigns at most $2\delta$ to the two approximation errors, so a target distinction larger than this forces the stated depth. This circuit causality principle also underlies dissipative preparation lower bounds \cite{KonigPastawski}; the joint receiver observable lets us apply it to many simultaneous transfers.
\end{proof}

On a rod with one particle and an absorbing endpoint, the $r$ hops
have independent rate-$\gamma$ exponential waits. The arrival
probability is therefore $p_r(\theta):=\Pr\{\operatorname{Pois}(\theta)
\ge r\}$. We amplify rare arrivals by testing whether any of many
independent receivers is occupied. For $1\le r\le L_{\rm side}/4$,
we place isolated $r$-bond rods at spacing $2r$ on every transverse
row: rod $j$ has source $1+2jr$ and receiver $1+(2j+1)r$, where
$0\le j<\lfloor L_{\rm side}/(2r)\rfloor$. Thus
$m=L_{\rm side}^{D-1}\lfloor L_{\rm side}/(2r)\rfloor\ge n/(4r)$,
and every source is at distance at least $r$ from every receiver.
Comparing the all-vacuum input with one particle at each source, the
joint receiver test and Lemma~\ref{lem:lower-causal} force
$d\ge r/R_g$ whenever
\begin{equation}\label{eq:poissoncriterion}
 1-(1-p_r(\theta))^m>2\delta.
\end{equation}
Each channel approximation changes this single probability by at
most $\delta$, independently of the number of rods.

To obtain Eq.~\eqref{eq:lower-depth}, we choose a fixed
$c\le1/128$ and set
$r=\lfloor c\min\{L_{\rm side},\mathcal R_1(\theta,E)\}\rfloor$.
We distinguish a typical-arrival regime, where a constant success
probability is enough, from a rare-arrival regime, where the number of
rods compensates for the small probability. If $1\le r\le\theta/2$, the second-moment
inequality for a Poisson random variable gives $p_r\ge1/8$, so the
joint test succeeds for a sufficiently small fixed $\delta_0$.
Indeed, for $Z\sim\operatorname{Pois}(\theta)$,
$\Pr\{Z\ge\theta/2\}\ge\theta^2/[4(\theta^2+\theta)]\ge1/8$
when $\theta\ge1$.
Otherwise, we write $\ell=\log(e+E/\theta)$. Since
$r\le c(\theta+E/\ell)$ and $\theta<2r$, we obtain
\[
 r\le \frac{c}{1-2c}\frac{E}{\ell}
 \le\frac{2cE}{\ell},\qquad
 \log(r/\theta)\le\ell.
\]
The single Poisson mass at $r$, together with $r!\le r^r$, now gives
\[
 -\log p_r\le\theta+r\log(r/\theta)
 \le 2r+r\ell\le6cE\le E/4.
\]
It follows that
\[
 \frac{mp_r}{\delta}\ge\frac{e^{3E/4}}{4r}
                  \ge\frac{e^{3E/4}}{8cE}.
\]
This expression increases for $E\ge4/3$. Since
$E\ge\log(1/\delta_0)$, a sufficiently small fixed
$\delta_0\le1/4$ makes it exceed four uniformly;
$(1-p_r)^m\le e^{-mp_r}$ then proves
Eq.~\eqref{eq:poissoncriterion}. If rounding gives $r=0$, the proposed
bound is constant, and a one-bond instance with
$p_1\ge1-e^{-1}$ suffices. Choosing $r$ from the minimum with
$L_{\rm side}$ already includes the diameter cap.

\subsection{Quantum transmission and executed gate count}
For adaptive circuits we use one input with an entangled pair between
each source qubit and an external inert reference assigned to the same
site. The simulator never acts on these references.
The input is separable across spatial cuts, but a successful transfer
creates entanglement across every cut it traverses. We use conditional entropy, measured in bits,
$S(A|B)_\rho:=S(AB)_\rho-S(B)_\rho$, and entanglement of formation
$E_F$, the convex roof of pure-state entanglement entropy. Its average
LOCC monotonicity \cite[Theorem~2]{VidalMonotones} combines with
conditional-entropy continuity \cite[Lemma~2]{WinterContinuity},
whose dimension dependence involves only the decoded register $A$.

\begin{lemma}[Entanglement estimates]\label{lem:enttools}
$E_F$ is convex and nonincreasing on average under local operations and
classical communication. If states $\rho,\sigma$ on $AB$ have half
trace-norm distance at most $\delta\in[0,1]$, then
\[
 |S(A|B)_\rho-S(A|B)_\sigma|
 \le2\delta\log_2\dim A+h_*(\delta),\qquad
 h_*(\delta):=(1+\delta)h_2\!\left(\frac\delta{1+\delta}\right),
\]
where $h_2$ is binary entropy. Also $E_F(\rho)\ge-S(A|B)_\rho$.
\end{lemma}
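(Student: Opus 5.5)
The first two assertions are quoted directly from earlier work, so the plan is to match the cited statements to our conventions. The only genuinely new step is the inequality $E_F(\rho)\ge-S(A|B)_\rho$, which I would derive from concavity of conditional entropy.

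\emph{Convexity and LOCC monotonicity.} Convexity of $E_F$ follows directly from its definition as a convex roof. A decomposition of $\sum_k q_k\rho_k$ is obtained by concatenating near-optimal pure-state decompositions of the $\rho_k$. Hence $E_F(\sum_kq_k\rho_k)\le\sum_kq_kE_F(\rho_k)+\epsilon$ for every $\epsilon>0$. Monotonicity on average is \cite[Theorem~2]{VidalMonotones}. It applies because $E_F$ is the convex roof of a pure-state function that is nonincreasing on average under local instruments, namely the entanglement entropy. Thus an LOCC protocol producing $\rho_k$ with probability $q_k$ from $\rho$ satisfies $\sum_kq_kE_F(\rho_k)\le E_F(\rho)$. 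I would state explicitly that classical communication between the two sides is allowed. This is the form used later, where unrestricted classical communication is permitted in the gate-count model.

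\emph{Continuity.} I would quote \cite[Lemma~2]{WinterContinuity}, which is the Alicki--Fannes--Winter bound. Its hypothesis is stated in the same half trace norm $\tfrac12\|\rho-\sigma\|_1\le\delta\le1$. Its conclusion is $|S(A|B)_\rho-S(A|B)_\sigma|\le2\delta\log_2\dim A+(1+\delta)h_2(\delta/(1+\delta))$, with entropies measured in bits. The one thing to check is the dimension dependence: only $\dim A$ enters, with no dependence on $B$. This is essential because $B$ will contain unbounded auxiliary registers. I would also check that the logarithm base agrees, so that the entropies are in bits and $h_2$ is the binary entropy. This convention matching is the only place I expect any friction, and it is bookkeeping rather than mathematics.

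\emph{Lower bound on $E_F$.} First I would treat a pure state $\psi_{AB}$. Here $S(AB)=0$ and $S(A)=S(B)$, so $-S(A|B)_\psi=S(B)_\psi=E(\psi)$ with equality. Next, conditional entropy is concave in $\rho_{AB}$. This is a standard consequence of strong subadditivity: adjoin a classical flag register $X$, and note that conditioning on $X$ reduces conditional entropy, so $S(A|BX)\le S(A|B)$. Therefore $-S(A|B)$ is convex. For any decomposition $\rho=\sum_ip_i\psi_i$ this gives
\[
 -S(A|B)_\rho\le\sum_ip_i\bigl(-S(A|B)_{\psi_i}\bigr)=\sum_ip_iE(\psi_i).
\]
Taking the infimum over decompositions yields $E_F(\rho)\ge-S(A|B)_\rho$.

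Together these three steps prove the lemma.
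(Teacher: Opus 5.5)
Your proposal is correct and essentially matches the paper's proof: convexity from the convex-roof definition, the cited results of Vidal and Winter for average LOCC monotonicity and continuity, and concavity of conditional entropy applied to an arbitrary pure-state ensemble, followed by the infimum. Your derivation of that concavity from strong subadditivity via a classical flag is the content of the paper's Lieb--Ruskai citation. One minor point of wording: Vidal's Theorem~2 requires the pure-state function to be a concave, unitarily invariant function of the reduced state, and the entanglement entropy satisfies this.
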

\begin{proof}
Convexity follows from the convex-roof definition; average LOCC
monotonicity and the continuity inequality are the cited results.
For any pure-state ensemble of $\rho$, concavity of conditional entropy
\cite{LiebRuskaiSSA} bounds $-S(A|B)_\rho$ by that ensemble's average
entanglement. Taking the infimum proves the last inequality.
\end{proof}
A two-wire unitary increases entanglement by at most $2\log_2q_g$
\cite[Lemma~1]{BHLSCapacity}. The same bound applies on average to a
crossing instrument in our model. For a pure state in an input ensemble,
we place its isometric dilation environment on one side of the cut
and denote the other side by $BB_0$, where $B$ is the affected wire.
The untouched marginal on $B_0$ is unchanged, while before and after
the dilation $|S(BB_0)-S(B_0)|\le\log_2q_g$. The pure-state
entanglement therefore increases by at most $2\log_2q_g$.
Taking the convex-roof infimum over input ensembles and then using
average LOCC monotonicity for environment measurement, outcome
communication and discard proves the claim for the two bounded
output wires of the charged operation.

For every existing outcome history $h$, let $E_F^{\rm before}(h)$ be
the entanglement before the next operation, and let
$\mathbf1_{\rm cross}(h)$ indicate whether that operation crosses the
cut. The preceding increment bound gives
\[
 \mathbb E[E_F^{\rm after}\mid h]
 \le E_F^{\rm before}(h)+2\log_2q_g\,\mathbf1_{\rm cross}(h).
\]
Taking expectations and summing over the executed operations starts
from zero entanglement of formation and charges only the expected
number of crossing gates. Convexity bounds the entanglement of
the unconditional output by the resulting average. Thus, for one
sitewise separable system--reference input, any decoder acting locally on each side of the cut
with ideal output $\sigma^{AB}$ gives
\begin{equation}\label{eq:entcut}
 2\log_2q_g\,\mathbb E g_{\rm cut}
 \ge -S(A|B)_\sigma-2\delta\log_2\dim A-h_*(\delta).
\end{equation}
The decoder is applied after the simulator and acts by local operations
and classical communication across the cut, so it preserves the bound
in Eq.~\eqref{eq:entcut}.

We realize a transfer of many qubits on an isolated $3k$-site rod
with $k$ particles initially at its left end; success places all of
them in the final $k$ sites. If $T_{i,j}$ is the time at which particle
$i$, indexed from right to left, completes hop $j$, then
\[
 T_{i,j}=\max\{T_{i-1,j},T_{i,j-1}\}+E_{i,j},\quad
 1\le i\le k,\quad1\le j\le2k,
\]
with zero boundary times and independent rate-$\gamma$ exponential
waits. Initially adjacent particles differ in position by one.
Particle $i$'s target for hop $j$ is the site occupied by particle
$i-1$ after its hop $j-1$; the latter must complete hop $j$ to free it.
A hop therefore needs its predecessor and right neighbour to finish;
once enabled, it remains enabled, so memorylessness gives this
recursion. Completion at $T_{k,2k}$ is the maximum over at most
$2^{3k-1}$ dependence paths of $3k-1$ waits. The exponential-moment
bound at $\gamma/2$ and a union bound give the first estimate below;
requiring all $2k^2$ waits to be at most $T/(3k-1)$ gives the second:
\begin{equation}\label{eq:packedprob}
 p_k(\theta)\ge1-64^k e^{-\theta/2},\qquad
 p_k(\theta)\ge\left(1-e^{-\theta/(3k-1)}\right)^{2k^2}.
\end{equation}

We take $k=\max\{1,\lfloor\min(\theta,L_{\rm side})/64\rfloor\}$
and place the rods every $4k$ longitudinal sites on all transverse
rows, covering a constant fraction of a sufficiently large cube.
For $k\ge2$, the first estimate in Eq.~\eqref{eq:packedprob} exceeds
$0.99$; for $k=1$, the exact two-hop probability is at least
$1-2/e>1/4$. We entangle each source qubit with an inert reference at
its original site. The same input is used at every cut, so its expected crossing counts
can later be summed. For a rod whose leftmost longitudinal coordinate
is $a$, we use the planes immediately after coordinates
$a+k-1,\ldots,a+2k-2$. Each selected plane at coordinate $s$ denotes
the full ambient bipartition
$\Xi\cap\{z:z_1\le s\}$ and its complement, including every auxiliary
register on the corresponding side. Every source reference is on their left,
and every receiver is on their right. At each plane, the receiver side
tests whether all of the last $k$ sites are occupied. This is a local success test and
leaves the transported internal qubits intact. It communicates the outcome
to the reference side, so both sides can retain a copy of the flag.
On success, each side decodes its $k$ qubits; on failure, both replace
their decoded registers by a fixed product state. Thus the analysis
decoder is an LOCC operation across each cut, as required in
Eq.~\eqref{eq:entcut}.
Writing $\Phi_{2^k}$ for $k$ normalized Bell pairs, its decoded target is
\[
 \sigma_{\rm row}
 =p_k\Phi_{2^k}\otimes\ket{11}\bra{11}
 +(1-p_k)\ket{0^{2k}}\bra{0^{2k}}\otimes\ket{00}\bra{00}.
\]
Here $A$ consists of the source references and their flag, while $B$
consists of the decoded receiver qubits and their flag. Conditioning
on the receiver flag gives $S(A|B)_{\sigma_{\rm row}}=-p_k k$, while
$\log_2\dim A=k+1$. At a selected plane we retain one rod from
its longitudinal block on each of the
$N_{\rm row}=L_{\rm side}^{D-1}$ transverse rows. The product of
their flagged target states gives
\[
 -S(A|B)=N_{\rm row}p_k k,\qquad
 \log_2\dim A=N_{\rm row}(k+1).
\]
These quantities add over the product target. Since $p_k$ is
bounded below by a constant, Eq.~\eqref{eq:entcut} requires
$\Omega_D(L_{\rm side}^{D-1}k)$ expected crossing gates per plane
for sufficiently small fixed $\delta_0$. There are
$\Theta(L_{\rm side})$ selected planes, and a range-$R_g$ gate
crosses at most $R_g$ of them. All expectations use the same input,
so
\[
 \sum_{\rm selected\ planes}\mathbb E g_{\rm cut}
 \le R_g\,\mathbb E G,
\]
proving Eq.~\eqref{eq:lower-adaptive}.

\subsection{The extra precision bound for a fixed schedule}
A fixed schedule lets us identify cuts with insufficient crossing
slots before observing any outcomes. Their Schmidt-number bound
survives conditioning on a rare success. By selecting the first
successful transfer among predetermined deficient blocks, we collect
these events into one unconditional test with error at most $\delta$.

The Schmidt number of a state is the least $r$ for which it is a mixture
of pure states of Schmidt rank at most $r$. For a nonzero positive
operator we use the Schmidt number of its normalization; the zero
operator satisfies every upper bound. We use the following properties
from \citet[Proposition~1 and Lemma~1]{TerhalSchmidt}: local selection
cannot increase Schmidt number, and bounded Schmidt number limits
acceptance of a maximally entangled state.
\begin{lemma}[Schmidt-number estimates]\label{lem:schmidttools}
Local operations and conditioning on a local outcome cannot increase
Schmidt number. Mixing states of Schmidt number at most $r$ preserves
that bound. A positive operator of Schmidt number at most $r$ has overlap with a $d$-dimensional maximally entangled state
at most $(r/d)$ times its trace.
\end{lemma}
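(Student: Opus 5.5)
The plan is to verify the three properties directly from the definition of Schmidt number, working at the level of pure-state decompositions. Everything reduces to one elementary fact: a bipartite vector $\ket\psi=\sum_{ij}M_{ij}\ket i\ket j$ has Schmidt rank $\operatorname{rank}M$. This is the content of \citet[Proposition~1 and Lemma~1]{TerhalSchmidt}, and I would reprove it in the form needed here. For a nonzero positive operator $X$, the Schmidt-number statements refer to $X/\Tr X$. Every inequality below is homogeneous in $X$, and the zero operator satisfies all bounds by convention, so I would state each step for unnormalized positive operators.

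\emph{Mixtures.} I would start with the second claim, since it is immediate. If $\rho=\sum_k p_k\rho_k$ and each $\rho_k$ is a mixture of pure states of Schmidt rank at most $r$, then concatenating these decompositions gives such a decomposition of $\rho$. The same holds for unnormalized sums of positive operators.

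\emph{Local operations and conditioning.} The key observation is that a product Kraus operator cannot raise Schmidt rank. Since $(A\otimes B)\ket\psi$ has coefficient matrix $AMB^{\mathsf T}$, its Schmidt rank is at most $\operatorname{rank}M$. Any operation composed of local instruments on the two sides, with classical outcomes passed between them, has Kraus operators that are products $A_\omega\otimes B_\omega$ indexed by complete outcome histories $\omega$. This follows by induction over rounds, because later local Kraus operators may depend on earlier outcomes. Conditioning on a local outcome keeps the sum over the compatible histories only. I would apply these Kraus operators to a decomposition $\rho=\sum_k p_k\ket{\psi_k}\bra{\psi_k}$ with ranks at most $r$. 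The result is a positive combination of pure states, each of Schmidt rank at most $r$. Normalizing, when the result is nonzero, and using the mixture property then gives the first claim. Discarding a local subsystem is the partial trace, which is again a sum of product Kraus operators $I\otimes\bra j$, so it is covered as well.

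\emph{Overlap with a maximally entangled state.} Let $\ket{\Phi_d}=d^{-1/2}\sum_{i=1}^d\ket{ii}$ in a fixed pair of bases, and let $\ket\psi$ be a unit vector with coefficient matrix $M$ of rank at most $r$ in these bases. Then $\langle\Phi_d|\psi\rangle=d^{-1/2}\Tr M$. Let $\Pi$ be the orthogonal projector onto the range of $M$. Writing $\Tr M=\Tr(\Pi M)$ and applying Cauchy--Schwarz in the Hilbert--Schmidt inner product gives
\[
 |\Tr M|\le\norm{\Pi}_F\norm M_F\le\sqrt r.
\]
Hence $|\langle\Phi_d|\psi\rangle|^2\le r/d$. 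If either local space is larger than $d$, I would first compress it by the projection onto the support of $\Phi_d$; this is a local operator, so it does not raise the rank, and it does not increase the norm. Finally, for a positive operator $X=\sum_k\ket{\tilde\psi_k}\bra{\tilde\psi_k}$ whose normalized vectors have rank at most $r$, linearity gives
\[
 \bra{\Phi_d}X\ket{\Phi_d}\le(r/d)\sum_k\norm{\tilde\psi_k}^2=(r/d)\Tr X.
\]

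The only step that needs care is the bookkeeping in the first claim. With adaptive classical communication, one must check that the overall Kraus operators really are products, indexed by full histories, and that normalization after conditioning does not interfere with the decomposition. I would handle this by carrying unnormalized operators throughout and normalizing only at the end. The rank and overlap estimates themselves are routine linear algebra.
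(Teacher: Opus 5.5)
Your argument is correct. The paper gives no proof of its own here: it imports both facts from \citet[Proposition~1 and Lemma~1]{TerhalSchmidt}. Your proposal supplies the elementary argument behind those results.
\begin{itemize}
\item A product Kraus operator $A\otimes B$ sends the coefficient matrix $M$ to $AMB^{\mathsf T}$, so it cannot raise Schmidt rank.
\item Adaptive local operations with classical communication have product Kraus operators indexed by complete histories. Local selection and partial traces are covered by the same fact.
\item The overlap bound follows from $|\Tr M|=|\Tr(\Pi M)|\le\norm{\Pi}_F\norm M_F\le\sqrt r$.
\end{itemize}
The citation is shorter. Your version has one practical advantage: it is stated for unnormalized positive operators, with the zero operator handled by homogeneity. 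That is the form the paper actually uses for the selected operators $\sigma_i$ in Lemma~\ref{lem:flag-gates}. You also make explicit the local compression needed when the retained spaces are larger than the support of $\Phi_d$.
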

For a spatial region $A$, let $g_A$ count prescribed slots crossing
its boundary. Each crossing Kraus operator has operator Schmidt rank
at most $q_g^2$, so the separable initial state has output Schmidt
number at most $q_g^{2g_A}$. Refining histories into individual
Kraus operators multiplies Schmidt rank by at most $q_g^2$ per
crossing slot; local steps, selection and final mixing preserve
this common bound. Here $A$ denotes a spatial region, distinct from
the decoded register in Eq.~\eqref{eq:entcut}.

We consider $m$ disjoint source blocks, each holding $s$ onsite Bell pairs
with inert references, and a family of $K$ enclosing cuts per block.
All receiver registers and success flags lie outside \emph{every}
source cut. The analysis decoder measures the occupation projection
$\ket0\bra0+\ket1\bra1$ at each receiver site. On an occupied
outcome it maps these two states to a logical qubit; the complementary
outcome, including the unused fourth basis state, is recorded as
failure and replaced by a fixed state. Hence this decoder defines a
quantum channel on every simulator output.

The following statement isolates the role of a fixed schedule before
we choose the geometric blocks.
\begin{lemma}[Gate count from independent transfer flags]\label{lem:flag-gates}
In the preceding source--receiver setting, suppose the $m$ ideal
transfer flags are independent with success probability $p$, and
each success transports exactly $s\ge1$ qubits. If each elementary
gate crosses at most $\chi_g$ of the $mK$ cuts, then every fixed-schedule
simulator with half-diamond error at most $\delta$ satisfies
\[
 1-(1-p)^{\lceil m/2\rceil}>6\delta
 \quad\Longrightarrow\quad
 G\ge\frac{mKs}{8\chi_g\log_2q_g}.
\]
\end{lemma}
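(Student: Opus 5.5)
The plan is to prove the contrapositive: assume $G<mKs/(8\chi_g\log_2q_g)$ and build one unconditional test that the ideal channel passes with probability $P_{\rm id}:=1-(1-p)^{m'}$, where $m'\ge\lceil m/2\rceil$. We then show that every such simulator passes it with probability so small that $P_{\rm id}\le6\delta$.

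\emph{Step 1: deficient blocks (counting).} Since the schedule is fixed, the crossing-slot counts $g_A$ are deterministic. Each gate crosses at most $\chi_g$ cuts, so
\[
 \sum_{\text{blocks}}\sum_{A\text{ of the block}}g_A\le\chi_gG<\frac{mKs}{8\log_2q_g}.
\]
Call a block \emph{rich} if its $K$ cuts together carry at least $Ks/(4\log_2q_g)$ slots. By Markov's inequality, fewer than $m/2$ blocks are rich, so at least $m'\ge\lceil m/2\rceil$ blocks are not. Each non-rich block has some cut $A$ with $g_A<s/(4\log_2q_g)$. For that cut, the Schmidt-number bound stated before Lemma~\ref{lem:flag-gates} gives that every refined simulator output has Schmidt number at most $q_g^{2g_A}<2^{s/2}$ across $A$. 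These deficient blocks and their cuts are fixed before any outcome is observed; this is exactly where the fixed schedule is used.

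\emph{Step 2: the test.} After the simulator, measure the transfer flags of the deficient blocks in a predetermined order and select the first block $j$ whose flag reports success. Then apply the decoder described above to its $s$ receiver qubits and project them, together with the $s$ source references, onto the maximally entangled state of dimension $2^s$. Accept iff some deficient flag succeeds and this projection accepts.

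In the ideal evolution, a successful transfer delivers the $s$ Bell halves intact. The ideal acceptance probability is therefore exactly the probability that some deficient flag succeeds, namely $P_{\rm id}$, using independence of the flags. Half-diamond error $\delta$ then gives
\[
 \Pr_{\rm sim}[\text{accept}]\ge P_{\rm id}-\delta,\qquad
 \Pr_{\rm sim}[\text{some deficient flag succeeds}]\le P_{\rm id}+\delta.
\]

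\emph{Step 3: Schmidt-number upper bound (main obstacle).} We need that, conditioned on the event ``first success at $j$'', the state across block $j$'s deficient cut still has Schmidt number below $2^{s/2}$. This holds because all flags and all receiver registers lie outside every source cut. Hence the flag measurements, the selection and the receiver decoding are local operations on one side of cut $A_j$ followed by conditioning on a local outcome, and Lemma~\ref{lem:schmidttools} preserves the bound. The same lemma then bounds the acceptance given selection of $j$ by $2^{s/2}/2^s=2^{-s/2}$. Writing $P_j$ for the simulator probability that $j$ is the first successful deficient block, and summing over $j$,
\[
 \Pr_{\rm sim}[\text{accept}]\le2^{-s/2}\sum_jP_j\le2^{-1/2}(P_{\rm id}+\delta).
\]
The care needed is that the selection event is defined only through outcomes on the receiver side, and that mixing over branches respects the common bound.

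\emph{Step 4: conclude.} Combining the two estimates gives $P_{\rm id}-\delta\le2^{-1/2}(P_{\rm id}+\delta)$, that is,
\[
 P_{\rm id}\le\frac{1+2^{-1/2}}{1-2^{-1/2}}\,\delta<6\delta.
\]
Since $m'\ge\lceil m/2\rceil$, this yields $1-(1-p)^{\lceil m/2\rceil}<6\delta$, which contradicts the hypothesis.
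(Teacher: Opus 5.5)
Your proposal is correct and follows essentially the same route as the paper: deficient blocks fixed in advance by the schedule, a single first-success Bell test, Schmidt-number monotonicity to bound acceptance by $2^{-s/2}$ times the selection probability, the inequality $P_{\rm id}\le(3+2\sqrt2)\,\delta<6\delta$, and a cut-congestion count. The only difference is cosmetic. You argue by contrapositive and pick the deficient blocks by averaging each block's slot total over its $K$ cuts. The paper instead defines deficiency by a single low cut and then sums over the more than $m/2$ blocks in which every cut carries enough slots.
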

\begin{proof}
We call a block deficient if one of its cuts has
$g_A<s/(4\log_2q_g)$, and fix one such cut for each deficient block.
The schedule determines this set before any flags are observed.

The output test measures the receiver flags and checks the Bell state
at the \emph{first successful deficient block}. For the predetermined
set $B$ of deficient blocks, this single two-outcome test has ideal
acceptance probability $q_B:=1-(1-p)^{|B|}$.
Let $\sigma_i$ be the unnormalized simulator output after selecting
the event that deficient block $i$ is the first to succeed and
locally discarding all registers except its $s$ references and $s$
receivers. Let $\Pi_i$ be the normalized Bell-state projector on
these retained registers. The trace of $\sigma_i$ is the probability of this selection event. All flags used for the selection lie outside the cut chosen
for block $i$, so the selected operator
retains the Schmidt-number bound for that cut. With
$\zeta:=2^{-s/2}\le1/\sqrt2$, Lemma~\ref{lem:schmidttools} gives
\[
 \Tr(\Pi_i\sigma_i)\le\zeta\Tr\sigma_i,\qquad
 \Pr_{\mathcal C}(\mathrm{accept})
 =\sum_{i\in B}\Tr(\Pi_i\sigma_i)
 \le\zeta\sum_{i\in B}\Tr\sigma_i.
\]
The selected events are disjoint and sum to the event that some
deficient block succeeds. Comparing this event and the acceptance
test with their ideal probabilities uses the global half-diamond
error once for each of two unconditional tests. Consequently,

\begin{equation}\label{eq:flagdirectsum}
 q_B-\delta\le\Pr_{\mathcal C}(\mathrm{accept})
 \le\zeta\Pr_{\mathcal C}(\text{a deficient block succeeds})
 \le\zeta(q_B+\delta).
\end{equation}
Since $\zeta\le1/\sqrt2$, Eq.~\eqref{eq:flagdirectsum} implies
$q_B\le[(1+\zeta)/(1-\zeta)]\delta\le(3+2\sqrt2)\delta<6\delta$.
If $1-(1-p)^{\lceil m/2\rceil}>6\delta$, at least half the blocks
have $s/(4\log_2q_g)$ crossing slots at every one of their $K$ cuts.
Summing with cut congestion at most $\chi_g$ gives
$G\ge mKs/(8\chi_g\log_2q_g)$.
\end{proof}

To realize this test geometrically, we tile the lattice by cells of
side $4k$. Each cell contains $k^{D-1}$ parallel rods of length $3k$,
with $k$ particles on each rod. Thus its source cube holds $s=k^D$
qubits, and the receiver cube is translated by $2k$ in the first
coordinate. Requiring all rods in the cell to complete their transfer
gives success probability at least
\[
 p\ge\left(1-e^{-\theta/(3k-1)}\right)^{2k^{D+1}}.
\]
These occupation flags are independent in the ideal output because
Eq.~\eqref{eq:transportintertwine} applies separately to the disjoint
cells and the initial state is their tensor product.
For a source cube $A_C$, we take the $K=k$ cuts bounding
\[
 A_C^{(j)}:=\{x\in\mathbb Z^D:
       \operatorname{dist}_\infty(x,A_C)\le j\},
 \qquad j=0,\ldots,k-1.
\]
Each cut separates $\Xi\cap A_C^{(j)}$ from
$\Xi\setminus A_C^{(j)}$, including every auxiliary register. The outer neighbourhoods
are disjoint because source cubes are spaced
by $4k$, and every receiver lies outside all of them. The function
$\operatorname{dist}_\infty(\cdot,A_C)$ is 1-Lipschitz, so two gate
endpoints at distance at most $R_g$ can lie on opposite sides of at
most $R_g$ cuts in one family. Since an endpoint belongs to at most
one outer neighbourhood, a gate meets at most two families. Its total
cut congestion is therefore at most $2R_g$, independently of transverse
routing or the number of auxiliary wires at a site.
There are $m=\lfloor L_{\rm side}/(4k)\rfloor^D=\Theta_D(n/k^D)$
cells for $k\le L_{\rm side}/8$. The preceding test then gives
$G=\Omega_D(nk)$ whenever its success criterion holds.

It remains to choose $k$ so that the aggregate success probability
exceeds the error. We write
$\ell=\log(e+E/\theta^{D+1})$ and
$Q=(E/\ell)^{1/(D+1)}$. If $Q\le4\theta$, the adaptive time bound
already implies Eq.~\eqref{eq:lower-fixed}, up to a factor of five.
Otherwise we choose
$k=\lfloor c_D\min\{L_{\rm side},Q\}\rfloor$, with
$c_D\le1/8$ a sufficiently small fixed constant. The case
$k\le\theta$ is also covered by the time bound: the floor definition
gives $\min\{L_{\rm side},Q\}<(k+1)/c_D\le2\theta/c_D$, and hence
\[
 \min\{L_{\rm side},\theta+Q\}
 \le(1+2/c_D)\min\{L_{\rm side},\theta\}.
\]
For $k>\theta$,
the elementary inequality $1-e^{-x}\ge x/2$ for $0\le x\le1$ yields
\[
 -\log p\le 2k^{D+1}\log(6k/\theta).
\]
Moreover, $k^{D+1}\le c_D^{D+1}E/\ell$ and
\[
 \log(6k/\theta)
 \le \log6+\frac{\ell}{D+1}
 \le\left(\log6+\frac1{D+1}\right)\ell.
\]
We may therefore fix $c_D$ so that $-\log p\le E/4$.
Since $m=\Theta_D(n/k^D)$ and $k=O_D(E^{1/(D+1)})$, it follows that
\[
 \frac{mp}{\delta}\ge
 c'_D\frac{e^{3E/4}}{E^{D/(D+1)}}>24
\]
for a sufficiently small fixed $\delta_0$. Consequently,
$1-(1-p)^{\lceil m/2\rceil}\ge1-e^{-mp/2}>6\delta$, as required by
Eq.~\eqref{eq:flagdirectsum}. The earlier $k\le\theta$ branch already
covers the bounded rounding cases. This proves the precision
contribution. We choose the time-bound instance or the precision-bound
instance according to which gives the larger contribution. The
resulting single instance proves Eq.~\eqref{eq:lower-fixed}, since
\[
 \max\{\min(L_{\rm side},\theta),\min(L_{\rm side},Q)\}
 \ge\tfrac12\min(L_{\rm side},\theta+Q).
\]

This completes the proof of Theorem~\ref{thm:lower}.
For short times outside the theorem's regime, CPTP contractivity
instead gives
$\tfrac12\|e^{T\mathcal L}-\operatorname{id}\|_\diamond
\le\tfrac12T\|\mathcal L\|_\diamond$ by integrating the semigroup
derivative; the identity suffices whenever this is at most $\delta$.

\Needspace{8\baselineskip}
\section{Discussion}
The spatial decomposition separates the geometric argument from the
local simulation method. A replacement local solver must implement the
prescribed shared-environment unitaries on the physical prefixes and
inverse preimages produced by that decomposition. Improving such a
solver can reduce the logarithmic overhead without changing the
spatial estimate or the global composition argument.

For state preparation, a mixing estimate supplies a sufficient
evolution time \cite{TemmeMixing,KastoryanoTemme}; the simulation error
then adds to the physical preparation error as described in
Section~\ref{sec:intro-result}. Stability bounds can separately account
for perturbations of the generator \cite{CubittStability}. For dynamics
with persistent motion \cite{Buca}, the guarantee instead concerns the
state at the specified time, rather than convergence to a stationary
state.

The lower bounds show how the combined probability of rare arrivals
imposes precision costs before diameter saturation. Extending the
upper bound to time-dependent generators or environments with memory
would require a common dilation with a mesh-uniform locality estimate,
occupation bounds for the resulting correlated prefixes and inverse
preimages, and efficient coherent evaluation of the required interval
operators. The width bound relies on refreshing the auxiliaries between segments
and on the allowed, explicitly counted lattice refinement.

\section*{Note added}
Our results were obtained independently of the concurrent works of
Mizuta~\cite{MizutaPatching} and Trivedi, Yu and Malz~\cite{TrivediYuMalz}.
Mizuta develops patching and merging methods that give nearly linear
gate complexity for channel simulation with sparsely separated
dissipative domains, and improved observable-estimation algorithms for
general one-dimensional local generators. Trivedi, Yu and Malz obtain the same leading
gate-count and depth scaling for lattice Lindbladian channel simulation,
using a unitary dilation and a compressed representation of environmental
excitations.

\section*{AI Usage}
The core ideas and overall approach of this work were developed by the authors. The proofs and manuscript were developed through multiple rounds of interaction between the authors and GPT-5.6 and GPT-6, which assisted with calculations, proof development, and writing. The authors take full responsibility for the correctness of the proofs and conclusions presented in this paper.

\section*{Acknowledgments}
The authors thank Xinzhao Wang for helpful discussion. This work is supported by the National Natural Science Foundation of China Grant No.~12361161602 and NSAF Grant No.~U2330201, the Quantum Science and Technology--National Science and Technology Major Project No.~2023ZD0300200, Beijing Natural Science Foundation Z250004, and Beijing Science and Technology Planning Project Grant No.~Z25110100810000. Y.~Wu acknowledges support from NSFC Grants No.~62501060, No.~62461160263, and No.~62371050.

\appendix

\section{Error of the collision representation}\label{app:collision}
We compare each rational collision channel with its local semigroup, then
the ordered semigroup product with $e^{T\mathcal L}$
\cite{Trotter,LindbladCommutator}. Positive-time CPTP contractivity adds
errors over periods. This comparison uses vacuum jump columns; spatial
localization uses the full occupied-bath unitary of Section~\ref{sec:model}.

\begin{lemma}[Ordered jump-channel identity]\label{lem:orderedjump}
For a stacked jump term $J$, put
\[
 A_J=J^\dagger J,\qquad R_h=(I+hA_J/4)^{-1},\qquad
 \mathcal S_{R_h}(Z)=R_hZR_h,\qquad a_J=\norm{A_J}.
\]
For every $h\ge0$, its reduced Cayley channel satisfies
\begin{equation}\label{eq:orderedjump}
 \Phi_{J,h}=\operatorname{id}+h\mathcal D_J\mathcal S_{R_h},
\end{equation}
with the sandwich \emph{before} the dissipator, and
\begin{align}
 \norm{\Phi_{J,h}-\operatorname{id}-h\mathcal D_J}_\diamond
       &\le h^2a_J^2,\label{eq:jumpeuler}\\
 \norm{e^{h\mathcal D_J}-\operatorname{id}-h\mathcal D_J}_\diamond
       &\le2h^2a_J^2,\label{eq:jumpexprem}\\
 \norm{\Phi_{J,h}-e^{h\mathcal D_J}}_\diamond
       &\le3h^2a_J^2.\label{eq:jumpfullerror}
\end{align}
\end{lemma}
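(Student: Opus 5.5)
The plan is to compute the reduced Cayley channel explicitly from the block form of $V_{J,h}$ on a vacuum bath input, and then compare it with the exponential by two Taylor-type remainder estimates.

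\textbf{Exact identity.} The diagonal vacuum block of $V_{J,h}$ is $P_0V_{J,h}P_0=(I-hA_J/4)R_h=2R_h-I$. The occupied column on vacuum input is $-i\sqrt h\,JR_h$; this follows from the off-diagonal block formula used in Lemma~\ref{lem:activity}. Hence
\[
 \Phi_{J,h}(Z)=(2R_h-I)Z(2R_h-I)+hJR_hZR_hJ^\dagger .
\]
Write $I-R_h=(h/4)A_JR_h$, so that $2R_h-I=R_h-(h/4)A_JR_h$. Expanding the first term gives
\[
 R_hZR_h-\tfrac h4\{A_J,R_hZR_h\}+\tfrac{h^2}{16}A_JR_hZR_hA_J .
\]
Next write $Z=(I+hA_J/4)R_hZR_h(I+hA_J/4)$. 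Then
\[
 Z-R_hZR_h=\tfrac h4\{A_J,R_hZR_h\}+\tfrac{h^2}{16}A_JR_hZR_hA_J .
\]
Subtracting this from the expansion above, the $h^2$ terms cancel and the anticommutator terms combine to $-\tfrac h2\{A_J,\mathcal S_{R_h}Z\}$. Adding the jump term then gives $\Phi_{J,h}(Z)=Z+h\mathcal D_J(\mathcal S_{R_h}Z)$, which is Eq.~\eqref{eq:orderedjump}. I would check this algebra carefully, because the sandwich order matters; it is the only delicate step.

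\textbf{Bounds.} For Eq.~\eqref{eq:jumpeuler}, write $\Phi_{J,h}-\operatorname{id}-h\mathcal D_J=h\mathcal D_J(\mathcal S_{R_h}-\operatorname{id})$. The dissipator satisfies $\norm{\mathcal D_J}_\diamond\le2a_J$, since $\norm{J}^2=a_J$ and both the sandwich and the anticommutator parts contribute at most $a_J$. Next, $R_h-I=-(h/4)A_JR_h$ has norm at most $ha_J/4$, and $R_h$ is a contraction. Therefore $\norm{\mathcal S_{R_h}-\operatorname{id}}_\diamond\le 2\norm{R_h-I}\le ha_J/2$. Combining these gives $h\cdot 2a_J\cdot ha_J/2=h^2a_J^2$.

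For Eq.~\eqref{eq:jumpexprem}, I would use the integral remainder
\[
 e^{h\mathcal D}-\operatorname{id}-h\mathcal D=\int_0^h(h-s)\,e^{s\mathcal D}\mathcal D^2\,ds .
\]
Since $e^{s\mathcal D}$ is CPTP, it is a diamond contraction, so the remainder is at most $(h^2/2)\norm{\mathcal D}_\diamond^2\le 2h^2a_J^2$.

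Eq.~\eqref{eq:jumpfullerror} then follows from the triangle inequality, giving $h^2a_J^2+2h^2a_J^2=3h^2a_J^2$. All estimates hold uniformly with a tensored reference, so they are diamond-norm estimates.
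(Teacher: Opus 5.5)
Your proposal is correct and follows the paper's proof essentially line by line. Both arguments use the vacuum Kraus operators $K_0=(I-hA_J/4)R_h$ and $K_\mu=-i\sqrt h\,L_\mu R_h$, cancel the quadratic terms via $Z=(I+hA_J/4)Y(I+hA_J/4)$ with $Y=R_hZR_h$, and then apply the bounds $\norm{\mathcal S_{R_h}-\operatorname{id}}_\diamond\le ha_J/2$ and $\norm{\mathcal D_J}_\diamond\le2a_J$, the integral remainder and the triangle inequality. The only slip is notational: the jump contribution should read $h\Tr_{\rm label}(JR_hZR_hJ^\dagger)=h\sum_\mu L_\mu R_hZR_hL_\mu^\dagger$, because the reduced channel traces out the bath label.
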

\begin{proof}
The vacuum Kraus operators are $K_0=(I-hA_J/4)R_h$ and
$K_\mu=-i\sqrt h\,L_\mu R_h$, with Gram sum
$R_h[(I-hA_J/4)^2+hA_J]R_h=I$. For $Y=R_hZR_h$ and $c=h/4$,
\[
 Z=(I+cA_J)Y(I+cA_J),\qquad
 K_0ZK_0^\dagger=(I-cA_J)Y(I-cA_J).
\]
Their difference cancels the quadratic terms; adding jump terms gives
$\Phi_{J,h}(Z)-Z=h\mathcal D_J(Y)$ for all complex matrices and reference
extensions, fixing the superoperator order. Since $\norm{R_h}\le1$,
$\norm{I-R_h}\le ha_J/4$ and
$R_hZR_h-Z=(R_h-I)ZR_h+Z(R_h-I)$,
\[
 \norm{\mathcal S_{R_h}-\operatorname{id}}_\diamond\le ha_J/2,
 \qquad \norm{\mathcal D_J}_\diamond\le2a_J.
\]
The latter follows from $\Tr_{\rm label}(JZJ^\dagger)$ and the two
anticommutator terms \cite[Chapter~3]{Watrous}, proving
Eq.~\eqref{eq:jumpeuler}. Also
\[
 e^{h\mathcal D_J}-\operatorname{id}-h\mathcal D_J
 =\int_0^h(h-t)e^{t\mathcal D_J}\mathcal D_J^2\,dt.
\]
The positive-time GKSL semigroup has diamond norm one
\cite{GKS,Lindblad}, so this is at most $h^2\norm{\mathcal D_J}_\diamond^2/2
\le2h^2a_J^2$. The triangle inequality proves the last bound.
This includes $h=0$ and an empty jump family. The Euler expression is an
analytical coefficient; the implemented step is the exact Kraus channel.
\end{proof}

The Hamiltonian estimate is the scalar Cayley approximation used in
\citet[Lemma~3]{ChenDyadic}; we include its short proof.
\begin{lemma}[Hamiltonian Cayley approximation]\label{lem:hcayleydefect}
For $H=H^\dagger$, $h\ge0$ and $h\norm H\le1$, define
$C_H(h):=(I-ihH/2)(I+ihH/2)^{-1}$. Then
\begin{equation}\label{eq:hcayleyerror}
 \norm{C_H(h)-e^{-ihH}}\le\frac{h^3\norm H^3}{12},\qquad
 \norm{\operatorname{Ad}_{C_H(h)}-e^{h\mathcal H_H}}_\diamond
 \le\frac{h^3\norm H^3}{6}.
\end{equation}
\end{lemma}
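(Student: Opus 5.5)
The plan is to reduce the operator estimate to a scalar inequality via the spectral theorem, and then derive the channel estimate from the standard bound on the difference of two unitary conjugations. Since $H=H^\dagger$, diagonalize $H=\sum_k x_k\ket{k}\bra{k}$ with real $x_k$ and $|x_k|\le\norm H$. Both $C_H(h)$ and $e^{-ihH}$ are functions of $H$ and share this eigenbasis. Their difference is therefore normal, and its norm is the largest scalar discrepancy
\[
 \max_k\bigl|f(hx_k)-e^{-ihx_k}\bigr|,\qquad f(y):=\frac{1-iy/2}{1+iy/2}.
\]
The hypothesis $h\norm H\le1$ restricts the argument to $y:=hx_k\in[-1,1]$.

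The key scalar step uses the fact that $f(y)$ is unimodular. Its numerator and denominator are complex conjugates, so the phase form $f(y)=e^{-2i\arctan(y/2)}$ holds. Hence
\[
 |f(y)-e^{-iy}|=|1-e^{i(y-2\arctan(y/2))}|\le|y-2\arctan(y/2)|.
\]
By oddness it suffices to treat $y\ge0$. From the alternating series for $\arctan$, or from the integral $\arctan u=\int_0^u(1+s^2)^{-1}ds\ge u-u^3/3$, we get $0\le y-2\arctan(y/2)\le2\cdot(y/2)^3/3=y^3/12$. Taking the maximum over $k$ gives $\norm{C_H(h)-e^{-ihH}}\le h^3\norm H^3/12$. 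This argument does not actually use $|y|\le1$, but that condition keeps the constants consistent with the paper's normalization.

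For the diamond-norm bound, note that $e^{h\mathcal H_H}=\operatorname{Ad}_{e^{-ihH}}$. For unitaries $U,V$ we use
\[
 UZU^\dagger-VZV^\dagger=(U-V)ZU^\dagger+VZ(U-V)^\dagger,
\]
which holds after tensoring with the identity on any reference. Hölder's inequality then gives $\norm{\operatorname{Ad}_U-\operatorname{Ad}_V}_\diamond\le2\norm{U-V}$, which is the same mechanism as Lemma~\ref{lem:samechannel}. Combined with the operator estimate, this yields $h^3\norm H^3/6$.

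The only real content is the elementary inequality $\arctan u\ge u-u^3/3$ together with the observation that both operators are exact unitaries with aligned phases. I expect no substantive obstacle; the main care is to compare phases rather than to expand the resolvent in powers of $h$, since expanding would give a worse constant.
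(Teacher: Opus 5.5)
Your proposal is correct and follows essentially the same route as the paper: spectral calculus, the Cayley phase $e^{-2i\arctan(x/2)}$, the phase-distance bound with $0\le x-2\arctan(x/2)\le x^3/12$ and oddness for $x<0$, then doubling via unitary conjugation as in Lemma~\ref{lem:samechannel}. The only cosmetic difference is that you obtain the cubic bound from $\arctan u\ge u-u^3/3$, while the paper writes $x-2\arctan(x/2)=\int_0^x t^2/(4+t^2)\,dt$; these are equivalent.
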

\begin{proof}
At a real eigenvalue $x$ of $hH$, the Cayley phase is
$e^{-2i\arctan(x/2)}$ and, for $x\ge0$,
\[
 0\le x-2\arctan(x/2)
   =\int_0^x\frac{t^2}{4+t^2}\,dt\le\frac{x^3}{12}.
\]
Oddness covers $x<0$. Spectral calculus and the phase-distance bound
give the operator estimate; Lemma~\ref{lem:samechannel} doubles it
for $e^{h\mathcal H_H}=\operatorname{Ad}_{e^{-ihH}}$.
\end{proof}

We use the positive-time first-order product bound of
\citet[Theorem~6, version~2]{LindbladCommutator}.
\begin{lemma}[Positive-time product formula]\label{lem:positivesplitting}
For finite-dimensional GKSL generators $\mathcal L_1,\ldots,\mathcal L_m$
and $h\ge0$,
\begin{equation}\label{eq:positivesplitting}
 \left\|e^{h\mathcal L_m}\cdots e^{h\mathcal L_1}
          -e^{h\sum_e\mathcal L_e}\right\|_\diamond
 \le\frac{h^2}{2}\sum_{e<f}\norm{[\mathcal L_e,\mathcal L_f]}_\diamond.
\end{equation}
\end{lemma}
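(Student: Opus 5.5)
We prove the first-order product bound by a telescoping (Duhamel) argument that uses only positive-time semigroup factors, so that every propagator appearing in the error terms is CPTP and hence a diamond-norm contraction. Write $\mathcal L:=\sum_e\mathcal L_e$, $P_k:=e^{h\mathcal L_k}\cdots e^{h\mathcal L_1}$ and $\mathcal K_k:=\sum_{e\le k}\mathcal L_e$. We proceed by induction on $m$, comparing $P_k$ with $e^{h\mathcal K_k}$. The induction step is
\[
 e^{h\mathcal L_k}e^{h\mathcal K_{k-1}}-e^{h\mathcal K_k}.
\]
Since $e^{h\mathcal L_k}$ is a contraction, the inductive error carries over unchanged. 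The only new term is the two-generator splitting error for $A=\mathcal L_k$ and $B=\mathcal K_{k-1}$, both GKSL generators. Summing these two-term errors over $k$ produces exactly the pairs $e<f$, because $[\mathcal L_k,\mathcal K_{k-1}]=\sum_{e<k}[\mathcal L_k,\mathcal L_e]$ and the triangle inequality gives $\norm{[\mathcal L_k,\mathcal K_{k-1}]}_\diamond\le\sum_{e<k}\norm{[\mathcal L_e,\mathcal L_k]}_\diamond$.

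For two generators we must show
\[
 \norm{e^{hA}e^{hB}-e^{h(A+B)}}_\diamond\le\frac{h^2}{2}\norm{[A,B]}_\diamond.
\]
Let $F(t):=e^{tA}e^{tB}$ and $G(t):=e^{t(A+B)}$. Differentiating gives $F'(t)=(A+B)F(t)+[e^{tA},B]e^{tB}$. The variation-of-constants formula then yields
\[
 F(h)-G(h)=\int_0^h e^{(h-t)(A+B)}[e^{tA},B]e^{tB}\,dt.
\]
Next we expand $[e^{tA},B]=\int_0^t e^{sA}[A,B]e^{(t-s)A}\,ds$. This identity follows from differentiating $s\mapsto e^{sA}Be^{(t-s)A}$ in $s$; the derivative is $e^{sA}(AB-BA)e^{(t-s)A}$.

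Every exponential that now appears has a nonnegative time argument and a GKSL generator, so each is a CPTP semigroup element with diamond norm one. Submultiplicativity of the diamond norm therefore bounds the integrand by $\norm{[A,B]}_\diamond$. Integrating $\int_0^h\int_0^t ds\,dt$ then gives the factor $h^2/2$.

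The point needing care is to keep all times nonnegative. The standard BCH or two-sided expansion introduces $e^{-sA}$, which is not a contraction for dissipative generators. The ordering chosen above avoids this: the commutator is flanked only by $e^{sA}$, $e^{(t-s)A}$, $e^{tB}$ and $e^{(h-t)(A+B)}$, all with nonnegative arguments. A second minor point is that $\mathcal K_{k-1}$ and $A+B$ are themselves GKSL generators, since a sum of GKSL generators is GKSL, so their semigroups are contractive as well. Apart from these two points the argument is routine finite-dimensional calculus, and $h=0$ is trivial.
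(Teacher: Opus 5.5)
Your proof is correct, and it is more self-contained than the paper's. The paper gives no derivation of its own. It quotes the first-order positive-time product bound of Theorem~6 (version~2) in Ref.~\cite{LindbladCommutator}, stated for the same product order with commutators $[\mathcal L_e,\sum_{f>e}\mathcal L_f]$ taken against \emph{tail} sums. It then only expands each tail by the triangle inequality, noting that every tail is itself a GKSL generator and so has a positive-time contractive semigroup.

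You prove the estimate directly instead. Peeling off the last factor by induction reduces everything to the two-generator case $A=\mathcal L_k$, $B=\mathcal K_{k-1}$, so your commutators are taken against \emph{prefix} sums. The same triangle inequality then gives the identical pairwise bound.

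Your double Duhamel representation places the commutator $[A,B]$ between $e^{(h-t)(A+B)}$, $e^{sA}$, $e^{(t-s)A}$ and $e^{tB}$, all at nonnegative times. This makes explicit the one point the paper leaves to the reference: no factor such as $e^{-sA}$ ever appears, so every propagator is CPTP with unit diamond norm.

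The citation buys brevity. Your argument buys a short, fully checkable proof of exactly the constant $h^2/2$, without depending on the external theorem's hypotheses or ordering conventions.
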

The source uses this product order and commutators
$[\mathcal L_e,\sum_{f>e}\mathcal L_f]$. We expand the tail sums and use
the triangle inequality; each tail is GKSL and hence positive-time contractive.

\begin{lemma}[Physical channel accuracy]\label{lem:collision}
Let $m_{\rm tot}\ge0$ be an integer, $h\ge0$ and $T=m_{\rm tot}h$.
The forward collision circuit, in the prescribed term order with an
independent fresh bath for each jump label, induces a CPTP channel.
If $h\lambda\le1$, its diamond distance from $e^{T\mathcal L}$ is at most
\begin{equation}\label{eq:physicalsharp}
 \frac12 n\lambda^2Th
 \le C_{\rm col}n\lambda^2Th,\qquad C_{\rm col}=8(\kappa+1).
\end{equation}
This estimate covers only the forward fresh-label circuit.
\end{lemma}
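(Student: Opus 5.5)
The plan is to prove one-period accuracy and then telescope over the $m_{\rm tot}=T/h$ periods. \emph{CPTP.} A period applies the Cayley unitaries in the prescribed term order. Each jump label has its own bath, initialized to vacuum and never touched again. Tracing out these baths gives the ordered product $\Phi_h=\prod_e\Phi_e$, where $\Phi_e=\operatorname{Ad}_{V_{H_e,h}}$ for an H term and $\Phi_e=\Phi_{J_e,h}$ for a J term. Each factor is CPTP: $\Phi_{J,h}$ has the Kraus representation with $\sum_aK_a^\dagger K_a=I$ from Lemma~\ref{lem:orderedjump}, and $\Phi_H$ is a unitary conjugation. Hence $\Phi_h^{m_{\rm tot}}$ is CPTP. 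By Eq.~\eqref{eq:freshcomposition} of Lemma~\ref{lem:samechannel}, it then suffices to show
\[
 \norm{\Phi_h-e^{h\mathcal L}}_\diamond\le\tfrac12 n\lambda^2h^2 .
\]
Multiplying by $m_{\rm tot}$ gives $\tfrac12n\lambda^2Th$, and $\tfrac12\le8(\kappa+1)$. The case $m_{\rm tot}=0$ is trivial.

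\emph{Splitting the period error.} Write $\mathcal L=\sum_e\mathcal L_e$, with $\mathcal L_e=\mathcal H_{H_e}$ or $\mathcal D_{J_e}$. I insert the ordered product of exact local semigroups $\prod_ee^{h\mathcal L_e}$ between $\Phi_h$ and $e^{h\mathcal L}$.

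The first difference, $\prod_e\Phi_e-\prod_ee^{h\mathcal L_e}$, telescopes over the CPTP factors. It is therefore at most $\sum_e\norm{\Phi_e-e^{h\mathcal L_e}}_\diamond$.
\begin{itemize}
\item For a J term, Eq.~\eqref{eq:jumpfullerror} gives $3h^2a_J^2$. Since $a_J=\norm J^2\le\norm J_F^2\le\beta_e^2/4=\lambda_e/4$, this is at most $\tfrac{3}{16}h^2\lambda_e^2$.
\item For an H term, Lemma~\ref{lem:hcayleydefect} gives $h^3\norm{H_e}^3/6$. Using $\norm{H_e}\le\alpha_e/2$ and $h\lambda_e\le1$, this is at most $h^2\lambda_e^2/48$.
\end{itemize}
The sitewise definition of $\lambda$ gives $\sum_e\lambda_e^2\le\lambda\sum_e\lambda_e\le\lambda\sum_x\sum_{e\ni x}\lambda_e\le n\lambda^2$. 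So this part is at most $\tfrac3{16}n\lambda^2h^2$.

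The second difference is the product-formula error, which Lemma~\ref{lem:positivesplitting} bounds by $\tfrac{h^2}{2}\sum_{e<f}\norm{[\mathcal L_e,\mathcal L_f]}_\diamond$. Only pairs with overlapping supports contribute. For those I use $\norm{[\mathcal L_e,\mathcal L_f]}_\diamond\le2\norm{\mathcal L_e}_\diamond\norm{\mathcal L_f}_\diamond$.

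\emph{Main obstacle: constants.} The crude estimates lose a factor $\kappa$, so I will use two sharper counts.
\begin{enumerate}
\item \emph{Sitewise pair count.} Every overlapping pair shares at least one site $x$. Therefore
\[
 \sum_{e<f\text{ overlapping}}\lambda_e\lambda_f\le\sum_x\sum_{e<f,\;e,f\ni x}\lambda_e\lambda_f\le\sum_x\tfrac12\Bigl(\sum_{e\ni x}\lambda_e\Bigr)^2\le\tfrac12n\lambda^2 .
\]
This avoids the factor $\kappa$ that appears when one sums over neighbours of $e$.
\item \emph{Sharper generator norms.} I aim for $\norm{\mathcal L_e}_\diamond\le\lambda_e/\sqrt2$.
\begin{itemize}
\item For J, $\norm{\mathcal D_J}_\diamond\le2a_J\le\lambda_e/2$, as in the proof of Lemma~\ref{lem:orderedjump}.
\item For H, $\mathcal H_{H}=\mathcal H_{H-c}$ for any scalar $c$. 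Choosing $c$ at the spectral midpoint, the stabilized norm satisfies $\norm{\mathcal H_H}_\diamond\le2\norm{H-c}=\lambda_{\max}(H)-\lambda_{\min}(H)\le\sqrt2\norm H_F\le\alpha_e/\sqrt2$.
\end{itemize}
I would double-check the diamond (stabilized) version of the shifted bound, since that is the delicate point.
\end{enumerate}
Together these give a splitting error of at most $\tfrac{h^2}{2}\cdot2\cdot\tfrac12\cdot\tfrac12n\lambda^2=\tfrac14n\lambda^2h^2$.

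\emph{Conclusion.} Adding the two parts gives $(\tfrac3{16}+\tfrac14)n\lambda^2h^2=\tfrac7{16}n\lambda^2h^2\le\tfrac12n\lambda^2h^2$ per period. Composing over the $m_{\rm tot}$ periods yields Eq.~\eqref{eq:physicalsharp}. If the shifted-Hamiltonian estimate fails to give $\alpha_e/\sqrt2$, the fallback is to bound commutators between two Hamiltonian terms directly via $\norm{[H_e,H_f]}$. Note that this analysis concerns only fresh baths, as the statement stresses.
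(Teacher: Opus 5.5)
Your proof is correct and follows the paper's argument. Both use the Cayley-versus-semigroup step errors from Lemmas~\ref{lem:orderedjump} and \ref{lem:hcayleydefect}, the positive-time product formula of Lemma~\ref{lem:positivesplitting}, a sitewise count of overlapping pairs, and telescoping over periods via Eq.~\eqref{eq:freshcomposition}. The only difference is the constant bookkeeping. The paper keeps the cruder $\norm{\mathcal L_e}_\diamond\le\lambda_e$ and absorbs the step errors $\tfrac3{16}\sum_e\lambda_e^2$ into the diagonal part $\tfrac12\sum_e|X_e|\lambda_e^2$ of the same square $\tfrac12\sum_x(\sum_{e\ni x}\lambda_e)^2$. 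You instead bound the diagonal separately and recover the factor through the spectral-midpoint estimate $\norm{\mathcal H_{H_e}}_\diamond\le\alpha_e/\sqrt2$, which does hold in the stabilized form, ending at $\tfrac7{16}$ per period rather than $\tfrac12$.
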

We retain the conservative $C_{\rm col}$ in Eq.~\eqref{eq:mesh}.
\begin{proof}
Normalization gives $\norm H\le\lambda_e/2$,
$\norm{J^\dagger J}\le\lambda_e/4$ and
$\norm{\mathcal L_e}_\diamond\le\lambda_e$.
The preceding lemmas give H/J step errors
$h^2\lambda_e^2/48$ and $3h^2\lambda_e^2/16$, using
$h\lambda_e\le1$ for H. Disjoint supports commute, so local substitutions
and Eq.~\eqref{eq:positivesplitting} bound one period by
\[
 h^2\left(\frac3{16}\sum_e\lambda_e^2+
      \sum_{e<f:\,X_e\cap X_f\ne\varnothing}\lambda_e\lambda_f\right).
\]
Every retained support is nonempty, and
\[
 \sum_{x\in\Lambda}\left(\sum_{e:\,x\in X_e}\lambda_e\right)^2
 =\sum_e|X_e|\lambda_e^2+
     2\sum_{e<f}|X_e\cap X_f|\lambda_e\lambda_f.
\]
Half this expression dominates the parenthesis and is at most
$n\lambda^2/2$. A once-used bath is untouched by all later gates, so its
partial trace commutes through them. Iterating identifies the actual
forward circuit's end-of-circuit trace with the product of the local
channels just bounded, also with an untouched reference. The localized circuit instead retains its shared baths and work
registers throughout the segment.
CPTP contractivity then telescopes over $m_{\rm tot}$ periods to
Eq.~\eqref{eq:physicalsharp}. For $h=0$ or $m_{\rm tot}=0$ it is the
identity channel with zero error.
\end{proof}

\section{Finite arithmetic and local gate synthesis}\label{app:arithmetic}
The tree-rotation angles are computed coherently from preserved
address prefixes and the current bath configuration; selected
matrix gates have fixed dimension. Quantized leaf tangents give each
nonidentity merge a known denominator bound, fixing precisions and loop
counts before generation. Local matrix synthesis retains basis order and
the full phase required for controlled use. Our rotation convention is
$R_\mu(\theta):=\exp(-i\theta\sigma_\mu/2)$, $\mu=y,z$.
Throughout this appendix we use the forward-window notation, complete
term list and allocated widths of Lemma~\ref{lem:directfinite}: $L\ge1$, $w=\lceil\log_2L\rceil$; an empty
window returns identity before these routines.

\subsection{Quantization and merge stability}
We write leaf half-angle tangents as
\[
 t_j=\frac{1-u_j^2}{1+u_j^2},\qquad b_j=\frac{2u_j}{1+u_j^2},\qquad 0\le u_j\le1.
\]
For a Hamiltonian occurrence or a jump occurrence on vacuum, $u_j$ is
the square root of a known rational number; on an occupied jump bath,
it is one. For call budget $0<\xi<1$, we set
\[
 b_{\rm q}=\left\lceil\log_2\frac{64\max\{1,L\}}{\xi}\right\rceil,
 \qquad \widetilde u_j=2^{-b_{\rm q}}\lfloor2^{b_{\rm q}}u_j\rfloor.
\]
We determine each static term's vacuum square-root floor once by integer
comparisons, scanning the term list once and reusing its period counts.
The $2$-Lipschitz function $2\arctan u$ gives, for the quantized transducer,
\[
 \norm{S-\widetilde S}\le2L2^{-b_{\rm q}}\le\xi/32.
\]
We compile $\widetilde S$ with its exact two trees and charge this perturbation
to the finite-call error; all catalyst, decay and reuse identities still
refer to the original $S$. Each quantized interval is identity or has
sine at least $2^{-b_{\rm q}}$: a leaf with $\widetilde u_j>0$ has
$\widetilde b_j:=2\widetilde u_j/(1+\widetilde u_j^2)
\ge\widetilde u_j\ge2^{-b_{\rm q}}$, and interval sine dominates
each leaf sine. An identity interval is recognized exactly from the term counts and
occupied-label membership tests.
Real leaves retain their selected block-encoding gate and private
phase even after rounding; padded leaves act as identity on these
coordinates.

\paragraph{Full-register arithmetic action.}
The fixed-width, fixed-loop evaluator validates descriptors, clips to
$[1,L]$ and preserves validation until uncomputation. Invalid descriptors
return zero angle. We put $m:=2^{-2b_{\rm q}}$. A computed denominator
below $m/2$ is replaced by one before division, with a retained guard
selecting zero angle; the precision below excludes this branch at every
valid nonidentity node. The retained inputs and guard bits make the same finite computation
reversible on the remaining strings.

For a nonidentity merge with quantized child cosines $a,c$ and approximations
$\widehat a,\widehat c\in[0,1]$ within $\tau_{\rm q}$,
\[
 d=1-a^2c^2\ge m,\qquad
 x_{\rm in}=\frac{1-c^2}{d},\qquad
 x_{\rm out}=\frac{a^2(1-c^2)}{d}.
\]
We choose the largest power of two satisfying
\begin{equation}\label{eq:explicitnodeprecision}
 \tau_{\rm q}\le\frac{m^2\xi^2}{2^{20}(w+1)^2},
\end{equation}
by integer comparison with the rational right side. Then
$\tau_{\rm q}\le m/8$, and $\widehat d=1-\widehat a^2\widehat c^2$ obeys
\[
 |d-\widehat d|\le4\tau_{\rm q},\qquad \widehat d\ge m/2.
\]
Either numerator satisfies $|n-\widehat n|\le4\tau_{\rm q}$ and
$0\le n,\widehat n\le1$, giving
\[
 \left|\frac nd-\frac{\widehat n}{\widehat d}\right|
 \le\frac{4\tau_{\rm q}}{m}+\frac{8\tau_{\rm q}}{m^2}
 \le\frac{12\tau_{\rm q}}{m^2}.
\]
We evaluate the polynomial numerators and denominators at the hatted
values exactly, using at most four times the child fractional width. Only the final restoring division is
rounded: allocate $b_{\rm div}=\lceil\log_2(1/\tau_{\rm q})\rceil$
fractional and $2b_{\rm q}+2$ integer bits for a quotient at most $2/m$.
Downward rounding and clipping to $[0,1]$ give probability error at most
$13\tau_{\rm q}/m^2\le32\tau_{\rm q}/m^2$.
The modulus
\[
 |2\arcsin\sqrt{x}-2\arcsin\sqrt{y}|
 \le\pi\sqrt{|x-y|},\qquad x,y\in[0,1],
\]
therefore gives per-merge angular error
$\pi\sqrt{32}\,\xi/[2^{10}(w+1)]$. Each level is block diagonal in the preserved address prefix and bath
configuration, so its operator error is the maximum node error.
There are $w$ levels per tree and root probability error at most
$2\tau_{\rm q}$; hence total angular error is
\[
 2w\,\frac{\pi\sqrt{32}\,\xi}{2^{10}(w+1)}
 +\pi\sqrt{2\tau_{\rm q}}
 \le\frac{\pi\xi}{2^{10}}
       \max\{2\sqrt{32},\sqrt2\}<\xi/16.
\]
An $R_y$ operator error is at most half its angular error. We reserve another
$\xi/16$ for finite angle evaluation across $2w+1$ levels and $\xi/8$
for $U_K$ and its saved adjoint in $O=U_K(Z\otimes I)U_K^\dagger$
(Appendix~\ref{app:endpoint}). With the $\xi/32$ leaf perturbation,
this leaves rounding slack below total call error $\xi$.
All estimates are uniform on valid bath configurations; the direct-sum
norm gives the same bound on coherent superpositions with one shared
quantization and precision choice.

\subsection{A fixed-round angle algorithm}
For each tree rotation, fix
\[
 \delta_\theta=\frac{\xi}{64(2w+1)},\qquad
 B_\theta=\left\lceil\log_2\frac{256}{\delta_\theta^2}\right\rceil,\qquad
 b_\theta=\left\lceil\log_2\frac{32}{\delta_\theta}\right\rceil.
\]
Choose the largest power of two
$\rho\le\tau_{\rm q}/[8(P+1)2^{w+1}]$ and put
$b_c=\log_2(1/\rho)$; rounded child products use this grid.
The rounded merge probability has $b_{\rm div}$ fractional bits,
whereas the root probability $1-\widehat a^2$ has at most $2b_c$.
Both also need one bit to represent the endpoint one. Define
\[
 b=2+\max\left\{w_*,b_c,b_{\rm div}+2b_{\rm q}+2,
 B_\theta,b_\theta,
 \left\lceil\log_2(P_{\rm cat}+C_{\rm cap}+1)\right\rceil\right\}.
\]
Here $w_*$ is the allocated-width bound of Lemma~\ref{lem:directfinite}.
All numerical operand and scratch words below have $O(b)$ bits,
including sign, integer, carry and remainder fields. We denote by
$b_{\rm arith}\ge b$ a common complete word width obtained from the
specified widths. The estimates below give $b_{\rm arith}=O(b)$;
the total number of retained Boolean work bits is counted separately.

The merge evaluator returns a squared rotation coefficient, which must
be converted coherently to an angle. Its inverse derivative diverges
near probabilities zero and one, so we use interval comparisons with
the monotone function
$F(x):=\sin^2(\pi x/2)$ on $[0,1]$. The exact probabilities
zero and one return $x=0$ and $x=1$, respectively. For an interior
probability, the evaluator returns at each bisection midpoint a
certified interval of width
$\delta_\theta^2/256$, where $\delta_\theta$ is the angular
error budget. A probability strictly below or above that interval
determines the corresponding half of the bracket. If it lies inside,
the algorithm returns the midpoint: the preceding modulus bounds this
case by $\pi\delta_\theta/16$, so resolving the comparison further
would serve no purpose. Otherwise the algorithm stops once the bracket
width is at most $\delta_\theta/8$ and returns its midpoint.
There are at most
$1+\lceil\log_2(8/\delta_\theta)\rceil$ rounds, and a preserved
halt bit makes any remaining rounds controlled identities. This also
specifies the action at exact midpoint equality.

Efficient high-precision evaluation of elementary functions is standard
\cite[Theorems~7.1--7.2]{BrentElementary}. To obtain fixed finite circuits, we use power series with argument
reduction
\cite[Section~4.4.2]{BrentZimmermann}; Lemma~\ref{lem:boundedsine}
specifies fixed integer loop counts, directed rounding and a certified
interval output, together with bounds on Boolean gates and retained
intermediate bits.

\begin{lemma}[A bounded sine enclosure]\label{lem:boundedsine}
For an integer $B\ge1$ and a dyadic $x\in[0,1]$ with $O(B)$ bits, a fixed
Boolean network returns a dyadic interval containing
$F(x)=\sin^2(\pi x/2)$ and of width at most $2^{-B}$. Its size and
number of intermediate bits are $O(B^{5/2})$.
\end{lemma}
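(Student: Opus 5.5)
The plan is a power series for $\sin$ with argument reduction by repeated halving, evaluated in fixed-point interval arithmetic with directed rounding, and then squared. Set $y:=\pi x/2\in[0,\pi/2]$. First compute a dyadic enclosure of $\pi/2$ of width $2^{-B-c}$. This can be a fixed constant written into the circuit: the bits of $\pi$ are known classically, and any rigorous enclosure of $\pi$ hardwired to $O(B)$ bits works. Multiplying by $x$ with schoolbook multiplication, rounding down and up separately, gives an interval $[y_-,y_+]$ of width $O(2^{-B-c})$.

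Next comes argument reduction. Choose $k=\lceil\sqrt B\,\rceil$ and set $z=y/2^k$, which is exact since it is a shift. Then $|z|\le 2^{-k}$. The Taylor series $\sin z=\sum_{i<N}(-1)^iz^{2i+1}/(2i+1)!$ truncated at $N=O(\sqrt B)$ terms has alternating-series remainder at most $2^{-k(2N+1)}/(2N+1)!\le 2^{-B-c'}$. Evaluate it with Horner's scheme in interval arithmetic. Division by each fixed integer $(2i)(2i+1)$ is a restoring division of an $O(B)$-bit word, and each multiplication costs $O(B^2)$. Compute $\cos z$ in the same way, or as $\sqrt{1-\sin^2 z}$; the double-angle form $(\sin 2z,\cos 2z)=(2sc,\,1-2s^2)$ avoids square roots. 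Then apply the double-angle step $k$ times. Each step involves a constant number of $O(B)$-bit multiplications with downward and upward rounding.

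The main obstacle is the error growth under the doubling steps. The map $(s,c)\mapsto(2sc,1-2s^2)$ has Lipschitz constant at most a fixed constant, say $4$, on the unit box. After $k$ steps the enclosure width is therefore multiplied by at most $4^k=2^{2\sqrt B}$. I will absorb this factor with guard bits: run all words at $B+2k+c''=O(B)$ bits of precision, so the final width is at most $2^{-B-2}$. One must also check that interval widths (not just midpoints) propagate: each interval operation adds one rounding ulp and multiplies the width by at most the Lipschitz factor. An induction over the $N+k$ stages then gives the bound. Finally, square the enclosure of $\sin y$, clipped to $[0,1]$, with directed rounding; this at most doubles the width and yields the output interval of width $2^{-B}$.

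For the cost, all loop counts are fixed: $N$ and $k$ are both $O(\sqrt B)$, and there are $O(\sqrt B)$ multiplications or divisions of $O(B)$-bit words at $O(B^2)$ Boolean gates each. This gives $O(B^{5/2})$ gates in total. Storing every intermediate bit, as Lemma~\ref{lem:booleanframe} requires, gives the same bound on intermediate bits. Since $x$ enters only through the input word, the network is a single fixed Boolean circuit that is uniform in $x$.
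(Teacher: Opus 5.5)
Your construction follows the paper's route in all essentials. You reduce the argument by $2^{\Theta(\sqrt B)}$, use $O(\sqrt B)$-term series for sine and cosine, and control $O(\sqrt B)$ doublings by a max-norm Lipschitz constant $4$ with about $2k$ guard bits. You finish with a squaring, and the cost is $O(\sqrt B)$ schoolbook operations on $O(B)$-bit words. Carrying directed-rounding intervals throughout, instead of tracking a pointwise error and widening only at the end as the paper does, is a cosmetic difference.

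The error budget as you wrote it does not close, however. The factor $4^k$ acts on the whole width of the $(\sin z,\cos z)$ enclosure entering the doubling stage, not only on rounding errors, and longer words do nothing to a truncation error. You bound the series remainder only by $2^{-B-c'}$, so the final width can be of order $4^k2^{-B-c'}=2^{-B-c'+2\lceil\sqrt B\rceil}$, far above $2^{-B}$. Similarly, a $\pi/2$ enclosure of width $2^{-B-c}$ gives $z$ a width of about $2^{-B-c-k}$, which the doublings can inflate to $2^{-B-c+k}$.

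The repair costs nothing asymptotically. Choose $N$ so that both series remainders are below $2^{-(B+2k+c)}$; this still allows $N=O(\sqrt B)$ since $k=\lceil\sqrt B\rceil$. Also hardwire $\pi/2$ and round $\pi x/2$ at the same $B+2k+O(1)$ bits as every other word. This is how the paper sets its parameters: its Taylor degree satisfies $2(s-1)m\ge B+2s+16$, so that $4^s2^{-2(s-1)(m+1)}\le2^{-(B+18)}$. Its $\widehat\pi$ and reduced argument use the same $W=B+2s+O(\log B)$ fractional bits.
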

\begin{proof}
The enclosure is constructed in three stages: power-of-two argument
reduction and truncated sine/cosine series, a fixed number of doublings
to recover the original angle, and outward rounding of the squared
sine interval. The parameters below balance the series remainder with
the rounding error accumulated during those doublings.
We use the following integer parameters:
\begin{align*}
 s&=\max\{2,\lceil\sqrt{B+1}\rceil\},\qquad
 m=\left\lceil\frac{B+2s+16}{2(s-1)}\right\rceil,\\
 W&=B+2s+\lceil\log_2(4m+8)\rceil+16,\qquad d=2^{-W}.
\end{align*}
Here $s$ is the number of angle doublings and $m$ is the Taylor
degree parameter. The square root in the definition of $s$ is an
integer ceiling, computed by integer comparisons. Rounded sine and cosine approximations use two's-complement registers with $W$ fractional bits and
four integer bits, including the sign, so their range is $[-8,8)$.
The fixed classical constant $\widehat\pi$ may use $W+1$ fractional
bits. The rounding map $\mathcal R_d$ rounds downward to the
$d$-grid, with error strictly below $d$; additions of these fixed-point values
are exact. Exact products use separate scratch registers before
the prescribed rounding, with widths specified below.

The classical compiler first computes a dyadic $\widehat\pi$ with
$|\widehat\pi-\pi|\le d$. A finite rational construction uses
Machin's identity
$\pi=16\arctan(1/5)-4\arctan(1/239)$ and the first $N$ terms of
each alternating arctangent series, with
$N=\lceil(W+10)/4\rceil$. The combined remainder is at most
$20\,5^{-(2N+1)}<d/4$. Rounding the rational result to the
$d/2$-grid gives the required approximation. These constants are
classical preprocessing data; each use loads their fixed dyadic bits.
The identity follows from the tangent addition formula and the
principal ranges of the two positive arctangents.

With clipping to the indicated intervals, the program sets
\[
 \widehat z=\operatorname{clip}_{[0,2]}
       \mathcal R_d(\widehat\pi x/2),\qquad
 \widehat t=\operatorname{clip}_{[0,2^{1-s}]}
       \mathcal R_d(\widehat z/2^s),\qquad
 q=\mathcal R_d(\widehat t^2).
\]
The first value differs from $\pi x/2$ by at most $2d$, and the
second differs from $\pi x/2^{s+1}$ by at most $2d$. In particular,
$0\le\widehat t\le1/2$ and $0\le q\le1/4$.

Starting with $C_0=1$ and $S_0=\widehat t$, the program performs
exactly $m$ steps, indexed by $j=0,\ldots,m-1$:
\[
 C_{j+1}=\mathcal R_d\!\left(
       \frac{-\mathcal R_d(C_jq)}{(2j+1)(2j+2)}\right),\qquad
 S_{j+1}=\mathcal R_d\!\left(
       \frac{-\mathcal R_d(S_jq)}{(2j+2)(2j+3)}\right).
\]
It sums these terms exactly and clips both sums to $[-1,1]$.
For comparison with the exact Taylor terms at $\widehat t$, each
recurrence has error at most one eighth of the preceding term error
plus $3d$. Thus every term error is at most $4d$, and either sum has
rounding error at most $4(m+1)d$.
The alternating-series remainder for either function is at most
\[
 \widehat t^{2m+2}\le2^{-2(s-1)(m+1)}.
\]
Together with the initial argument error, this bounds the two-coordinate
error relative to the sine and cosine of $\pi x/2^{s+1}$ by
$(4m+6)d+2^{-2(s-1)(m+1)}$. Clipping preserves this bound.
All partial sums and rounded fixed-point values in this Taylor
calculation lie in $(-4,4)$, so the allocated integer bits suffice.
Scaled integer numerators and the explicit divisors use the separate
scratch registers below.

Writing $(c_0,v_0)$ for the two clipped sums, the program now makes
exactly $s$ angle-doubling steps:
\[
 c_{j+1}=\operatorname{clip}_{[-1,1]}
       \bigl(\mathcal R_d(c_j^2)-\mathcal R_d(v_j^2)\bigr),\qquad
 v_{j+1}=\operatorname{clip}_{[-1,1]}
       \bigl(2\mathcal R_d(c_jv_j)\bigr).
\]
On $[-1,1]^2$, the map $(c,v)\mapsto(c^2-v^2,2cv)$ is
4-Lipschitz in the maximum norm. Each rounded step adds at most $2d$.
Consequently the final coordinate error is at most
\[
 4^s\left((4m+7)d+2^{-2(s-1)(m+1)}\right)
 <2^{-(B+15)}.
\]
Indeed, the choice of $W$ gives
$4^s(4m+7)2^{-W}<2^{-(B+16)}$. The choice of $m$ also gives
$2(s-1)m\ge B+2s+16$, and hence
\[
 4^s2^{-2(s-1)(m+1)}
 \le2^{-(B+2s+14)}\le2^{-(B+18)}.
\]
Their sum is smaller than the displayed error bound.
The interval of radius $2^{-(B+12)}$ around $v_s$, intersected with
$[-1,1]$, therefore contains $\sin(\pi x/2)$. Its squared range has
width at most $2^{-(B+10)}$. Computing that range and rounding its
endpoints outward to the $d$-grid adds at most $2d$, leaving width
less than $2^{-B}$ as required.

For an input with $K$ fractional bits, the compiler stores its
unsigned numerator in $K+1$ bits and the constant numerator
$H=2^{W+1}\widehat\pi$ in $W+3$ bits. Their full product fits in
$W+K+4$ bits. Products and endpoint squares of two signed fixed-point values use $2W+8$ bits. The largest Taylor divisor is
$(2m)(2m+1)$, so all divisors fit in
$\lceil\log_2((2m)(2m+1)+1)\rceil$ bits. Quotient, remainder,
carry and rounding logic are included in the retained Boolean
intermediate outputs. The squared-range minimum is zero when its
input interval crosses zero, and otherwise is the smaller endpoint
square; these cases use fixed comparisons and selections.

There are $O(m+s+1)=O(\sqrt B)$ fixed-width arithmetic steps, and
$W=O(B)$. Since $K=O(B)$, all operand and scratch widths are
$O(B)$. Each multiplication or restoring division by a specified
integer uses $O(B^2)$ Boolean gates by schoolbook arithmetic
\cite[Chapter~1]{BrentZimmermann}; shifts, clipping and additions are
smaller. Division denominators are positive explicit integers with
$O(\log(m+1))$ bits. All loop counts, wire widths and rounding
operations have been fixed above. The resulting Boolean network and
its stored intermediate outputs therefore have size $O(B^{5/2})$.
Lemma~\ref{lem:booleanframe} gives its coherent
compute--copy--uncompute implementation.
\end{proof}

At each bisection midpoint we apply Lemma~\ref{lem:boundedsine}
with $B=B_\theta$. There are $O(B_\theta)$ rounds, so each midpoint
has $O(B_\theta)$ bits. The target probability can have more bits:
we compare it exactly with the returned dyadic endpoints after
padding to a common denominator, using $O(b)$ bits and gates.
One round therefore costs $O(b^{5/2})$ Boolean gates and retained
intermediate bits, and the complete inversion costs $O(b^{7/2})$.
This is within the $O(b_{\rm arith}^{6})$ allowance used below.
A preserved halt bit pads the remaining rounds after an early return.
The inputs remain unchanged while the output controls the rotation,
after which the stored evaluator is reversed.

For a rational target probability, the word width additionally covers
its numerator and denominator; exact comparisons use $O(b^2)$-gate
cross multiplication, leaving the same inversion bound. The rational
subtree probabilities in Appendix~\ref{app:reuse-registers} have
$O(q+\log(q+1))$-bit numerators and denominators; their corresponding
common width is $O(q+\log(1/\delta_\theta)+1)$.

Approximating $\pi x$ and the final
angle each to $\delta_\theta/8$ adds at most $\delta_\theta/4$.
Both the midpoint-enclosure return and the final-bracket return have
angular error at most $\pi\delta_\theta/16$, so the complete bound is
\[
 (\pi/16+1/4)\delta_\theta<\delta_\theta.
\]
The induced one-qubit operator error is consequently below
$\delta_\theta/2$.

It remains to compute the child products to the accuracy assumed
above. We keep all intermediate values in $[0,1]$ by clipping and
round each initial term cosine and multiplication to the chosen
$\rho$-grid. Square-and-multiply at
exponent at most $2^w$ has error at most $3\,2^w\rho$; multiplying
at most $P$ powers of term cosines is therefore accurate within $\tau_{\rm q}$.
The one-leaf case is evaluated directly. For the fixed term list and
allocated widths, the precision choices give
$b_{\rm q},b_{\rm div},b_c,B_\theta,b_\theta
=O(w_*+\log(1/\xi)+\log(P_{\rm cat}+C_{\rm cap}+1))$.
The fixed-degree products, division shifts and signed sine words
increase these widths by only constant factors. Thus the common
complete arithmetic word width satisfies
\begin{equation}\label{eq:bitwidth}
 b_{\rm arith}=O\bigl(w_*+\log(1/\xi)+\log(P_{\rm cat}+C_{\rm cap}+1)\bigr).
\end{equation}
The selected angle width satisfies
$4\,2^{-b_\theta}\le\delta_\theta/8$. The angle-bit circuit in
Eq.~\eqref{eq:angle-register-action} therefore implements the rounded
angle with the prescribed accuracy. Its preserved descriptor and
occupation inputs allow the evaluator to be reversed after the
controlled rotations, as specified in Section~\ref{sec:implementation}.

\subsection{Constant-dimensional local unitaries}\label{app:endpoint}
The selected block encodings require local matrix gates in a fixed
computational-basis order. We preserve the order supplied by the support
list through the numerical evaluator and wire placement. Their overall
phase is also retained, because controlling a local unitary by its term
label turns that phase into a relative phase between branches.

The ordered support list fixes tensor-factor order, with increasing qubit
index within each site. A local matrix on $k$ qubits uses the basis ordinal
$\operatorname{ind}(x)=\sum_{j=0}^{k-1}2^{k-1-j}x_j$ for
$\ket{x_0}\otimes\cdots\otimes\ket{x_{k-1}}$.
For a placement $f$ into the system wires, the embedded matrix has entries
\begin{equation}\label{eq:matrix-placement}
 A[f]_{x,y}=A_{\operatorname{ind}(x\circ f),\operatorname{ind}(y\circ f)}
 \prod_{t\notin\operatorname{im}f}\delta_{x_t,y_t}.
\end{equation}
Arithmetic registers use little-endian binary integers. Accordingly,
the local gate compiler lists the $k$ system wires in reverse support
order when its numerical matrix evaluator uses little-endian ordinals;
this leaves Eq.~\eqref{eq:matrix-placement} unchanged. Bath labels keep
their separate prescribed coordinates. This convention applies to both
$H_e$ and each $L_{e\mu}$, and hence to their Cayley transforms.

The local matrix includes the system wires, the extracted bath value,
and the block-encoding ancilla. To specify its placement, let
$f(0),\ldots,f(k-1)$ be the ordered system wires in
Eq.~\eqref{eq:matrix-placement}. For a jump term, we write
$r_E:=\log_2d_E$ for the width of the extracted bath-value register
$\mathsf F$. Let $\mu_t$ be the bit stored in
$\mathsf F_t$, and write $\mu=\sum_{t=0}^{r_E-1}2^t\mu_t$ for
the extracted bath value. If $c_Q\in\{0,1\}$ is the value of
$\mathsf Q$, the matrix uses the ordinal
\begin{equation}\label{eq:endpoint-order}
 \operatorname{ind}_{\rm end}(c_Q,\mu,x)
 =c_Q 2^{k+r_E}+\mu 2^k+\operatorname{ind}(x\circ f).
\end{equation}
A numerical gate evaluator with little-endian indexing therefore uses the operand list
\[
 [f(k-1),\ldots,f(0),\mathsf F_0,\ldots,\mathsf F_{r_E-1},\mathsf Q].
\]
For a Hamiltonian term the bath register is absent ($r_E=0$, $\mu=0$). This realizes the original term list matrices with the bath and ancilla blocks in their stated order.

For $K=K^\dagger$ with $\norm K\le1/2$, define
\[
 A_K:=\sqrt{(I+K)/2},\quad B_K:=\sqrt{(I-K)/2},\quad
 U_K:=\begin{pmatrix}A_K&-B_K\\B_K&A_K\end{pmatrix}.
\]
The commuting functions $A_K,B_K$ satisfy $A_K^2+B_K^2=I$, so
$U_K\in SU(d)$ in its fixed dimension and
$O=U_K(Z\otimes I)U_K^\dagger$. Both square-root arguments have spectrum
in $[1/4,3/4]$. We approximate these matrix functions by their convergent power series \cite{Higham}: the binomial series and $\norm K\le1/2$ give
\[
 \sqrt{I\pm K}=\sum_{m=0}^{M}\binom{1/2}{m}(\pm K)^m+E_M,
 \qquad \norm{E_M}\le2^{-M}.
\]
With certified scalar $1/\sqrt2$, $M=O(p)$ yields $p$ entry bits in
polynomial classical time \cite{BrentZimmermann}.

\begin{lemma}[Finite-precision local decomposition]\label{lem:certifiedendpoint}
For a fixed $d\ge2$, let $U\in SU(d)$ have a deterministic polynomial-time
entry-interval algorithm. For rational $0<\zeta\le1$, one can construct
$O_d(1)$ two-level rotations with angles specified to
$O_d(1+\log(1/\zeta))$ bits. Their product $\widetilde U\in SU(d)$
satisfies $\norm{U-\widetilde U}<\zeta/4$, including its phase.
The classical time is polynomial in the input length and accuracy bits.
\end{lemma}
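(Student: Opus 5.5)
I would prove Lemma~\ref{lem:certifiedendpoint} by the standard Givens (two-level) reduction, done carefully in finite precision so that the global phase is kept. The plan has four steps: compute rational approximations to the entries, run a column-by-column elimination on the approximation, absorb the leftover determinant phase into the last rotation, and propagate errors through a fixed number of steps.

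\textbf{Step 1 (entry approximation).} Fix a working precision $p=c_d(1+\lceil\log_2(1/\zeta)\rceil)$, with $c_d$ to be chosen later. Call the entry-interval algorithm for every entry of $U$ at precision $2^{-p}$ and take the interval midpoints. This gives a dyadic matrix $\widehat U$ with $\norm{U-\widehat U}\le d2^{-p}$, in classical time polynomial in $p$.

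\textbf{Step 2 (elimination).} Apply the usual elimination to $\widehat U$, working on columns $1,\ldots,d-1$ in turn. To zero entry $(i,j)$ against the pivot $(j,j)$, use a two-level unitary on coordinates $j,i$. I would parametrize it as $R_z$ phases on each coordinate followed by one $R_y(\theta)$. The angles come from $\operatorname{atan2}$ and $\arg$ of the current entries, rounded to $p$ bits. Once a column is cleared, a further $R_z$-type two-level phase on coordinates $j,j+1$ makes the pivot real and positive.

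After $O(d^2)$ such steps the product $G^\dagger\widehat U$ is close to a diagonal matrix $\operatorname{diag}(1,\ldots,1,e^{i\phi})$. Because $U\in SU(d)$, $e^{i\phi}$ is within $O_d(2^{-p})$ of $1$. I then drop this residual phase, which keeps $\widetilde U:=G$ inside $SU(d)$, since every factor $e^{-i\theta\sigma_\mu/2}$ embedded at two levels has determinant one. Dropping the phase costs only $O_d(2^{-p})$. This is exactly where the statement's ``including its phase'' is secured: the output is never corrected only up to a global phase.

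\textbf{Step 3 (error propagation).} Each rotation is a Lipschitz function of its input entries, except near a vanishing pair $(|x|,|y|)\approx 0$. In that degenerate case I would select the identity rotation by a fixed threshold rule. Doing so changes the product by at most the tiny residual that was being eliminated, so the error stays bounded.

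A Step~2 hybrid then telescopes the per-rotation errors. There are $O_d(1)$ rotations, each with angle error $2^{-p}$, and each step's input perturbation is at most $O_d(1)$ times the previous one. The total is therefore $\norm{U-\widetilde U}\le K_d2^{-p}$ with a constant $K_d$ that grows at most geometrically in the fixed number of steps. Choosing $c_d$ large enough makes this less than $\zeta/4$.

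\textbf{Main obstacle.} The hard step is controlling the instability of $\arg$ and $\operatorname{atan2}$ at near-zero pivots while keeping the error bound uniform. My first tool would be the threshold rule described in Step~3. As a backup, I would do the error bound without per-step Lipschitz estimates. In that version I compare $G^\dagger U$ directly with unitaries: unitarity of the exact partial products means the eliminated entries are small whenever they were set small in $\widehat U$. Unitarity also forces the final diagonal to have unit-modulus entries, so small off-diagonal entries yield a $\sqrt{\cdot}$-type bound. That still gives $O_d(2^{-p/2})$, which is absorbed by doubling $c_d$.

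The classical time is polynomial because all arithmetic uses $O_d(p)$-bit numbers over $O_d(1)$ steps.
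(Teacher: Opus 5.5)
Your construction is essentially the paper's: two-level eliminations chosen from dyadic approximations with a skip threshold, every factor in $SU(2)$, and the determinant forcing the last diagonal phase. The paper builds one final $\widehat D\in SU(d)$ from normalized diagonal phases instead of your adjacent $R_z$ pushes; either works.

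The problem is the error analysis you put first. In Step~3 the chosen rotation depends on $(x,y)/\|(x,y)\|$. Near a threshold $\tau$ its Lipschitz constant is therefore of order $1/\tau$, not $O_d(1)$. Since $\tau$ must be $O_d(\zeta)$ for skipped residuals to be harmless, the per-step amplification is $\mathrm{poly}(1/\zeta)$. That alone could be absorbed into $c_d$, because there are only $O_d(1)$ steps. The real obstruction is that the threshold rule is discontinuous. A forward hybrid compares rotation sequences chosen from slightly different inputs, and two inputs differing by $2^{-p}$ can fall on opposite sides of the threshold. One then skips while the other applies a rotation that differs from the identity by $O(1)$. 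A telescoping over rotation choices, as you describe it, therefore does not give $\norm{U-\widetilde U}\le K_d2^{-p}$. Step~2's claim that the result is close to a \emph{diagonal} matrix also silently needs a unitarity argument, not just elimination.

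Your backup is the right proof, and it is how the paper argues, so make it the main line. Never compare rotation choices. Whatever exactly unitary $G_k$ you pick, analyze the exact product $B_k=G_k\cdots G_1U$. It differs from your tracked approximation only by accumulated additive rounding. Hence, after its step, the targeted entry of $B_k$ is at most the threshold plus $O(2^{-p})$, whether you rotated or skipped. Later two-level factors act only on rows below the processed column and preserve its tail norm.

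For the final unitary $B$, the two off-diagonal blocks at each diagonal cut have equal Frobenius norm. The upper triangle is therefore bounded linearly by the lower one; the paper gets total off-diagonal Frobenius norm at most $2d^2\tau$. This gives $|B_{ii}|>1/2$, with no square-root loss.

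The instability of $\arg$ and $\operatorname{atan2}$ then never enters the analysis. In your version the threshold is needed only for an exactly zero approximate pair. In the paper's version it certifies the divisor in $G(a,b)$, so that interval enclosures of the exact $B$ can be maintained for the next step.
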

\begin{proof}
Our starting point is the two-level decomposition reviewed by
\citet[Section~8]{Barenco}, together with the local gate constructions
of \citet{SBM}. An entry-interval algorithm gives an approximate pivot, which we
compare with a dyadic threshold: with $N=d(d-1)/2$, we choose
$\zeta/(128d^3)<\tau\le\zeta/(64d^3)$ and set $\eta=\tau/8$.
A pivot below this scale can be skipped within the error budget, whereas
a retained pivot has a known lower bound for division.

We process the below-diagonal entries columnwise, from bottom to top.
For rows $(j,k)$ of the current matrix $B$, the entry algorithm supplies
dyadic centres $a,b$ of $B_{jj},B_{kj}$ within $\eta$. If
$|a|^2+|b|^2\le\tau^2$, the algorithm chooses identity; otherwise
it chooses the exact rotation
\[
 G(a,b):=\frac1{\sqrt{|a|^2+|b|^2}}
 \begin{pmatrix}\bar a&\bar b\\-b&a\end{pmatrix}.
\]
In the second case the divisor exceeds $\tau$, and subsequent entry
enclosures track the exact unitary chosen by this rotation. Either
choice leaves a residual of at most $2\tau$. Later row operations
preserve the earlier column-tail norms, so the final lower triangle has
squared Frobenius norm at most $4N\tau^2$. To control the upper
triangle as well, we use unitarity: across each diagonal cut, the squared
masses in the two directions are equal. Summing over these cuts bounds
the total off-diagonal Frobenius norm by
$e\le2\sqrt{dN}\tau\le2d^2\tau$.

This bound gives $|B_{ii}|>1/2$. For analysis, we set
$z_i:=B_{ii}/|B_{ii}|$ and $D_0:=\operatorname{diag}(z_1,\ldots,z_d)$.
The column norms give
$1-|B_{ii}|\le\sum_{j\ne i}|B_{ji}|^2$.
Since $e<1$, the diagonal correction and determinant multilinearity yield
\begin{align*}
 \norm{B-D_0}&\le e+e^2\le2e,\\
 |\det D_0-1|&\le d\norm{D_0-B}\le2de.
\end{align*}
We choose $D_*$ with first $d-1$ entries $z_1,\ldots,z_{d-1}$ and
last entry $(z_1\cdots z_{d-1})^{-1}$. This changes the last entry
of $D_0$ by $|\det D_0-1|$, so
$\|B-D_*\|\le2(d+1)e$. For the finite construction, we evaluate the
final diagonal entries of $B$ to obtain dyadic centres $w_i$ with
$|w_i-B_{ii}|\le\eta$, and set $\widehat z_i:=w_i/|w_i|$ for $i<d$.
The lower bound $|B_{ii}|>1/2$ ensures that these centres are nonzero and
gives $|\widehat z_i-z_i|\le4\eta$. We form $\widehat D$ with first
$d-1$ entries $\widehat z_i$ and last entry
$(\widehat z_1\cdots\widehat z_{d-1})^{-1}$. Thus
$\widehat D\in SU(d)$ exactly, and telescoping the product of its phases
gives $\|\widehat D-D_*\|\le4(d-1)\eta$. These estimates give
\[
 \|U-G_1^\dagger\cdots G_N^\dagger\widehat D\|
 \le2(d+1)e+4(d-1)\eta\le7d^3\tau<\zeta/8.
\]
For a two-level unitary with first column $(\alpha,\beta)^{\mathsf T}$,
the phase-exact factorization is
$R_z(v-u)R_y(\theta)R_z(-u-v)$, where
$\theta:=2\operatorname{atan2}(|\beta|,|\alpha|)$,
$u:=\arg\alpha$ and $v:=\arg\beta$ for nonzero components.
For a zero component, the algorithm sets its phase parameter to zero.
When $\beta=0$, the rotation has $\theta=0$ and is independent of
$v$; when $\alpha=0$, it has $\theta=\pi$ and is independent of
$u$. Thus this convention also gives the stated unitary in both
degenerate cases. For these rotations, the only exact zero tests concern the selected
dyadic numerators; every nonzero one has a known dyadic lower bound.
We evaluate arguments in $(-\pi,\pi]$ and retain the resulting Euler
angles with their full phases. There are at most $Q=3N+d-1$ one-axis
rotations, so computing each angle within $\zeta/(4Q)$ adds at most
$\zeta/8$ to the operator error \cite{BrentZimmermann}. Combined
with the elimination estimate, this proves the claimed approximation.
Finally, the fixed number of matrix operations and the lower bounds on
all nontrivial divisors make the guard precision and classical cost
polynomial in the stated input and accuracy parameters.
\end{proof}

The two-level rotations are lowered on the fixed local qubit register.
For distinct basis strings $x,y$, X gates send $x$ to $0$; CNOTs from
one set bit of $x\oplus y$ send the other string to that single-bit
basis vector. Conjugating a rotation controlled on all other bits being
zero by this affine permutation gives the desired two-level gate.
Standard controlled-rotation and parity decompositions
\cite{Barenco,SBM} then use a constant number of one-/two-qubit gates.
Here the fixed local dimension bounds the number of controls. We use the complete controlled-unitary
circuits of \citet[Section~7]{Barenco}, after the stated basis permutation;
their action is defined also on the unused local basis states.

The angles for these local unitaries are classical constants. The
quantum circuit therefore scans the
local term list, computes its equality guard, applies the selected
stored gate sequence, and reverses the guard. These guards partition the preserved
term label, so the selected local operator error is the maximum error
among terms,
whereas the emitted gate count includes every scanned term. In
$U_K(Z\otimes I)U_K^\dagger$, both appearances use the same synthesized
$U_K$ and its adjoint circuit. Their error is at most twice the single-circuit
error. Choosing $\zeta=\xi/4$ makes this contribution less than $\xi/8$.
All remaining $i$ and sign phases are exact gates. Finite-alphabet
synthesis is applied once to the completed circuit in
phase-sensitive operator norm.

\subsection{Controls and full-register unitary extensions}\label{app:controls}
Before terminal synthesis, lower controlled rotations into target-only
rotations and exact Boolean/CNOT gates \cite{Barenco,SBM}. For $\mu=y,z$,
the chronological sequence $R_\mu(\theta/2)$, CNOT,
$R_\mu(-\theta/2)$, CNOT is phase-exact
\cite[Lemma~5.1]{Barenco}; we use their Corollary~5.3 for remaining fixed
one-qubit controls. Controlled-CNOT uses the exact Toffoli sequence below.
Controlled named phases compute the control--target conjunction in a
separate zero helper, apply that phase and uncompute. We compute each wide
enable once and preserve guard bits throughout its stored circuit.
This gives the intended action on zero helpers and a full-register saved
unitary as in Section~\ref{sec:model}. The completed circuit contains
exact $H,X,Z,S,S^\dagger,T,T^\dagger,\mathrm{CX}$ and target-only $SU(2)$
rotations; every remaining continuous gate has determinant one.

\section{Uniform finite-gate implementation}\label{app:program}
The classical compiler fixes common descriptors, widths and gate order,
expands Boolean operations and controls, and routes local gates. It then
assigns synthesis accuracies to the remaining one-qubit rotations using
their expanded occurrence count, including repeated and inverse calls.

\subsection{Reuse registers and coefficient preparations}\label{app:reuse-registers}
We implement the selected operators of Section~\ref{sec:finite-reuse}
on the same system and bath registers in every length branch.

\paragraph{Reuse register.}
The flag $b=0/1$ denotes public/private, and $\mathsf M$ stores the public
slot index, zero in the private space. For slot $k$ and data $z$, use
\[
 R_k\ket{b,i,z}:=\ket{b,i\mathbin{\mathrm{xor}}bk,z}.
\]
We apply $R_k$, compute $i=k$, apply the saved $S$ under this predicate,
clear it, then apply $R_k$ again. Public $(0,k)$ and private $(1,0)$
both have index $k$ during $S$, which preserves it; the second permutation
returns outputs to their respective slot/private encoding, while other
public slots bypass. A preserved length adds guards $1\le N\le10q$,
$k\le N$. We use
\[
 F_N^{\pub}:=\ket{\pub}\bra{\pub}\otimes F_N+
 \ket{\priv}\bra{\priv}\otimes I_{\mathsf M},\qquad
 F_N\ket0=\frac1{\sqrt N}\sum_{k=1}^{N}\ket k.
\]
For $1\le N\le10q$, Lemma~\ref{lem:imported-preparation} prepares this
explicit vector directly on $w_M=\lceil\log_2(10q+1)\rceil$ qubits;
for other lengths we set $F_N=I$. The finite controlled family costs
$O(q^2(1+\log(q+1)))$: $O(q)$ branches each use $O(q)$ preparation
gates, with $O(1+\log(q+1))$ equality/public guards per branch. Then
\[
 \mathscr U_N:=(F_N^{\pub})^\dagger U_{10q}\cdots U_1F_N^{\pub},
\]
where $U_k$ is the slot call with the above enable. This fixed product
handles superposed lengths, with the private component in every enabled slot.

\paragraph{Coefficients, signs and helpers.}
We reverse the recursive disentangling of Shende, Bullock and
Markov~\cite[Theorem~9]{SBM} and lower its
multiplexed rotations by their Theorem~8. Subtree masses fix angles as in
\citet[Eq.~(8)]{Mottonen}, zero on a zero-mass subtree. A classical binomial
recurrence evaluates the $O(q)$ rational probabilities in
Eq.~\eqref{eq:coefficient-prepare}, with $O(q+\log q)$-bit
numerators and denominators. We use exact
rational subtree ratios and Appendix~\ref{app:arithmetic}'s fixed-round
sine inversion, including endpoints, to set angles before emission.
The workspace of each preparation circuit is included in the allocated
work registers. We lower controls to target-only
rotations, budgeting every angle occurrence in full-unitary norm,
including unused input columns. We compute $2q\le N\le4q$, apply $Z$ to
its helper and uncompute for the selection sign. Each branch preserves
$N$; Eq.~\eqref{eq:prepareselect-full} holds on initialized helpers and
the saved gates define the full unitary on all their states.

\subsection{Classical descriptors and circuit composition}
\paragraph{Descriptors and arithmetic.}
A descriptor $(\mathcal D,h,m_0,m_1,\pi)$ gives the active list, time
interval and map to permanent term labels, owners and block ordinals.
We validate $\pi$, set $P=|\mathcal D|$, $L=P(m_1-m_0)$ and use
Eq.~\eqref{eq:rankdecoder}. The shared layout gives capacities, actual
widths and disjoint wires, with fixed input-term and lexicographic owner
order. We use little-endian integers, downward dyadic rounding and Euclidean
division. The validity test reads every allocated address bit and requires unused
high bits to be zero; other strings follow the specified padding action.

We use ripple-carry addition \cite{VBE,Cuccaro}, schoolbook multiplication,
restoring division and bitwise integer roots \cite{BrentZimmermann},
with conservative $O(b^3)$ Boolean cost on $b$-bit arguments. Rational
root tests use integer cross-multiplication. For negative $x$ and positive
$d$, write $|x|=qd+r$: Euclidean division returns $(-q,0)$ for $r=0$
and $(-q-1,d-r)$ otherwise. Lemma~\ref{lem:booleanframe} supplies coherent
compute--copy--uncompute \cite{Bennett}. We simplify equal-input ANDs so
Toffoli controls/targets are distinct.

We use the six-CNOT Toffoli of Shende and
Markov~\cite[Figure~1]{ShendeMarkov}. For distinct
controls $a,b$ and target $t$, its chronological fifteen-gate word is
\begin{align*}
 &H_t,\ \mathrm{CX}_{b,t},\ T_t^\dagger,\
       \mathrm{CX}_{a,t},\ T_t,\ \mathrm{CX}_{b,t},\ T_t^\dagger,\\
 &\mathrm{CX}_{a,t},\ T_b,\ T_t,\ H_t,\
       \mathrm{CX}_{a,b},\ T_a,\ T_b^\dagger,\ \mathrm{CX}_{a,b}.
\end{align*}
With Eq.~\eqref{eq:sk-alphabet}'s exact phases it maps
$\ket{a,b,t}$ to $\ket{a,b,t\oplus ab}$ with phase one on every basis
state. A SWAP consists of three CNOTs. Thus every Boolean gate has a specified
elementary decomposition before routing.

\paragraph{Composition.}
Equations~\eqref{eq:direct} and \eqref{eq:regroup} fix transducer and
window order. We choose shared capacities, widths and $W_{\max}$ before
lattice refinement, and expand controls/routing before assigning
Eq.~\eqref{eq:sk-error}'s accuracy. For an empty active term list, the compiler returns the identity before
dividing by $P$. Equation~\eqref{eq:actualchannel} includes segment
auxiliary preparation and discard.

\subsection{Repeated subcircuits and error accumulation}
Saved subcircuits form a finite acyclic call graph; inverse calls reverse
and adjoint the same children. Gate counts and error bounds include every occurrence after the calls
are expanded.
\begin{lemma}[Multiplicity accounting]\label{lem:multiplicity}
Suppose rotation type $\ell$ appears $n_\ell$ times in the expanded
circuit, including inverse occurrences, and its one saved replacement
has error $\eta_\ell$ and length $s_\ell$. The expanded circuit satisfies
\[
 \norm{\widehat C-C}\le\sum_\ell n_\ell\eta_\ell,
 \qquad G=\sum_\ell n_\ell s_\ell+G_{\rm exact}.
\]
Here $G_{\rm exact}$ includes all expanded exact control and routing gates.
\end{lemma}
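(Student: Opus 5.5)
We prove both claims by expanding the circuit into an ordered gate list and telescoping. Write the expanded circuit as $C=g_K\cdots g_1$, where each $g_i$ is either an exact alphabet gate (Boolean, Clifford, control or routing gate) or an occurrence of some saved rotation type $\ell$. The approximate circuit $\widehat C=\widehat g_K\cdots\widehat g_1$ keeps every exact gate and replaces each occurrence of type $\ell$ by its saved replacement circuit. An inverse occurrence uses the literal adjoint of that saved replacement. Because the call graph is finite and acyclic, this expansion is well defined: we unfold every call and every reversed-and-adjointed call down to elementary occurrences. Each expanded occurrence is then assigned exactly one type $\ell$, and $n_\ell$ is defined as that count.

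For the error bound we telescope on the full register space,
\[
 \widehat C-C=\sum_{i=1}^K\widehat g_K\cdots\widehat g_{i+1}(\widehat g_i-g_i)g_{i-1}\cdots g_1.
\]
All factors are unitaries on the complete allocated space, so each summand has norm $\norm{\widehat g_i-g_i}$. This difference is zero for exact gates. For an occurrence of type $\ell$ it is at most $\eta_\ell$. For an inverse occurrence it is also at most $\eta_\ell$, since $\norm{\widehat g^\dagger-g^\dagger}=\norm{\widehat g-g}$. Grouping the terms by type gives $\sum_\ell n_\ell\eta_\ell$.

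The gate count is bookkeeping over the same expansion. Each occurrence of type $\ell$ contributes exactly $s_\ell$ alphabet gates. An inverse occurrence also contributes $s_\ell$, because reversing the list and adjointing each gate preserves length, and the alphabet $\mathcal G$ is closed under adjoints. Every remaining gate is exact and is counted once in $G_{\rm exact}$.

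The only delicate point is making sure no occurrence is counted twice or missed when a saved subcircuit is reused inside several parents and also inverted. I would handle this by counting along the fully expanded list rather than at the level of the saved subcircuits. The phase conventions need a separate check. The lemma is stated in phase-sensitive operator norm, and Lemma~\ref{lem:terminal-synthesis} is assumed to preserve the specified phase. That assumption is exactly what makes the per-occurrence bound $\eta_\ell$ valid even when the rotation is controlled or appears inside a branch.
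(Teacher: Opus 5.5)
Your proposal is correct and follows essentially the same route as the paper, which also expands the acyclic call graph chronologically, uses that a saved replacement and its adjoint have the same norm error and gate length, and then obtains the error sum from the telescoping bound of Lemma~\ref{lem:refinementhybrid} (which you write out directly) and the gate count by counting expanded occurrences. Your remark about controlled rotations is not needed, because the paper lowers all controls before terminal synthesis, so every replaced rotation is an uncontrolled target-only gate; the operator-norm hypothesis on $\eta_\ell$ is all the telescoping requires.
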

\begin{proof}
We expand the acyclic description in chronological order. A saved
replacement and its adjoint have the same norm error and gate length,
so every occurrence of type $\ell$ contributes at most
$\eta_\ell$ to the unitary hybrid and exactly $s_\ell$ gates.
Lemma~\ref{lem:refinementhybrid} therefore gives the first sum, while
direct counting gives the second. Each reference to a saved sequence contributes its expanded occurrences
to both sums.
\end{proof}
\subsection{Finite gate synthesis}\label{app:terminal}\label{sec:explicit-terminal-sk}
After control decomposition and routing, the remaining gates to be
synthesized are uncontrolled one-qubit rotations with specified angles. We approximate them in operator
norm, including phase. We fix the inverse-closed alphabet
\begin{equation}\label{eq:sk-alphabet}
 \mathcal G:=\{H,X,Z,S,S^\dagger,T,T^\dagger,\mathrm{CX},
 A_x,A_x^\dagger,A_z,A_z^\dagger\},\qquad
 A_p:=\frac{3I-4i\sigma_p}{5}\quad(p=x,z).
\end{equation}
Here $X=\sigma_x$, $Z=\sigma_z$, and the remaining one-qubit gates are
\[
 H=\frac1{\sqrt2}\begin{pmatrix}1&1\\1&-1\end{pmatrix},\qquad
 S=\operatorname{diag}(1,i),\qquad
 T=\operatorname{diag}(1,e^{i\pi/4}).
\]
The ordered CNOT has action
$\mathrm{CX}\ket{a,b}=\ket{a,a\oplus b}$; either orientation on
an adjacent pair is allowed. Toffoli and SWAP
refer to their already specified finite decompositions.
The four $A$ gates densely generate $SU(2)$. The unit complex number
$(3+4i)/5$ has infinite multiplicative order: finite order would make
its sum with its conjugate, $6/5$, a rational algebraic integer,
contradicting its nonintegral value.
Powers around each of the two axes are dense in their rotation circles,
and Euler decomposition generates $SU(2)$. All their entries are algebraic.

We use the finite-precision $SU(2)$ synthesis theorem of
van Melkebeek and Watson~\cite[Theorem~4.5]{VMW}, with operator-norm error rather than distance
up to a global phase.

\begin{lemma}[One-qubit synthesis]\label{lem:terminal-synthesis}
For a polynomial-time, certified entry algorithm for $U\in SU(2)$
and rational $0<\xi<1$, a deterministic classical algorithm produces
an $A$-gate sequence $\widehat U$ with $\|\widehat U-U\|\le\xi$ and
length $O((1+\log(1/\xi))^4)$. It requests $O(1+\log(1/\xi))$
input precision and polynomial classical time in this precision and
the entry-evaluation cost.
\end{lemma}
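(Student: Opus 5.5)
The statement to prove is Lemma~\ref{lem:terminal-synthesis}. I would deduce it from the van Melkebeek--Watson synthesis theorem~\cite[Theorem~4.5]{VMW}, applied to the alphabet $\{A_x,A_x^\dagger,A_z,A_z^\dagger\}$. What must be checked is that this alphabet meets the theorem's hypotheses, and that its guarantee survives as a phase-sensitive operator-norm bound rather than a distance up to global phase.

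The first step is the hypotheses. The four gates lie in $SU(2)$: $\det A_p=(9+16)/25=1$. The set is closed under inverses, and all entries are algebraic (indeed rational over $\mathbb Q(i)$), so products can be evaluated exactly or to certified precision in polynomial time. The paragraph preceding the lemma shows that $A_p=\exp(-i\phi\sigma_p)$ with $e^{i\phi}=(3+4i)/5$ of infinite order. Hence powers of $A_x$ and of $A_z$ are dense in the two one-parameter subgroups, and the Euler decomposition $R_z R_x R_z$ gives density of the generated group in $SU(2)$. The VMW theorem, a Solovay--Kitaev variant with algebraic-number and finite-precision bookkeeping, then produces for any target with certified entries a word of length $O((1+\log(1/\xi'))^{c})$ with $c=4$ in their statement. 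It also bounds the input precision requested and the classical time as polynomial.

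The second step handles phase. Because every generator and the target have determinant one, the distance modulo global phase can only leave an ambiguity by the central element $-I$. I would handle this directly. First run the synthesis with accuracy $\xi'=\xi/2$ in the projective metric. Then the word $\widehat U$ satisfies $\min_{s=\pm1}\norm{s\widehat U-U}\le\xi$, up to a constant factor relating the projective metric to $\min_s$ in $SU(2)$. Finally decide the sign by comparing certified entries of $\widehat U$ and $U$. If the sign is $-1$, append a fixed precomputed $A$-word $W_{-}$ with $\norm{W_{-}+I}\le\xi/2$. Its length is again $O((1+\log(1/\xi))^4)$ by the same theorem applied to the target $-I$. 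The total error stays $O(\xi)$, and rescaling $\xi$ by a constant gives the claim. If VMW already states operator-norm error within $SU(2)$, this step is vacuous.

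The main obstacle is the translation between the metric and precision conventions of \cite{VMW} and ours. Their theorem must require only $O(1+\log(1/\xi))$ bits of the target's entries, and the certified entry algorithm must supply them in time polynomial in that precision. I would verify this by quoting their input model, a target given by an oracle returning entries to requested precision, and noting that our entry algorithms from Appendix~\ref{app:arithmetic} are of exactly this form. The length bound, the precision requirement, and the classical-time bound then transfer unchanged up to constants.
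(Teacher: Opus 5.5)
Your route is the paper's: apply \cite[Theorem~4.5]{VMW} at $d=2$ to the inverse-closed algebraic set $\{A_x,A_x^\dagger,A_z,A_z^\dagger\}$, with density from the infinite order of $(3+4i)/5$ and length exponent $\log5/\log(3/2)<4$. What you leave out is the one ingredient the paper's proof actually supplies: the Solovay--Kitaev base case. This is a fixed net of $SU(2)$ by $A$-words, at a coarse accuracy $\eps_0$ below the theorem's threshold.

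Density only shows that such a net exists. As written, your algorithm is deterministic only in the sense of hard-wiring an unspecified constant. A naive search for the net also runs into threshold comparisons between algebraic numbers that finite precision may never resolve. The paper constructs the net in two stages.
\begin{itemize}
\item A grid of mesh $\eta_0\le\eps_0/32$ in $[-1,1]^4$ is normalized to unit quaternions. Since $\norm{U(q)-U(q')}=\norm{q-q'}_2$, the normalized grid covers $SU(2)$ at radius $\eps_0/16$.
\item At stage $t$, all words of length at most $t$ are tested with $t$-bit entry intervals, certifying Frobenius distance below $\eps_0/4$ to each grid matrix.
\end{itemize}
Every stage is finite, so undecided comparisons cannot block progress. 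Density guarantees that every grid point is eventually covered. The result is a net of radius below $\eps_0$ whose preprocessing cost and word lengths do not depend on $U$ or $\xi$. You need this step, or a verified statement that the cited theorem builds the net itself, before invoking Theorem~4.5.

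Your phase paragraph is unnecessary. The net above covers $SU(2)$ itself, not $SU(2)/\{\pm I\}$, and the group-commutator recursion never leaves $SU(2)$. The guarantee is therefore already in phase-sensitive operator norm, which is how the paper uses it. The fallback would also fail in exactly the case it is meant for. A guarantee modulo global phase for the target $-I$ is met by the empty word, so the same theorem need not return a usable $W_-$. A sign-independent repair is to approximate $i\sigma_x\in SU(2)$ within $\xi/4$ up to sign and square the resulting word, since $(\pm i\sigma_x)^2=-I$.
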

\begin{proof}
We apply van Melkebeek and Watson~\cite[Theorem~4.5]{VMW} at $d=2$ to the inverse-closed algebraic
$SU(2)$ subset $\{A_x,A_x^\dagger,A_z,A_z^\dagger\}$. To construct its
fixed coarse net, we choose dyadic $0<\eps_0<1/16$ below the theorem's coarse
accuracy and mesh $\eta_0\le\eps_0/32$ in $[-1,1]^4$.
Unit quaternions satisfy
$U(q)=q_0I+i(q_x\sigma_x+q_y\sigma_y+q_z\sigma_z)$ and
$\norm{U(q)-U(q')}=\norm{q-q'}_2$. Every unit quaternion is within
$\eta_0$ of a grid point and $2\eta_0$ of its normalization; normalized
nonzero grid points therefore cover $SU(2)$ at radius $\eps_0/16$.

For each grid matrix, the preprocessing algorithm enumerates words over
the four $A$ gates. At stage $t$, test all
words of length at most $t$ with $t$-bit entry intervals, certifying
Frobenius distance below $\eps_0/4$; increase $t$ if any grid point is
uncovered. Each stage is finite, so undecided comparisons cannot block
enumeration. Density supplies a word strictly within this radius at
each point, eventually certified by entry intervals. The finite grid
therefore yields a net of radius below $\eps_0$, with fixed preprocessing cost and word lengths, independent of the input
and requested accuracy.
The synthesis exponent $\log5/\log(3/2)<4$
\cite[Section~4.3.3]{VMW}, \cite{DN} and the theorem's finite-precision
arithmetic give the stated length and classical time.
\end{proof}
Equation~\eqref{eq:sk-error} and Lemma~\ref{lem:multiplicity} charge all
expanded occurrences, including adjoints of the saved replacements,
with one $\log^4X$ synthesis factor in gate count and depth.

\paragraph{Classical running time.}
Validation and dyadic normalization use exact rational operations with
cost polynomial in the raw input length. Integer arithmetic and power comparisons set
segment, mesh, capacity and precision parameters with polynomial-length
binary descriptions. Each descriptor determines bounded scans and loops;
Appendix~\ref{app:arithmetic} bounds their Boolean size in local parameters.
We expand the controls and routing operations, synthesize the resulting
one-qubit occurrences using
Lemma~\ref{lem:terminal-synthesis}, and enumerate the owners, windows and segments,
charging saved calls by Lemma~\ref{lem:multiplicity}. Total generation time
is polynomial in raw input and expanded output-description length,
including the gate enumeration required at large physical times.

\label{end:technical}

\phantomsection\label{start:references}
\begingroup
\small
\setlength{\bibitemsep}{2pt plus 1pt minus 1pt}
\printbibliography
\endgroup
\end{document}